\def\llncs{0}
\def\fullpage{1}
\def\anonymous{0}
\def\authnote{1}
\def\notxfont{0}
\def\submission{0}

\ifnum\submission=1
\def\llncs{1}
\fi

\ifnum\llncs=1 	\documentclass[envcountsect,a4paper,runningheads,10pt]{llncs}
\else
	\documentclass[letterpaper,hmargin=1.0in,vmargin=1.0in,11pt]{article}
			\ifnum\fullpage=1
		\usepackage{fullpage}
		\fi
\fi

\usepackage[%
  colorlinks=true,
  citecolor=blue,
  pagebackref=true
]{hyperref}

\usepackage{amsmath, amsfonts, amssymb, mathtools,amscd}

\usepackage{amsthm}

\usepackage{lmodern}
\usepackage[T1]{fontenc}
\usepackage[utf8]{inputenc}

\usepackage{arydshln} 
\usepackage{url}
\usepackage{ifthen}
\usepackage{bm}
\usepackage{multirow}
\usepackage[dvips]{graphicx}
\usepackage[usenames]{color}
\usepackage{xcolor,colortbl} 
\usepackage{threeparttable}
\usepackage{comment}
\usepackage{paralist,verbatim}
\usepackage{cases}
\usepackage{booktabs}
\usepackage{braket}
\usepackage{cancel} 
\usepackage{ascmac} 
\usepackage{framed}
\usepackage{authblk}
\usepackage{pifont}
\usepackage{qcircuit}
\usepackage{tikz}
\definecolor{darkblue}{rgb}{0,0,0.6}
\definecolor{darkgreen}{rgb}{0,0.5,0}
\definecolor{maroon}{rgb}{0.5,0.1,0.1}
\definecolor{dpurple}{rgb}{0.2,0,0.65}

\usepackage[capitalise,noabbrev]{cleveref}
\usepackage[absolute]{textpos}
\usepackage[final]{microtype}
\usepackage[absolute]{textpos}
\usepackage{everypage}
\DeclareMathAlphabet{\mathpzc}{OT1}{pzc}{m}{it}

\usepackage{algorithmic}
\usepackage{algorithm}
\usepackage{here}

\newtheoremstyle{thicktheorem}%
{\topsep}
{\topsep}
{\itshape}{}%
{\bfseries}%
{.}
{ }%
{\thmname{#1}\thmnumber{ #2}%
		\thmnote{ (#3)}%
}

\newtheoremstyle{remark}
{\topsep}
{\topsep}
	{}
	{}
	{}
	{.}
	{ }
	{\textit{\thmname{#1}}\thmnumber{ #2}
			\thmnote{ (#3)}%
	}

\ifnum\llncs=0
	\theoremstyle{thicktheorem}
	\newtheorem{theorem}{Theorem}[section]
	\newtheorem{lemma}[theorem]{Lemma}

	\newtheorem{definition}[theorem]{Definition}

	\theoremstyle{remark}
	\newtheorem{claim}[theorem]{Claim}
	\newtheorem{remark}[theorem]{Remark}

\else
\fi

\Crefname{MyClaim}{Claim}{Claims}

	\crefname{theorem}{Theorem}{Theorems}
	\crefname{assumption}{Assumption}{Assumptions}
	\crefname{construction}{Construction}{Constructions}
	\crefname{corollary}{Corollary}{Corollaries}
	\crefname{conjecture}{Conjecture}{Conjectures}
	\crefname{definition}{Definition}{Definitions}
	\crefname{example}{Example}{Examples}
	\crefname{experiment}{Experiment}{Experiments}
	\crefname{counterexample}{Counterexample}{Counterexamples}
	\crefname{lemma}{Lemma}{Lemmata}
	\crefname{observation}{Observation}{Observations}
	\crefname{proposition}{Proposition}{Propositions}
	\crefname{remark}{Remark}{Remarks}
	\crefname{claim}{Claim}{Claims}
	\crefname{fact}{Fact}{Facts}
	\crefname{note}{Note}{Notes}

\ifnum\llncs=1
 \crefname{appendix}{App.}{Appendices}
 \crefname{section}{Sec.}{Sections}
\else
\fi

\ifnum\llncs=1
\renewcommand*{\backref}[1]{}
\else
	\renewcommand*{\backref}[1]{(Cited on page~#1.)}
	\ifnum\notxfont=1
	\else
		\usepackage{newtxtext}
	\fi
\fi
\ifnum\authnote=0  
\newcommand{\fuyuki}[1]{}
\newcommand{\minki}[1]{}
\newcommand{\takashi}[1]{}
\newcommand{\ryo}[1]{}

\else
\newcommand{\fuyuki}[1]{$\ll$\textsf{\color{red} Fuyuki: { #1}}$\gg$}
\newcommand{\takashi}[1]{$\ll$\textsf{\color{orange} Takashi: { #1}}$\gg$}
\newcommand{\shira}[1]{$\ll$\textsf{\color{blue} Yuki: { #1}}$\gg$}
\newcommand{\ryo}[1]{$\ll$\textsf{\color{green} Ryo: { #1}}$\gg$}
\fi

\newcommand{\TDF}{\mathsf{TDF}}
\newcommand{\PTDF}{\mathsf{PTDF}}

\newcommand{\UPO}{\mathsf{UPO}}
\newcommand{\PPRF}{\mathsf{PPRF}}
\newcommand{\SDE}{\mathsf{SDE}}
\newcommand{\OWF}{\mathsf{OWF}}
\newcommand{\qKG}{\mathpzc{KG}}
\newcommand{\qInv}{\mathpzc{Inv}}
\newcommand{\qDel}{\mathpzc{Del}}
\newcommand{\qtd}{\mathpzc{td}}
\newcommand{\qDec}{\mathpzc{Dec}}
\newcommand{\qdk}{\mathpzc{dk}}
\newcommand{\HPRG}{\mathsf{HPRG}}
\newcommand{\PRG}{\mathsf{PRG}}
\newcommand{\qSamp}{\mathpzc{Samp}}
\newcommand{\qGen}{\mathpzc{Gen}}
\newcommand{\qObf}{\mathpzc{Obf}}
\newcommand{\qEval}{\mathpzc{Eval}}
\newcommand{\qA}{\mathpzc{A}}
\newcommand{\qB}{\mathpzc{B}}
\newcommand{\qC}{\mathpzc{C}}
\newcommand{\qD}{\mathpzc{D}}
\newcommand{\qE}{\mathpzc{E}}
\newcommand{\qaux}{\mathpzc{aux}}
\newcommand{\Puncture}{\mathsf{Puncture}}
\newcommand{\PuncInv}{\mathsf{PuncInv}}

\newcommand{\coin}{\mathsf{coin}}

\newcommand{\Tr}{\mathrm{Tr}}

\newcommand{\qsk}{\mathpzc{sk}}

\newcommand{\cert}{\keys{cert}}

\newcommand{\iO}{i\cO}

\newcommand{\Samp}{\algo{Samp}}

\newcommand{\cA}{\mathcal{A}}
\newcommand{\cB}{\mathcal{B}}
\newcommand{\cC}{\mathcal{C}}
\newcommand{\cD}{\mathcal{D}}

\newcommand{\cM}{\mathcal{M}}

\newcommand{\cO}{\mathcal{O}}

\newcommand{\cU}{\mathcal{U}}

\def\makeuppercase#1{
\expandafter\newcommand\csname tl#1\endcsname{\widetilde{#1}}
}

\def\makelowercase#1{
\expandafter\newcommand\csname tl#1\endcsname{\widetilde{#1}}
}

\newcommand{\N}{\mathbb{N}}

\newcommand{\regR}{\mathbf{R}}

\newcommand{\secp}{\lambda}

\newcommand{\aux}{\mathsf{aux}}

\newcommand*{\sk}{\keys{sk}}
\newcommand*{\pk}{\keys{pk}}

\newcommand{\ct}{\keys{ct}}

\newcommand*{\pp}{\keys{pp}}
\newcommand*{\dk}{\keys{dk}}
\newcommand*{\vk}{\keys{vk}}
\newcommand*{\ek}{\keys{ek}}

\newcommand*{\td}{\keys{td}}

\newcommand*{\keys}[1]{\mathsf{#1}}

\newcommand*{\algo}[1]{\ensuremath{\mathsf{#1}}}

\newenvironment{boxfig}[2]{\begin{figure}[#1]\fbox{\begin{minipage}{0.97\linewidth}
                        \vspace{0.2em}
                        \makebox[0.025\linewidth]{}
                        \begin{minipage}{0.95\linewidth}
            {{
                        #2 }}
                        \end{minipage}
                        \vspace{0.2em}
                        \end{minipage}}}{\end{figure}}

\newcommand{\bit}{\{0,1\}}

\newcommand{\Setup}{\algo{Setup}}

\newcommand{\Gen}{\algo{Gen}}

\newcommand{\KG}{\algo{KG}}
\newcommand{\Enc}{\algo{Enc}}
\newcommand{\Dec}{\algo{Dec}}

\newcommand{\Vrfy}{\algo{Vrfy}}

\newcommand{\Inv}{\algo{Inv}}

\newcommand{\TD}{\algo{TD}}

\newcommand{\Eval}{\algo{Eval}}

\newcommand{\PKE}{\algo{PKE}}

\newcommand{\negl}{{\mathsf{negl}}}

\newcommand{\poly}{{\mathrm{poly}}}

\makeatletter
\DeclareRobustCommand
  \myvdots{\vbox{\baselineskip4\p@ \lineskiplimit\z@
    \hbox{.}\hbox{.}\hbox{.}}}
\makeatother
 
\title{Trapdoor Functions with Secure Key Leasing and Copy Protection}

\ifnum\anonymous=1
\ifnum\llncs=1
\author{\empty}\institute{\empty}
\else
\author{}
\fi
\else
\ifnum\llncs=1
\author{
Fuyuki Kitagawa\inst{1,2} \and 
Ryo Nishimaki\inst{1,2} \and 
Yuki Shirakawa\inst{3} \and 
Takashi Yamakawa\inst{1,2,3}
}
\institute{
NTT Social Informatics Laboratories, Tokyo, Japan \and 
NTT Research Center for Theoretical Quantum Information, Atsugi, Japan \and 
Yukawa Institute for Theoretical Physics, Kyoto University, Kyoto, Japan
}
\else
\author[1,2]{Fuyuki Kitagawa}
\author[1,2]{Ryo Nishimaki}
\author[3]{ Yuki Shirakawa}
\author[1,2,3]{ Takashi Yamakawa}
\affil[1]{{\small NTT Social Informatics Laboratories, Tokyo, Japan}\authorcr{\small \{fuyuki.kitagawa,ryo.nishimaki,takashi.yamakawa\}@ntt.com}}
\affil[2]{{\small NTT Research Center for Theoretical Quantum Information, Atsugi, Japan}}
\affil[3]{{\small Yukawa Institute for Theoretical Physics, Kyoto University, Kyoto, Japan}\authorcr{\small yuki.shirakawa@yukawa.kyoto-u.ac.jp}}
\fi 
\fi

\date{}

\begin{document}

\maketitle

\begin{abstract}

Inspired by the no-cloning theorem in quantum theory, a variety of quantum cryptographic primitives with unclonable functionalities, such as secure key leasing and copy protection, have been proposed and attracted significant attention. 
However, trapdoor functions (TDFs), fundamental primitives in public-key cryptography, have not been extensively studied in these areas. 
In this work, we initiate a study of TDFs in both secure key leasing and copy protection settings. 
We first introduce the definition of TDFs with secure key leasing (TDF-SKL), which enables leasing and deleting of quantum trapdoors. 
We formalize TDF-SKL both with and without domain sampler, and give a construction of TDF-SKL without domain sampler based on the LWE assumption and a construction of TDF-SKL with domain sampler based on any standard PKE schemes combined with hinting pseudorandom generators [Koppula and Waters, CRYPTO 2019].
Next, we define TDFs with copy protection (TDF-CP), where the inversion functionality is copy protected by a quantum trapdoor. We establish a construction of TDF-CP assuming indistinguishability obfuscation and the LWE assumption, following a modular framework of copy protection proposed by Ananth and Behera [CRYPTO 2024].
We also present applications of TDF-SKL and TDF-CP.
Existing constructions of public-key encryption with secure key leasing (PKE-SKL) and single-decryptor encryption (SDE) suffer from a critical vulnerability: quantum decryption keys may be destroyed after decrypting maliciously chosen ciphertexts.
We construct PKE-SKL schemes and SDE schemes with robust quantum decryption keys that remain reusable after decrypting arbitrary ciphertexts from TDF-SKL and TDF-CP, respectively.


\end{abstract}

\newpage

 \setcounter{tocdepth}{2}
 \tableofcontents

 \newpage

\section{Introduction}
The no-cloning theorem \cite{WZ82,Die82} in quantum theory states that unknown quantum states cannot be copied.
Inspired by this theorem, a large number of novel quantum cryptographic primitives that have unclonable functionalities have been proposed over the past few decades.
Since these primitives are impossible in classical cryptography, they have attracted significant attention. 
Major examples include quantum money \cite{Wiesner83}, unclonable encryption \cite{TQC:BroLor20}, secure software and key leasing \cite{EC:AnaLaP21,AC:KitNis22,EC:AKNYY23,TCC:AnaPorVai23}, copy protection \cite{Aar09}, single-decryptor encryption \cite{GZ20,C:CLLZ21}, and more.

\paragraph{Secure Key Leasing.}
Secure key leasing \cite{AC:KitNis22,EC:AKNYY23,TCC:AnaPorVai23} enables the delegation and revocation of tasks that require cryptographic secret keys.
It is achieved by adding two additional algorithms to a base cryptographic primitive (e.g. public-key encryption), which are a deletion algorithm that deletes a secret key and issues its certificate, and a verification algorithm that verifies the validity of the certificate.
Here, the secret key is a quantum state and verification is performed using a verification key generated together with the quantum secret key.
Kitagawa, Morimae, and Yamakawa \cite{EC:KitMorYam25} recently proposed a simple framework for PKE-SKL, which enables the construction of PKE-SKL from standard PKE schemes.
They also introduced a new security notion called IND-VRA, which requires primitives to be secure even if the adversary obtains the verification key together with the challenge ciphertext, and showed that their PKE-SKL scheme satisfies this strong security.

\paragraph{Copy Protection and Single-Decryptor Encryption.}
Copy protection is a cryptographic scheme introduced by Aaronson \cite{Aar09}, which prevents generating multiple copies of a functionality to compute a given function $f$ by encoding $f$ into a quantum program state $\rho_f$.
Informally, a copy protection scheme requires that, given a single copy of the quantum program state $\rho_f$, one can efficiently compute $f$, whereas it is infeasible to efficiently generate two (or more) quantum programs that are each capable of computing $f$.
\cite{GZ20,C:CLLZ21} introduced and constructed single decryptor encryption (SDE), which is a public-key encryption scheme with copy protected decryption functionalities. 
Subsequently, Ananth and Behera \cite{C:AnaBeh24} proposed a modular and simple framework for constructing copy protection based on the primitives they call unclonable puncturable obfuscation. Kitagawa and Yamakawa \cite{TCC:KitYam25} recently introduced a new security notion called CPA$^+$ anti-piracy security, which is stronger than the conventional security notions of SDE such as CPA-style anti-piracy security.
They also established a construction of CPA$^+$ anti-piracy secure SDE assuming only polynomially secure iO and OWFs.

\paragraph{Motivations.}
A variety of results have been established in secure key leasing and copy protection settings.
Nevertheless, in both areas, there is a lack of work on fundamental primitives, namely \emph{trapdoor functions}.
In classical cryptography, it has been well known that TDFs can be used to construct IND-CPA secure PKE via the Goldreich-Levin theorem \cite{STOC:GolLev89}, and more recently, Hohenberger, Koppula, and Waters \cite{C:HohKopWat20} showed that IND-CCA secure PKE can also be constructed from TDFs.
Moreover, Garg, Hajiabadi, Malavolta, and Ostrovsky \cite{AC:GHMO21} established that TDFs can be constructed by combining any IND-CPA secure PKE with pseudorandom ciphertexts and hinting pseudorandom generators (PRGs), where hinting PRGs are PRGs with a stronger pseudorandomness requirement introduced by Koppula and Waters \cite{C:KopWat19}. 
While TDFs play a crucial role in classical public-key cryptography, they have received little attention so far in the field of unclonable cryptography.
This naturally leads to the following question:
\begin{center}
    \emph{
    Can we construct TDFs with secure key leasing and TDFs with copy protection from standard assumptions?
    Moreover, what are their applications?
    }
\end{center}

Furthermore, although various constructions of PKE-SKL and SDE are known, they suffer from a certain vulnerability, namely the fragility of quantum decryption keys.
Let us consider the following scenario:
Alice obtains a quantum decryption key. 
A malicious Bob then sends an arbitrary ciphertext to Alice and asks Alice to decrypt it, where the ciphertext is not necessarily generated by the encryption algorithm.
If Alice decrypts Bob's ciphertext, the quantum decryption key may collapse and become unusable thereafter.
It is trivial that such an issue does not arise when the secret keys are classical, and we need to address this problem only when the secret keys are quantum states, as in PKE-SKL and SDE.
Unfortunately, existing constructions of PKE-SKL and SDE are not secure against such key-destruction attacks.
This current situation naturally leads to the following question:
\begin{center}
    \emph{
    Can we construct PKE-SKL and SDE that are secure against key-destruction attacks from standard assumptions?
    }
\end{center}

\subsection{Our Results}
In this work, we initiate the study of TDFs in both secure key leasing and copy protection settings, and resolve the above questions affirmatively.
More precisely, our results are summarized as follows:

\paragraph{Trapdoor Functions with Secure Key Leasing.}
We first introduce the definition of trapdoor functions with secure key leasing (TDF-SKL).
Informally speaking, a TDF-SKL is a trapdoor function that allows leasing and deleting a quantum trapdoor.
In classical cryptography, a trapdoor function is defined as a tuple of a key generation algorithm, an evaluation algorithm, and an inversion algorithm.
Typically, one-wayness is required to hold for inputs sampled uniformly at random.
Alternatively, some definitions include an additional algorithm called a domain sampler, in which case one-wayness is required to hold for inputs generated by this sampler.\footnote{One might suspect that trapdoor functions with domain samplers are so weak that they become equivalent to PKE, as observed in \cite{C:HohKopWat20}. However, this is not the case in our setting, where we require correctness over the entire domain, rather than merely on inputs sampled by the domain sampler. See  \Cref{sec:TDF_sampler} for further details.} 
We define and construct TDF-SKL in both settings.
We construct TDF-SKL without domain sampler (or simply TDF-SKL) based on the LWE assumption, and TDF-SKL with domain sampler from any standard PKE together with hinting PRGs.
Notably, the latter construction admits broader instantiations, as its building blocks can also be realized 
under group-action-based assumptions~\cite{Couveignes06,cryptoeprint:2006/145,AC:AlaPat22}. Furthermore, we remark that this second construction suffices for our application to robust PKE-SKL discussed below. 


\paragraph{Trapdoor Functions with Copy Protection.}
Next, we introduce the definition of trapdoor functions with copy protection (TDF-CP).
Here, a TDF-CP is a trapdoor function whose trapdoor is a quantum state and the inversion algorithm is copy protected.
We define the security of TDF-CP analogously to the search anti-piracy security of SDE.
Specifically, any efficient adversary given a single copy of a quantum trapdoor cannot generate two quantum trapdoors that simultaneously succeed in inverting two independently generated images $y_1$ and $y_2$, respectively.
We give a construction of TDF-CP assuming indistinguishability obfuscation (iO) and the LWE assumption.

\paragraph{Applications.}
As applications of TDF-SKL and TDF-CP, we obtain the following results:
\begin{itemize}
    \item We construct an IND-VRA secure robust PKE-SKL scheme assuming TDF-SKL with domain sampler. 
    \item We construct a CPA$^+$ anti-piracy secure robust SDE scheme assuming TDF-CP.
\end{itemize}
Here, robust PKE-SKL and robust SDE are secure against key-destruction attacks.
In other words, we say that a PKE-SKL (or SDE) scheme is robust if the quantum decryption key remains reusable even after decrypting any ciphertext.
Existing constructions of PKE-SKL and SDE do not satisfy this robustness property, and therefore, this work gives the first construction of robust PKE-SKL and SDE.

We remark that there is an alternative approach to constructing robust PKE-SKL and robust SDE using non-interactive zero-knowledge proofs (NIZKs). Specifically, by appending a NIZK proof attesting to the well-formedness of a ciphertext, one can generically add robustness to existing PKE-SKL and SDE schemes. However, our approach offers the following two advantages, particularly in the case of PKE-SKL.

First, our construction of robust PKE-SKL relies only on black-box use of cryptographic primitives, whereas the NIZK-based construction is inherently non-black-box. Black-box constructions have long been a central and well-studied topic in both classical and quantum cryptography~\cite{STOC:IKLP06,TCC:Haitner08,C:IshPraSah08,C:CCLY22,C:CLPY25}; thus, this distinction is of independent theoretical interest.

Second, our approach admits a broader range of instantiations. Although NIZK proofs are known to exist under the LWE assumption, their existence under group-action-based assumptions  is not known.\footnote{One could employ reusable DV-NIZK \emph{arguments} based on group-action-based assumptions \cite{AC:ADMP20} to obtain PKE-SKL with computational robustness, meaning that quantum decryption keys are not destroyed by decrypting efficiently generated ciphertexts. However, our definition of robustness is statistical, requiring that the quantum decryption keys remain intact even after decrypting ciphertexts that are not necessarily efficiently generated.} In contrast, our construction relies only on PKE and hinting PRG, both of which are known to exist under group-action-based assumptions~\cite{Couveignes06,cryptoeprint:2006/145,AC:AlaPat22}.
\section{Technical Overview}

\subsection{Trapdoor Functions with Secure Key Leasing and Their Application}

\paragraph{Definition of TDFs with secure key leasing.}
We first introduce the definition of trapdoor functions with secure key leasing (TDF-SKL).
Analogously to PKE-SKL, a TDF-SKL scheme $\mathsf{TDFSKL}$ consists of a tuple of five algorithms $(\qKG,\Eval,\qInv,\qDel,\Vrfy)$, where $(\qKG,\Eval,\qInv)$ are defined similarly to standard TDFs except that the key generation algorithm outputs a quantum trapdoor $\qtd$ and a verification key $\vk$ together with a classical evaluation key $\ek$.
The additional algorithms $(\qDel,\Vrfy)$ behave as follows:
\begin{itemize}
    \item $\qDel$: This is a quantum polynomial-time deletion algorithm that takes a quantum trapdoor $\qtd$ and outputs a classical certificate.
    \item $\Vrfy$: This is a classical deterministic polynomial-time verification algorithm that takes the verification key $\vk$ and the certificate $\cert$, and outputs $\top/\bot$.
\end{itemize}
In addition to the inversion correctness, we require the deletion verification correctness, which implies that for honestly generated $\qtd$, $\qDel(\qtd)$ outputs a certificate that passes the verification with overwhelming probability.
The one-wayness for TDF-SKL is defined via a security game between the challenger and an adversary $\qA$.
The security game consists of the following two phases:
\begin{enumerate}
    \item In the first phase, the challenger generates $(\ek,\qtd,\vk)\gets\qKG(1^\secp)$ and sends $(\ek,\qtd)$ to $\qA$. $\qA$ outputs $\cert$. If $\Vrfy(\vk,\cert)=\bot$, then $\qA$ loses. Otherwise, the game proceeds to the next phase.
    \item In the second phase, the challenger generates $x\gets\bit^{n}$, $y\coloneqq\Eval(\ek,x)$, where $\bit^n$ is the domain of the function. The challenger sends $y$ to $\qA$, and $\qA$ outputs $x'$. $\qA$ wins if $x'=x$.
\end{enumerate}
We say that $\mathsf{TDFSKL}$ satisfies one-wayness if for any QPT adversary $\qA$, the winning probability of $\qA$ in the above security game is negligible.

We also define TDF-SKL with domain sampler.
A TDF-SKL with domain sampler is a tuple of six algorithms $(\qKG,\Samp,\Eval,\qInv,\qDel,\Vrfy)$.
The only difference between TDF-SKL and TDF-SKL with domain sampler is that the syntax of TDF-SKL with domain sampler includes an additional algorithm $\Samp$, which behaves as follows:
\begin{itemize}
    \item $\Samp$: This is a PPT domain-sampling algorithm that takes the evaluation key $\ek$ as input, and outputs a bit string $x\in\bit^n$.  
\end{itemize}
We require the same inversion correctness and deletion verification correctness as those of TDF-SKL.
However, for the security of TDF-SKL with domain sampler, we consider a security game that differs in two points from the above security game.
The first difference is that, in the second phase of the game, the challenge input $x$ is sampled according to $\Samp$, rather than being sampled uniformly at random.
The second difference is that, in the second phase, $\qA$ can obtain the verification key $\vk$ together with the challenge image $y$.
This situation is the same as IND-VRA security for PKE-SKL introduced by \cite{EC:KitMorYam25}.
We consider this strong security for TDF-SKL with domain sampler because we will construct IND-VRA secure PKE-SKL that is secure against key-destruction attacks from TDF-SKL with domain sampler.

\paragraph{Construction of TDF-SKL.}
Next, we present the construction of TDF-SKL assuming PKE-SKL with pseudorandom ciphertexts and hinting PRGs.
It is known that PKE-SKL with pseudorandom ciphertexts exist under the LWE assumption \cite{TCC:AnaHuHua24}, and that hinting PRGs can be constructed from several assumptions, such as LWE \cite{C:KopWat19} and a group-action-based assumption \cite{AC:AlaPat22}.
Thus, our construction of TDF-SKL relies on the LWE assumption.
For the variant of TDF-SKL equipped with domain sampler, we give the construction assuming IND-VRA secure PKE-SKL and hinting PRGs.
As shown in \cite{EC:KitMorYam25}, IND-CPA secure PKE schemes imply IND-VRA secure PKE-SKL schemes, and therefore our construction of TDF-SKL with domain sampler is based on the assumption of IND-CPA secure PKE and hinting PRGs.
Since both constructions follow quite similar approaches, we provide only a technical overview of the construction of TDF-SKL here.
We refer the reader to \cref{sec:TDFSKL_DS} for details of the construction of TDF-SKL with domain sampler.

Our construction of TDF-SKL is inspired by \cite{AC:GHMO21}.
While \cite{AC:GHMO21} constructed a TDF based on the mirroring technique of \cite{EC:GarGayHaj19}, this approach does not directly extend to our setting.
Roughly speaking, their TDF takes $x\in\bit^n$ as input, and outputs $(y_1,...,y_n)$, where $y_i=\ct_i$ if $x_i=0$ and $y_i=\ct_i\oplus r_i$ otherwise.
Here $\ct_i$ is a ciphertext of a PKE scheme with pseudorandom ciphertexts, and $r_i$ is sampled uniformly at random.
In their construction, the values $(r_1,...,r_n)$ are included in the evaluation key, and they showed the one-wayness by considering a hybrid game in which $(r_1,...,r_n)$ are defined after $(y_1,...,y_n)$ are fixed.
This reordering enables the use of ciphertext pseudorandomness to make the challenge image independent of the input.
However, this approach fails in the secure key leasing setting.
In the TDF-SKL security game, the adversary first submits a certificate using the evaluation key that includes $(r_1,...,r_n)$, and only later receives the challenge image $(y_1,...,y_n)$. 
Consequently, we cannot apply the same argument as in \cite{AC:GHMO21}, because invoking ciphertext pseudorandomness requires generating $r_i$ in a way that depends on the challenge ciphertext before it is available.
To overcome this difficulty, we modify their construction so that it does not rely on the mirroring technique, using the idea of \cite{C:GarHaj18}, and combine it with Naor’s commitment scheme \cite{JC:Naor91} to achieve almost-all-keys correctness.
More precisely, we construct a TDF-SKL $(\qKG,\Eval,\qInv,\qDel,\Vrfy)$ as follows:
\begin{itemize}
    \item $\qKG$: On input $1^\secp$, generate keys $(\ek_\PKE,\qdk_\PKE,\vk_\PKE)$ of a PKE-SKL with pseudorandom ciphertexts, a public parameter $\pp$ of a hinting PRG, and $n$ random strings $(r_1,...,r_n)$, where $n=n(\secp)$ is a polynomial.
    The evaluation key is $\ek=(\ek_\PKE,\pp,r_1,...,r_n)$.
    The trapdoor is $\qtd=\qdk_\PKE$ and the verification key is $\vk=\vk_\PKE$.
    \item $\Eval$: On input $\ek=(\ek_\PKE,\pp,r_1,...,r_n)$ and $X$, parse $X=(x,s_1,...,s_n,\ct_1,...,\ct_n)$, where $x\in\bit^n$ and the other strings have appropriate lengths. For each $i\in[n]$, 
    \begin{itemize}
        \item if $x_i=0$, let 
        \begin{align}
            y_i\coloneqq
            \begin{pmatrix}
            \PKE.\Enc(\ek_\PKE,s_i;\HPRG.\Eval(\pp,x,i)) \\ 
            \ct_i
            \end{pmatrix}, 
            \quad z_i\coloneqq\PRG(s_i).
        \end{align}
        \item if $x_i=1$, let
        \begin{align}
            y_i\coloneqq
            \begin{pmatrix}
            \ct_i \\
            \PKE.\Enc(\ek_\PKE,s_i;\HPRG.\Eval(\pp,x,i)) 
            \end{pmatrix}, 
            \quad z_i\coloneqq\PRG(s_i) \oplus r_i.
        \end{align}
    \end{itemize}
    Here, $\PKE.\Enc$ is the encryption algorithm of the PKE-SKL with pseudorandom ciphertexts, $\HPRG.\Eval$ is the evaluation algorithm of the hinting PRG, and $\PRG$ is a PRG.
    The output of this algorithm is $(y_1,...,y_n,z_1,...,z_n)$.
    \item $\qInv$: On input $\qtd=\qdk_\PKE$ and $Y$, parse $Y=(y_1,...,y_n,z_1,...,z_n)$, where $y_i=\begin{pmatrix}a_i \\ b_i \end{pmatrix}$.
    For each $i\in[n]$, run $(s_i',\qdk_\PKE')\gets\PKE.\qDec(\qdk_\PKE,a_i)$ coherently.
    \begin{itemize}
        \item If $z_i=\PRG(s_i')$, let $x_i\coloneqq 0$, $s_i\coloneqq s_i'$, and $\ct_i\coloneqq b_i$.
        \item If $z_i\neq\PRG(s_i')$, uncompute $\PKE.\qDec$, and run $(s_i'',\qdk_\PKE'')\gets\PKE.\Dec(\qdk_\PKE,b_i)$. Let $x_i\coloneqq 1$, $s_i\coloneqq s_i''$, and $\ct_i \coloneqq a_i$.
    \end{itemize}
    The output of this algorithm is $(x,s_1,...,s_n,\ct_1,...,\ct_n)$.
    \item $\qDel$ and $\Vrfy$ are the same as those of the underlying PKE-SKL with pseudorandom ciphertexts.
\end{itemize}
To show the one-wayness of TDF-SKL, we consider the following sequence of hybrids:
\begin{itemize}
    \item $\mathsf{Hyb_0}$: This is the original security game that works as follows:
    \begin{enumerate}
    \item The challenger generates keys $(\ek_\PKE,\qdk_\PKE,\vk_\PKE)$ of the PKE-SKL with pseudorandom ciphertexts, a public parameter $\pp$ of the hinting PRG, and $n$ random strings $(r_1,...,r_n)$.
    The challenger sends $(\ek_\PKE,\pp,r_1,...,r_n,\qdk_\PKE)$ to $\qA$.
    \item $\qA$ outputs $\cert$.
    \item If $\PKE.\Vrfy(\vk_\PKE,\cert)=\bot$, then $\qA$ loses.
    Otherwise, the challenger generates $x\gets\bit^n$, $s_i\gets\bit^{m}$, and $\ct_i\gets\bit^c$ for each $i\in[n]$, where $m=m(\secp)$ and $c=c(\secp)$ are polynomials denoting the message and ciphertext lengths of the underlying PKE-SKL scheme, respectively.
    For each $i\in[n]$,
    \begin{itemize}
        \item if $x_i=0$, let
        \begin{align}
            y_i \coloneqq 
            \begin{pmatrix}
                \PKE.\Enc(\ek_\PKE,s_i;\HPRG.\Eval(\pp,x,i)) \\ 
                \ct_i
            \end{pmatrix}, 
            \quad z_i\coloneqq\PRG(s_i).
        \end{align}
        \item if $x_i=1$, let
        \begin{align}
            y_i \coloneqq
            \begin{pmatrix}
                \ct_i \\
                \PKE.\Enc(\ek_\PKE,s_i;\HPRG.\Eval(\pp,x,i)) 
            \end{pmatrix}, 
            \quad z_i\coloneqq\PRG(s_i) \oplus r_i.
        \end{align}
    \end{itemize}
    The challenger sends $(y_1,...,y_n,z_1,...,z_n)$ to $\qA$.
    \item $\qA$ outputs $X'$.
    If $X'=(x,s_1,...,s_n,\ct_1,...,\ct_n)$, then $\qA$ wins.
    Otherwise, $\qA$ loses.
    \end{enumerate}
    \item $\mathsf{Hyb}_1$: This is identical to $\mathsf{Hyb}_0$ except that $(x,s_1,...,s_n)$ is sampled before $\qA$ outputs $\cert$ and for each $i\in[n]$, $r_i$ is replaced with $\PRG(s_i)\oplus u_i$, where $u_i$s are sampled uniformly at random.
    Note that the winning probability of $\qA$ in $\mathsf{Hyb}_1$ is the same as in $\mathsf{Hyb}_0$.
    \item $\mathsf{Hyb}_2$: This is identical to $\mathsf{Hyb}_1$ except that for each $i\in[n]$, $r_i$ and $s_i$ are replaced with $\PRG(s_i^0)\oplus\PRG(s_i^1)$ and $s_i^{x_i}$ respectively, where $s_i^0$s and $s_i^1$s are sampled uniformly at random.
    Then, by the pseudorandomness of $\PRG$, the difference in the winning probability of $\qA$ between $\mathsf{Hyb}_1$ and $\mathsf{Hyb}_2$ is negligible. 
    \item $\mathsf{Hyb}_3$: This is identical to $\mathsf{Hyb}_2$ except that for each $i\in[n]$, $\ct_i$ is replaced with $\PKE.\Enc(\ek_\PKE,s_i^{1-x_i})$.
    Then, the difference in the winning probability of $\qA$ between $\mathsf{Hyb}_2$ and $\mathsf{Hyb}_3$ is negligible because the output of $\PKE.\Enc$ is computationally indistinguishable from a truly random string.
    \item $\mathsf{Hyb}_4$: This is identical to $\mathsf{Hyb}_3$ except that for each $i\in[n]$, $\HPRG.\Eval(\pp,x,i)$ is replaced with a random string. 
    Then, by the security of the hinting PRG, the winning probability of $\qA$ in $\mathsf{Hyb}_3$ is negligibly close to that in $\mathsf{Hyb}_4$.
    In $\mathsf{Hyb}_4$, $\qA$ obtains $(y_1,...,y_n,z_1,...,z_n)$, where
    \begin{align}
            y_i =
            \begin{pmatrix}
                \PKE.\Enc(\ek_\PKE,s_i^0) \\ 
                \PKE.\Enc(\ek_\PKE,s_i^1)
            \end{pmatrix}, 
            \quad z_i=\PRG(s_i^0)
    \end{align}
    for each $i\in[n]$.
    Clearly, the distribution of $(y_1,...,y_n,z_1,...,z_n)$ is independent of $x$.
    Hence, the probability that $\qA$ can successfully invert $(y_1,...,y_n,z_1,...,z_n)$ in $\mathsf{Hyb}_4$ is negligible. 
\end{itemize}
Thus, we conclude that the winning probability of $\qA$ in $\mathsf{Hyb}_0$ is negligible, and we show the one-wayness of TDF-SKL.

\paragraph{Robust PKE-SKL.}
As an application of TDF-SKL, we construct a robust PKE-SKL scheme.
We first formalize the notion of robustness for PKE-SKL schemes as follows:
\begin{itemize}
    \item Let $\mathsf{PKESKL}=(\qKG,\Enc,\qDec,\qDel,\Vrfy)$ be a PKE-SKL scheme. We say that $\mathsf{PKESKL}$ is robust if for any $\ct$, $\TD(\qdk,\qdk')\le\negl(\secp)$, where $(\ek,\qdk,\vk)\gets\qKG(1^\secp)$ and $(m',\qdk')\gets\qDec(\qdk,\ct)$.
\end{itemize}
In this work, we construct an IND-VRA secure robust PKE-SKL scheme assuming TDF-SKL with domain sampler.
However, for consistency with the previous paragraph, we provide an overview of the IND-CPA secure construction assuming TDF-SKL here.
We note that the two constructions are essentially identical and refer the reader to \cref{sec:robustPKESKL} for details of the IND-VRA secure construction.
Our construction relies on the well-known (quantum) Goldreich-Levin theorem \cite{STOC:GolLev89,AC02,C:CLLZ21} and its application to PKE-SKL \cite{EC:AKNYY23}.
We construct an IND-CPA secure robust PKE-SKL $(\qKG,\Enc,\qDec,\qDel,\Vrfy)$ as follows:
\begin{itemize}
    \item $\qKG$: On input $1^\secp$, generate keys $(\ek_\TDF,\qtd_\TDF,\vk_\TDF)$ of the underlying TDF-SKL.
    \item $\Enc$: On input $\ek_\TDF$ and a message $m\in\bit$, output $(y,r,b)$, where $y$ is obtained by evaluating the TDF-SKL on a uniformly random input $x$, $r$ is a random string, and $b=(x\cdot r)\oplus m$.
    \item $\qDec$: On input $\td_\TDF$ and $(y,r,b)$, run the following coherently: Run the inversion algorithm of the TDF-SKL on $y$ to obtain $x'$. If $x'$ is the correct preimage of $y$, then output $(x'\cdot r)\oplus b$. Otherwise, output $\bot$. Copy the result to an ancilla register and uncompute the above evaluation. Measure the ancilla register and return the measurement outcome.
    \item $\qDel$ and $\Vrfy$ are the same as those of the underlying TDF-SKL.
\end{itemize}
We establish IND-CPA security by following the approach of \cite{EC:AKNYY23}, which showed that unpredictability (referred to as OW-KLA security) implies indistinguishability (referred to as IND-KLA security) in the secure key leasing setting.
Moreover, to show robustness, we rely on the gentle measurement lemma \cite{Win99}.
Roughly speaking, the gentle measurement lemma states that if a measurement outcome is almost deterministic, then the post-measurement state remains close to the original state.
Thus, robustness follows if the decryption algorithm of the constructed PKE-SKL behaves almost deterministically on any ciphertext.
We show it by considering the following two cases:
\begin{enumerate}
    \item If $y=\TDF.\Eval(\ek_\TDF,x)$ for some $x$: In this case, by the almost-all-keys inversion correctness of the TDF-SKL, the inversion algorithm outputs $x$ with overwhelming probability.
    \item If $y\neq\TDF.\Eval(\ek_\TDF,x)$ for any $x$: In this case, by the construction of the decryption algorithm, it outputs $\bot$ with probability 1.
\end{enumerate}
Here, $\TDF.\Eval$ denotes the evaluation algorithm of the underlying TDF-SKL.
In both cases, the output of the decryption algorithm is almost deterministic and robustness follows from the gentle measurement lemma.

\subsection{Trapdoor Functions with Copy Protection and Their Application}

\paragraph{Definition of TDFs with copy protection.}
We introduce the definition of trapdoor functions with copy protection (TDF-CP). 
Analogously to SDE, a TDF-CP consists of four algorithms $(\Setup,\qKG,\Eval,\qInv)$, where $\Setup$ generates a pair $(\ek,\td)$ of an evaluation key and a trapdoor, and $\qKG$ takes $\td$ as input and generates a quantum trapdoor $\qtd$.
$\Eval$ is an evaluation algorithm and $\qInv$ is a quantum inversion algorithm.
We require TDF-CP to satisfy  search anti-piracy security, which is defined via the following security game between a challenger and an adversary $\qA'=(\qA,\qB,\qC)$:
\begin{enumerate}
    \item The challenger generates $(\ek,\td)\gets\Setup(1^\secp)$, $\qtd\gets\qKG(\td)$, $x_\qB\gets\bit^n$, $x_\qC\gets\bit^n$ and sends $(\ek,\qtd)$ to $\qA$, where $\bit^n$ is the domain of the TDF-CP.
    \item $\qA$ generates a bipartite state $\mathpzc{q}$ over registers $R_\qB$ and $R_\qC$, and sends $R_\qB$ to $\qB$ and $R_\qC$ to $\qC$.
    \item The challenger sends $y_\qB\coloneqq \Eval(\ek,x_\qB)$ to $\qB$ and $y_\qC\coloneqq \Eval(\ek,x_\qC)$ to $\qC$.
    \item $\qB$ and $\qC$ output $x_\qB'$ and $x_\qC'$, respectively. $\qA'$ wins if $x_\qB=x'_\qB$ and $x_\qC=x'_\qC$.
\end{enumerate}
We say that the TDF-CP $(\Setup,\qKG,\Eval,\qInv)$ is search anti-piracy secure if the probability that $\qA'$ wins the above security game is negligible.

\paragraph{Construction of TDF-CP.}
We present a construction of TDF-CP assuming the existence of indistinguishability obfuscation (iO) and the LWE assumption.
Our construction follows the approach of \cite{C:AnaBeh24}, which provided a simple and modular framework for constructing cryptographic primitives with copy protection.
\cite{C:AnaBeh24} introduced a primitive called unclonable puncturable obfuscation (UPO).
A UPO for a classical circuit class $\{\cC_\secp\}_\secp$ consists of two QPT algorithms $(\qObf,\qEval)$.
$\qObf$ takes as input a circuit $C\in\cC_\secp$, and outputs a quantum program state $\mathpzc{q}$. 
The evaluation algorithm $\qEval$, on input $(\mathpzc{q},x)$, outputs $C(x)$ for all $x$ with overwhelming probability over the randomness of quantum algorithms.
The security of UPO requires that for any QPT adversary $\qA'=(\qA,\qB,\qC)$, $\qA'$ can win the following security game with probability at most negligibly better than the trivial success probability:
\begin{enumerate}
    \item $\qA$ sends $C\in\cC_\secp$ to the challenger. The challenger samples $b\gets\bit$, $x_\qB\gets\cD$, $x_\qC\gets\cD$, where $\cD$ is a distribution with high min-entropy. If $b=0$, $\qA$ obtains $\qObf(1^\secp,C)$. If $b=1$, $\qA$ obtains $\qObf(1^\secp,C^*)$, where $C^*$ is a circuit that on input $x$, outputs $C(x)$ for any $x\notin\{x_\qB,x_\qC\}$ and outputs $\bot$ for $x\in\{x_\qB,x_\qC\}$.
    \item $\qA$ generates a bipartite state and sends the corresponding registers to $\qB$ and $\qC$, respectively.
    \item The challenger sends $x_\qB$ to $\qB$ and $x_\qC$ to $\qC$. $\qB$ and $\qC$ output $b_\qB$ and $b_\qC$, respectively. $\qA'$ wins if $b=b_\qB=b_\qC$.
\end{enumerate}
\cite{C:AnaBeh24} showed that if there exist cryptographic primitives satisfying puncturable security, then by combining them with UPO, one can equip the original primitives with copy protection.
Thus, to construct TDF-CP, it is sufficient to construct TDFs with puncturable security.
Roughly speaking, puncturable security for TDFs requires that one-wayness holds even against a QPT adversary given a punctured trapdoor, where the punctured trapdoor has the same functionality as the original trapdoor on all inputs except for the punctured points.
\cite{AC:CheWanZho18} defined and constructed TDFs with puncturable security for a single punctured point.
In this work, we extend their definition to require puncturable security for two punctured points.
Moreover, following the construction of TDFs in \cite{STOC:SahWat14}, we construct puncturable TDFs from iO, puncturable PRFs, and injective OWFs.
It is known that OWFs imply puncturable PRFs \cite{AC:BonWat13,CCS:KPTZ13,PKC:BoyGolIva14} and iO and OWFs imply injective OWFs \cite{TCC:BitPanWic16}.
Hence, our construction of puncturable TDFs relies on iO and OWFs.
Furthermore, the recent work \cite{cryptoeprint:2025/1880} provided a construction of UPO assuming the existence of iO and the LWE assumption. Combining puncturable TDFs with UPO, we obtain TDF-CP under the assumptions of iO and LWE.

\paragraph{Robust SDE.}
As an application of TDF-CP, we construct a CPA$^+$ anti-piracy secure robust SDE scheme. 
We first define robustness for SDE analogously to that for PKE-SKL.
In other words, we say that an SDE scheme is robust if its quantum decryption key remains reusable after decrypting any ciphertext even if it is not generated by the encryption algorithm.
Our construction of robust SDE follows essentially the same approach as that for robust PKE-SKL.
That is, we rely on the (quantum) Goldreich-Levin theorem \cite{STOC:GolLev89,AC02,C:CLLZ21} and its application to CPA$^+$ anti-piracy secure SDE \cite{TCC:KitYam25}.
\section{Preliminaries}\label{sec:preliminaries}
\subsection{Basic Notations} 
PPT stands for (classical) probabilistic polynomial-time and QPT stands for quantum polynomial-time. 
In this paper, standard math or sans serif font stands for PPT algorithms (e.g., $C$ or $\Gen$) and classical variables (e.g., $x$ or $\pk$). Calligraphic font stands for QPT algorithms (e.g., $\qGen$) and calligraphic
font and/or the bracket notation for (mixed) quantum states (e.g., $\mathpzc{q}$ or $|\psi\rangle$).
We use standard notations of quantum computing and cryptography.
For a bit string $x$, $|x|$ is its length.
$\mathbb{N}$ is the set of natural numbers.
We use $\secp$ as the security parameter.
$[n]$ means the set $\{1,2,...,n\}$.
For a finite set $S$, $x\gets S$ means that an element $x$ is sampled uniformly at random from the set $S$.
$\negl$ is a negligible function, and $\poly$ is a polynomial.
All polynomials that appear in this paper are positive, but for simplicity we do not explicitly mention it.
If we say that an adversary is QPT, it implicitly means non-uniform QPT.
For an algorithm $\cA$, $y\gets \cA(x)$ means that the algorithm $\cA$ outputs $y$ on input $x$.

\subsection{Classical Cryptography}
\begin{definition}[Hinting PRGs \cite{C:KopWat19,AC:GHMO21}]
Let $n,\ell$ be polynomials.
A hinting PRG consists of two PPT algorithms, $\Setup$ and $\Eval$ with the following syntax.
\begin{itemize}
    \item $\Setup(1^\secp)\to\pp$: The setup algorithm takes as input the security parameter $1^\secp$ and outputs the public parameter $\pp$.
    \item $\Eval(\pp,x,i)\to y$: The evaluation algorithm takes as input the public parameter $\pp$, a string $x\in\bit^{n(\secp)}$, and an index $i\in\{0,...,n(\secp)\}$, and outputs $y\in\bit^{\ell(\secp)}$.
\end{itemize}
A hinting PRG $(\Setup,\Eval)$ is required to satisfy the following security:
\begin{itemize}
    \item \textbf{Security:} For any QPT adversary $\qA$, 
    \begin{align}
        &\left|\Pr\left[\beta\gets\qA(\pp,y_0^\beta,\{y_{i,b}^\beta\}_{i\in[n(\secp)],b\in\bit}): 
        \begin{gathered}
            \pp\gets\Setup(1^\secp),x\gets\bit^{n(\secp)}, \beta\gets\bit,\\
            y_0^0\gets\bit^{\ell(\secp)}, y_0^1\coloneqq\Eval(\pp,x,0), \\
            \forall i\in[n(\secp)], \\
            y_{i,0}^0\gets\bit^{\ell(\secp)}, y_{i,1}^0\gets\bit^{\ell(\secp)}, \\
            y_{i,x_i}^1\coloneqq\Eval(\pp,x,i), y_{i,1\oplus x_i}^1\gets\bit^{\ell(\secp)}
        \end{gathered}\right] - \frac{1}{2} \right| \\
        &\le \negl(\secp).
    \end{align}
\end{itemize}
\end{definition}

Hinting PRGs are known to be constructed from several post-quantum secure assumptions such as LWE \cite{C:KopWat19} and a group-action-based assumption \cite{AC:AlaPat22}.

\begin{definition}[iO \cite{JACM:BGIRSVY12}]
    A PPT algorithm $\iO$ is an indistinguishability obfuscation (iO) for a classical circuit class $\{\cC_\secp\}_{\secp\in\N}$ if it satisfies the following:
    \begin{itemize}
        \item \textbf{Correctness:} For all $\secp\in\N$, $C\in\cC_\secp$, and $x$, we have
        \begin{align}
            \Pr[\tilde{C}(x)=C(x) : \tilde{C}\gets\iO(C)] = 1.
        \end{align}
        \item \textbf{Security:} For any QPT adversaries $(\qSamp,\qA)$, the following holds: 
        if 
        \begin{align}
            \Pr[\forall x, ~ C_0(x)=C_1(x) \land |C_0|=|C_1| : (C_0,C_1,\qaux)\gets\qSamp(1^\secp)]\ge1-\negl(\secp),
        \end{align}
        then
        \begin{align}
            &|\Pr[1\gets\qA(\iO(C_0),\qaux):(C_0,C_1,\qaux)\gets\qSamp(1^\secp)] \\
            &\quad - \Pr[1\gets\qA(\iO(C_1),\qaux):(C_0,C_1,\qaux)\gets\qSamp(1^\secp)] | \le \negl(\secp).
        \end{align}
    \end{itemize}
\end{definition}

\begin{definition}[Puncturable PRFs \cite{AC:BonWat13,CCS:KPTZ13,PKC:BoyGolIva14}]
    Let $n,m$ be polynomials.
    A tuple of PPT algorithms $(\KG,\Eval,\Puncture)$ is a puncturable PRF mapping $n(\secp)$-bit strings to $m(\secp)$-bit strings if it satisfies the following conditions:
    \begin{itemize}
        \item \textbf{Punctured correctness:} For any polynomial-sized set $S\subseteq\bit^{n(\secp)}$ and any $x\in\bit^{n(\secp)}\backslash S$, 
        \begin{align}
            \Pr[\Eval(k_S,x)=\Eval(k,x) : k\gets\KG(1^\secp),k_S\gets\Puncture(k,S)] = 1.
        \end{align}
        \item \textbf{Pseudorandom at punctured points:} For any polynomial-sized set $S\subseteq\bit^{n(\secp)}$ and any QPT adversary $\qA$,
        \begin{align}
            |\Pr[1\gets\qA(k_S,\{\Eval(k,x)\}_{x\in S})] - \Pr[1\gets\qA(k_S,(\cU_{m(\secp)})^{|S|})]| \le \negl(\secp),
        \end{align}
        where $k\gets\KG(1^\secp)$, $k_S\gets\Puncture(k,S)$ and $\cU_{m(\secp)}$ is the uniform distribution over $m(\secp)$ bits.
    \end{itemize}
\end{definition}

It is known that puncturable PRFs can be constructed from OWFs \cite{AC:BonWat13,CCS:KPTZ13,PKC:BoyGolIva14}.

\begin{definition}[Injective OWFs]
    Let $n,m$ be polynomials. 
    An injective OWF is given by a PPT algorithm $\Gen$ that takes $1^\secp$ as input, and outputs a description of a classical-polynomial-time-computable function $f:\bit^{n(\secp)}\to\bit^{m(\secp)}$. It satisfies the following properties:
    \begin{itemize}
        \item \textbf{Injectivity:} With overwhelming probability over the choice of $f\gets\Gen(1^\secp)$, $f$ is injective.
        \item \textbf{One-Wayness:} For any QPT adversary $\qA$,
        \begin{align}
            \Pr[f(x')=f(x): f\gets\Gen(1^\secp), x\gets\bit^{n(\secp)}, x'\gets\qA(1^\secp,f,f(x))] \le \negl(\secp).
        \end{align}
    \end{itemize}
\end{definition}

It is known that iO and standard OWFs imply injective OWFs.
\begin{lemma}[\cite{TCC:BitPanWic16}]
    If iO and OWFs exist, then injective OWFs exist.
\end{lemma}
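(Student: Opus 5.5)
The plan is to follow the construction of Bitansky, Paneth, and Wichs \cite{TCC:BitPanWic16}. First build a keyed injective one-way function from iO together with a puncturable PRF, which OWFs give us by \cite{AC:BonWat13,CCS:KPTZ13,PKC:BoyGolIva14}. Then use the injectivity of the underlying length-doubling map to finish the argument.

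\textbf{Construction.} Let $\PRG:\bit^{\secp}\to\bit^{2\secp}$ be a length-doubling PRG, which follows from OWFs. Let $F$ be a puncturable PRF with keys $k$. $\Gen(1^\secp)$ samples $k$ and outputs $f=\iO(C_k)$, where $C_k$ on input $s\in\bit^{\secp}$ computes $t\coloneqq\PRG(s)$ and outputs $(t,F(k,t))$. Injectivity then follows from the first component: $f(s)=f(s')$ implies $\PRG(s)=\PRG(s')$. This is where a simple plain-PRG approach breaks down, because a PRG need not be injective. I would therefore replace $\PRG$ with an injective length-doubling map. One option is to make the first component itself a PRP-like object obfuscated from punctured keys. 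The other, which is the \cite{TCC:BitPanWic16} route, is to let $C_k$ output $(\PRG(s),F(k,\PRG(s))\oplus s)$. This is injective by construction, since $s$ is recoverable from the output given $k$.

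\textbf{One-wayness.} I would prove one-wayness through a standard punctured-programming hybrid argument, where $t^*=\PRG(s^*)$ is the challenge.
\begin{enumerate}
\item Replace $t^*$ with a uniform string; this is indistinguishable by PRG security. With overwhelming probability $t^*$ then lies outside the image of $\PRG$.
\item Puncture $k$ at $t^*$ and obfuscate a circuit that never needs $F(k,t^*)$. This circuit is functionally equivalent to $C_k$, since $t^*$ has no preimage, so iO security applies.
\item Replace $F(k,t^*)$ in the challenge with random using pseudorandomness at the punctured point. The challenge value then hides $s^*$ information-theoretically, since it is masked by a uniform pad.
\end{enumerate}

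\textbf{Main obstacle.} The delicate step is reconciling injectivity with one-wayness. Once the challenge leaves the image, an inverter's success must be defined as outputting $s'$ with $f(s')=f(s^*)$. I would argue through the hybrids directly: in the original game, success implies that the adversary computed $s^*$ itself, by injectivity. I would then bound the probability of outputting $s^*$ across the hybrids, checking that the success event stays efficiently testable given $s^*$ so that each indistinguishability step applies. In the final hybrid, $s^*$ is information-theoretically hidden, so this probability is $2^{-\secp}$.
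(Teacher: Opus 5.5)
The paper only cites \cite{TCC:BitPanWic16} and gives no proof, so your argument has to stand on its own. It does not. Your map $C_k(s)=(\PRG(s),F(k,\PRG(s))\oplus s)$ is injective, and your Steps 2 and 3 are fine once $t^*$ is uniform. The gap is Step 1.

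\textbf{Why Step 1 fails.} In the real game, $s^*$ is both the seed of $t^*=\PRG(s^*)$ and the payload masked in the second component $F(k,t^*)\oplus s^*$. A reduction to PRG security receives only $t^*$, so it cannot produce $F(k,t^*)\oplus s^*$. If it substitutes a fresh payload $s''$, its real branch is no longer your $\mathsf{Hyb}_0$. PRG security says nothing about $(\PRG(s^*),g(s^*))$ for a function $g$ of the seed; indeed $(\PRG(s^*),s^*)$ is trivially distinguishable from $(U,s^*)$. So you would first need the payload to be computationally independent of $s^*$, i.e.\ $F(k,t^*)$ pseudorandom given $\iO(C_k)$. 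That requires puncturing at $t^*$. But you justify the functional equivalence of that puncturing by $t^*\notin\mathrm{Im}(\PRG)$, which is exactly what Step 1 provides. The dependency is circular.

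\textbf{Reordering does not help.} If you puncture while $t^*=\PRG(s^*)$ is still in the image, the new circuit must agree with $C_k$ on all of $\PRG^{-1}(t^*)$. You have two options, and both fail.
\begin{itemize}
\item Hardwire $Z^*=F(k,t^*)$ and output $(t^*,Z^*\oplus s)$ on those inputs. After randomizing $Z^*$, the view contains $Z^*$ (inside the obfuscated circuit) and $Z^*\oplus s^*$ (in the challenge). The PRG reduction again needs $s^*$.
\item Hardwire the single output $y^*$ for every $s$ with $\PRG(s)=t^*$. This is functionally equivalent only if $\PRG^{-1}(t^*)=\{s^*\}$.
\end{itemize}
The second option is exactly what the paper's own Sahai--Waters-style construction does in the proof of \cref{lem:PTDF_OW}. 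There, $C^*$ (\cref{fig:circuit_TDF_punc}) outputs a constant whenever $f(x)=t^*_b$. This is equivalent to $C$ only because $f$ is an \emph{injective} OWF, and that is what lets the final reduction run from $t^*=f(x^*)$ alone. With a generic PRG you lose this step. An injective length-expanding PRG would already be the injective OWF you are trying to build.

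In effect, your Step 1 assumes Sahai--Waters encryption stays secure when the message equals its own encryption randomness. That is a circular-security property, and this hybrid does not derive it from PRG, puncturable-PRF, and iO security. The proposal therefore establishes injectivity but not one-wayness. What is missing is a way to move the challenge out of the image that can be simulated without knowing the preimage; the three-step hybrid does not supply it.
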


\subsection{Unclonable Cryptography}

\paragraph{Secure Key Leasing.}
We review the definition of PKE schemes with secure key leasing (PKE-SKL).
Throughout this paper, we focus on PKE-SKL schemes with classical revocation \cite{TCC:AnaPorVai23,EC:KitMorYam25}.
Moreover, we consider almost-all-keys decryption correctness that is introduced in \cite{EC:DwoNaoRei04} for PKE schemes.
\begin{definition}[PKE-SKL]
A PKE-SKL scheme for message space $\cM$ is a tuple of algorithms $(\mathpzc{KG},\Enc,\mathpzc{Dec},\mathpzc{Del},\Vrfy)$.
\begin{itemize}
    \item $\mathpzc{KG}(1^\secp)\to(\ek,\mathpzc{dk},\vk)$: The key generation algorithm is a QPT algorithm that takes a security parameter $1^\secp$, and outputs an encryption key $\ek$, a quantum decryption key $\mathpzc{dk}$, and a deletion verification key $\vk$.
    \item $\Enc(\ek,m)\to\ct$: The encryption algorithm is a PPT algorithm that takes an encryption key $\ek$ and a message $m\in\cM$, and outputs a ciphertext $\ct$.
    \item $\mathpzc{Dec}(\mathpzc{dk},\ct)\to (m',\mathpzc{dk}'):$ The decryption algorithm is a QPT algorithm that takes a decryption key $\mathpzc{dk}$ and a ciphertext $\ct$ as input,
    and outputs $m'$ and a resulting decryption key $\mathpzc{dk}'$.
    \item $\mathpzc{Del}(\mathpzc{dk})\to\cert:$ The deletion algorithm is a QPT algorithm that takes a decryption key $\mathpzc{dk}$ as input and outputs a classical certificate $\cert$.
    \item $\Vrfy(\vk,\cert)\to\top/\bot:$ The deletion verification algorithm is a deterministic classical polynomial-time algorithm that takes a deletion verification key $\vk$ and a deletion certificate $\cert$, and outputs $\top/\bot$.
\end{itemize}
We require the following correctness properties:
\begin{itemize}
    \item \textbf{Almost-all-keys decryption correctness:} 
    \begin{align}
        \Pr_{(\ek,\qdk,\vk)\gets\qKG(1^\secp)} \left[
        \begin{lgathered}
            \forall m\in\cM, \forall r\in\bit^{n(\secp)}, \\
            \Pr[m\gets\qDec(\qdk,\Enc(\ek,m;r))]\ge 1-\negl(\secp)
        \end{lgathered}
        \right] \ge 1-\negl(\secp),
    \end{align}
    where $n(\secp)$ denotes the randomness length of $\Enc$.
    \item \textbf{Deletion verification correctness:}
    \begin{align}
        \Pr \left[\Vrfy(\vk,\cert)=\top:
        \begin{gathered}
            (\ek,\mathpzc{dk},\vk)\gets\mathpzc{KG}(1^\secp), \\
            \cert\gets\mathpzc{Del}(\mathpzc{dk})
        \end{gathered}
        \right] \ge 1-\negl(\secp).
    \end{align}
\end{itemize}
\end{definition}

\begin{remark}
    As discussed in \cite{EC:KitMorYam25}, we can assume without loss of generality that a decryption key of a PKE-SKL scheme is reusable, i.e., it can be reused to decrypt polynomially many honestly generated ciphertexts. 
    In particular, for honestly generated $\ct$ and $\qdk$, decrypting $\ct$ with $\qdk$ leaves the resulting decryption key that is negligibly close to $\qdk$ in trace distance. This follows from decryption correctness of PKE-SKL and the gentle measurement lemma~(\cref{lem:gentle}). The same remark applies to all encryption schemes considered in this paper (SDE, TDF-SKL, and TDF-CP).
    However, we emphasize that when decrypting ciphertexts that are not honestly generated, the post-decryption state of $\qdk$ is not necessarily close to $\qdk$.
    One of our contributions is the construction of PKE-SKL and SDE whose decryption keys remain reusable even after decrypting \textit{any} ciphertexts.
    See \cref{sec:robustPKESKL,sec:robustSDE} for details.
\end{remark}

We review several security definitions for PKE-SKL, namely IND-CPA security, IND-VRA security, and PRCT security.

\begin{definition}[IND-CPA Security]
We say that a PKE-SKL scheme $\mathsf{PKESKL}=(\qKG,\Enc,\qDec,\qDel,\Vrfy)$ for message space $\cM$ is IND-CPA secure if it satisfies the following requirement, formalized by the experiment $\mathsf{Exp}^{\mathsf{IND}\textrm{-}\mathsf{CPA}}_{\mathsf{PKESKL},\qA}(\secp,b)$ between the challenger and an adversary $\qA$:
\begin{enumerate}
    \item The challenger generates $(\ek,\mathpzc{dk},\vk)\gets\mathpzc{KG}(1^\secp)$ and sends $(\ek,\mathpzc{dk})$ to $\qA$.
    \item $\qA$ sends $\cert$ and $(m_0,m_1)\in\cM^2$ to the challenger.
    If $\Vrfy(\vk,\cert)=\bot$, the challenger outputs 0 as the final output of this experiment.
    Otherwise, the challenger generates $\ct\gets\Enc(\ek,m_b)$, and sends $\ct$ to $\qA$.
    \item $\qA$ outputs a guess $b'$ for $b$. The challenger outputs $b'$ as the final output of the experiment.
    Then, for any QPT $\qA$, 
    \begin{align}
        |\Pr [1\gets\mathsf{Exp}^{\mathsf{IND\textrm{-}CPA}}_{\mathsf{PKESKL},\qA}(\secp,0)]-\Pr[1\gets\mathsf{Exp}^{\mathsf{IND\textrm{-}CPA}}_{\mathsf{PKESKL},\qA}(\secp,1)]| \le\negl(\secp).
    \end{align}
\end{enumerate}
\end{definition}

\begin{definition}[IND-VRA Security \cite{EC:KitMorYam25}]
We say that a PKE-SKL scheme $\mathsf{PKESKL}=(\qKG,\Enc,\qDec,\qDel,\Vrfy)$ for message space $\cM$ is IND-VRA secure if it satisfies the following requirement, formalized by the experiment $\mathsf{Exp}^{\mathsf{IND\textrm{-}VRA}}_{\mathsf{PKESKL},\qA}(\secp,b)$ between the challenger and an adversary $\qA$:
\begin{enumerate}
    \item The challenger generates $(\ek,\mathpzc{dk},\vk)\gets\mathpzc{KG}(1^\secp)$ and sends $(\ek,\mathpzc{dk})$ to $\qA$.
    \item $\qA$ sends $\cert$ and $(m_0,m_1)\in\cM^2$ to the challenger.
    If $\Vrfy(\vk,\cert)=\bot$, the challenger outputs 0 as the final output of this experiment.
    Otherwise, the challenger generates $\ct\gets\Enc(\ek,m_b)$, and sends $(\ct,\vk)$ to $\qA$.
    \item $\qA$ outputs a guess $b'$ for $b$. The challenger outputs $b'$ as the final output of the experiment.
    Then, for any QPT $\qA$, 
    \begin{align}
        |\Pr [1\gets\mathsf{Exp}^{\mathsf{IND\textrm{-}VRA}}_{\mathsf{PKESKL},\qA}(\secp,0)]-\Pr[1\gets\mathsf{Exp}^{\mathsf{IND\textrm{-}VRA}}_{\mathsf{PKESKL},\qA}(\secp,1)]| \le\negl(\secp).
    \end{align}
\end{enumerate}
\end{definition}

\cite{EC:KitMorYam25} showed that IND-CPA secure PKE schemes imply IND-VRA secure PKE-SKL schemes.
It is straightforward to verify that their construction satisfies almost-all-keys decryption correctness if the underlying PKE scheme satisfies almost-all-keys correctness.
\begin{lemma}[Derived from \cite{EC:KitMorYam25}]
    If IND-CPA secure PKE schemes that satisfy almost-all-keys correctness exist, then IND-VRA secure PKE-SKL schemes that satisfy almost-all-keys decryption correctness exist.
\end{lemma}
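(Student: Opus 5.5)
The plan is to take the IND-VRA secure PKE-SKL construction of \cite{EC:KitMorYam25} from an IND-CPA secure PKE and check that almost-all-keys correctness of the base PKE carries over to almost-all-keys decryption correctness of the resulting PKE-SKL. IND-VRA security is exactly what \cite{EC:KitMorYam25} prove, and their proof uses only IND-CPA security of the base PKE, so we take it unchanged. The only new work is correctness.

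We recall the shape of their construction. Key generation samples polynomially many independent key pairs $(\ek_{i,b},\dk_{i,b})$ of the base PKE for $i\in[n]$ and $b\in\bit$. It then prepares a quantum key consisting of BB84-type or superposed states, such as $\frac{1}{\sqrt2}(\ket{0}\ket{\dk_{i,0}}+\ket{1}\ket{\dk_{i,1}})$ or computational-basis states depending on a hidden basis string. The verification key records the bases and values needed to check a measured certificate. Encryption secret-shares the message across the indices and encrypts each share under both $\ek_{i,0}$ and $\ek_{i,1}$. Decryption coherently decrypts the relevant component of each index with whichever secret key is in superposition, and copies out the plaintext share.

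The correctness argument proceeds in three steps.

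\textbf{Step 1 (key goodness).} Call a base key pair ``good'' if decryption is correct for every message and every randomness. By almost-all-keys correctness of the base PKE, each pair is good except with negligible probability. A union bound over the $2n$ polynomially many pairs shows that all of them are good with probability $1-\negl(\secp)$.

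\textbf{Step 2 (decryption on good keys).} Conditioned on all pairs being good, fix any message $m$ and any randomness $r$ of $\Enc$. Each ciphertext component $\Enc(\ek_{i,b},s_i;r_{i,b})$ then decrypts to $s_i$ deterministically under $\dk_{i,b}$. The coherent decryption therefore maps $\ket{b}\ket{\dk_{i,b}}\ket{0}$ to $\ket{b}\ket{\dk_{i,b}}\ket{s_i}$, with the same $s_i$ on both branches. Copying $s_i$ and uncomputing leaves the key register undisturbed and yields the correct share. Hence the recovered message equals $m$ with probability $1$, or $1-\negl(\secp)$ if the base decryption is itself randomized. This holds uniformly over all $m$ and $r$, which is exactly the quantifier order in the definition.

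\textbf{Step 3 (the obstacle).} The main thing to verify is that Step 2 really uses correctness for \emph{every} randomness. Ordinary correctness, which holds only on average over $r$, would not suffice here. Every component uses an independent honest encryption, so the almost-all-keys guarantee for each base pair applies directly. The remaining check is that the specific coherent-decryption and uncomputation steps of \cite{EC:KitMorYam25} involve no additional randomness that depends on the key. If that holds, the two steps combine to give the claimed almost-all-keys decryption correctness, while deletion verification correctness and IND-VRA security are inherited verbatim.
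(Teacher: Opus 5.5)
Your proposal is correct and follows the paper's route. The paper only remarks that the Kitagawa--Morimae--Yamakawa construction inherits almost-all-keys decryption correctness from the base PKE, with its IND-VRA proof used unchanged. Your union bound over the $2n$ base key pairs, followed by per-randomness correctness on good keys (so both branches of the coherent decryption output the same share and the key is undisturbed), is exactly the verification the paper leaves implicit.
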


\begin{definition}[PRCT Security \cite{TCC:AnaHuHua24}]
We say that a PKE-SKL scheme $\mathsf{PKESKL}=(\qKG,\Enc,\qDec,\qDel,\Vrfy)$ for message space $\cM$ and ciphertext space $\bit^{\ell(\secp)}$ is PRCT secure if it satisfies the following requirement, formalized by the experiment $\mathsf{Exp}^{\mathsf{PRCT}}_{\mathsf{PKESKL},\qA}(\secp,b)$ between the challenger and an adversary $\qA$:
\begin{enumerate}
    \item The challenger generates $(\ek,\mathpzc{dk},\vk)\gets\mathpzc{KG}(1^\secp)$ and sends $(\ek,\mathpzc{dk})$ to $\qA$.
    \item $\qA$ sends $\cert$ and $m\in\cM$ to the challenger.
    If $\Vrfy(\vk,\cert)=\bot$, the challenger outputs 0 as the final output of this experiment.
    Otherwise, the challenger generates $\ct_b$, where $\ct_0\gets\Enc(\ek,m)$ and $\ct_1\gets\bit^{\ell(\secp)}$, and sends $\ct_b$ to $\qA$.
    \item $\qA$ outputs a guess $b'$ for $b$. The challenger outputs $b'$ as the final output of the experiment.
    Then, for any QPT $\qA$, 
    \begin{align}
        |\Pr [1\gets\mathsf{Exp}^{\mathsf{PRCT}}_{\mathsf{PKESKL},\qA}(\secp,0)]-\Pr[1\gets\mathsf{Exp}^{\mathsf{PRCT}}_{\mathsf{PKESKL},\qA}(\secp,1)]| \le\negl(\secp).
    \end{align}
\end{enumerate}
\end{definition}

\cite{TCC:AnaHuHua24} constructed PRCT secure PKE-SKL schemes under the LWE assumption.
\begin{lemma}
    There exist PRCT secure PKE-SKL schemes under the LWE assumption.
\end{lemma}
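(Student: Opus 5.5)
The statement is a citation of \cite{TCC:AnaHuHua24}, so my plan is to recall their LWE-based construction and check that it meets the PKE-SKL syntax used here: classical revocation, almost-all-keys decryption correctness, and PRCT security. I will not build a new scheme.

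\textbf{Construction.} The key generation algorithm samples an LWE matrix $A$ with a trapdoor. It takes a short secret $x$ and sets the public key to $(A, Ax+e)$, with appropriate dimensions. The decryption key is a Gaussian superposition over two shifts, $\sum_{b\in\bit}|b\rangle|x_b\rangle$, in the style of claw-free trapdoor functions. The verification key consists of $x_0, x_1$ (or the trapdoor). Deletion measures in the Fourier (Hadamard) basis, producing a certificate $d$ with $d\cdot(x_0\oplus x_1)=c$. Verification checks this equation, which is a classical test.

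\textbf{Correctness.} Encryption is dual-Regev style: $\ct=(s^\top A+e'^\top,\; s^\top u + e'' + m\lfloor q/2\rfloor)$. Decryption computes an inner product coherently on the superposed key and rounds. Noise flooding makes the error bound hold for every randomness and every message, except for a negligible fraction of badly formed keys. This gives almost-all-keys correctness. Deletion verification correctness is the standard Hadamard-measurement analysis.

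\textbf{PRCT security.} This step is the main obstacle, and I would lean on the core theorem of \cite{TCC:AnaHuHua24}. Suppose an adversary produces a valid certificate and afterwards distinguishes $\Enc(\ek,m)$ from uniform. From such an adversary one extracts, via a search-to-decision reduction on the LWE sample $s^\top A+e'$, an algorithm that predicts $x_0$ or $x_1$ while the certificate still verifies. That is the adaptive hardcore bit property. The extraction has to be done with repeated projective (Marriott--Watrous-type) measurements so that the distinguisher's state is preserved. Once the adaptive hardcore bit property is in place, ciphertext pseudorandomness in the post-deletion phase follows because both ciphertext components are LWE samples and therefore pseudorandom. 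Since the result is cited, the proof here amounts to pointing to their theorem and noting that the correctness properties hold as stated.
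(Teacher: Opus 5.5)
Your proposal matches the paper, which states this lemma purely as a citation of \cite{TCC:AnaHuHua24} and gives no further argument, so resting the proof on that citation is exactly what is done there. Your recalled sketch is not needed and is loose in places. For example, with public key $(A,Ax+e)$, the dual-Regev decryption you describe leaves the non-small term $s^\top e$, so the image must be exactly $u=Ax \bmod q$. The argument should therefore not depend on that reconstruction.
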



\paragraph*{Copy Protection.}

We review the definition of single-decryptor encryption (SDE).
As with the case of PKE-SKL, we consider almost-all-keys correctness.
\begin{definition}[Single-Decryptor Encryption \cite{GZ20,C:CLLZ21}]
    A single-decryptor encryption (SDE) scheme $\mathsf{SDE}$ is a tuple of four algorithms $(\Setup,\qKG,\Enc,\qDec)$.
    Below, let $\cM$ be the message space of $\mathsf{SDE}$.
    \begin{itemize}
        \item $\Setup(1^\secp)\to(\pk,\sk)$: The setup algorithm takes a security parameter $1^\secp$, and outputs a public key $\pk$ and a secret key $\sk$.
        \item $\qKG(\sk)\to\qsk$: The key generation algorithm takes a secret key $\sk$, and outputs a quantum decryption key $\qsk$.
        \item $\Enc(\pk,m)\to\ct$: The encryption algorithm takes a public key $\pk$ and a message $m\in\cM$, and outputs a ciphertext $\ct$.
        \item $\qDec(\qsk,\ct)\to (m',\qsk')$: The decryption algorithm takes a quantum decryption key $\qsk$ and a ciphertext $\ct$, and outputs a message $m'\in\{\bot\}\cup\cM$ and a resulting decryption key $\qsk'$. 
    \end{itemize}
    \textbf{Correctness:} For all $m\in\cM$, 
    \begin{align}
        \Pr\left[m\gets\qDec(\qsk,\ct):
        \begin{gathered}
            (\pk,\sk)\gets\Setup(1^\secp) \\ 
            \qsk\gets\qKG(\sk) \\ 
            \ct\gets\Enc(\pk,m)
        \end{gathered}\right] \ge 1-\negl(\secp).
    \end{align}
\end{definition}

In this paper, we focus on the CPA$^+$ anti-piracy security of SDE that is introduced in \cite{TCC:KitYam25}.
\begin{definition}[CPA$^+$ Anti-Piracy Security \cite{TCC:KitYam25}]
    Let $\SDE=(\Setup,\qKG,\Enc,\qDec)$ be an SDE scheme.
    We consider the CPA$^+$ anti-piracy game $\mathsf{Exp}^{\mathsf{CPA}^+}_{\SDE,\qA_\SDE}(\secp)$ between the challenger and an adversary $\qA_\SDE=(\qA,\qB,\qC)$ below. 
    \begin{enumerate}
        \item The challenger generates $(\pk,\sk)\gets\Setup(1^\secp)$, $\qsk\gets\qKG(\sk)$ and sends $(\pk,\qsk)$ to $\qA$.
        \item $\qA$ chooses two pairs of messages $(m_{\qB,0},m_{\qB,1})\in\cM^2$ and $(m_{\qC,0},m_{\qC,1})\in\cM^2$ such that $|m_{\qB,0}|=|m_{\qB,1}|$ and $|m_{\qC,0}|=|m_{\qC,1}|$, and creates a bipartite state $\mathpzc{q}$ over registers $\regR_\qB$ and $\regR_\qC$.
        Then, $\qA$ sends $(m_{\qB,0},m_{\qB,1},m_{\qC,0},m_{\qC,1})$ to the challenger, register $\regR_\qB$ to $\qB$, and register $\regR_\qC$ to $\qC$.
        \item The challenger chooses $\coin_\qB\gets\bit$ and $\coin_\qC\gets\bit$, generates $\ct_\qB\gets\Enc(\pk,m_{\qB,\coin_\qB})$ and $\ct_\qC\gets\Enc(\pk,m_{\qC,\coin_{\qC}})$, and sends $\ct_\qB$ and $\ct_\qC$ to $\qB$ and $\qC$, respectively.
        \item $\qB$ and $\qC$ respectively output $\coin'_\qB$ and $\coin'_\qC$. The challenger outputs 1 if $\coin'_\qB\oplus\coin'_\qC=\coin_\qB\oplus\coin_\qC$. Otherwise, the challenger outputs 0.
    \end{enumerate}
    We say that $\SDE$ is CPA$^+$ anti-piracy secure if for any QPT adversary $\qA_\SDE$, 
    \begin{align}
        \Pr[1\gets\mathsf{Exp}^{\mathsf{CPA}^+}_{\SDE,\qA_\SDE}(\secp)] \le \frac{1}{2} + \negl(\secp).
    \end{align}
\end{definition}

Ananth and Behera \cite{C:AnaBeh24} introduced unclonable puncturable obfuscation (UPO) as a general framework for constructing copy protected cryptographic primitives.
In this work, we focus on UPO with almost-all-keys correctness, disregarding errors caused by quantum algorithms.

\begin{definition}[Unclonable Puncturable Obfuscation \cite{C:AnaBeh24}]
\label{def:UPO}
    An unclonable puncturable obfuscation (UPO) scheme for a circuit class $\cC=\{\cC_\secp:\bit^{\ell_\text{in}}\to\bit^{\ell_\text{out}}\}_{\secp\in\N}$ is a pair of algorithms $(\qObf,\qEval)$ with the following syntax:
    \begin{itemize}
        \item $\qObf(1^\secp,C)$: The obfuscation algorithm is a QPT algorithm that takes a security parameter $1^\secp$ and $C\in\cC_\secp$ as input and outputs a quantum state $\mathpzc{p}$.
        \item $\qEval(\mathpzc{p},x)\to (y,\mathpzc{p'})$: The evaluation algorithm is a QPT algorithm that takes $\mathpzc{p}$ and $x\in\bit^{\ell_\text{in}}$ and outputs $y\in\bit^{\ell_\text{out}}$ and a resulting state $\mathpzc{p'}$.
    \end{itemize}
    We require the following properties:
    \begin{itemize}
        \item \textbf{Almost-All-Keys Correctness:} For any $C\in\cC_\secp$, 
        \begin{align}
            \Pr_{\mathpzc{p}\gets\qObf(1^\secp,C)}\left[\forall x\in\bit^{\ell_{in}},~ \Pr_{(y,\mathpzc{p'})\gets\qEval(\mathpzc{p},x)}[y=C(x)]\ge 1-\negl(\secp) \right] \ge 1-\negl(\secp).
        \end{align}
        \item \textbf{$\cD$-UPO Security:} Let $\cD$ be a distribution over $\bit^{\ell_{\text{in}}}\times\bit^{\ell_{\text{in}}}$.
        Consider the experiment $\mathsf{Exp}_{\cD,\qA'}$ between the challenger and an adversary $\qA'=(\qA,\qB,\qC)$ that works as follows:
        \begin{enumerate}
            \item $\qA$ sends $C\in\cC_\secp$ to the challenger.
            \item The challenger generates $b\gets\bit$, $(x_\qB,x_\qC)\gets\cD$ and generates a quantum state $\mathpzc{p}$ as follows:
            \begin{itemize}
                \item If $b=0$, $\mathpzc{p}\gets\qObf(1^\secp,C)$.
                \item If $b=1$, $\mathpzc{p}\gets\qObf(1^\secp,C^*[x_\qB,x_\qC])$, 
            \end{itemize}
            where 
            \begin{align}
                C^*[x_\qB,x_\qC] = 
                \begin{cases}
                    C(x) & \text{if } x\in\bit^{\ell_{\text{in}}}\backslash\{x_\qB,x_\qC\} \\
                    \bot & \text{if } x\in \{x_\qB,x_\qC\}.
                \end{cases}
            \end{align}
            The challenger sends $\mathpzc{p}$ to $\qA$.
            \item $\qA$ generates a bipartite state $\mathpzc{q}$ over registers $\regR_\qB$ and $\regR_\qC$.
            $\qA$ sends $\regR_\qB$ to $\qB$ and $\regR_\qC$ to $\qC$.
            \item The challenger sends $x_\qB$ and $x_\qC$ to $\qB$ and $\qC$, respectively.
            \item $\qB$ and $\qC$ output $b_\qB$ and $b_\qC$. 
            The output of the experiment is 1 if $b=b_\qB=b_\qC$.
        \end{enumerate}
        Then, for any QPT adversary $\qA'=(\qA,\qB,\qC)$,
        \begin{align}
            \Pr[1\gets\mathsf{Exp}_{\cD,\qA'}(\secp)] \le \frac{1}{2} + \negl(\secp).
        \end{align}
    \end{itemize}
\end{definition}

The recent work \cite{cryptoeprint:2025/1880} showed that UPOs exist assuming the existence of iO and LWE.

\begin{lemma}[\cite{cryptoeprint:2025/1880}]
    Assuming the existence of iO and the LWE assumption, for any constant $c>0$ and any polynomials $\ell_{inp}$ and $\ell_{out}$, there exists a UPO for the circuit class $\mathsf{Circ}= \{C : \bit^{\ell_{inp}} \to\bit^{\ell_{out}} \}$ that satisfies $\cD$-UPO security for any QPT-samplable product distributions $\cD=\cD_\qB\times\cD_\qC$, where $\cD_\qB$ and $\cD_\qC$ have min-entropy at least $\secp^c$.
\end{lemma}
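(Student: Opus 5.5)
This lemma is imported from \cite{cryptoeprint:2025/1880} rather than proved here, so what follows is only a sketch of how I would reprove it. The plan is to instantiate the Ananth--Behera template for UPO with the coset-state machinery of \cite{C:CLLZ21}.

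\textbf{Construction.} On input $C$, the obfuscator $\qObf$ does the following.
\begin{itemize}
\item It samples coset states $\{\ket{A_{i,s_i,s'_i}}\}_{i\in[k]}$ for random subspaces $A_i$ of dimension half the ambient dimension.
\item It samples a key $K$ for a puncturable PRF (or extractor) that maps an input $x$ to a challenge string $c(x)\in\bit^{k}$.
\item It outputs the coset states together with $\iO(P)$, where $P$ takes $(x,v_1,\dots,v_k)$ and outputs $C(x)$ only if each $v_i$ lies in $A_i+s_i$ when $c(x)_i=0$, and in $A_i^\perp+s'_i$ when $c(x)_i=1$. Otherwise $P$ outputs $\bot$.
\end{itemize}
Evaluation $\qEval$ computes $c(x)$, applies Hadamards to the coordinates where $c(x)_i=1$, and runs $\iO(P)$ coherently on the result. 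It then copies the output to an ancilla and uncomputes.

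Correctness is immediate on every $x$, since membership in each coset is deterministic. Hence almost-all-keys correctness holds, with the gentle measurement lemma used to argue reusability.

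\textbf{Security.} I would proceed through the following hybrids.
\begin{enumerate}
\item Using iO and the puncturable PRF, replace $c(x_\qB)$ and $c(x_\qC)$ by hardwired random strings. Because $\cD=\cD_\qB\times\cD_\qC$ is a product distribution with min-entropy $\secp^c$, we can take $c$ to be a seeded extractor applied to $x$. This makes $c(x_\qB)$ and $c(x_\qC)$ close to independent and uniform even given the program.
\item Replace the membership checks with subspace-hiding obfuscations, i.e., $\iO$ of larger subspaces, as in \cite{C:CLLZ21}.
\item Replace the behaviour at the punctured points by compute-and-compare obfuscations, which exist under LWE.
\end{enumerate}
In the $b=1$ world, $P$ outputs $\bot$ at $x_\qB$ and $x_\qC$. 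So a pair $(\qB,\qC)$ that both distinguish $b$ on their own inputs yields, via the simultaneous threshold-implementation extraction of \cite{C:CLLZ21}, two parties that each output valid vectors for independent random challenges. That contradicts the monogamy-of-entanglement property of coset states for independent challenges, which follows from iO and LWE via the computational direct-product hardness of \cite{C:CLLZ21}.

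\textbf{Main obstacle.} The hardest step is the simultaneous extraction. Success is defined as $b=b_\qB=b_\qC$, with the bit $b$ shared between the two parties rather than each party solving a search task. I would first convert this into a ``both distinguish'' statement using projective/threshold implementations $\mathsf{TI}_{1/2+\gamma}$ applied to $\qB$ and $\qC$ jointly. I would then argue that the resulting post-measurement states are simultaneously good decryptors, and invoke the product structure of $\cD$ so that the two challenges can be sampled and hardwired independently. If this route fails for general $C$, I would fall back on Ananth--Behera's reduction from UPO to ``unclonable PRF-style'' security, applied to the puncturable wrapper around $C$.
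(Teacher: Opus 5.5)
The paper gives no proof of this lemma; it imports it from \cite{cryptoeprint:2025/1880}, so there is no in-paper argument to compare against. Judged on its own, your sketch has a genuine gap, and it sits exactly at the step you call the main obstacle.

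\textbf{Why the CLLZ extraction does not apply.} The simultaneous-extraction machinery of \cite{C:CLLZ21} applies threshold implementations to $\qB$'s and $\qC$'s registers. It relies on each party's hidden bit being sampled \emph{after} the split, with the correct answer a function of that party's fresh challenge. Only then is ``$\qB$ is a good distinguisher'' a projective property of $\qB$'s register alone, which can be measured jointly with the analogous property of $\qC$'s register.

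In the UPO game the bit $b$ is fixed \emph{before} the split. It decides whether $\qA$ received $\qObf(1^\secp,C)$ or $\qObf(1^\secp,C^*[x_\qB,x_\qC])$, and the post-split challenges $x_\qB,x_\qC$ are independent of it. So there is no local test of the form ``$\qB$ answers correctly on a fresh $x_\qB$''. Conditioned on $b=0$, the constant strategy that outputs $0$ passes any such test with probability $1$.

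Let $E$ be the event that both parties output $1$. An advantage $\epsilon$ only gives $\Pr[E\mid b=1]-\Pr[E\mid b=0]\ge 2\epsilon$. That is a difference between two \emph{pre-split} shared states, detected by a product measurement. Nothing $\mathsf{TI}_{1/2+\gamma}$ on either register can certify or amplify captures it. Hence the chain ``both distinguish $\Rightarrow$ both are simultaneously good $\Rightarrow$ extract valid coset vectors for independent challenges'' does not follow.

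\textbf{What is missing, and why the fallback does not help.} This identical-challenge-bit structure is precisely why \cite{C:AnaBeh24} could prove UPO security only under their simultaneous inner product conjecture. An iO+LWE proof needs an additional idea for turning the shared-bit game into a two-party search game, and your sketch does not supply one. Your monogamy/direct-product endgame would be fine once such a reduction exists.

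Your fallback does not close the gap either. Ananth--Behera's implications go \emph{from} UPO \emph{to} copy protection of puncturable functionalities, not the reverse. Their own UPO security proof also rests on that conjecture.

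\textbf{A secondary, fixable inconsistency.} The evaluator must compute $c(x)$ in the clear to know where to apply Hadamards.
\begin{itemize}
\item If $c$ is a secret puncturable PRF, you would have to publish an obfuscated program for it.
\item If $c$ is a public seeded extractor, then $c(x_\qB)$ is not statistically close to uniform given the $b=1$ program, because that program depends on $x_\qB$. The claim can only hold computationally, after $x_\qB$ has been hidden by the compute-and-compare step, so your first and third hybrids are in the wrong order.
\end{itemize}
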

Moreover, if iO exist, we can assume that the UPO also satisfies iO-security without loss of generality \cite{cryptoeprint:2025/1880}.

We review a variant of puncturable secure circuits \cite{C:AnaBeh24,cryptoeprint:2025/1880}.
This variant allows the circuit and its challenge-input distribution to be correlated.
\begin{definition}[Puncturable Secure Circuits]
    For any polynomial $\ell\coloneqq\ell(\secp)$, let $\mathsf{Circ}=\{C : \bm{X}\to\bm{Y}\}$ be a circuit class, where $\bm{X}=\bit^n$, $\bm{Y}=\bit^m$ for polynomials $m\coloneqq m(\secp)$, $n\coloneqq n(\secp)$, equipped with an efficient deterministic algorithm $\Puncture$ that satisfies puncturing correctness.
    Namely, for every $C\in\mathsf{Circ}$ and tuple $(x_1,\ldots,x_{\ell'})$, the circuit $\widehat C\gets\Puncture(C,(x_1,\ldots,x_{\ell'}))$ outputs $C(x)$ for every $x\notin\{x_1,\ldots,x_{\ell'}\}$ and outputs $\bot$ on $\{x_1,\ldots,x_{\ell'}\}$.
    Let $\Gen_{\mathsf{Circ}}(1^\secp)$ be an efficient algorithm that outputs a public parameter $z$ and a circuit $C\in\mathsf{Circ}$, and let $\{\cD_z\}_z$ be a family of efficiently samplable distributions over $\bm{X}$.
    We say that $(\mathsf{Circ},\Puncture)$ satisfies $\ell$-point $m$-bit $(\Gen_{\mathsf{Circ}},\{\cD_z\}_z)$-unpredictability-style puncturing security if for every QPT adversary $\qA$ and every $\ell'\le\ell$, the probability that $\qA$ succeeds in the following security game is at most $1-\left(1-\frac{1}{2^m}\right)^{\ell'}+\negl(\secp)$:
    \begin{enumerate}
        \item $(z,C)\gets\Gen_{\mathsf{Circ}}(1^\secp)$.
        \item Independently sample $x_1\gets\cD_z,\ldots,x_{\ell'}\gets\cD_z$.
        \item $\hat{C}\gets\Puncture(C,(x_1,...,x_{\ell'}))$.
        \item $y'_1,...,y'_{\ell'}\gets\qA(z,\hat{C},(x_1,...,x_{\ell'}))$.
        \item $\qA$ wins if there exists $i\in[\ell']$ such that $y'_i=C(x_i)$.
    \end{enumerate}
\end{definition}

We can show the following lemma by modifying the modular implication from puncturable secure circuits and UPO to copy protection \cite{C:AnaBeh24,cryptoeprint:2025/1880}.

\begin{lemma}
\label{lem:modular}
    Suppose $m=\omega(\log\secp)$ and $(\mathsf{Circ},\Puncture)$ satisfies 2-point $m$-bit $(\Gen_{\mathsf{Circ}},\{\cD_z\}_z)$-unpredictability-style puncturing security.
    Let $(\mathsf{UPO}.\qObf,\mathsf{UPO}.\qEval)$ be a UPO for the circuit class $\mathsf{Circ}$ that satisfies security as iO and $(\cD_z\times\cD_z)$-UPO security for every $z$ in the support of the first output of $\Gen_{\mathsf{Circ}}$.
    Then for any QPT adversary $\qA'=(\qA,\qB,\qC)$, 
    \begin{align}
        \Pr[1\gets\mathsf{Exp}_{\qA'}(\secp)] \le \negl(\secp),
    \end{align}
    where $\mathsf{Exp}_{\qA'}$ is the experiment between the challenger and an adversary $\qA'=(\qA,\qB,\qC)$ that works as follows:
    \begin{enumerate}
        \item The challenger generates $(z,C)\gets\Gen_{\mathsf{Circ}}(1^\secp)$ and $\mathpzc{p}\gets\mathsf{UPO.}\qObf(1^\secp,C)$, and sends $(z,\mathpzc{p})$ to $\qA$.
        \item $\qA$ produces a bipartite state $\mathpzc{q}$ over registers $\regR_\qB$ and $\regR_\qC$. $\qA$ sends $\regR_\qB$ to $\qB$ and $\regR_\qC$ to $\qC$.
        \item The challenger independently samples $x_\qB\gets\cD_z$ and $x_\qC\gets\cD_z$, and sends $x_\qB$ to $\qB$ and $x_\qC$ to $\qC$.
        \item $\qB$ and $\qC$ output $y_\qB$ and $y_\qC$.
        \item The output of the experiment is 1 if $y_\qB=C(x_\qB)$ and $y_\qC=C(x_\qC)$.
    \end{enumerate}
\end{lemma}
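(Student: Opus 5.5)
We bound the success probability in $\mathsf{Exp}_{\qA'}$ by a hybrid argument in the style of \cite{C:AnaBeh24}: the UPO security reduces the real game to one with a punctured circuit, and there the 2-point puncturing security applies. Let $p_0$ denote the winning probability of $\qA'$ in $\mathsf{Exp}_{\qA'}$.

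\textbf{Hybrid $\mathsf{H}_1$.} This is the same as $\mathsf{Exp}_{\qA'}$ except that $(x_\qB,x_\qC)\gets\cD_z\times\cD_z$ are sampled at the start. The challenger also obfuscates $C^*[x_\qB,x_\qC]=\Puncture(C,(x_\qB,x_\qC))$ instead of $C$. Let $p_1$ be the winning probability here. In $\mathsf{H}_1$, $\qB$ and $\qC$ must output $C(x_\qB)$ and $C(x_\qC)$ while holding only a state derived from an obfuscation of the punctured circuit, together with $z$ and their challenge inputs.

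To bound $p_1$, we build an adversary against 2-point puncturing security. It receives $(z,\hat C,(x_\qB,x_\qC))$, computes $\mathsf{UPO}.\qObf(1^\secp,\hat C)$, and runs $\qA,\qB,\qC$ itself. It outputs $(y_\qB,y_\qC)$. Winning in $\mathsf{H}_1$ requires both values to be correct, which is stronger than the "some $i$" condition in the puncturing game. Hence
\begin{align}
p_1\le 1-(1-2^{-m})^2+\negl(\secp)\le 2\cdot 2^{-m}+\negl(\secp)=\negl(\secp),
\end{align}
using $m=\omega(\log\secp)$.

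\textbf{Bounding $p_0-p_1$ via UPO security.} This is the main obstacle. UPO security bounds the probability that $b_\qB=b_\qC=b$ by $1/2+\negl$, so it only says the two parties cannot \emph{simultaneously} detect puncturing. We construct a UPO adversary $\qA''$ as follows.
\begin{itemize}
\item It samples $(z,C)$ and submits $C$.
\item It runs $\qA$ on $(z,\mathpzc{p})$.
\item Its $\qB''$ receives $x_\qB$, runs $\qB$ to get $y_\qB$, and outputs $b_\qB=0$ iff $y_\qB=C(x_\qB)$. It can check this because $C$ is known to $\qA''$ and can be passed into both registers. $\qC''$ behaves symmetrically.
\end{itemize}
When $b=0$, both parties output $0$ with probability $p_0$. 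When $b=1$, the event "both output $1$" is the complement of "at least one party is correct". The latter has probability at most $2\cdot 2^{-m}+\negl$, by the same puncturing reduction with the "some $i$" condition. Hence the UPO winning probability is at least
\begin{align}
\tfrac12 p_0+\tfrac12\bigl(1-\negl(\secp)\bigr).
\end{align}
Comparing with the bound $\le 1/2+\negl(\secp)$ gives $p_0\le\negl(\secp)$.

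In this reduction, hybrid $\mathsf{H}_1$ serves only as the bound on the $b=1$ branch. Two points need checking. First, $z$ must be fixed before $\cD_z$ is chosen, which is why we need UPO security for $\cD_z\times\cD_z$ for every $z$; we can fix $z$ non-uniformly, or let $\qA''$ sample it first and invoke security for that distribution. Second, almost-all-keys correctness is not even needed on the punctured side. The iO security of the UPO would matter only if the puncturing game handed out the obfuscated punctured circuit differently, for example as a classical iO obfuscation, and we would use it there to switch between equivalent representations of $\hat C$.
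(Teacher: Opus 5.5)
Your reduction has the right shape. It is the Ananth--Behera template the paper refers to: a UPO adversary whose two parties each output $0$ iff their answer equals $C(x)$, so the $b=0$ branch contributes $\tfrac12 p_0$ and the $b=1$ branch contributes about $\tfrac12$. But the bound on the $b=1$ branch does not go through as written, and it fails exactly where the hypothesis you dismiss is needed. In the UPO game the challenger obfuscates $C^*[x_\qB,x_\qC]$. This is a generic punctured circuit determined only by $C$ and the two points. The UPO is defined for $\mathsf{Circ}$ with no reference to its $\Puncture$ algorithm, so this is some fixed description built from $C$ itself (e.g.\ hardwire $C$, $x_\qB$, $x_\qC$ and output $\bot$ on the two points). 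The puncturing game instead gives your reduction $\hat C=\Puncture(C,(x_\qB,x_\qC))$, which merely has the same functionality. Your reduction cannot produce $\mathsf{UPO}.\qObf(1^\secp,C^*[x_\qB,x_\qC])$, because that requires $C$. In the intended use, $C=\PTDF.\Inv(\td,\cdot)$ contains the full PRF key, while $\hat C$ contains only the punctured key. And since $\mathsf{UPO}.\qObf$ acts on descriptions, obfuscating $\hat C$ versus $C^*[x_\qB,x_\qC]$ need not give $(\qA,\qB,\qC)$ the same success probability. Writing $C^*[x_\qB,x_\qC]=\Puncture(C,(x_\qB,x_\qC))$ in $\mathsf{H}_1$ silently identifies the two.

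The missing step is a hybrid based on the iO security of the UPO. The lemma assumes "security as iO" for exactly this purpose.
\begin{itemize}
\item Take the sampler that draws $(z,C)\gets\Gen_{\mathsf{Circ}}(1^\secp)$ and $x_\qB,x_\qC\gets\cD_z$. It outputs $C_0=C^*[x_\qB,x_\qC]$ and $C_1=\Puncture(C,(x_\qB,x_\qC))$, padded to equal size, with auxiliary input $(z,C,x_\qB,x_\qC)$. The two circuits are functionally equivalent by puncturing correctness.
\item The distinguisher runs $\qA,\qB,\qC$ on the obfuscation and checks whether at least one party outputs the correct value.
\end{itemize}
This shows that $q_1$, the probability that some party is correct in the $b=1$ branch, changes only negligibly. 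Only then does your puncturing reduction give $q_1\le 1-(1-2^{-m})^2+\negl(\secp)$. So your closing remark is backwards: iO security is not a contingency for some other formulation of the puncturing game. It is what moves you from the UPO challenge circuit to $\Puncture(C,\cdot)$.

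A smaller point: letting $\qA''$ sample $z$ itself is not covered by UPO security for a fixed distribution $\cD_z\times\cD_z$. The non-uniform route works, but you must hardwire the pair $(z,C)$, not just $z$, since the conditional distribution of $C$ given $z$ need not be efficiently samplable. Choose the pair for each $\secp$ to maximize $p_0(z,C)-q_1(z,C)$. Then average over $\Gen_{\mathsf{Circ}}$ and bound the average of $q_1$ by the iO and puncturing steps above.
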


\subsection{Useful Lemmas}
\begin{lemma}[Gentle Measurement \cite{Win99}]
\label{lem:gentle}
    Let $\rho$ be a quantum state and $X$ a positive operator with $X\le I$ and $1-\Tr[X\rho]\le \epsilon \le 1$.
    Then,
    \begin{align}
        \left\|\rho-\sqrt{X}\rho\sqrt{X} \right\|_1 \le \sqrt{8\epsilon}.
    \end{align}
\end{lemma}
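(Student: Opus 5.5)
The plan is a direct operator-inequality argument. It proves the stronger bound $2\sqrt{\epsilon}=\sqrt{4\epsilon}$, which in particular gives the stated $\sqrt{8\epsilon}$. No purification is needed. First I would decompose the difference telescopically:
\begin{align}
\rho-\sqrt{X}\rho\sqrt{X} = (I-\sqrt{X})\rho + \sqrt{X}\rho(I-\sqrt{X}).
\end{align}
This is checked by expanding the right-hand side. By the triangle inequality for the trace norm, it then suffices to bound each of the two terms by $\sqrt{\epsilon}$.

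For each term I would factor $\rho=\sqrt{\rho}\sqrt{\rho}$ and use the Hölder (Cauchy--Schwarz) inequality $\|AB\|_1\le\|A\|_2\|B\|_2$ for Hilbert--Schmidt norms. For the first term,
\begin{align}
\|(I-\sqrt{X})\sqrt{\rho}\sqrt{\rho}\|_1 \le \|(I-\sqrt{X})\sqrt{\rho}\|_2\,\|\sqrt{\rho}\|_2 = \sqrt{\Tr[(I-\sqrt{X})^2\rho]}\cdot 1 .
\end{align}
For the second term,
\begin{align}
\|\sqrt{X}\sqrt{\rho}\sqrt{\rho}(I-\sqrt{X})\|_1 \le \sqrt{\Tr[X\rho]}\,\sqrt{\Tr[(I-\sqrt{X})^2\rho]}.
\end{align}
Here $\Tr[X\rho]\le 1$ because $X\le I$.

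The one step requiring care, and the only real obstacle, is the operator inequality $(I-\sqrt{X})^2\le I-X$. Since $0\le X\le I$, everything is a function of the single operator $X$, so it reduces to a scalar inequality on its spectrum $t\in[0,1]$. There $(1-\sqrt t)^2\le 1-\sqrt t\le 1-t$, because $0\le 1-\sqrt t\le 1$ and $\sqrt t\ge t$.

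Because $\rho\ge 0$, this operator inequality gives $\Tr[(I-\sqrt{X})^2\rho]\le \Tr[(I-X)\rho]=1-\Tr[X\rho]\le\epsilon$. Each term is therefore at most $\sqrt{\epsilon}$. Summing the two terms yields $\|\rho-\sqrt{X}\rho\sqrt{X}\|_1\le 2\sqrt{\epsilon}\le\sqrt{8\epsilon}$, as claimed.
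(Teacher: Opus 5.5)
Your proof is correct, and it is a genuinely different route from the one behind this lemma. The paper gives no proof of its own; it imports the statement from Winter.

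\textbf{Check of your steps.} Each step holds:
\begin{itemize}
\item The telescoping identity $\rho-\sqrt{X}\rho\sqrt{X}=(I-\sqrt{X})\rho+\sqrt{X}\rho(I-\sqrt{X})$ is exact.
\item The Hölder splits with $\rho=\sqrt{\rho}\sqrt{\rho}$ are valid. In particular, $\|\sqrt{\rho}(I-\sqrt{X})\|_2^2=\Tr[(I-\sqrt{X})^2\rho]$ by cyclicity, and $\|\sqrt{X}\sqrt{\rho}\|_2^2=\Tr[X\rho]\le 1$.
\item The spectral inequality $(I-\sqrt{X})^2\le I-X$ follows from $(1-\sqrt{t})^2\le 1-t$ on $[0,1]$.
\end{itemize}

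\textbf{Comparison with Winter's route.} Winter's argument first treats a pure state $|\psi\rangle$. It controls the trace distance to $\sqrt{X}|\psi\rangle\langle\psi|\sqrt{X}$ through the overlap $\langle\psi|\sqrt{X}|\psi\rangle\ge\langle\psi|X|\psi\rangle\ge 1-\epsilon$. It then reaches mixed $\rho$ by purification and monotonicity of the trace norm under partial trace, losing a constant factor along the way, which yields $\sqrt{8\epsilon}$. You avoid purification and fidelity entirely, so your argument is shorter and fully elementary.

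\textbf{What each approach buys.} Your route gives the sharper constant $2\sqrt{\epsilon}$, the bound later established by Ogawa and Nagaoka. That constant is essentially optimal. For $\rho=|\psi\rangle\langle\psi|$ with $|\psi\rangle=\sqrt{1-\epsilon}\,|0\rangle+\sqrt{\epsilon}\,|1\rangle$ and $X=|0\rangle\langle 0|$, the left-hand side equals $\sqrt{4\epsilon-3\epsilon^2}$. The purification route is the standard template and also yields fidelity-type statements about the normalized post-measurement state.

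For the paper's uses, $\epsilon$ is negligible and $\sqrt{X}$ is often a projector; this covers key reusability and the robustness proofs. Only the $O(\sqrt{\epsilon})$ form matters there, so either constant suffices.
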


\begin{lemma}[Quantum Goldreich-Levin with Quantum Auxiliary Input \cite{C:CLLZ21}]
\label{lem:quantGL}
    There exists a QPT algorithm $\mathpzc{Ext}$ that satisfies the following:
    Let $n\in\N$, $x\in\bit^n$, $\epsilon\in[0,1/2]$, and $\qA$ be a quantum algorithm with a quantum auxiliary input $\mathpzc{aux}$ such that
    \begin{align}
        \Pr[x\cdot r\gets\qA(\mathpzc{aux,r}):r\gets\bit^n] \ge \frac{1}{2}+\epsilon.
    \end{align}
    Then, we have 
    \begin{align}
        \Pr[x\gets \mathpzc{Ext}([\qA],\mathpzc{aux})] \ge 4\epsilon^2,
    \end{align}
    where $[\qA]$ denotes the classical description of $\qA$.
\end{lemma}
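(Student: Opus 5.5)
The plan is to prove this with the standard "Bernstein--Vazirani with phase kickback" argument. First, reduce to a pure auxiliary state. Purify $\mathpzc{aux}$ into $|\psi\rangle$ on registers $\mathbf{A}\otimes\mathbf{E}$, where $\mathbf{E}$ is a reference register that $\qA$ never touches. Then model $\qA$ as a unitary $U$ acting on an input register $\mathbf{R}$ (holding $r$), the register $\mathbf{A}$, and ancillas initialized to $|0\rangle$. $U$ writes the output bit into a designated qubit $\mathbf{O}$; the original $\qA$ is recovered by measuring $\mathbf{O}$. Since $\mathbf{E}$ is untouched, success probabilities are unchanged, so it suffices to handle this pure, unitary setting. $\mathpzc{Ext}$ only needs the classical description $[\qA]$ to implement $U$ and $U^\dagger$.

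The extractor $\mathpzc{Ext}$ does the following.
\begin{enumerate}
\item Prepare $2^{-n/2}\sum_r |r\rangle_{\mathbf{R}}$ together with $|\psi\rangle|0\rangle$.
\item Apply $U$.
\item Apply the phase $Z$ to $\mathbf{O}$, i.e.\ multiply by $(-1)^{b}$ where $b$ is the output bit.
\item Apply $U^\dagger$.
\item Apply $H^{\otimes n}$ to $\mathbf{R}$ and measure $\mathbf{R}$.
\end{enumerate}

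The key computation is a lower bound on the amplitude of the component $|x\rangle_{\mathbf{R}}\otimes|\psi\rangle|0\rangle$ in the final state. Because steps 2--4 act as a controlled unitary conditioned on the classical $r$, the overlap $\langle r|\langle\psi,0|\,U^\dagger Z U\,|r\rangle|\psi,0\rangle$ equals $\Pr[\text{output}=0\mid r]-\Pr[\text{output}=1\mid r]$. This in turn equals $(-1)^{x\cdot r}(2p_r-1)$, where $p_r$ is the probability that $\qA$ outputs $x\cdot r$ on input $r$.

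After the Hadamard transform, the amplitude on $|x\rangle|\psi,0\rangle$ is
\begin{align}
2^{-n}\sum_{r}(-1)^{x\cdot r}\cdot(-1)^{x\cdot r}(2p_r-1) = 2\,\mathbb{E}_r[p_r]-1 \ge 2\epsilon .
\end{align}
The probability of obtaining outcome $x$ when measuring $\mathbf{R}$ is at least the squared norm of the projection onto $|x\rangle\langle x|\otimes|\psi,0\rangle\langle\psi,0|$. That projection is in turn at least the squared modulus of this single amplitude. Hence $\Pr[x\gets\mathpzc{Ext}]\ge 4\epsilon^2$.

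The main obstacle is the bookkeeping in this amplitude identity. One has to make sure the phase $(-1)^{x\cdot r}$ is correctly pulled out for every $r$, and that the cross terms, where $U^\dagger Z U$ leaves the $|\psi,0\rangle$ subspace, are harmlessly discarded by the projection rather than needing to be bounded. I would handle this by writing $U|r\rangle|\psi,0\rangle=|r\rangle\otimes(|0\rangle|\phi_{r,0}\rangle+|1\rangle|\phi_{r,1}\rangle)$ and computing the overlap directly, giving $\|\phi_{r,0}\|^2-\|\phi_{r,1}\|^2$. No averaging argument or convexity step over $r$ is needed, because linearity of the Hadamard sum already produces the expectation $\mathbb{E}_r[p_r]$.
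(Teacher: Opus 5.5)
Your proof is correct and is the standard Adcock--Cleve one-query phase-kickback argument. The paper gives no proof of its own: it imports the lemma from \cite{C:CLLZ21}, which uses this same argument, and your amplitude $2\,\mathbb{E}_r[p_r]-1\ge 2\epsilon$ on $|x\rangle|\psi,0\rangle$ yields exactly the stated $4\epsilon^2$. The one step you leave implicit is that $U$ has the form $\sum_r |r\rangle\langle r|\otimes U_r$; this is a harmless normalization, obtained by letting $\qA$ act on a CNOT-copy of $r$ and deferring its intermediate measurements.
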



\section{Trapdoor Functions with Secure Key Leasing} 
In this section, we introduce the definition of trapdoor functions with secure key leasing (TDF-SKL) and give the construction of TDF-SKL under the LWE assumption (\cref{sec:TDFSKL}).
Moreover, we define TDF-SKL with domain sampler and give the construction assuming IND-CPA secure PKE and hinting PRGs (\cref{sec:TDFSKL_DS}).
As an application of TDF-SKL with domain sampler, we construct an IND-VRA secure PKE-SKL scheme with the additional property of \textit{robustness} in \cref{sec:robustPKESKL}.
Robustness ensures that the quantum decryption key of a PKE-SKL scheme remains reusable after decrypting any ciphertext, even if it is not honestly generated by the encryption algorithm. 

\subsection{Trapdoor Functions with Secure Key Leasing}
\label{sec:TDFSKL}
We define trapdoor functions with secure key leasing as follows:
\begin{definition}[Trapdoor Functions with Secure Key Leasing]
Let $n$ be a polynomial.
A trapdoor function with secure key leasing (TDF-SKL) is a tuple of algorithms $\mathsf{TDFSKL}=(\mathpzc{KG},\Eval,\mathpzc{Inv},\mathpzc{Del},\Vrfy)$.
\begin{itemize}
    \item $\mathpzc{KG}(1^\secp)\to(\ek,\mathpzc{td},\vk):$ The key generation algorithm takes as input a security parameter $1^\secp$ and outputs a classical evaluation key $\ek$, a quantum trapdoor $\mathpzc{td}$, and a classical deletion verification key $\vk$.
    \item $\Eval(\ek,x)\to y:$ The evaluation algorithm takes an evaluation key $\ek$ and a string $x\in\bit^{n(\secp)}$ as input and outputs $y$. This algorithm is deterministic.
    \item $\mathpzc{Inv}(\mathpzc{td},y)\to(x',\mathpzc{td}'):$ The inversion algorithm takes a quantum trapdoor $\mathpzc{td}$ and a string $y$ as input and outputs $x'$ and a resulting trapdoor $\mathpzc{td}'$.
    \item $\mathpzc{Del}(\mathpzc{td})\to\cert:$ The trapdoor deletion algorithm takes a quantum trapdoor $\mathpzc{td}$ as input and outputs a classical certificate $\cert$.
    \item $\Vrfy(\vk,\cert)\to\top/\bot:$ The deletion verification algorithm takes a deletion verification key $\vk$ and a certificate $\cert$ as input and outputs $\top/\bot$. This algorithm is deterministic.
\end{itemize}
$\mathsf{TDFSKL}$ is required to satisfy the following conditions:
\begin{itemize}
    \item \textbf{Almost-all-keys inversion correctness:} 
    \begin{align}
        \Pr_{(\ek,\mathpzc{td},\vk)\gets\mathpzc{KG}(1^\secp)} \left[ \forall x\in\bit^{n(\secp)}, 
        \Pr_{(x',\mathpzc{td}')\gets\mathpzc{Inv}(\mathpzc{td},\Eval(\ek,x))}[x'=x]\ge 1-\negl(\secp) \right] \ge 1-\negl(\secp).
    \end{align}
    \item \textbf{Deletion verification correctness:} 
    \begin{align}
        \Pr \left[ \Vrfy(\vk,\cert)=\top : \begin{gathered} (\ek,\mathpzc{td},\vk)\gets\mathpzc{KG}(1^\secp) \\ \cert\gets\mathpzc{Del}(\mathpzc{td})
        \end{gathered} \right] \ge 1-\negl(\secp).
    \end{align}
    \item \textbf{One-wayness:}
    Consider the experiment $\mathsf{Exp}^{\mathsf{OW}}_{\mathsf{TDFSKL},\qA}(\secp)$ between the challenger and an adversary $\qA$:
    \begin{enumerate}
        \item The challenger generates $(\ek,\mathpzc{td},\vk)\gets\mathpzc{KG}(1^\secp)$ and sends $(\ek,\mathpzc{td})$ to $\qA$.
        \item $\qA$ sends $\cert$ to the challenger.
        \item If $\Vrfy(\vk,\cert)=\bot$, then the output of the experiment is 0. 
        Otherwise, the challenger generates $x\gets\bit^{n(\secp)}, y\coloneqq\Eval(\ek,x)$ and sends $y$ to $\qA$.
        \item $\cA$ sends $x'$ to the challenger.
        \item If $x'=x$, the output of the experiment is 1.
        Otherwise, the output of the experiment is 0.
    \end{enumerate}
    Then, for any QPT adversary $\qA$, 
    \begin{align}
        \Pr[1\gets\mathsf{Exp}^{\mathsf{OW}}_{\mathsf{TDFSKL},\qA}(\secp)] \le\negl(\secp).
    \end{align}
\end{itemize}
\end{definition}

We show the following theorem.
\begin{theorem}
    If PRCT secure PKE-SKL and hinting PRGs exist, then TDF-SKL exist.
\end{theorem}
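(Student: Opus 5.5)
We use the construction from the technical overview. $\qKG$ samples $(\ek_\PKE,\qdk_\PKE,\vk_\PKE)$ for a PRCT secure PKE-SKL, a hinting PRG parameter $\pp\gets\HPRG.\Setup(1^\secp)$ with $\ell$ equal to the encryption randomness length, and uniform $r_1,\dots,r_n\in\bit^{2m}$. Here $\PRG:\bit^m\to\bit^{2m}$ is a length-doubling PRG, which the hinting PRG implies. $\Eval$, $\qInv$, $\qDel$ and $\Vrfy$ are as described there, and the TDF input is $X=(x,s_1,\dots,s_n,\ct_1,\dots,\ct_n)$. There are three things to prove: deletion verification correctness, almost-all-keys inversion correctness, and one-wayness.

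\textbf{Correctness.} Deletion verification correctness is inherited directly from the PKE-SKL. For inversion correctness we combine two overwhelming-probability events over the key choice.
\begin{enumerate}
\item The almost-all-keys decryption correctness of the PKE-SKL holds for every message and every randomness. Hence every well-formed ciphertext component decrypts to its plaintext with overwhelming probability.
\item A Naor-style event holds: for every $i$ and every $s,s'\in\bit^m$, $r_i\neq\PRG(s)\oplus\PRG(s')$. Since $r_i$ is uniform in $\bit^{2m}$, a union bound over the $2^{2m}$ pairs gives failure probability at most $n2^{2m}/2^{2m}$. This is not small, so we instead set $|r_i|=3m$ with $\PRG:\bit^m\to\bit^{3m}$, which gives failure probability at most $n2^{-m}$.
\end{enumerate}
Under event 2, consider the case $x_i=1$. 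Decrypting $a_i=\ct_i$ yields some $s'$, and $z_i=\PRG(s_i)\oplus r_i\neq\PRG(s')$ holds for every $s'$. The inverter therefore moves to $b_i$ and recovers $s_i$ by event 1. In the case $x_i=0$, event 1 gives $s'=s_i$ with overwhelming probability, so $z_i=\PRG(s')$ and the inverter outputs correctly. Since each decryption outcome is almost deterministic, the gentle measurement lemma (\cref{lem:gentle}) shows that the coherent decrypt, compare and uncompute steps disturb $\qdk_\PKE$ only negligibly. The $n$ sequential decryptions therefore all succeed, and a union bound over $i\in[n]$ completes the argument.

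\textbf{One-wayness.} We follow hybrids $\mathsf{Hyb}_0,\dots,\mathsf{Hyb}_4$ from the overview.
\begin{itemize}
\item $\mathsf{Hyb}_0\to\mathsf{Hyb}_1$ is a perfect rewriting: sampling $u_i$ uniformly and setting $r_i=\PRG(s_i)\oplus u_i$ leaves $r_i$ uniform.
\item $\mathsf{Hyb}_1\to\mathsf{Hyb}_2$ uses PRG security. We replace $u_i$ by $\PRG(s_i^{1-x_i})$ for a fresh seed $s_i^{1-x_i}$ that is used nowhere else.
\item $\mathsf{Hyb}_2\to\mathsf{Hyb}_3$ uses PRCT security, via a hybrid over $i$. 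The reduction has the trapdoor embedded in $(\ek_\PKE,\qdk_\PKE)$ and forwards the adversary's certificate. After verification, it asks the PRCT challenger for $m=s_i^{1-x_i}$ and plants the response as $\ct_i$. This is sound because every other value (the $r_j$, $x$, the seeds, $\pp$) is chosen by the reduction before the certificate, and the challenge arrives after it. The ordering is exactly why the hybrid fixes $r_i$ up front. The reduction's winning check needs the original $\ct_i$, which it knows. In the final comparison we only need to test $x$ together with the $s_i$, which suffices since we are bounding success.
\item $\mathsf{Hyb}_3\to\mathsf{Hyb}_4$ uses hinting PRG security. The reduction receives $\{y_{i,b}\}$ and uses $y_{i,x_i}$-type randomness. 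Concretely, it encrypts $s_i^b$ under randomness $y_{i,b}$ for both $b$ and places the ciphertexts so that the one under $y_{i,0}$ goes on top. This requires no knowledge of $x$ and needs no trapdoor. When $\beta=1$, the randomness $y_{i,x_i}$ equals $\HPRG.\Eval(\pp,x,i)$ and the other is random, which is $\mathsf{Hyb}_3$. When $\beta=0$, both are random, which is $\mathsf{Hyb}_4$. The remaining task is to check whether the adversary's output $x'$ equals $x$, which the reduction cannot do because it does not know $x$. We handle this as in \cite{AC:GHMO21}: the reduction checks consistency by recomputing $\Eval$ on the adversary's full output $X'$ and comparing it with the challenge image. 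This check is equivalent to winning under the injectivity event from the correctness argument.
\end{itemize}
In $\mathsf{Hyb}_4$ the view is independent of $x$, so the success probability is at most $2^{-n}$.

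\textbf{Main obstacle.} The delicate step is the hinting PRG reduction: the reduction must decide whether the adversary won without knowing $x$. We resolve this with the injectivity argument above, which holds over almost all keys. A second point to check is that the certificate phase in the PRCT reduction is compatible with the trapdoor having been handed out.
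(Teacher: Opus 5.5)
Your construction, your correctness argument (once you switch to $|r_i|=3m$ and $\PRG:\bit^m\to\bit^{3m}$), and your first three hybrid steps match the paper's proof. The gap is in the hinting-PRG step $\mathsf{Hyb}_3\to\mathsf{Hyb}_4$, in the $\beta=0$ branch of your reduction.

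Your distinguisher outputs $1$ iff $\Eval(\ek,X')=Y$. For $\beta=1$ this works: there $Y=\Eval(\ek,X)$, so winning implies acceptance. For $\beta=0$ the ciphertexts in $Y$ use fresh uniform randomness, so in general $Y\neq\Eval(\ek,X)$. Injectivity of $\Eval(\ek,\cdot)$ then does not make the check equivalent to $X'=X$. It only says that $X'$ is the unique preimage of $Y$, which is a different event from recovering $x$.

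Consequently, your bound ``the view is independent of $x$, so success is at most $2^{-n}$'' controls winning in $\mathsf{Hyb}_4$. It says nothing about $\Pr[\Eval(\ek,X')=Y]$ there, and that probability is exactly your distinguisher's acceptance probability when $\beta=0$. To close the step along your route, you must show that $Y$ in $\mathsf{Hyb}_4$ has no preimage with overwhelming probability. Concretely, with uniform $a_{i,b}$ (your $y_{i,b}$), no $\widetilde{x}$ may satisfy $\PKE.\Enc(\ek_\PKE,s_i^{\widetilde{x}_i};a_{i,\widetilde{x}_i})=\PKE.\Enc(\ek_\PKE,s_i^{\widetilde{x}_i};\HPRG.\Eval(\pp,\widetilde{x},i))$ for all $i$. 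This is a statement about ciphertext collisions, and padding the randomness does not help, since $\Enc$ ignores padded bits. It can be derived, for instance by arguing that PRCT security forces a negligible collision probability for fresh encryptions and then taking a union bound over $\widetilde{x}$. Your proposal does not supply this argument.

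The paper sidesteps the issue by checking consistency on the randomness rather than on the ciphertexts. $\qD$ accepts iff $a_0^\beta=\HPRG.\Eval(\pp,\widetilde{x},0)$ and, for every $i$:
\begin{itemize}
\item $a^\beta_{i,\widetilde{x}_i}=\HPRG.\Eval(\pp,\widetilde{x},i)$,
\item $\widetilde{s}_i=s_i^{\widetilde{x}_i}$,
\item $\widetilde{\ct}_i=\ct_i^{\widetilde{x}_i\oplus 1}$.
\end{itemize}
When $\beta=1$, winning in $\mathsf{Hyb}_3$ implies acceptance. When $\beta=0$, for each fixed $\widetilde{x}$ the values $a_0^0$ and $a^0_{i,\widetilde{x}_i}$ are uniform and independent of $\pp$. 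A union bound over $\widetilde{x}$ then gives acceptance probability at most $2^n\cdot 2^{-(n+1)\ell}$, which is negligible for $\ell=\omega(\log\secp)$. This bounds $\Pr[1\gets\mathsf{Hyb}_3]$ directly, using no property of the PKE-SKL scheme.
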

Our construction of TDF-SKL is as follows:
\paragraph{Construction.}
Let $\mathsf{PKESKL}=(\PKE.\qKG,\PKE.\Enc,\PKE.\qDec,\PKE.\qDel,\PKE.\Vrfy)$ be a PRCT secure PKE-SKL scheme with message length $m=m(\secp)$, randomness length $\ell=\ell(\secp)$, and ciphertext length $c=c(\secp)$.
Let $(\HPRG.\Setup,\HPRG.\Eval)$ be a hinting PRG with input length $n=n(\secp)$ and output length $\ell$.
Let $\PRG$ be a PRG mapping $m$-bit strings to $3m$-bit strings.
We construct a TDF-SKL $\mathsf{TDFSKL}=(\qKG,\Eval,\qInv,\qDel,\Vrfy)$ with domain $\bit^{n(1+m+c)}$ as follows:
\begin{itemize}
    \item $\qKG(1^\secp)\to(\ek,\qtd,\vk)$: 
    \begin{enumerate}
        \item Run $(\ek_\PKE,\qdk_\PKE,\vk_\PKE)\gets\PKE.\qKG(1^\secp)$, $\pp\gets\HPRG.\Setup(1^\secp)$.
        For $i\in[n]$, sample $r_i\gets\bit^{3m}$.
        \item Return $\ek\coloneqq (\ek_\PKE,\pp,r_1,...,r_n)$, $\qtd\coloneq \qdk_\PKE$, and $\vk\coloneqq \vk_\PKE$.
    \end{enumerate}
    \item $\Eval(\ek,X)\to Y$:
    \begin{enumerate}
        \item Parse $\ek=(\ek_\PKE,\pp,r_1,...,r_n)$ and $X=(x,s_1,...,s_n,\ct_1,...,\ct_n)$, where $x\in\bit^n$, $s_i\in\bit^m$, and $\ct_i\in\bit^c$ for each $i\in[n]$.
        \item For $i\in[n]$,
        \begin{itemize}
            \item if $x_i=0$, let 
            \begin{align}
                y_i\coloneqq  
                \begin{pmatrix}
                    \PKE.\Enc(\ek_\PKE,s_i;\HPRG.\Eval(\pp,x,i)) \\ 
                    \ct_i
                \end{pmatrix}, 
                \quad z_i\coloneqq \PRG(s_i).
            \end{align}
            \item if $x_i=1$, let
            \begin{align}
                y_i\coloneqq  
                \begin{pmatrix}
                    \ct_i \\
                    \PKE.\Enc(\ek_\PKE,s_i;\HPRG.\Eval(\pp,x,i)) 
                \end{pmatrix}, 
                \quad z_i\coloneqq \PRG(s_i) \oplus r_i.
            \end{align}
        \end{itemize}
        \item Return $Y\coloneqq (y_1,...,y_n,z_1,...,z_n)$.
    \end{enumerate}
    \item $\qInv(\qtd,Y)\to (X',\qtd')$:
    \begin{enumerate}
        \item Parse $\qtd=\qdk_\PKE$ and $Y=(y_1,...,y_n,z_1,...,z_n)$, where for each $i\in[n]$, $y_i=\begin{pmatrix} a_i \\ b_i \end{pmatrix}$.
        Initialize the decryption-key register as $K_0\coloneqq\qdk_\PKE$.
        \item For each $i\in[n]$, perform the following operation coherently, using $K_{i-1}$ as the decryption-key register:
        \begin{enumerate}
            \item Apply a unitary implementation of $\PKE.\qDec$ to $(K_{i-1},a_i)$ and obtain a message register containing $s_i^{(a)}$.
            \item Compute a control bit $d_i$ that is $0$ if $z_i=\PRG(s_i^{(a)})$ and is $1$ otherwise.
            \item Controlled on $d_i=1$, uncompute $\PKE.\qDec$, and then coherently apply $\PKE.\qDec$ to $b_i$ to obtain a message register containing $s_i^{(b)}$.
            \item Coherently compute
            \begin{align}
                (x_i,s_i,\ct_i)\coloneqq
                \begin{cases}
                    (0,s_i^{(a)},b_i) & \text{if } d_i=0,\\
                    (1,s_i^{(b)},a_i) & \text{if } d_i=1.
                \end{cases}
            \end{align}
            \item Copy $(x_i,s_i,\ct_i)$ to the output registers and uncompute the previous steps.
            Let $K_i$ be the resulting decryption-key register.
        \end{enumerate}
        \item Measure the output registers in the computational basis to obtain $X'\coloneqq(x,s_1,...,s_n,\ct_1,...,\ct_n)$.
        Return $(X',\qtd')$, where $\qtd'\coloneqq K_n$.
    \end{enumerate}
    \item $\qDel(\qtd)\to\cert$: Run $\cert\gets\PKE.\qDel(\qtd)$.
    \item $\Vrfy(\vk,\cert)\to\top/\bot$: Run $\top/\bot\gets\PKE.\Vrfy(\vk,\cert)$.
\end{itemize}

Then, the deletion verification correctness of $\mathsf{TDFSKL}$ immediately follows from that of $\mathsf{PKESKL}$.
We show almost-all-keys inversion correctness and one-wayness.

\begin{lemma}\label{lem:TDFSKL_correct}
    $\mathsf{TDFSKL}$ satisfies the almost-all-keys inversion correctness.
\end{lemma}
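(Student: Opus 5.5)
We define a set of good keys and show that the honest key generator lands in it with overwhelming probability, and that inversion succeeds on every input for every good key. A key $(\ek,\qtd,\vk)$ with $\ek=(\ek_\PKE,\pp,r_1,\dots,r_n)$ is \emph{good} if two conditions hold. First, $(\ek_\PKE,\qdk_\PKE)$ is good for the almost-all-keys decryption correctness of $\mathsf{PKESKL}$: for all $s\in\bit^m$ and all randomness $\rho\in\bit^\ell$, $\PKE.\qDec(\qdk_\PKE,\PKE.\Enc(\ek_\PKE,s;\rho))$ outputs $s$ with probability $1-\negl(\secp)$. Second, for every $i\in[n]$, $r_i\notin\{\PRG(s)\oplus\PRG(s') : s,s'\in\bit^m\}$. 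This is the Naor-commitment condition. The set has at most $2^{2m}$ elements out of $2^{3m}$, so a uniform $r_i$ lies in it with probability at most $2^{-m}$. A union bound over $i\in[n]$ and the key-correctness failure shows that a key is good with probability $1-\negl(\secp)-n2^{-m}$. This is overwhelming, since $m\ge\secp$ can be assumed.

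Fix a good key and an arbitrary input $X=(x,s_1,\dots,s_n,\ct_1,\dots,\ct_n)$, and let $Y=\Eval(\ek,X)$. We analyze $\qInv$ coordinate by coordinate.

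If $x_i=0$, then $a_i$ is an honest encryption of $s_i$ under randomness $\HPRG.\Eval(\pp,x,i)$. By goodness, decrypting $a_i$ yields $s_i$ except with negligible probability, so $z_i=\PRG(s_i)$ gives $d_i=0$. The output is $(0,s_i,b_i)=(0,s_i,\ct_i)$.

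If $x_i=1$, then $z_i=\PRG(s_i)\oplus r_i$. Whatever value $s^{(a)}$ comes out of decrypting the arbitrary string $a_i=\ct_i$, we have $\PRG(s^{(a)})\ne\PRG(s_i)\oplus r_i$ by the second goodness condition. So $d_i=1$ with probability $1$. Decrypting $b_i$, an honest encryption of $s_i$, then yields $s_i$ with overwhelming probability, and the output is $(1,s_i,a_i)=(1,s_i,\ct_i)$.

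The main obstacle is that $\qInv$ is coherent and reuses the key register across $n$ sequential iterations, and there is also an intermediate uncompute step. We therefore have to argue that the key register stays close to $\qdk_\PKE$ after each step. We plan to handle this with \cref{lem:gentle}, as follows.

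1. In each iteration, the computed quantities $(d_i,x_i,s_i,\ct_i)$ are deterministic up to negligible error as functions of the current key. This uses the fact that at most one branch involves decrypting a non-honest ciphertext, and in that branch the test outcome is deterministic.
2. By gentle measurement, or equivalently by the standard argument that computing a nearly deterministic value coherently, copying it, and uncomputing leaves the state $O(\sqrt{\negl})$-close, $K_i$ is negligibly close in trace distance to $K_{i-1}$.
3. The controlled uncompute inside the $d_i=1$ branch restores the key up to the same negligible error. This holds because $d_i$ is itself determined with certainty there.
4. A hybrid over the $n$ iterations, using the triangle inequality, bounds the total deviation by $n\cdot\negl(\secp)$.

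It follows that the final measurement returns $X$ with probability $1-\negl(\secp)$ for every $X$, which is exactly almost-all-keys inversion correctness.
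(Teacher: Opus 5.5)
Your proposal is correct and follows essentially the same route as the paper: the same two good-key conditions (almost-all-keys decryption correctness of the underlying PKE-SKL, and the Naor-style condition $r_i\notin\{\PRG(s)\oplus\PRG(s')\}$ with the $n2^{2m}/2^{3m}$ bound), the same case split on $x_i$, and the same gentle-measurement bound per round, propagated over the $n$ rounds by the triangle inequality. You also state explicitly that $m$ must be superlogarithmic (e.g., $m\ge\secp$) for $n2^{-m}$ to be negligible, which the paper leaves implicit.
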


\begin{proof}[Proof of \cref{lem:TDFSKL_correct}]
Our goal is to show that 
\begin{align}
    \Pr_{(\ek,\qtd,\vk)\gets\qKG(1^\secp)}\left[\forall X\in\bit^{N}, \Pr_{(X',\qtd')\gets\qInv(\qtd,\Eval(\ek,X))}[X'=X]\ge 1-\negl(\secp)\right] \ge 1-\negl(\secp),
\end{align}
where $N=n+nm+nc$.
Note that $\ek=(\ek_\PKE,\pp,r_1,...,r_n)$, $\qtd=\qdk_\PKE$, and $\vk=\vk_\PKE$, where $(\ek_\PKE,\qdk_\PKE,\vk_\PKE)\gets\PKE.\qKG(1^\secp)$, $\pp\gets\HPRG.\Setup(1^\secp)$, and $r_i\gets\bit^{3m}$ for each $i\in[n]$.
We define the following sets:
\begin{align}
    G_{1,\secp} \coloneqq \left\{ (\ek_\PKE,\qdk_\PKE) : 
    \begin{lgathered}\forall s\in\bit^m,\forall r\in\bit^{\ell}, \\
    \Pr_{(s',\qdk'_\PKE)\gets\PKE.\qDec(\qdk_\PKE,\PKE.\Enc(\ek_\PKE,s;r))}[s'=s]\ge 1-\negl(\secp) 
    \end{lgathered}
    \right\}.
\end{align}
\begin{align}
    G_{2,\secp} \coloneqq \left\{ (r_1,...,r_n): \forall i\in[n],s_i,s_i'\in\bit^{m}, ~ r_i\neq\PRG(s_i)\oplus\PRG(s_i') \right\}.
\end{align}
By the almost-all-keys decryption correctness of $\mathsf{PKESKL}$,
\begin{align}
    \Pr_{(\ek_\PKE,\qdk_\PKE,\vk_\PKE)\gets\PKE.\qKG(1^\secp)} [(\ek_\PKE,\qdk_\PKE)\in G_{1,\secp}] \ge 1-\negl(\secp).
\end{align}
Moreover, we have
\begin{align}
    \Pr_{\forall i\in[n], r_i\gets\bit^{3m}} [(r_1,...,r_n)\in G_{2,\secp}] \ge 1- \frac{n2^{2m}}{2^{3m}}
    = 1-\negl(\secp).
\end{align}
Fix $(\ek_\PKE,\qdk_\PKE)\in G_{1,\secp}$ and $(r_1,...,r_n)\in G_{2,\secp}$.
We show correctness for an arbitrary $X=(x,s_1,...,s_n,\ct_1,...,\ct_n)$, where $x\in\bit^n$, $s_i\in\bit^m$, and $\ct_i\in\bit^c$ for every $i\in[n]$.
Let $\rho_i$ denote the state of the decryption-key register $K_i$ after the first $i$ rounds of $\qInv$, and let $\rho_0=\qdk_\PKE$.

We first consider the $i$th round of inversion on the original key $\rho_0$.
If $x_i=0$, then
\begin{align}
    a_i=\PKE.\Enc(\ek_\PKE,s_i;\HPRG.\Eval(\pp,x,i))
\end{align}
is honestly generated.
Thus, its decryption yields $s_i$ except with negligible probability, in which case the inversion algorithm copies $(x_i,s_i,\ct_i)$ to the output registers.
If $x_i=1$, then $a_i=\ct_i$ may be arbitrary, but
\begin{align}
    z_i=\PRG(s_i)\oplus r_i\neq\PRG(s_i')
\end{align}
holds for every decryption output $s_i'$, by the definition of $G_{2,\secp}$.
Therefore, the decryption of $a_i$ is always coherently uncomputed, after which the algorithm decrypts the honestly generated ciphertext
\begin{align}
    b_i=\PKE.\Enc(\ek_\PKE,s_i;\HPRG.\Eval(\pp,x,i)).
\end{align}
This decryption yields $s_i$ except with negligible probability, and the algorithm copies $(x_i,s_i,\ct_i)$ to the output registers.
Hence, in both cases, the $i$th round inversion is correctly evaluated from $\rho_0$ except with negligible probability.
Moreover, the gentle measurement lemma shows that the inversion algorithm changes the key state $\rho_0$ by at most $\negl(\secp)$ in trace distance.

We now consider the sequential use of the key.
Suppose that $\TD(\rho_{i-1},\rho_0)\le\delta_{i-1}$.
By the data-processing inequality, the probability that the $i$th round inversion is incorrect is at most $\negl(\secp)+\delta_{i-1}$, and the triangle inequality gives
\begin{align}
    \TD(\rho_i,\rho_0)\le\delta_{i-1}+\negl(\secp).
\end{align}
Since $\delta_0=0$, we have $\delta_i\le i\cdot \negl(\secp)$.
A union bound over all $i\in[n]$ therefore gives
\begin{align}
    \Pr[X'\neq X]\le \sum_{i=1}^{n}(\negl(\secp)+\delta_{i-1}) \le \negl(\secp),
\end{align}
where the last equality follows because $n$ is polynomially bounded.

Thus, we have
\begin{align}
    &\Pr_{(\ek,\qtd,\vk)\gets\qKG(1^\secp)}\left[\forall X\in\bit^{N}, \Pr_{(X',\qtd')\gets\qInv(\qtd,\Eval(\ek,X))}[X'=X]\ge 1-\negl(\secp)\right] \\
    &\ge\Pr_{\substack{(\ek_\PKE,\qdk_\PKE,\vk_\PKE)\gets\PKE.\qKG(1^\secp) \\ \forall i\in[n], r_i\gets\bit^{3m}}} [(\ek_\PKE,\qdk_\PKE)\in G_{1,\secp} \land (r_1,...,r_n)\in G_{2,\secp}] \\ 
    &\ge 1-\negl(\secp).
\end{align}

\end{proof}

\begin{lemma}\label{lem:TDF-SKL_security}
    $\mathsf{TDFSKL}$ satisfies one-wayness.
\end{lemma}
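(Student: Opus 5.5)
We prove one-wayness by a hybrid argument along the lines of the technical overview, starting from $\mathsf{Hyb}_0=\mathsf{Exp}^{\mathsf{OW}}_{\mathsf{TDFSKL},\qA}(\secp)$. Here the challenge $X=(x,s_1,\dots,s_n,\ct_1,\dots,\ct_n)$ is uniform over $\bit^{n(1+m+c)}$, and $\qA$ wins if it outputs $X$ after passing verification.

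\textbf{$\mathsf{Hyb}_1$ (reordering).} Sample $x,s_1,\dots,s_n$ at the very beginning. Set $r_i\coloneqq\PRG(s_i)\oplus u_i$ with $u_i\gets\bit^{3m}$. Each $r_i$ is still uniform and independent of $(x,s_i)$, so the distribution is identical. With this choice, $z_i=\PRG(s_i)$ if $x_i=0$ and $z_i=u_i$ if $x_i=1$.

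\textbf{$\mathsf{Hyb}_2$ (two PRG seeds).} Sample $s_i^0,s_i^1\gets\bit^m$ and set $s_i\coloneqq s_i^{x_i}$ and $r_i\coloneqq\PRG(s_i^0)\oplus\PRG(s_i^1)$. Then $z_i=\PRG(s_i^0)$ in both cases. Indistinguishability from $\mathsf{Hyb}_1$ comes from PRG security applied to $\PRG(s_i^{1-x_i})$ in place of the uniform $u_i\oplus\PRG(s_i^{x_i})$ (equivalently, in place of $u_i$). The seed $s_i^{1-x_i}$ is never used elsewhere, so the reduction can simulate everything, including running $\PKE.\qKG$ and checking $\Vrfy$ itself.

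\textbf{$\mathsf{Hyb}_3$ (replace $\ct_i$).} Replace each uniform $\ct_i$ with $\PKE.\Enc(\ek_\PKE,s_i^{1-x_i})$ under fresh randomness, one $i$ at a time. This is the step I expect to be the main obstacle. PRCT security only guarantees pseudorandomness of ciphertexts produced after a valid certificate is submitted, and $\ct_i$ is part of the winning condition. The reduction handles this as follows: it receives $(\ek_\PKE,\qdk_\PKE)$ from the PRCT challenger and forwards them to $\qA$, together with $\pp$ and the $r_i$ it prepared itself. It forwards $\qA$'s certificate along with the message $s_i^{1-x_i}$. It then embeds the returned challenge as $\ct_i$, builds $Y$, and outputs $1$ iff $\qA$ returns exactly $X$. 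Checking $X$ requires the reduction to know $\ct_i$, which it does. If $\Vrfy$ fails, both experiments output $0$, matching $\mathsf{Hyb}$'s loss condition. Hence each consecutive pair of sub-hybrids differs by a PRCT advantage, and there are $n$ of them.

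\textbf{$\mathsf{Hyb}_4$ (hinting PRG).} Replace each $\HPRG.\Eval(\pp,x,i)$ with a uniform string. The reduction to hinting PRG security receives $\pp$ and the strings $\{y_{i,b}^\beta\}$, and uses $y_{i,b}^\beta$ as the randomness for encrypting $s_i^b$ in slot $b$ for every $b\in\bit$. In the real case the slot-$x_i$ encryption uses $\HPRG.\Eval(\pp,x,i)$ and the other slot uses fresh uniform randomness, which is exactly $\mathsf{Hyb}_3$. In the random case both slots use uniform randomness, which is $\mathsf{Hyb}_4$. The reduction does not know $x$, but it needs neither $x$ to build $Y$ nor $s_i^{x_i}$ versus $s_i^{1-x_i}$ individually, because it can place $\PKE.\Enc(\ek_\PKE,s_i^0;y_{i,0}^\beta)$ on top and $\PKE.\Enc(\ek_\PKE,s_i^1;y_{i,1}^\beta)$ on the bottom. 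The one thing it cannot do directly is check whether $\qA$'s output equals $X$ without knowing $x$. I would resolve this by having it check consistency instead: given $\qA$'s output $X'$, it recomputes $\Eval(\ek,X')$ and compares with $Y$, with $\Eval$ run under the given $\pp$. This check agrees with exact equality in the real case by almost-all-keys injectivity, which follows from the proof of \cref{lem:TDFSKL_correct}. In the random case we only need an upper bound on the winning probability, for which the consistency check suffices. Alternatively, I would define the winning event throughout as ``output $x'$ equals the $x$ hidden in $Y$'' and check $x'_i$ slotwise via $s_i^{x'_i}$. Concretely: $\qA$ wins iff it outputs $x'=x$ and $s'_i=s_i^{x_i}$ for all $i$, and the reduction knows all $s_i^b$. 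So it outputs $1$ iff $\PKE.\Enc(\ek_\PKE,s'_i;y_{i,x'_i}^\beta)$ equals slot $x'_i$ for every $i$. In the real case, if $x'\ne x$ this matches only with negligible probability, by the correctness set $G_{1,\secp}$.

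\textbf{Conclusion.} In $\mathsf{Hyb}_4$, $Y$ consists of $\PKE.\Enc(\ek_\PKE,s_i^0)$, $\PKE.\Enc(\ek_\PKE,s_i^1)$ and $\PRG(s_i^0)$ for each $i$, and is independent of $x$. Therefore guessing $x$ succeeds with probability at most $2^{-n}$, which is negligible. Summing the negligible differences across the hybrids completes the proof.
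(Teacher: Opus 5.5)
Your hybrids $\mathsf{Hyb}_1$--$\mathsf{Hyb}_3$ match the paper's. The gap is in the $\mathsf{Hyb}_3\to\mathsf{Hyb}_4$ reduction, specifically in how it decides whether $\qA$ won.

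The hinting-PRG reduction needs two things. First, it must accept whenever $\qA$ wins $\mathsf{Hyb}_3$ (case $\beta=1$). Any check implied by $X'=X$ gives this automatically, so no injectivity is needed there. The injectivity you cite would not apply anyway: $r_i=\PRG(s_i^0)\oplus\PRG(s_i^1)$ lies outside $G_{2,\secp}$ by construction. Second, and this is what you need, it must accept with only negligible probability when $\beta=0$.

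Your first check, $\Eval(\ek,X')=Y$, accepts every preimage of $Y$. So when $\beta=0$ its acceptance probability is the probability that $\qA$ outputs \emph{some} preimage of the $\mathsf{Hyb}_4$ challenge. That is not the probability of outputting the specific $X$, and independence of $Y$ from $x$ bounds only the latter. Bounding the former would require showing that the $\mathsf{Hyb}_4$ challenge has no preimage except with negligible probability. That depends on the ciphertext min-entropy of $\PKE.\Enc$ and a union bound over $2^n$ candidates $x'$, and you do not supply it.

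Your second check is worse. Slot $x'_i$ is by construction $\PKE.\Enc(\ek_\PKE,s_i^{x'_i};y^\beta_{i,x'_i})$, so comparing it with $\PKE.\Enc(\ek_\PKE,s'_i;y^\beta_{i,x'_i})$ only tests $s'_i=s_i^{x'_i}$. It never compares $x'$ with the hidden $x$. Your claim that it rejects $x'\neq x$ via $G_{1,\secp}$ is therefore false, and bounding its acceptance when $\beta=0$ would need a separate seed-recovery argument.

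The missing idea is to check at the level of the encryption randomness, which the reduction holds in the clear together with $\pp$. Accept iff $y^\beta_0=\HPRG.\Eval(\pp,x',0)$ and, for every $i$:
\begin{itemize}
\item $y^\beta_{i,x'_i}=\HPRG.\Eval(\pp,x',i)$;
\item $s'_i=s_i^{x'_i}$;
\item $\ct'_i$ equals the other slot.
\end{itemize}
When $\beta=1$, winning $\mathsf{Hyb}_3$ implies acceptance. When $\beta=0$, acceptance requires some $x'\in\bit^n$ to satisfy $n+1$ equalities against independent uniform $\ell$-bit strings. A union bound over all $x'$ caps this at $2^{n}\cdot 2^{-(n+1)\ell}$, whatever $\qA$ does, with $\ell=\omega(\log\secp)$ by padding. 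This gives $\Pr[1\gets\mathsf{Hyb}_3]\le\negl(\secp)$ directly, which is how the paper closes the argument.
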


\ifnum\submission=0
\begin{proof}[Proof of \cref{lem:TDF-SKL_security}]
We consider the following sequence of hybrids.

\noindent
$\mathsf{Hyb}_0$: This is the original security experiment between the challenger and an adversary $\qA$ that works as follows:
\begin{enumerate}
    \item The challenger generates $(\ek_\PKE,\qdk_\PKE,\vk_\PKE)\gets\PKE.\qKG(1^\secp)$, $\pp\gets\HPRG.\Setup(1^\secp)$ and $r_i\gets\bit^{3m}$ for each $i\in[n]$.
    The challenger sends $(\ek_\PKE,\pp,r_1,...,r_n,\qdk_\PKE)$ to $\qA$.
    \item $\qA$ outputs $\cert$.
    \item If $\PKE.\Vrfy(\vk_\PKE,\cert)=\bot$, the output of the experiment is 0.
    Otherwise, the challenger generates $x\gets\bit^n$, $s_i\gets\bit^{m}$, and $\ct_i\gets\bit^c$ for each $i\in[n]$.
    For each $i\in[n]$,
    \begin{itemize}
        \item if $x_i=0$, let
        \begin{align}
            y_i \coloneqq 
            \begin{pmatrix}
                \PKE.\Enc(\ek_\PKE,s_i;\HPRG.\Eval(\pp,x,i)) \\ 
                \ct_i
            \end{pmatrix}, 
            \quad z_i\coloneqq\PRG(s_i).
        \end{align}
        \item if $x_i=1$, let
        \begin{align}
            y_i \coloneqq
            \begin{pmatrix}
                \ct_i \\
                \PKE.\Enc(\ek_\PKE,s_i;\HPRG.\Eval(\pp,x,i)) 
            \end{pmatrix}, 
            \quad z_i\coloneqq\PRG(s_i) \oplus r_i.
        \end{align}
    \end{itemize}
    The challenger sends $(y_1,...,y_n,z_1,...,z_n)$ to $\qA$.
    \item $\qA$ outputs $X'$.
    If $X'=(x,s_1,...,s_n,\ct_1,...,\ct_n)$, the output of the experiment is 1.
    Otherwise, the output of the experiment is 0.
\end{enumerate}
Our goal is to show that for any QPT adversary $\qA$,
\begin{align}
    \Pr[1\gets\mathsf{Hyb}_0] \le \negl(\secp).
\end{align}

\noindent
$\mathsf{Hyb}_1$: This is identical to $\mathsf{Hyb}_0$ except that $(x,s_1,...,s_n)$ is sampled before $\cA$ outputs $\cert$ and for each $i\in[n]$, $r_i$ is replaced with the XOR of a uniformly random string and the output of $\PRG$.
The experiment works as follows:
\begin{enumerate}
    \item The challenger generates $(\ek_\PKE,\qdk_\PKE,\vk_\PKE)\gets\PKE.\qKG(1^\secp)$, $\pp\gets\HPRG.\Setup(1^\secp)$, and $x\gets\bit^n$.
    For each $i\in[n]$, the challenger generates $s_i^0\gets\bit^{m}$, $s_i^1\gets\bit^{m}$, $u_i\gets\bit^{3m}$, and $r_i\coloneqq\PRG(s_i^{x_i})\oplus u_i$.
    The challenger sends $(\ek_\PKE,\pp,r_1,...,r_n,\qdk_\PKE)$ to $\qA$.
    \item $\qA$ outputs $\cert$.
    \item If $\PKE.\Vrfy(\vk_\PKE,\cert)=\bot$, the output of the experiment is 0.
    Otherwise, the challenger generates $\ct_i\gets\bit^c$ for each $i\in[n]$.
    For each $i\in[n]$,
    \begin{itemize}
        \item if $x_i=0$, let
        \begin{align}
            y_i \coloneqq
            \begin{pmatrix}
                \PKE.\Enc(\ek_\PKE,s^{0}_i;\HPRG.\Eval(\pp,x,i)) \\ 
                \ct_i
            \end{pmatrix}, 
            \quad z_i\coloneqq\PRG(s^0_i).
        \end{align}
        \item if $x_i=1$, let
        \begin{align}
            y_i \coloneqq 
            \begin{pmatrix}
                \ct_i \\
                \PKE.\Enc(\ek_\PKE,s^1_i;\HPRG.\Eval(\pp,x,i)) 
            \end{pmatrix}, 
            \quad z_i\coloneqq\PRG(s^1_i) \oplus r_i.
        \end{align}
    \end{itemize}
    The challenger sends $(y_1,...,y_n,z_1,...,z_n)$ to $\qA$.
    \item $\qA$ outputs $X'$.
    If $X'=(x,s^{x_1}_1,...,s^{x_n}_n,\ct_1,...,\ct_n)$, the output of the experiment is 1.
    Otherwise, the output of the experiment is 0.
\end{enumerate}

\begin{claim}
\label{clm:hyb0}
For any adversary $\qA$,
\begin{align}
    \Pr[1\gets\mathsf{Hyb}_0] = \Pr[1\gets\mathsf{Hyb}_1]
\end{align}
\end{claim}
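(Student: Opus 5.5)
The claim is purely distributional: I will show that the joint distribution of everything the adversary sees, together with the secret values that determine whether it wins, is the same in $\mathsf{Hyb}_0$ and $\mathsf{Hyb}_1$. Since $\qA$ is the same algorithm in both experiments, equal success probabilities then follow.

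First, moving the sampling of $x$ (and of the $s_i$) to before the first phase changes nothing. These values are sampled independently of everything else, and in $\mathsf{Hyb}_0$ they are used only after $\cert$ is output. Likewise, in $\mathsf{Hyb}_1$ the unused strings $s_i^{1-x_i}$ never influence the view or the winning condition. We may therefore identify $s_i \coloneqq s_i^{x_i}$, which is uniform over $\bit^m$ and independent across $i$, exactly as in $\mathsf{Hyb}_0$. The challenge image and the winning condition are then written in terms of $(x, s_1,\ldots,s_n, \ct_1,\ldots,\ct_n)$ by literally the same formulas in both hybrids.

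The only remaining change is $r_i$. In $\mathsf{Hyb}_1$ we have $r_i = \PRG(s_i^{x_i}) \oplus u_i$, where $u_i \gets \bit^{3m}$ is fresh and independent of all other values. This is a one-time pad: conditioned on any fixed values of $(\ek_\PKE,\qdk_\PKE,\vk_\PKE,\pp,x,\{s_i^b\},\{\ct_i\})$, the tuple $(r_1,\ldots,r_n)$ is uniform over $(\bit^{3m})^n$ and independent of those values, as in $\mathsf{Hyb}_0$.

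Hence the joint distribution of
\begin{align}
(\ek_\PKE,\qdk_\PKE,\vk_\PKE,\pp,r_1,\ldots,r_n,x,s_1,\ldots,s_n,\ct_1,\ldots,\ct_n)
\end{align}
is identical in the two hybrids. Every message sent to $\qA$, the verification check, and the winning condition are the same deterministic functions of this tuple (applied together with $\qA$'s own actions). The output distributions of the two experiments therefore coincide. The only point needing care is to verify that $u_i$ appears nowhere except in $r_i$, so that the one-time-pad argument applies; this is immediate from the description of $\mathsf{Hyb}_1$. I expect no real obstacle.
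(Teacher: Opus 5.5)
Your proposal is correct and follows the same argument as the paper. Like the paper, you identify $s_i$ with $s_i^{x_i}$, note that $r_i=\PRG(s_i^{x_i})\oplus u_i$ is uniform and independent by the one-time pad, and observe that sampling $(x,s_1,\ldots,s_n)$ earlier does not change the output distribution; you just spell out the one-time-pad step and the irrelevance of $s_i^{1-x_i}$ in more detail.
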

\begin{proof}
    The claim follows because the distribution of $(x,s^{x_1}_1,...,s^{x_n}_n,r_1,...,r_n)$ in $\mathsf{Hyb}_1$ is identical to the distribution of $(x,s_1,...,s_n,r_1,...,r_n)$ in $\mathsf{Hyb}_0$, and reordering the sampling procedure as in $\mathsf{Hyb}_1$ does not affect the output of the experiment.
\end{proof}

\noindent
$\mathsf{Hyb}_2$: This is identical to $\mathsf{Hyb}_1$ except that $r_i$ is replaced with $\PRG(s_i^0)\oplus\PRG(s_i^1)$ for each $i\in[n]$. 
The experiment works as follows:
\begin{enumerate}
    \item The challenger generates $(\ek_\PKE,\qdk_\PKE,\vk_\PKE)\gets\PKE.\qKG(1^\secp)$, $\pp\gets\HPRG.\Setup(1^\secp)$ and $x\gets\bit^n$.
    For each $i\in[n]$, the challenger generates $s_i^0\gets\bit^{m}$ and $s_i^1\gets\bit^{m}$, and $r_i\coloneqq\PRG(s_i^0)\oplus\PRG(s_i^1)$.
    The challenger sends $(\ek_\PKE,\pp,r_1,...,r_n,\qdk_\PKE)$ to $\qA$.
    \item $\qA$ outputs $\cert$.
    \item If $\PKE.\Vrfy(\vk_\PKE,\cert)=\bot$, the output of the experiment is 0.
    Otherwise, the challenger generates $\ct_i\gets\bit^c$ for each $i\in[n]$.
    For each $i\in[n]$,
    \begin{itemize}
        \item if $x_i=0$, let
        \begin{align}
            y_i \coloneqq
            \begin{pmatrix}
                \PKE.\Enc(\ek_\PKE,s^0_i;\HPRG.\Eval(\pp,x,i)) \\ 
                \ct_i
            \end{pmatrix}, 
            \quad z_i\coloneqq\PRG(s^0_i).
        \end{align}
        \item if $x_i=1$, let
        \begin{align}
            y_i \coloneqq 
            \begin{pmatrix}
                \ct_i \\
                \PKE.\Enc(\ek_\PKE,s^1_i;\HPRG.\Eval(\pp,x,i)) 
            \end{pmatrix}, 
            \quad z_i\coloneqq\PRG(s^1_i) \oplus r_i=\PRG(s_i^0).
        \end{align}
    \end{itemize}
    The challenger sends $(y_1,...,y_n,z_1,...,z_n)$ to $\qA$.
    \item $\qA$ outputs $X'$.
    If $X'=(x,s^{x_1}_1,...,s^{x_n}_n,\ct_1,...,\ct_n)$, the output of the experiment is 1.
    Otherwise, the output of the experiment is 0.
\end{enumerate}
\begin{claim}
\label{clm:hyb1}
    For any QPT adversary $\qA$, 
    \begin{align}
        |\Pr[1\gets\mathsf{Hyb}_1]-\Pr[1\gets\mathsf{Hyb}_2]|\le\negl(\secp).
    \end{align}
\end{claim}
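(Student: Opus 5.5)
The difference between $\mathsf{Hyb}_1$ and $\mathsf{Hyb}_2$ is confined to how each $r_i$ is generated. In $\mathsf{Hyb}_1$ we have $r_i=\PRG(s_i^{x_i})\oplus u_i$ with $u_i$ uniform. In $\mathsf{Hyb}_2$ we have $r_i=\PRG(s_i^{x_i})\oplus\PRG(s_i^{1-x_i})$. So the change is that $u_i$ is replaced by $\PRG(s_i^{1-x_i})$. The seed $s_i^{1-x_i}$ appears nowhere else in either experiment: the challenge image uses only $s_i^{x_i}$, and the winning condition checks only $s_i^{x_i}$. The plan is therefore a direct reduction to the pseudorandomness of $\PRG$, done via a hybrid over $i\in[n]$, or in one shot using the $n$-fold product of $\PRG$ instances (still pseudorandom by a standard hybrid argument).

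Concretely, I would build a QPT reduction $\cB$ that receives $n$ strings $w_1,\dots,w_n\in\bit^{3m}$. These are either uniform or equal to $\PRG(\sigma_i)$ for uniform seeds $\sigma_i$. $\cB$ proceeds as follows.
\begin{enumerate}
\item Sample $(\ek_\PKE,\qdk_\PKE,\vk_\PKE)\gets\PKE.\qKG(1^\secp)$, $\pp\gets\HPRG.\Setup(1^\secp)$, $x\gets\bit^n$, and $s_i^{x_i}\gets\bit^m$ for each $i$.
\item Set $r_i\coloneqq\PRG(s_i^{x_i})\oplus w_i$ and run $\qA$ on $(\ek_\PKE,\pp,r_1,\dots,r_n,\qdk_\PKE)$.
\item Check $\cert$ with $\PKE.\Vrfy(\vk_\PKE,\cert)$, which $\cB$ can do because it generated $\vk_\PKE$ itself.
\item Form the challenge $(y_1,\dots,y_n,z_1,\dots,z_n)$ exactly as in the hybrids. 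This needs only $x$, $s_i^{x_i}$, random $\ct_i$, and $r_i$. The value $z_i$ for $x_i=1$ is computed as $\PRG(s_i^1)\oplus r_i$.
\item Output $1$ iff $\qA$ returns $(x,s_1^{x_1},\dots,s_n^{x_n},\ct_1,\dots,\ct_n)$.
\end{enumerate}
When the $w_i$ are uniform, $\cB$ perfectly simulates $\mathsf{Hyb}_1$. When $w_i=\PRG(\sigma_i)$, it perfectly simulates $\mathsf{Hyb}_2$ with $s_i^{1-x_i}\coloneqq\sigma_i$. Hence the gap between the two hybrids equals $\cB$'s distinguishing advantage, which is negligible.

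The only point needing care, and the main (mild) obstacle, is checking that the reduction never needs $s_i^{1-x_i}$ explicitly. In $\mathsf{Hyb}_2$ the challenge value $z_i$ for $x_i=1$ is written as $\PRG(s_i^0)$, which looks like it depends on the unknown seed. But it equals $\PRG(s_i^1)\oplus r_i$, which is computable from known values in both hybrids. With that observation the simulation is exact.

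Finally, $\cB$ is QPT and non-uniform like $\qA$: it simply runs $\qA$, including on the quantum key $\qdk_\PKE$ it generated itself. It therefore breaks post-quantum PRG security, which is implied by the PKE-SKL assumption since PKE implies one-way functions. A hybrid argument over the $n$ coordinates then loses only a factor of $n$.
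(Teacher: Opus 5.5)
Your proposal is correct and takes essentially the same route as the paper. Both reduce to $\PRG$ security via a hybrid over the $n$ coordinates $r_i$: the paper does this inline with a randomly chosen index $j$ and telescoping, while you package it as $n$-fold multi-instance $\PRG$ security. Your observation that $z_i=\PRG(s_i^1)\oplus r_i$ is computable without $s_i^{1-x_i}$ is exactly what lets the paper's reduction simply ``simulate $\mathsf{Hyb}_1$'' in both cases.
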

\begin{proof}
The claim follows from the security of $\PRG$.
For each $i\in\{0,...,n\}$, consider the experiment $G_i$ that is identical to $\mathsf{Hyb}_1$ except that for all $j\le i$, $r_j$ is replaced with $\PRG(s_j^0)\oplus\PRG(s_j^1)$.
Then, $G_0$ is identical to $\mathsf{Hyb}_1$ and $G_n$ is identical to $\mathsf{Hyb}_2$.
For the sake of contradiction, we assume that there exists a QPT adversary $\qA$ and a polynomial $p$ such that
\begin{align}
    |\Pr[1\gets\mathsf{Hyb}_1] - \Pr[1\gets\mathsf{Hyb}_2]| \ge \frac{1}{p(\secp)}
\end{align}
holds for infinitely many $\secp$.
By using $\qA$, we construct a QPT adversary $\qB$ that breaks the security of $\PRG$ as follows:
\begin{enumerate}
    \item $\qB$ takes $a$ as input, where $a=\PRG(s)$ for $s\gets\bit^{m}$ or $a\gets\bit^{3m}$.
    \item $\qB$ samples $j\gets[n]$.
    $\qB$ simulates $\mathsf{Hyb}_1$, where $\qB$ generates $(r_1,...,r_n)$ as follows:
    For each $i\in[n]$, $\qB$ generates $s_i^0\gets\bit^{m}$, $s_i^1\gets\bit^{m}$, and $u_i\gets\bit^{3m}$ and lets
    \begin{align}
        r_i \coloneqq
        \begin{cases}
            \PRG(s_i^0)\oplus \PRG(s_i^1) & \text{if } i<j \\ 
            a \oplus \PRG(s_i^{x_i}) & \text{if } i=j \\
            u_i\oplus \PRG(s_i^{x_i}) & \text{if } i>j.
        \end{cases}
    \end{align}
\end{enumerate}
Then, 
\begin{align}
    \left|\Pr_{s\gets\bit^{m}}[1\gets\qB(\PRG(s))] - \Pr_{a\gets\bit^{3m}}[1\gets\qB(a)] \right| 
    &= \frac{1}{n} \left|\sum_{j\in[n]}(\Pr[1\gets G_{j-1}] - \Pr[1\gets G_{j}]) \right| \\
    &= \frac{1}{n} \left| \Pr[1\gets \mathsf{Hyb}_1 ] - \Pr[1\gets\mathsf{Hyb}_2] \right| \\
    &\ge \frac{1}{np(\secp)}
\end{align}
for infinitely many $\secp$ and $\qB$ breaks the security of $\PRG$.
\end{proof}

\noindent
$\mathsf{Hyb}_3$: This is identical to $\mathsf{Hyb}_2$ except that $\ct_i$ is replaced with $\PKE.\Enc(\ek_\PKE,s_i^{x_i\oplus 1})$ for each $i\in[n]$. The experiment works as follows:
\begin{enumerate}
    \item The challenger generates $(\ek_\PKE,\qdk_\PKE,\vk_\PKE)\gets\PKE.\qKG(1^\secp)$, $\pp\gets\HPRG.\Setup(1^\secp)$ and $x\gets\bit^n$.
    For each $i\in[n]$, the challenger generates $s_i^0\gets\bit^{m}$, $s_i^1\gets\bit^{m}$, and $r_i\coloneqq\PRG(s_i^0)\oplus\PRG(s_i^1)$.
    The challenger sends $(\ek_\PKE,\pp,r_1,...,r_n,\qdk_\PKE)$ to $\qA$.
    \item $\qA$ outputs $\cert$.
    \item If $\PKE.\Vrfy(\vk_\PKE,\cert)=\bot$, the output of the experiment is 0.
    Otherwise, for each $i\in[n]$,
    \begin{itemize}
        \item if $x_i=0$, the challenger generates $\ct_i\gets\PKE.\Enc(\ek_\PKE,s_i^1)$ and lets
        \begin{align}
            y_i \coloneqq
            \begin{pmatrix}
                \PKE.\Enc(\ek_\PKE,s_i^0;\HPRG.\Eval(\pp,x,i)) \\ 
                \ct_i
            \end{pmatrix}, 
            \quad z_i\coloneqq\PRG(s_i^0).
        \end{align}
        \item if $x_i=1$, the challenger generates $\ct_i\gets\PKE.\Enc(\ek_\PKE,s_i^0)$ and lets
        \begin{align}
            y_i \coloneqq
            \begin{pmatrix}
                \ct_i \\
                \PKE.\Enc(\ek_\PKE,s_i^1;\HPRG.\Eval(\pp,x,i)) 
            \end{pmatrix}, 
            \quad z_i\coloneqq\PRG(s_i^0).
        \end{align}
    \end{itemize}
    The challenger sends $(y_1,...,y_n,z_1,...,z_n)$ to $\qA$.
    \item $\qA$ outputs $X'$.
    If $X'=(x,s^{x_1}_1,...,s^{x_n}_n,\ct_1,...,\ct_n)$, the output of the experiment is 1.
    Otherwise, the output of the experiment is 0.
\end{enumerate}
\begin{claim}
\label{clm:hyb2}
    For any QPT adversary $\qA$,
    \begin{align}
        |\Pr[1\gets\mathsf{Hyb}_2] - \Pr[1\gets\mathsf{Hyb}_3]| \le \negl(\secp).
    \end{align}
\end{claim}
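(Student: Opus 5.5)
We reduce to the PRCT security of $\mathsf{PKESKL}$ via a hybrid over the $n$ ciphertexts. For $j\in\{0,\dots,n\}$ let $G_j$ be identical to $\mathsf{Hyb}_2$, except that for every $i\le j$ the string $\ct_i$ is generated as $\PKE.\Enc(\ek_\PKE,s_i^{x_i\oplus 1})$ instead of being sampled uniformly from $\bit^c$. Then $G_0=\mathsf{Hyb}_2$ and $G_n=\mathsf{Hyb}_3$. It suffices to bound $|\Pr[1\gets G_{j-1}]-\Pr[1\gets G_j]|$ for a random $j\gets[n]$ and to lose a factor of $n$, exactly as in the proof of \cref{clm:hyb1}.

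Given $\qA$ distinguishing $\mathsf{Hyb}_2$ from $\mathsf{Hyb}_3$ with advantage $1/p(\secp)$, we build $\qB$ against $\mathsf{Exp}^{\mathsf{PRCT}}_{\mathsf{PKESKL},\qB}$ as follows.
\begin{enumerate}
\item $\qB$ receives $(\ek_\PKE,\qdk_\PKE)$ from its challenger. It samples $\pp\gets\HPRG.\Setup(1^\secp)$, $x\gets\bit^n$, and $s_i^0,s_i^1\gets\bit^m$ for all $i$, and sets $r_i\coloneqq\PRG(s_i^0)\oplus\PRG(s_i^1)$. It samples $j\gets[n]$.
\item $\qB$ forwards $(\ek_\PKE,\pp,r_1,\dots,r_n,\qdk_\PKE)$ to $\qA$ and receives $\cert$.
\item $\qB$ sends $\cert$ together with the message $s_j^{x_j\oplus1}$ to its challenger. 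This message is known in advance because $\qB$ chose everything except the PKE keys itself.
\item If the challenger rejects, the PRCT experiment outputs $0$, which matches both hybrids, since they output $0$ when $\PKE.\Vrfy(\vk_\PKE,\cert)=\bot$.
\item Otherwise $\qB$ receives $\ct^*$ and sets $\ct_j\coloneqq\ct^*$. For $i<j$ it sets $\ct_i\gets\PKE.\Enc(\ek_\PKE,s_i^{x_i\oplus1})$, and for $i>j$ it sets $\ct_i\gets\bit^c$. It computes the remaining components $\PKE.\Enc(\ek_\PKE,s_i^{x_i};\HPRG.\Eval(\pp,x,i))$ and $z_i$ itself, since it knows $\pp$, $x$ and all seeds.
\item $\qB$ gives $Y$ to $\qA$ and receives $X'$. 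It outputs $1$ iff $X'=(x,s_1^{x_1},\dots,s_n^{x_n},\ct_1,\dots,\ct_n)$, which it can check because it knows every $\ct_i$, including $\ct^*$.
\end{enumerate}

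If $\ct^*$ is a real encryption, $\qB$ perfectly simulates $G_j$. If $\ct^*$ is uniform, it perfectly simulates $G_{j-1}$. Averaging over $j$, the PRCT advantage of $\qB$ is $\frac1n|\Pr[1\gets\mathsf{Hyb}_2]-\Pr[1\gets\mathsf{Hyb}_3]|\ge\frac{1}{np(\secp)}$, which contradicts PRCT security.

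The only delicate point is timing. PRCT security allows a single challenge ciphertext, issued only after a valid certificate. This is compatible with the reduction because all $n$ positions except $j$ are simulated with $\ek_\PKE$ alone. The hybrid over $j$ is what lets us use a single-challenge definition. The winning condition is efficiently checkable because $\qB$ never needs $\qdk_\PKE$ after forwarding it.
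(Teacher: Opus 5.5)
Your proposal is correct and follows the paper's argument for this claim. The paper also hybridizes over the $n$ ciphertexts and builds a PRCT adversary that guesses $j\gets[n]$ and submits $s_j^{x_j\oplus 1}$ with the certificate. It plants the challenge ciphertext at position $j$, simulates every other position using only $\ek_\PKE$, and loses a factor of $n$ by telescoping.
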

\begin{proof}
The claim follows from the PRCT security of $\mathsf{PKESKL}$.
For each $i\in\{0,...,n\}$, consider the experiment $H_i$ that is identical to $\mathsf{Hyb}_2$ except that for all $j\le i$, $\ct_j$ is replaced with $\PKE.\Enc(\ek_\PKE,s_j^{x_j\oplus 1})$.
Then, $H_0$ is identical to $\mathsf{Hyb}_2$ and $H_n$ is identical to $\mathsf{Hyb}_3$.
For the sake of contradiction, we assume that there exists a QPT adversary $\qA$ and a polynomial $p$ such that
\begin{align}
    |\Pr[1\gets\mathsf{Hyb}_2] - \Pr[1\gets\mathsf{Hyb}_3]| \ge \frac{1}{p(\secp)}
\end{align}
holds for infinitely many $\secp$.
By using $\qA$, we construct a QPT adversary $\qC$ that breaks the PRCT security of $\mathsf{PKESKL}$. 
$\qC$ behaves during the PRCT security game $\mathsf{Exp}^{\mathsf{PRCT}}_{\mathsf{PKESKL},\qC}(\secp,b)$ as follows:
\begin{enumerate}
    \item The challenger generates $(\ek_\PKE,\qdk_\PKE,\vk_\PKE)\gets\PKE.\qKG(1^\secp)$ and sends $(\ek_\PKE,\qdk_\PKE)$ to $\qC$.
    \item $\qC$ generates $\pp\gets\HPRG.\Setup(1^\secp)$, $x\gets\bit^n$, $s_i^0\gets\bit^{m}$, $s_i^1\gets\bit^{m}$, and $r_i=\PRG(s_i^0)\oplus\PRG(s_i^1)$. 
    $\qC$ runs $\cert\gets\qA(\ek_\PKE,\pp,r_1,...,r_n,\qdk_\PKE)$. 
    $\qC$ samples $j\gets[n]$ and sets $m\coloneqq s_{j}^{x_j\oplus 1}$.
    $\qC$ sends $\cert$ and $m$ to the challenger.
    \item If $\PKE.\Vrfy(\vk_\PKE,\cert)=\bot$, then the output of the experiment is 0.
    Otherwise, the challenger generates $\ct_b^*$, where $\ct^*_0\gets\bit^c$ and $\ct^*_1\gets\PKE.\Enc(\ek_\PKE,m)$ and sends $\ct_b^*$ to $\qC$.
    \item $\qC$ simulates $\mathsf{Hyb}_2$, where $\qC$ generates $(\ct_1,...,\ct_n)$ as follows:
    \begin{align}
        \ct_i \coloneqq 
        \begin{cases}
            \PKE.\Enc(\ek_\PKE,s_i^{x_i\oplus 1}) & \text{if } i<j \\ 
            \ct_b^* & \text{if } i=j \\
            \ct_i\gets\bit^c & \text{if } i>j.
        \end{cases}
    \end{align}
\end{enumerate}
Then,
\begin{align}
    |\Pr [1\gets \mathsf{Exp}^{\mathsf{PRCT}}_{\mathsf{PKESKL},\qC}(\secp,0)] - \Pr[1\gets \mathsf{Exp}^{\mathsf{PRCT}}_{\mathsf{PKESKL},\qC}(\secp,1)] | 
    &= \frac{1}{n} \left|\sum_{j\in[n]}(\Pr[1\gets H_{j-1}] - \Pr[1\gets H_{j}]) \right| \\
    &= \frac{1}{n} \left|(\Pr[1\gets H_{0}] - \Pr[1\gets H_{n}])\right| \\
    &= \frac{1}{n} \left|(\Pr[1\gets \mathsf{Hyb}_2] - \Pr[1\gets\mathsf{Hyb}_3])\right| \\
    &\ge \frac{1}{np(\secp)}
\end{align}
for infinitely many $\secp$ and $\qC$ breaks the PRCT security of $\mathsf{PKESKL}$.
\end{proof}

\noindent
$\mathsf{Hyb}_4$: This is identical to $\mathsf{Hyb}_3$ except that for each $i\in[n]$, $\HPRG.\Eval(\pp,x,i)$ is replaced with a random string. The experiment works as follows:
\begin{enumerate}
    \item The challenger generates $(\ek_\PKE,\qdk_\PKE,\vk_\PKE)\gets\PKE.\qKG(1^\secp)$, $\pp\gets\HPRG.\Setup(1^\secp)$ and $x\gets\bit^n$.
    For $i\in[n]$, the challenger generates $s_i^0\gets\bit^{m}$, $s_i^1\gets\bit^{m}$, and $r_i\coloneqq\PRG(s_i^0)\oplus\PRG(s_i^1)$.
    The challenger sends $(\ek_\PKE,\pp,r_1,...,r_n,\qdk_\PKE)$ to $\qA$.
    \item $\qA$ outputs $\cert$.
    \item If $\PKE.\Vrfy(\vk_\PKE,\cert)=\bot$, the output of the experiment is 0.
    Otherwise, for $i\in[n]$, the challenger generates $v_i\gets\bit^{\ell}$.
    For each $i\in[n]$,
    \begin{itemize}
        \item if $x_i=0$, the challenger generates $\ct_i\gets\PKE.\Enc(\ek_\PKE,s_i^1)$ and lets
        \begin{align}
            y_i \coloneqq
            \begin{pmatrix}
                \PKE.\Enc(\ek_\PKE,s_i^0;v_i) \\ 
                \ct_i
            \end{pmatrix}, 
            \quad z_i\coloneqq\PRG(s_i^0).
        \end{align}
        \item if $x_i=1$, the challenger generates $\ct_i\gets\PKE.\Enc(\ek_\PKE,s_i^0)$ and lets
        \begin{align}
            y_i \coloneqq
            \begin{pmatrix}
                \ct_i \\
                \PKE.\Enc(\ek_\PKE,s_i^1;v_i) 
            \end{pmatrix}, 
            \quad z_i\coloneqq\PRG(s_i^0).
        \end{align}
    \end{itemize}
    The challenger sends $(y_1,...,y_n,z_1,...,z_n)$ to $\qA$.
    \item $\qA$ outputs $X'$.
    If $X'=(x,s^{x_1}_1,...,s^{x_n}_n,\ct_1,...,\ct_n)$, the output of the experiment is 1.
    Otherwise, the output of the experiment is 0.
\end{enumerate}
\begin{claim}
\label{clm:hyb3}
    For any QPT adversary $\qA$,
    \begin{align}
        |\Pr[1\gets\mathsf{Hyb}_3] - \Pr[1\gets\mathsf{Hyb}_4]|\le\negl(\secp).
    \end{align}
\end{claim}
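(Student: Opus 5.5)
We reduce to the security of the hinting PRG in one step. The plan is a direct reduction $\qD$ that receives a hinting PRG challenge $(\pp, y_0^\beta, \{y^\beta_{i,b}\}_{i\in[n],b\in\bit})$ and simulates $\mathsf{Hyb}_{3}$ (if $\beta=1$) or $\mathsf{Hyb}_4$ (if $\beta=0$) for $\qA$. The key observation is that in $\mathsf{Hyb}_3$ both components of $y_i$ are fresh encryptions, and only the one in position $x_i$ uses randomness $\HPRG.\Eval(\pp,x,i)$. The other component uses fresh randomness. So we can rewrite $\mathsf{Hyb}_3$ symmetrically: the top component of $y_i$ is $\PKE.\Enc(\ek_\PKE,s_i^0;w_{i,0})$ and the bottom is $\PKE.\Enc(\ek_\PKE,s_i^1;w_{i,1})$, where $w_{i,x_i}=\HPRG.\Eval(\pp,x,i)$ and $w_{i,1\oplus x_i}$ is uniform. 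In $\mathsf{Hyb}_4$ both $w_{i,0},w_{i,1}$ are uniform. This matches exactly the two distributions in the hinting PRG definition, with $w_{i,b}=y^\beta_{i,b}$ (and $y_0^\beta$ ignored).

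The reduction $\qD$ works as follows.
\begin{enumerate}
\item $\qD$ itself generates $(\ek_\PKE,\qdk_\PKE,\vk_\PKE)\gets\PKE.\qKG(1^\secp)$ and $s_i^0,s_i^1\gets\bit^m$, and sets $r_i\coloneqq\PRG(s_i^0)\oplus\PRG(s_i^1)$.
\item $\qD$ gives $(\ek_\PKE,\pp,r_1,\dots,r_n,\qdk_\PKE)$ to $\qA$ and receives $\cert$. It checks $\PKE.\Vrfy(\vk_\PKE,\cert)$ itself, which it can do because it holds $\vk_\PKE$.
\item $\qD$ forms $y_i=(\PKE.\Enc(\ek_\PKE,s_i^0;y^\beta_{i,0}),\PKE.\Enc(\ek_\PKE,s_i^1;y^\beta_{i,1}))$ and $z_i=\PRG(s_i^0)$, and sends these to $\qA$.
\end{enumerate}

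Importantly, $r_i$, $z_i$ and the first phase never depend on $x$, so $\qD$ does not need to know $x$ to simulate anything up to $\qA$'s final output. The only obstacle is the win condition: $\qD$ must decide whether $X'=(x,s_1^{x_1},\dots,s_n^{x_n},\ct_1,\dots,\ct_n)$ without knowing $x$. We handle this by checking consistency of $\qA$'s output $X'=(x',s'_1,\dots,s'_n,\ct'_1,\dots,\ct'_n)$ against what $\qD$ knows. For each $i$, it checks that $s'_i=s_i^{x'_i}$ and that $\ct'_i$ equals the component of $y_i$ at position $1\oplus x'_i$.

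If $s_i^0\ne s_i^1$ for all $i$, which fails with probability at most $n2^{-m}$, this check passes if and only if $x'=x$ and the remaining components match. The reason is that $\ct_i$ in the hybrids is precisely the component of $y_i$ at position $1\oplus x_i$. Hence $\qD$'s computed bit equals the experiment's output up to negligible error.

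$\qD$ outputs $1$ exactly when $\qA$ wins. Its distinguishing advantage is then $|\Pr[1\gets\mathsf{Hyb}_3]-\Pr[1\gets\mathsf{Hyb}_4]|-\negl(\secp)$, and hinting PRG security bounds this by a negligible function. The main care point is verifying that the symmetric rewriting of $\mathsf{Hyb}_3$ is distributionally identical. It is: in $\mathsf{Hyb}_3$, $\ct_i\gets\PKE.\Enc(\ek_\PKE,s_i^{1\oplus x_i})$ uses fresh uniform randomness, and it sits in position $1\oplus x_i$.
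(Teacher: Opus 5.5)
Your simulation of $\qA$'s view is correct and matches the paper's: the symmetric rewriting of $\mathsf{Hyb}_3$ is exact for $\beta=1$, and $\beta=0$ gives $\mathsf{Hyb}_4$. The gap is in how you handle the win condition.

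Your test requires, for all $i$, that $s'_i=s_i^{x'_i}$ and that $\ct'_i$ equals the component of $y_i$ at position $1\oplus x'_i$. This test is parameterized by $\qA$'s own $x'$, and $(x',s_1^{x'_1},\dots,s_n^{x'_n},\ldots)$ passes it for \emph{every} $x'\in\bit^n$. The condition $s_i^0\neq s_i^1$ only rules out pairing a wrong bit $x'_i\neq x_i$ with the correct value $s_i^{x_i}$. It does nothing against pairing that wrong bit with $s_i^{1\oplus x_i}$.

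So your ``if and only if $x'=x$'' is false, and $\qD$'s bit is not the experiment's output. In each world you only get $\Pr[\qD=1\mid\beta]\ge\Pr[\qA\text{ wins}]$. That does not let you turn $\qD$'s advantage into a bound on $|\Pr[1\gets\mathsf{Hyb}_3]-\Pr[1\gets\mathsf{Hyb}_4]|$. The underlying problem is that your test never evaluates $\HPRG.\Eval(\pp,x',\cdot)$. That is the only way $\qD$ can relate $x'$ to the hidden $x$; for $\beta=0$, its input is independent of $x$ altogether.

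The paper closes this gap by checking $x'$ against the hints. Its $\qD$ additionally requires $a_0^\beta=\HPRG.\Eval(\pp,x',0)$ and $a^\beta_{i,x'_i}=\HPRG.\Eval(\pp,x',i)$ for all $i$.
\begin{itemize}
\item For $\beta=1$, every winning output passes, so $\Pr[\qD=1\mid\beta=1]\ge\Pr[1\gets\mathsf{Hyb}_3]$.
\item For $\beta=0$, the hint values are uniform and independent of $\pp$. A union bound over all $x'$ bounds acceptance by $2^{n}\cdot 2^{-(n+1)\ell}$. This is negligible once the encryption randomness is padded so that $\ell=\omega(\log\secp)$.
\end{itemize}
This one-sided argument shows $\Pr[1\gets\mathsf{Hyb}_3]\le\negl(\secp)$. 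Combined with $\Pr[1\gets\mathsf{Hyb}_4]\le 2^{-n}$ (the view in $\mathsf{Hyb}_4$ is independent of $x$), it gives the claim.

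Alternatively, you could keep your symmetric test. Let $W$ be the event that it passes; your $\qD$ does show $|\Pr[W\mid\beta=1]-\Pr[W\mid\beta=0]|\le\negl(\secp)$. You would then need to prove separately that $\Pr[W\mid\beta=0]\le\negl(\secp)$. Passing requires outputting some $s_i^0$ or $s_i^1$ after a valid deletion, which PRCT security plus one-wayness of $\PRG$ rules out. That additional reduction is missing from your proposal.
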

\begin{proof}
For the sake of contradiction, we assume that there exists a QPT adversary $\qA$ and a polynomial $p$ such that
\begin{align}
    |\Pr[1\gets\mathsf{Hyb}_3] - \Pr[1\gets\mathsf{Hyb}_4]|\ge\frac{1}{p(\secp)}
\end{align}
for infinitely many $\secp$.
Since the view of $\qA$ in $\mathsf{Hyb}_4$ is independent of $x$, we have $\Pr[1\gets\mathsf{Hyb}_4]\le 2^{-n}$.

We construct a QPT adversary $\qD$ against the security of the hinting PRG.
\begin{enumerate}
    \item $\qD$ takes $(\pp, a_0^\beta, \{a_{i,b}^\beta\}_{i\in[n],b\in\bit})$ as input, where $\pp\gets\HPRG.\Setup(1^\secp)$, $x\gets\bit^{n}$, $\beta\gets\bit$, $a_0^0\gets\bit^{\ell}$, $a_0^1\coloneqq\HPRG.\Eval(\pp,x,0)$, $a_{i,0}^0\gets\bit^{\ell}$, $a_{i,1}^0\gets\bit^{\ell}$, $a_{i,x_i}^1\coloneqq\HPRG.\Eval(\pp,x,i)$, and $a_{i,x_i\oplus 1}^1\gets\bit^{\ell}$ for each $i\in[n]$.
    \item $\qD$ generates $(\ek_\PKE,\qdk_\PKE,\vk_\PKE)\gets\PKE.\qKG(1^\secp)$ and, for every $i\in[n]$, samples $s_i^0,s_i^1\gets\bit^m$ and sets $r_i\coloneqq\PRG(s_i^0)\oplus\PRG(s_i^1)$, $\ct_i^b\coloneqq\PKE.\Enc(\ek_\PKE,s_i^b;a_{i,b}^\beta)$ for $b\in\bit$, $y_i\coloneqq\begin{pmatrix} \ct_i^0\\ \ct_i^1 \end{pmatrix}$, and $z_i\coloneqq\PRG(s_i^0)$.
    It sends $(\ek_\PKE,\pp,r_1,\ldots,r_n,\qdk_\PKE)$ to $\qA$ and obtains $\cert$.
    If $\PKE.\Vrfy(\vk_\PKE,\cert)=\bot$, it outputs $0$.
    Otherwise, it sends $(y_1,\ldots,y_n,z_1,\ldots,z_n)$ to $\qA$ and obtains
    $X'=(\widetilde{x},\widetilde{s}_1,\ldots,\widetilde{s}_n,\widetilde{\ct}_1,\ldots,\widetilde{\ct}_n)$.
    If $X'$ cannot be parsed in this form, it outputs $0$.
    \item $\qD$ outputs $1$ if $a_0^\beta=\HPRG.\Eval(\pp,\widetilde{x},0)$ and, for every $i\in[n]$, $a_{i,\widetilde{x}_i}^\beta=\HPRG.\Eval(\pp,\widetilde{x},i)$, $\widetilde{s}_i=s_i^{\widetilde{x}_i}$, and $\widetilde{\ct}_i=\ct_i^{\widetilde{x}_i\oplus 1}$.
    Otherwise, it outputs $0$.
\end{enumerate}

When $\beta=1$, the view given to $\qA$ is distributed exactly as in $\mathsf{Hyb}_3$.
Moreover, whenever $\qA$ wins $\mathsf{Hyb}_3$, its output satisfies all the checks above.
Therefore,
\begin{align}
    \Pr[1\gets\mathsf{Hyb}_3]
    \le \Pr[1\gets\qD(\pp,a_0^1,\{a_{i,b}^1\}_{i\in[n],b\in\bit})].
\end{align}
When $\beta=0$, for every fixed $\widetilde{x}\in\bit^n$, the values $a_0^0$ and $\{a_{i,\widetilde{x}_i}^0\}_{i\in[n]}$ are sampled uniformly at random.
Acceptance implies that there exists some $\widetilde{x}\in\bit^n$ satisfying all the HPRG equalities in the check, regardless of how $\qA$ chooses its output.
Taking a union bound over all such $\widetilde{x}$, we obtain
\begin{align}
    \Pr[1\gets\qD(\pp,a_0^0,\{a_{i,b}^0\}_{i\in[n],b\in\bit})]
    \le 2^n\cdot 2^{-(n+1)\ell}=2^{n-(n+1)\ell}=\negl(\secp),
\end{align}
where the last equality uses $\ell=\omega(\log\secp)$, which we can assume without loss of generality by padding the encryption randomness.
Therefore, for infinitely many $\secp$,
\begin{align}
    &\Pr[1\gets\qD(\pp,a_0^1,\{a_{i,b}^1\}_{i\in[n],b\in\bit})]
    -\Pr[1\gets\qD(\pp,a_0^0,\{a_{i,b}^0\}_{i\in[n],b\in\bit})]\\
    &\ge \Pr[1\gets\mathsf{Hyb}_3]-\negl(\secp)\\
    &\ge |\Pr[1\gets\mathsf{Hyb}_3]-\Pr[1\gets\mathsf{Hyb}_4]|
    -\Pr[1\gets\mathsf{Hyb}_4]-\negl(\secp)\\
    &\ge \frac{1}{p(\secp)}-2^{-n}-\negl(\secp),
\end{align}
which is non-negligible and contradicts the security of the hinting PRG.
\end{proof}

In $\mathsf{Hyb}_4$, the distribution of $(y_1,...,y_n,z_1,...,z_n)$ is independent of $x$.
Thus, for any QPT adversary $\qA$, we have
\begin{align}
    \Pr[1\gets\mathsf{Hyb}_4] \le \negl(\secp).
\end{align}
By combining this inequality with \cref{clm:hyb0,clm:hyb1,clm:hyb2,clm:hyb3}, we have
\begin{align}
    \Pr[1\gets\mathsf{Hyb}_0] \le \negl(\secp)
\end{align}
and complete the proof.

\end{proof}
\else
We give the proof in \cref{sec:Proof_TDFSKL}.
\fi

\ifnum\submission=0
\subsection{Trapdoor Functions with Secure Key Leasing with Domain Sampler}
\label{sec:TDFSKL_DS}

In the previous section, we have constructed a TDF-SKL from PRCT secure PKE-SKL (and hinting PRGs).
However, via a slightly different approach, we can construct a TDF-SKL with \textit{domain sampler} from the weaker assumptions, namely IND-CPA secure PKE (and hinting PRGs).
TDF-SKL with domain sampler are defined similarly to TDF-SKL, except that the syntax additionally includes a domain-sampling algorithm that samples a bit string from the domain of the trapdoor function.
As the security for TDF-SKL with domain sampler, we consider \textit{OW-VRA security}, where the adversary obtains the verification key after submitting the deletion certificate and one-wayness is defined with respect to the distribution induced by the domain-sampling algorithm.

The definition of TDF-SKL with domain sampler is as follows:
\begin{definition}[TDF-SKL with domain sampler]\label{def:TDF-SKL_DS}
Let $n$ be a polynomial.
A TDF-SKL with domain sampler is a tuple of algorithms $\mathsf{TDFSKL}=(\mathpzc{KG},\Samp,\Eval,\mathpzc{Inv},\mathpzc{Del},\Vrfy)$.
\begin{itemize}
    \item $\mathpzc{KG}(1^\secp)\to(\ek,\mathpzc{td},\vk)$: The key generation algorithm takes as input a security parameter $1^\secp$ and outputs a classical evaluation key $\ek$, a quantum trapdoor $\mathpzc{td}$, and a classical deletion verification key $\vk$.
    \item $\Samp(\ek)\to x$: The domain sampling algorithm takes as input an evaluation key $\ek$ and outputs $x\in\bit^{n(\secp)}$. 
    \item $\Eval(\ek,x)\to y$: The evaluation algorithm takes an evaluation key $\ek$ and a string $x\in\bit^{n(\secp)}$ as input and outputs $y$. This algorithm is deterministic.
    \item $\mathpzc{Inv}(\mathpzc{td},y)\to(x',\mathpzc{td}')$: The inversion algorithm takes a quantum trapdoor $\mathpzc{td}$ and a string $y$ as input and outputs $x'$ and a resulting trapdoor $\mathpzc{td}'$.
    \item $\mathpzc{Del}(\mathpzc{td})\to\cert$: The trapdoor deletion algorithm takes a quantum trapdoor $\mathpzc{td}$ as input and outputs a classical certificate $\cert$.
    \item $\Vrfy(\vk,\cert)\to\top/\bot$: The deletion verification algorithm takes a deletion verification key $\vk$ and a certificate $\cert$ as input and outputs $\top/\bot$. This algorithm is deterministic.
\end{itemize}
$\mathsf{TDFSKL}$ is required to satisfy the following conditions:
\begin{itemize}
    \item \textbf{Almost-all-keys inversion correctness:} 
    \begin{align}
        \Pr_{(\ek,\mathpzc{td},\vk)\gets\mathpzc{KG}(1^\secp)} \left[ \forall x\in\bit^{n(\secp)}, 
        \Pr_{(x',\mathpzc{td}')\gets\mathpzc{Inv}(\mathpzc{td},\Eval(\ek,x))}[x'=x]\ge 1-\negl(\secp) \right] \ge 1-\negl(\secp).
    \end{align}
    \item \textbf{Deletion verification correctness:} 
    \begin{align}
        \Pr \left[ \Vrfy(\vk,\cert)=\top : \begin{gathered} (\ek,\mathpzc{td},\vk)\gets\mathpzc{KG}(1^\secp) \\ \cert\gets\mathpzc{Del}(\mathpzc{td})
        \end{gathered} \right] \ge 1-\negl(\secp).
    \end{align}
    \item \textbf{OW-VRA security:}
    Consider the experiment $\mathsf{Exp}^{\mathsf{OW-VRA}}_{\mathsf{TDFSKL},\qA}(\secp)$ between the challenger and an adversary $\qA$:
    \begin{enumerate}
        \item The challenger generates $(\ek,\mathpzc{td},\vk)\gets\mathpzc{KG}(1^\secp)$ and sends $(\ek,\mathpzc{td})$ to $\qA$.
        \item $\qA$ sends $\cert$ to the challenger.
        \item If $\Vrfy(\vk,\cert)=\bot$, then the output of the experiment is 0. 
        Otherwise, the challenger generates $x\gets\Samp(\ek), y\coloneqq\Eval(\ek,x)$ and sends $(y,\vk)$ to $\qA$.
        \item $\cA$ sends $x'$ to the challenger.
        \item If $x'=x$, the output of the experiment is 1.
        Otherwise, the output of the experiment is 0.
    \end{enumerate}
    Then, for any QPT adversary $\qA$, 
    \begin{align}
        \Pr[1\gets\mathsf{Exp}^{\mathsf{OW-VRA}}_{\mathsf{TDFSKL},\qA}(\secp)] \le\negl(\secp).
    \end{align}
\end{itemize}
\end{definition}

We show the following theorem.
\begin{theorem}
    If IND-VRA secure PKE-SKL and hinting PRGs exist, then TDF-SKL with domain sampler exist.
\end{theorem}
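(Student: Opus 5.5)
We construct TDF-SKL with domain sampler by adapting the construction of \cref{sec:TDFSKL}. IND-VRA secure PKE-SKL does not give pseudorandom ciphertexts, so the uniformly random "dummy" ciphertexts $\ct_i$ cannot remain free inputs. Instead, the domain sampler will produce them as honest encryptions.

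\textbf{Construction.} Key generation is unchanged: $\ek=(\ek_\PKE,\pp,r_1,\dots,r_n)$, $\qtd=\qdk_\PKE$, $\vk=\vk_\PKE$, with $r_i\gets\bit^{3m}$. The domain is strings $X=(x,s_1,\dots,s_n,t_1,\dots,t_n,\rho_1,\dots,\rho_n)$, where $t_i\in\bit^m$ and $\rho_i\in\bit^{\ell}$. The evaluation sets $\ct_i\coloneqq\PKE.\Enc(\ek_\PKE,t_i;\rho_i)$ and then places $\ct_i$ and $\PKE.\Enc(\ek_\PKE,s_i;\HPRG.\Eval(\pp,x,i))$ in the two slots of $y_i$ according to $x_i$, exactly as before. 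The $z_i$ are as before. $\Samp(\ek)$ samples all components of $X$ uniformly.

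\textbf{Inversion.} $\qInv$ proceeds as in \cref{sec:TDFSKL}: coherently decrypt the slot selected by the PRG test on $z_i$, recovering $x_i$ and $s_i$. It additionally decrypts the other slot to get $t_i$. Since $\rho_i$ cannot be decrypted, the output should omit it. The cleanest fix is to make $\Eval$ output $\rho_i$ in the clear (its secrecy is never used). A second option is to derive $\rho_i$ from a PRG seed that is also encrypted.

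\textbf{Correctness.} Almost-all-keys correctness then follows verbatim from the proof of \cref{lem:TDFSKL_correct}, using the sets $G_{1,\secp}$ and $G_{2,\secp}$, gentle measurement, and the sequential trace-distance bound. All decrypted ciphertexts are now honest encryptions for every $X$, so almost-all-keys decryption correctness applies to each slot.

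\textbf{Security.} For OW-VRA security we follow the hybrids of \cref{lem:TDF-SKL_security}:
\begin{itemize}
\item $\mathsf{Hyb}_1$: presample $x$ and $s_i^0,s_i^1$, and set $r_i=\PRG(s_i^{x_i})\oplus u_i$. This is an identical distribution.
\item $\mathsf{Hyb}_2$: set $r_i=\PRG(s_i^0)\oplus\PRG(s_i^1)$, by PRG security via a hybrid over $i$.
\item $\mathsf{Hyb}_3$: replace the plaintext $t_i$ of the dummy ciphertext by $s_i^{1-x_i}$. This replaces the PRCT step with an IND-VRA step. The reduction submits $(t_i,s_i^{1-x_i})$ as the challenge pair after forwarding $\cert$, and embeds the returned challenge ciphertext. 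Because IND-VRA gives the reduction $\vk$ together with the challenge, it can forward $\vk$ to $\qA$ as the OW-VRA game requires.
\item $\mathsf{Hyb}_4$: replace $\HPRG.\Eval(\pp,x,i)$ by uniform randomness. The reduction $\qD$ is built as in \cref{clm:hyb3}: it checks the adversary's full candidate preimage against the hint values, and takes a union bound over $\widetilde x$ in the $\beta=0$ case.
\end{itemize}
In $\mathsf{Hyb}_4$ the image is independent of $x$, so success is at most $2^{-n}$.

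\textbf{Main obstacle.} The hardest step is $\mathsf{Hyb}_3$: the winning condition involves the preimage components $t_i$ and $\rho_i$. Once we switch plaintexts via IND-VRA, the reduction no longer knows the challenge ciphertext's randomness, so it cannot check whether $\qA$ output the correct $\rho_j$ at the embedded index. I would handle this by weakening what must be checked. Define the hybrids' winning condition to require only that $\qA$ output the correct $x$, which is implied by full inversion. Then bound this event through $\mathsf{Hyb}_4$, where the image is independent of $x$. The IND-VRA reduction can then check the $x$-part itself. This also fits the hinting-PRG step, which only needs $x$ (via the $\HPRG$ equalities) together with the $s_i$.
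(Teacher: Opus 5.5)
\paragraph{Gap: the dummy ciphertext cannot be parameterized by $(t_i,\rho_i)$.} Your high-level idea, having $\Samp$ produce the dummy ciphertexts as honest encryptions, is the right one. The problem is that you write the dummy ciphertext as $\PKE.\Enc(\ek_\PKE,t_i;\rho_i)$ and put $(t_i,\rho_i)$ into the preimage. Then $\qInv$ must return $\rho_i$, which decryption does not recover. Worse, $X\mapsto Y$ need not even be injective, for instance when $\Enc$ ignores padded randomness. So inversion correctness on the whole domain, which \cref{def:TDF-SKL_DS} requires, can fail outright. Neither of your fixes repairs this.
\begin{itemize}
\item \emph{Publishing $\rho_i$ is not harmless.} Its secrecy is exactly what your $\mathsf{Hyb}_3$ uses: the IND-VRA challenger never reveals the randomness of the challenge ciphertext, so the reduction cannot even produce the image, let alone check $\rho_j$. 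Your ``main obstacle'' paragraph only addresses the checking.
\item \emph{It is not only the proof that breaks.} For Regev-type encryption, known randomness reveals the plaintext. So an adversary decodes both slots of $y_i$ using $\rho_i$ and sees which one re-encrypts consistently, which reveals every $x_i$. It then computes $\HPRG.\Eval(\pp,x,i)$ and decrypts the real slot as well, inverting completely without any key. In particular, the image is not independent of $x$ even in your $\mathsf{Hyb}_4$.
\item \emph{Deriving $\rho_i$ from an encrypted seed} means encrypting under randomness correlated with the plaintext, about which IND-VRA says nothing. The paper escapes that kind of circularity only for the real slot, and only via the hinting PRG. Encrypting the seed in a separate ciphertext just moves the problem to that ciphertext's randomness.
\end{itemize}

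\paragraph{The missing idea: keep the dummy ciphertext opaque.} This is what the paper does. Keep the domain $(x,s_1,\dots,s_n,\ct_1,\dots,\ct_n)$ with arbitrary $\ct_i\in\bit^c$, and keep $\Eval$ and $\qInv$ exactly as in \cref{sec:TDFSKL}. Change only $\Samp$, which sets $\ct_i\gets\PKE.\Enc(\ek_\PKE,0^m)$. Then $\qInv$ only ever needs plaintexts of honestly generated real-slot ciphertexts. The dummy slot is copied verbatim, and when it sits in the first position the PRG test on it fails for every decryption outcome by $G_{2,\secp}$. Hence correctness on the whole domain follows verbatim from \cref{lem:TDFSKL_correct}.

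In $\mathsf{Hyb}_3$, the IND-VRA reduction uses the challenge pair $(0^m,s_j^{x_j\oplus 1})$ and embeds the challenge ciphertext directly as the preimage component $\ct_j$. It knows this ciphertext, so it can check the full winning condition, and your weakening of the winning condition to $x$ becomes unnecessary (though still valid). With this change, the rest of your hybrid sequence coincides with the paper's: the PRG step, the IND-VRA step with $\vk$ forwarded, and the hinting-PRG step with the union bound over $\widetilde x$.
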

Our construction is as follows:
\paragraph{Construction.}
Let $\mathsf{PKESKL}=(\PKE.\qKG,\PKE.\Enc,\PKE.\qDec,\PKE.\qDel,\PKE.\Vrfy)$ be an IND-VRA secure PKE-SKL scheme with message length $m=m(\secp)$, randomness length $\ell=\ell(\secp)$, and ciphertext length $c=c(\secp)$.
Let $(\HPRG.\Setup,\HPRG.\Eval)$ be a hinting PRG with input length $n=n(\secp)$ and output length $\ell$.
Let $\PRG$ be a PRG mapping $m$-bit strings to $3m$-bit strings.
We construct a TDF-SKL with domain sampler $\mathsf{TDFSKL}=(\qKG,\Samp,\Eval,\qInv,\qDel,\Vrfy)$ as follows:
\begin{itemize}
    \item $\qKG(1^\secp)\to(\ek,\qtd,\vk)$: 
    \begin{enumerate}
        \item Run $(\ek_\PKE,\qdk_\PKE,\vk_\PKE)\gets\PKE.\qKG(1^\secp)$, $\pp\gets\HPRG.\Setup(1^\secp)$.
        For $i\in[n]$, sample $r_i\gets\bit^{3m}$.
        \item Return $\ek\coloneqq(\ek_\PKE,\pp,r_1,...,r_n)$, $\qtd\coloneqq\qdk_\PKE$, and $\vk\coloneqq\vk_\PKE$.
    \end{enumerate}
    \item $\Samp(\ek)\to X$:
    \begin{enumerate}
        \item Parse $\ek=(\ek_\PKE,\pp,r_1,...,r_n)$.
        \item Sample $x\gets\bit^{n}$. For $i\in[n]$, sample $s_i\gets\bit^{m}$ and $\ct_i\gets\PKE.\Enc(\ek_\PKE,0^{m})$.
        \item Return $X\coloneqq(x,s_1,...,s_n,\ct_1,...,\ct_n)$.
    \end{enumerate}
    \item $\Eval(\ek,X)\to Y$:
    \begin{enumerate}
        \item Parse $\ek=(\ek_\PKE,\pp,r_1,...,r_n)$ and $X=(x,s_1,...,s_n,\ct_1,...,\ct_n)$.
        \item For $i\in[n]$,
        \begin{itemize}
            \item if $x_i=0$, let 
            \begin{align}
                y_i\coloneqq 
                \begin{pmatrix}
                    \PKE.\Enc(\ek_\PKE,s_i;\HPRG.\Eval(\pp,x,i)) \\ 
                    \ct_i
                \end{pmatrix}, 
                \quad z_i\coloneqq\PRG(s_i).
            \end{align}
            \item if $x_i=1$, let
            \begin{align}
                y_i\coloneqq
                \begin{pmatrix}
                    \ct_i \\
                    \PKE.\Enc(\ek_\PKE,s_i;\HPRG.\Eval(\pp,x,i)) 
                \end{pmatrix}, 
                \quad z_i\coloneqq\PRG(s_i) \oplus r_i.
            \end{align}
        \end{itemize}
        \item Return $Y\coloneqq(y_1,...,y_n,z_1,...,z_n)$.
    \end{enumerate}
    \item $\qInv(\qtd,Y)\to (X',\qtd')$:
    \begin{enumerate}
        \item Parse $\qtd=\qdk_\PKE$ and $Y=(y_1,...,y_n,z_1,...,z_n)$, where for each $i\in[n]$, $y_i=\begin{pmatrix} a_i \\ b_i \end{pmatrix}$.
        Initialize the decryption-key register as $K_0\coloneqq\qdk_\PKE$.
        \item For each $i\in[n]$, perform the following operation coherently, using $K_{i-1}$ as the decryption-key register:
        \begin{enumerate}
            \item Apply a unitary implementation of $\PKE.\qDec$ to $(K_{i-1},a_i)$ and obtain a message register containing $s_i^{(a)}$.
            \item Compute $d_i$ that is $0$ if $z_i=\PRG(s_i^{(a)})$ and is $1$ otherwise.
            \item Controlled on $d_i=1$, uncompute $\PKE.\qDec$, and then coherently apply $\PKE.\qDec$ to $b_i$ to obtain a message register containing $s_i^{(b)}$.
            \item Coherently compute
            \begin{align}
                (x_i,s_i,\ct_i)\coloneqq
                \begin{cases}
                    (0,s_i^{(a)},b_i) & \text{if } d_i=0,\\
                    (1,s_i^{(b)},a_i) & \text{if } d_i=1.
                \end{cases}
            \end{align}
            \item Copy $(x_i,s_i,\ct_i)$ to the output registers and uncompute the previous steps.
            Let $K_i$ be the resulting decryption-key register.
        \end{enumerate}
        \item Measure the output registers in the computational basis to obtain $X'\coloneqq(x,s_1,...,s_n,\ct_1,...,\ct_n)$.
        Return $(X',\qtd')$, where $\qtd'\coloneqq K_n$.
    \end{enumerate}
    \item $\qDel(\qtd)\to\cert$: Run $\cert\gets\PKE.\qDel(\qtd)$.
    \item $\Vrfy(\vk,\cert)\to\top/\bot$: Run $\top/\bot\gets\PKE.\Vrfy(\vk,\cert)$.
\end{itemize}

Then, the deletion verification correctness of $\mathsf{TDFSKL}$ immediately follows from that of $\mathsf{PKESKL}$.
Moreover, almost-all-keys inversion correctness follows from the same proof as in \cref{lem:TDFSKL_correct}.
Thus, it suffices to show OW-VRA security.

\begin{lemma}\label{lem:TDF-SKL_DS_security}
    $\mathsf{TDFSKL}$ satisfies OW-VRA security.
\end{lemma}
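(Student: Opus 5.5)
We follow the same hybrid strategy as in the proof of \cref{lem:TDF-SKL_security}, adapting it in two places: the random strings $\ct_i$ are now sampled by $\Samp$ as encryptions of $0^m$ rather than uniform strings, and the adversary also receives $\vk_\PKE$ together with the challenge image. We write $\mathsf{Hyb}_0$ for the OW-VRA experiment.

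\textbf{Hybrid $\mathsf{Hyb}_1$: reorder the sampling.} We sample $x$, $s_i^0$ and $s_i^1$ before giving $(\ek,\qtd)$ to $\qA$. We set $r_i\coloneqq\PRG(s_i^{x_i})\oplus u_i$ with $u_i$ uniform, and let $s_i\coloneqq s_i^{x_i}$. This change is perfectly identical to $\mathsf{Hyb}_0$, as in \cref{clm:hyb0}.

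\textbf{Hybrid $\mathsf{Hyb}_2$: program the $r_i$.} We set $r_i\coloneqq\PRG(s_i^0)\oplus\PRG(s_i^1)$, so that every $z_i=\PRG(s_i^0)$. Indistinguishability follows from PRG security via a hybrid over $i$, exactly as in \cref{clm:hyb1}. The PRG reduction needs no access to $\vk_\PKE$, since it runs $\PKE.\qKG$ itself.

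\textbf{Hybrid $\mathsf{Hyb}_3$: replace the filler ciphertexts.} We replace $\ct_i\gets\PKE.\Enc(\ek_\PKE,0^m)$ with $\ct_i\gets\PKE.\Enc(\ek_\PKE,s_i^{x_i\oplus1})$. This is the step that replaces the PRCT argument. We proceed by a hybrid over $i\in[n]$ and reduce to IND-VRA security of $\mathsf{PKESKL}$.

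\emph{Reduction for $\mathsf{Hyb}_3$.} The reduction $\qC$ receives $(\ek_\PKE,\qdk_\PKE)$, samples $\pp$, $x$ and all $s_i^b$, and runs $\qA$ to obtain $\cert$. It then submits $\cert$ together with the message pair $(0^m,s_j^{x_j\oplus1})$. It receives $(\ct^*,\vk_\PKE)$ and embeds $\ct^*$ as $\ct_j$; all other ciphertexts it generates itself, since it knows $\ek_\PKE$. It then forwards the challenge image together with $\vk_\PKE$ to $\qA$. Its output is $1$ if and only if $\qA$ outputs the correct preimage, and it can check this because it knows $x$, every $s_i^{x_i}$ and every $\ct_i$.

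\emph{Why this reduction works.} This is exactly why IND-VRA rather than plain IND-CPA security is assumed: the verification key must be handed to $\qA$ only after the certificate is accepted, and the IND-VRA game supplies it precisely at that point. A subtle point is that the winning condition involves the $\ct_i$ themselves, but $\qC$ generated or received all of them, so it can verify the condition.

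\textbf{Hybrid $\mathsf{Hyb}_4$: replace the hinting-PRG outputs.} We replace $\HPRG.\Eval(\pp,x,i)$ with uniform $v_i$. As in \cref{clm:hyb3}, we build a distinguisher $\qD$ against the hinting PRG. $\qD$ sets $y_i\coloneqq(\PKE.\Enc(\ek_\PKE,s_i^0;a^\beta_{i,0}),\PKE.\Enc(\ek_\PKE,s_i^1;a^\beta_{i,1}))$. It generates $\vk_\PKE$ itself, so it can hand $\vk_\PKE$ to $\qA$. It outputs $1$ if and only if $\qA$'s answer $\widetilde{x}$ is consistent with the HPRG values $a^\beta_0$ and $a^\beta_{i,\widetilde{x}_i}$ and with the corresponding $s$ and $\ct$ components.

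\emph{Why this is the main obstacle.} This step again cannot be argued as a simple indistinguishability, because in $\mathsf{Hyb}_3$ the winning condition depends on $x$. Using the $\beta=1$ case as the real game, the union bound over $\widetilde{x}$ gives $2^{n-(n+1)\ell}$ in the $\beta=0$ case. This is negligible assuming $\ell=\omega(\log\secp)$ by padding. Hence $\Pr[\mathsf{Hyb}_3]\le\negl$ as long as $\Pr[\mathsf{Hyb}_4]\le 2^{-n}$.

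\textbf{Final hybrid.} In $\mathsf{Hyb}_4$ the challenge $(y_i,z_i)_i$ consists of $\PKE.\Enc(\ek_\PKE,s_i^0)$, $\PKE.\Enc(\ek_\PKE,s_i^1)$ and $\PRG(s_i^0)$. Together with $\vk_\PKE$ and $\ek$, this is independent of $x$. Therefore the success probability is at most $2^{-n}$, which completes the chain.

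We expect the main obstacle to be bookkeeping in the $\mathsf{Hyb}_4$ step: the hinting-PRG reduction must be phrased as a one-sided bound on the winning probability rather than as an indistinguishability, mirroring \cref{clm:hyb3}.
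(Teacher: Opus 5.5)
Your proposal is correct and follows the paper's proof step for step. It uses the same hybrids in the same order:
\begin{itemize}
\item the reordering hybrid;
\item the PRG-based programming $r_i\coloneqq\PRG(s_i^0)\oplus\PRG(s_i^1)$;
\item the IND-VRA reduction with message pair $(0^m,s_j^{x_j\oplus 1})$, which forwards $\vk_\PKE$ to $\qA$ only after the certificate is accepted;
\item the one-sided hinting-PRG bound, via a union bound over $\widetilde{x}$ giving $2^{n-(n+1)\ell}$.
\end{itemize}
As a minor remark, that one-sided bound already gives $\Pr[1\gets\mathsf{Hyb}_3]\le\negl(\secp)$ directly, so the final appeal to $\mathsf{Hyb}_4$ being independent of $x$ is not strictly needed.
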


\ifnum\submission=0
\begin{proof}[Proof of \cref{lem:TDF-SKL_DS_security}]
We consider the following sequence of hybrids.

\noindent
$\mathsf{Hyb}_0$: This is the original security experiment between the challenger and an adversary $\qA$ that works as follows:
\begin{enumerate}
    \item The challenger generates $(\ek_\PKE,\qdk_\PKE,\vk_\PKE)\gets\PKE.\qKG(1^\secp)$, $\pp\gets\HPRG.\Setup(1^\secp)$ and $r_i\gets\bit^{3m}$ for each $i\in[n]$.
    The challenger sends $(\ek_\PKE,\pp,r_1,...,r_n,\qdk_\PKE)$ to $\qA$.
    \item $\qA$ outputs $\cert$.
    \item If $\PKE.\Vrfy(\vk_\PKE,\cert)=\bot$, the output of the experiment is 0.
    Otherwise, the challenger generates $x\gets\bit^n$, $s_i\gets\bit^{m}$, and $\ct_i\gets\PKE.\Enc(\ek_\PKE,0^m)$ for each $i\in[n]$.
    For each $i\in[n]$,
    \begin{itemize}
        \item if $x_i=0$, let
        \begin{align}
            y_i\coloneqq 
            \begin{pmatrix}
                \PKE.\Enc(\ek_\PKE,s_i;\HPRG.\Eval(\pp,x,i)) \\ 
                \ct_i
            \end{pmatrix}, 
            \quad z_i\coloneqq\PRG(s_i).
        \end{align}
        \item if $x_i=1$, let
        \begin{align}
            y_i\coloneqq 
            \begin{pmatrix}
                \ct_i \\
                \PKE.\Enc(\ek_\PKE,s_i;\HPRG.\Eval(\pp,x,i)) 
            \end{pmatrix}, 
            \quad z_i\coloneqq\PRG(s_i) \oplus r_i.
        \end{align}
    \end{itemize}
    The challenger sends $(y_1,...,y_n,z_1,...,z_n,\vk_\PKE)$ to $\qA$.
    \item $\qA$ outputs $X'$.
    If $X'=(x,s_1,...,s_n,\ct_1,...,\ct_n)$, the output of the experiment is 1.
    Otherwise, the output of the experiment is 0.
\end{enumerate}
Our goal is to show that for any QPT adversary $\qA$,
\begin{align}
    \Pr[1\gets\mathsf{Hyb}_0] \le \negl(\secp).
\end{align}

\noindent
$\mathsf{Hyb}_1$: This is identical to $\mathsf{Hyb}_0$ except that $(x,s_1,...,s_n)$ is sampled before $\cA$ outputs $\cert$ and for each $i\in[n]$, $r_i$ is replaced with the XOR of a uniformly random string and the output of $\PRG$.
The experiment works as follows:
\begin{enumerate}
    \item The challenger generates $(\ek_\PKE,\qdk_\PKE,\vk_\PKE)\gets\PKE.\qKG(1^\secp)$, $\pp\gets\HPRG.\Setup(1^\secp)$, and $x\gets\bit^n$.
    For each $i\in[n]$, the challenger generates $s_i^0\gets\bit^{m}$, $s_i^1\gets\bit^{m}$, $u_i\gets\bit^{3m}$, and $r_i\coloneqq\PRG(s_i^{x_i})\oplus u_i$.
    The challenger sends $(\ek_\PKE,\pp,r_1,...,r_n,\qdk_\PKE)$ to $\qA$.
    \item $\qA$ outputs $\cert$.
    \item If $\PKE.\Vrfy(\vk_\PKE,\cert)=\bot$, the output of the experiment is 0.
    Otherwise, the challenger generates $\ct_i\gets\PKE.\Enc(\ek_\PKE,0^m)$ for each $i\in[n]$.
    For each $i\in[n]$,
    \begin{itemize}
        \item if $x_i=0$, let
        \begin{align}
            y_i\coloneqq
            \begin{pmatrix}
                \PKE.\Enc(\ek_\PKE,s^{0}_i;\HPRG.\Eval(\pp,x,i)) \\ 
                \ct_i
            \end{pmatrix}, 
            \quad z_i\coloneqq\PRG(s^0_i).
        \end{align}
        \item if $x_i=1$, let
        \begin{align}
            y_i\coloneqq
            \begin{pmatrix}
                \ct_i \\
                \PKE.\Enc(\ek_\PKE,s^1_i;\HPRG.\Eval(\pp,x,i)) 
            \end{pmatrix}, 
            \quad z_i\coloneqq\PRG(s^1_i) \oplus r_i.
        \end{align}
    \end{itemize}
    The challenger sends $(y_1,...,y_n,z_1,...,z_n,\vk_\PKE)$ to $\qA$.
    \item $\qA$ outputs $X'$.
    If $X'=(x,s^{x_1}_1,...,s^{x_n}_n,\ct_1,...,\ct_n)$, the output of the experiment is 1.
    Otherwise, the output of the experiment is 0.
\end{enumerate}

\begin{claim}
For any adversary $\qA$,
\begin{align}
\label{eq:hyb0_vra}
    \Pr[1\gets\mathsf{Hyb}_0] = \Pr[1\gets\mathsf{Hyb}_1]
\end{align}
\end{claim}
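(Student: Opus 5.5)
The claim asks for an exact equality of winning probabilities, so I will argue that the two experiments induce identical joint distributions on everything $\qA$ sees and on the winning condition. No computational assumption is needed. The plan is a change of variables on the randomness, plus the observation that moving the sampling of $x$ and the $s_i$ earlier changes nothing.

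\textbf{Step 1: reordering.} In $\mathsf{Hyb}_0$, the values $x\gets\bit^n$ and $s_i\gets\bit^m$ are sampled after $\qA$ outputs $\cert$. However, they are independent of everything generated before that point: $(\ek_\PKE,\qdk_\PKE,\vk_\PKE)$, $\pp$, and the $r_i$. Sampling them at the very beginning therefore leaves the joint distribution of the whole execution unchanged, including $\qA$'s quantum state.

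The $\ct_i\gets\PKE.\Enc(\ek_\PKE,0^m)$ are sampled in the same way in both hybrids, after the certificate. The verification key $\vk_\PKE$ is likewise handed to $\qA$ identically in both.

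\textbf{Step 2: reparametrizing $r_i$.} Set $s_i:=s_i^{x_i}$ in $\mathsf{Hyb}_1$. Since $s_i^{x_i}$ is uniform and independent of $x$, the pair $(x,s_i)$ has the same distribution as in $\mathsf{Hyb}_0$. The other seed $s_i^{1-x_i}$ is used nowhere in $\mathsf{Hyb}_1$, so it can be ignored.

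For each $i$, conditioned on $(x,s_1^{x_1},\dots,s_n^{x_n})$, the value $r_i=\PRG(s_i^{x_i})\oplus u_i$ with fresh uniform $u_i\gets\bit^{3m}$ is uniform on $\bit^{3m}$, because XOR with an independent uniform string is a bijection. The $r_i$ are also mutually independent and independent of the PKE keys and $\pp$. Hence $(x,s_1^{x_1},\dots,s_n^{x_n},r_1,\dots,r_n)$ in $\mathsf{Hyb}_1$ has exactly the distribution of $(x,s_1,\dots,s_n,r_1,\dots,r_n)$ in $\mathsf{Hyb}_0$.

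\textbf{Step 3: the rest of the execution.} Under this identification, every other quantity is the same deterministic or randomized function of these variables in both hybrids. This covers:
\begin{itemize}
\item the challenge $(y_i,z_i)$, computed by $\Eval$ from $s_i^{x_i}$;
\item the verification check;
\item the winning condition, where $X'=(x,s_1^{x_1},\dots,s_n^{x_n},\ct_1,\dots,\ct_n)$ matches the condition in $\mathsf{Hyb}_0$.
\end{itemize}
So $\qA$'s view and the experiment's output are identically distributed, and $\Pr[1\gets\mathsf{Hyb}_0]=\Pr[1\gets\mathsf{Hyb}_1]$.

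The only point needing care is the conditional uniformity in Step 2. The $r_i$ depend on $x$ through $\PRG(s_i^{x_i})$, so one must check that this dependence is fully masked by $u_i$, which is sampled independently of everything else. That is immediate, so I expect no real obstacle.
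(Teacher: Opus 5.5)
Your proposal is correct and follows the same route as the paper. The paper's one-line argument is exactly that $(x,s_1^{x_1},\dots,s_n^{x_n},r_1,\dots,r_n)$ in $\mathsf{Hyb}_1$ is distributed identically to $(x,s_1,\dots,s_n,r_1,\dots,r_n)$ in $\mathsf{Hyb}_0$, and that moving the sampling earlier does not change the output; you spell out the details it leaves implicit, namely that the fresh $u_i$ masks $\PRG(s_i^{x_i})$ and that the seed $s_i^{1-x_i}$ goes unused.
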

\begin{proof}
    The claim follows because the distribution of $(x,s^{x_1}_1,...,s^{x_n}_n,r_1,...,r_n)$ in $\mathsf{Hyb}_1$ is identical to the distribution of $(x,s_1,...,s_n,r_1,...,r_n)$ in $\mathsf{Hyb}_0$, and reordering the sampling procedure as in $\mathsf{Hyb}_1$ does not affect the output of the experiment.
\end{proof}

\noindent
$\mathsf{Hyb}_2$: This is identical to $\mathsf{Hyb}_1$ except that $r_i$ is replaced with $\PRG(s_i^0)\oplus\PRG(s_i^1)$ for each $i\in[n]$. 
The experiment works as follows:
\begin{enumerate}
    \item The challenger generates $(\ek_\PKE,\qdk_\PKE,\vk_\PKE)\gets\PKE.\qKG(1^\secp)$, $\pp\gets\HPRG.\Setup(1^\secp)$ and $x\gets\bit^n$.
    For each $i\in[n]$, the challenger generates $s_i^0\gets\bit^{m}$ and $s_i^1\gets\bit^{m}$, and $r_i\coloneqq\PRG(s_i^0)\oplus\PRG(s_i^1)$.
    The challenger sends $(\ek_\PKE,\pp,r_1,...,r_n,\qdk_\PKE)$ to $\qA$.
    \item $\qA$ outputs $\cert$.
    \item If $\PKE.\Vrfy(\vk_\PKE,\cert)=\bot$, the output of the experiment is 0.
    Otherwise, the challenger generates $\ct_i\gets\PKE.\Enc(\ek_\PKE,0^m)$ for each $i\in[n]$.
    For each $i\in[n]$,
    \begin{itemize}
        \item if $x_i=0$, let
        \begin{align}
            y_i\coloneqq
            \begin{pmatrix}
                \PKE.\Enc(\ek_\PKE,s^0_i;\HPRG.\Eval(\pp,x,i)) \\ 
                \ct_i
            \end{pmatrix}, 
            \quad z_i\coloneqq\PRG(s^0_i).
        \end{align}
        \item if $x_i=1$, let
        \begin{align}
            y_i\coloneqq
            \begin{pmatrix}
                \ct_i \\
                \PKE.\Enc(\ek_\PKE,s^1_i;\HPRG.\Eval(\pp,x,i)) 
            \end{pmatrix}, 
            \quad z_i\coloneqq\PRG(s^1_i) \oplus r_i=\PRG(s_i^0).
        \end{align}
    \end{itemize}
    The challenger sends $(y_1,...,y_n,z_1,...,z_n,\vk_\PKE)$ to $\qA$.
    \item $\qA$ outputs $X'$.
    If $X'=(x,s^{x_1}_1,...,s^{x_n}_n,\ct_1,...,\ct_n)$, the output of the experiment is 1.
    Otherwise, the output of the experiment is 0.
\end{enumerate}
\begin{claim}
    For any QPT adversary $\qA$, 
    \begin{align}
    \label{eq:hyb1_vra}
        |\Pr[1\gets\mathsf{Hyb}_1]-\Pr[1\gets\mathsf{Hyb}_2]|\le\negl(\secp).
    \end{align}
\end{claim}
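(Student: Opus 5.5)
The plan is to reduce to the pseudorandomness of $\PRG$ with a standard hybrid argument over the index $i\in[n]$, exactly mirroring \cref{clm:hyb1} in the proof of \cref{lem:TDF-SKL_security}. For each $i\in\{0,\dots,n\}$ I define an intermediate experiment $G_i$ that is identical to $\mathsf{Hyb}_1$ except that for all $j\le i$ we set $r_j\coloneqq\PRG(s_j^0)\oplus\PRG(s_j^1)$, while for $j>i$ we keep $r_j\coloneqq\PRG(s_j^{x_j})\oplus u_j$ with fresh uniform $u_j$. Then $G_0=\mathsf{Hyb}_1$ and $G_n=\mathsf{Hyb}_2$. The key observation is that in $G_{j-1}$ versus $G_j$ the only difference is whether the $3m$-bit string XORed onto $\PRG(s_j^{x_j})$ is uniform ($u_j$) or equals $\PRG(s_j^{x_j\oplus1})$. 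The seed $s_j^{x_j\oplus 1}$ is used nowhere else in either experiment, since the challenge image, the winning condition, and $\vk_\PKE$ involve only $s_j^{x_j}$.

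The reduction $\qB$ receives $a$, which is either $\PRG(s)$ or uniform in $\bit^{3m}$. It samples $j\gets[n]$ and runs the whole challenger of $\mathsf{Hyb}_1$ itself. This is possible because it can generate $(\ek_\PKE,\qdk_\PKE,\vk_\PKE)\gets\PKE.\qKG(1^\secp)$ and $\pp$ on its own. It sets $r_i\coloneqq\PRG(s_i^0)\oplus\PRG(s_i^1)$ for $i<j$, $r_j\coloneqq a\oplus\PRG(s_j^{x_j})$, and $r_i\coloneqq u_i\oplus\PRG(s_i^{x_i})$ for $i>j$. It then forwards $\qdk_\PKE$ to $\qA$ and checks $\cert$ with $\vk_\PKE$. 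Next it computes the challenge $(y_1,\dots,y_n,z_1,\dots,z_n)$ with $\ct_i\gets\PKE.\Enc(\ek_\PKE,0^m)$, and sends this together with $\vk_\PKE$. Finally it outputs $1$ iff $\qA$'s answer equals $(x,s_1^{x_1},\dots,s_n^{x_n},\ct_1,\dots,\ct_n)$.

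Since $\qB$ knows $\vk_\PKE$, the VRA feature of giving $\vk$ to $\qA$ in the second phase causes no difficulty. If $a=\PRG(s)$, the view is $G_j$; if $a$ is uniform, the view is $G_{j-1}$. Telescoping then gives an advantage of $\frac{1}{n}|\Pr[1\gets\mathsf{Hyb}_1]-\Pr[1\gets\mathsf{Hyb}_2]|$, which would contradict PRG security if the gap were non-negligible.

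The only point requiring care, though it is not a real obstacle, is confirming that $\qB$ never needs $s_j^{x_j\oplus1}$. In $\mathsf{Hyb}_1$, $z_j$ is $\PRG(s_j^0)$ or $\PRG(s_j^1)\oplus r_j$, and the ciphertext slot $\ct_j$ encrypts $0^m$. So both depend only on $s_j^{x_j}$ and $r_j$, which $\qB$ knows or has set. The reduction is therefore efficient and perfectly simulates the relevant hybrids.
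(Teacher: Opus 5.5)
Your proposal is correct and follows essentially the same route as the paper. It uses the same index-by-index hybrids $G_i$, the same reduction that embeds the PRG challenge as $r_j\coloneqq a\oplus\PRG(s_j^{x_j})$ for a random $j\in[n]$ while generating all PKE-SKL keys (including $\vk_\PKE$) itself, and the same telescoping argument giving a $1/n$ loss. Your explicit check that $s_j^{x_j\oplus 1}$ is never needed, including for $z_j=\PRG(s_j^{x_j})\oplus r_j$ and for the ciphertexts that encrypt $0^m$, is a point the paper leaves implicit.
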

\begin{proof}
The claim follows from the security of $\PRG$.
For each $i\in\{0,...,n\}$, consider the experiment $G_i$ that is identical to $\mathsf{Hyb}_1$ except that for all $j\le i$, $r_j$ is replaced with $\PRG(s_j^0)\oplus\PRG(s_j^1)$.
Then, $G_0$ is identical to $\mathsf{Hyb}_1$ and $G_n$ is identical to $\mathsf{Hyb}_2$.
For the sake of contradiction, we assume that there exists a QPT adversary $\qA$ and a polynomial $p$ such that
\begin{align}
    |\Pr[1\gets\mathsf{Hyb}_1] - \Pr[1\gets\mathsf{Hyb}_2]| \ge \frac{1}{p(\secp)}
\end{align}
holds for infinitely many $\secp$.
By using $\qA$, we construct a QPT adversary $\qB$ that breaks the security of $\PRG$ as follows:
\begin{enumerate}
    \item $\qB$ takes $a$ as input, where $a=\PRG(s)$ for $s\gets\bit^{m}$ or $a\gets\bit^{3m}$.
    \item $\qB$ samples $j\gets[n]$.
    $\qB$ simulates $\mathsf{Hyb}_1$, where $\qB$ generates $(r_1,...,r_n)$ as follows:
    For each $i\in[n]$, $\qB$ generates $s_i^0\gets\bit^{m}$, $s_i^1\gets\bit^{m}$, and $u_i\gets\bit^{3m}$ and lets
    \begin{align}
        r_i \coloneqq
        \begin{cases}
            \PRG(s_i^0)\oplus \PRG(s_i^1) & \text{if } i<j \\ 
            a \oplus \PRG(s_i^{x_i}) & \text{if } i=j \\
            u_i\oplus \PRG(s_i^{x_i}) & \text{if } i>j.
        \end{cases}
    \end{align}
\end{enumerate}
Then, 
\begin{align}
    \left|\Pr_{s\gets\bit^{m}}[1\gets\qB(\PRG(s))] - \Pr_{a\gets\bit^{3m}}[1\gets\qB(a)] \right| 
    &= \frac{1}{n} \left|\sum_{j\in[n]}(\Pr[1\gets G_{j-1}] - \Pr[1\gets G_{j}]) \right| \\
    &= \frac{1}{n} \left|\Pr[1\gets \mathsf{Hyb}_1] - \Pr[1\gets \mathsf{Hyb}_2] \right| \\
    &\ge \frac{1}{np(\secp)}
\end{align}
for infinitely many $\secp$ and $\qB$ breaks the security of $\PRG$.
\end{proof}

\noindent
$\mathsf{Hyb}_3$: This is identical to $\mathsf{Hyb}_2$ except that $\ct_i$ is replaced with $\PKE.\Enc(\ek_\PKE,s_i^{x_i\oplus 1})$ for each $i\in[n]$. The experiment works as follows:
\begin{enumerate}
    \item The challenger generates $(\ek_\PKE,\qdk_\PKE,\vk_\PKE)\gets\PKE.\qKG(1^\secp)$, $\pp\gets\HPRG.\Setup(1^\secp)$ and $x\gets\bit^n$.
    For each $i\in[n]$, the challenger generates $s_i^0\gets\bit^{m}$, $s_i^1\gets\bit^{m}$, and $r_i\coloneqq\PRG(s_i^0)\oplus\PRG(s_i^1)$.
    The challenger sends $(\ek_\PKE,\pp,r_1,...,r_n,\qdk_\PKE)$ to $\qA$.
    \item $\qA$ outputs $\cert$.
    \item If $\PKE.\Vrfy(\vk_\PKE,\cert)=\bot$, the output of the experiment is 0.
    Otherwise, for each $i\in[n]$,
    \begin{itemize}
        \item if $x_i=0$, the challenger generates $\ct_i\gets\PKE.\Enc(\ek_\PKE,s_i^1)$ and lets
        \begin{align}
            y_i\coloneqq 
            \begin{pmatrix}
                \PKE.\Enc(\ek_\PKE,s_i^0;\HPRG.\Eval(\pp,x,i)) \\ 
                \ct_i
            \end{pmatrix}, 
            \quad z_i\coloneqq\PRG(s_i^0).
        \end{align}
        \item if $x_i=1$, the challenger generates $\ct_i\gets\PKE.\Enc(\ek_\PKE,s_i^0)$ and lets
        \begin{align}
            y_i\coloneqq 
            \begin{pmatrix}
                \ct_i \\
                \PKE.\Enc(\ek_\PKE,s_i^1;\HPRG.\Eval(\pp,x,i)) 
            \end{pmatrix}, 
            \quad z_i\coloneqq\PRG(s_i^0).
        \end{align}
    \end{itemize}
    The challenger sends $(y_1,...,y_n,z_1,...,z_n,\vk_\PKE)$ to $\qA$.
    \item $\qA$ outputs $X'$.
    If $X'=(x,s^{x_1}_1,...,s^{x_n}_n,\ct_1,...,\ct_n)$, the output of the experiment is 1.
    Otherwise, the output of the experiment is 0.
\end{enumerate}
\begin{claim}
    For any QPT adversary $\qA$,
    \begin{align}
    \label{eq:hyb2_vra}
        |\Pr[1\gets\mathsf{Hyb}_2] - \Pr[1\gets\mathsf{Hyb}_3]| \le \negl(\secp).
    \end{align}
\end{claim}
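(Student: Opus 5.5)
The plan is a hybrid argument over the coordinates $i\in\{0,\dots,n\}$, reducing to the IND-VRA security of $\mathsf{PKESKL}$; it parallels \cref{clm:hyb2}, with PRCT replaced by IND-VRA. Let $H_i$ be the experiment identical to $\mathsf{Hyb}_2$ except that, for every $j\le i$, $\ct_j$ is generated as $\PKE.\Enc(\ek_\PKE,s_j^{x_j\oplus 1})$ instead of $\PKE.\Enc(\ek_\PKE,0^m)$. Then $H_0=\mathsf{Hyb}_2$ and $H_n=\mathsf{Hyb}_3$. Suppose $|\Pr[1\gets\mathsf{Hyb}_2]-\Pr[1\gets\mathsf{Hyb}_3]|\ge 1/p(\secp)$ for infinitely many $\secp$. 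We will build an adversary $\qC$ against $\mathsf{Exp}^{\mathsf{IND\textrm{-}VRA}}_{\mathsf{PKESKL},\qC}$ with advantage at least $1/(np(\secp))$.

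The reduction $\qC$ works as follows.
\begin{enumerate}
\item $\qC$ receives $(\ek_\PKE,\qdk_\PKE)$ from the challenger.
\item It samples $\pp\gets\HPRG.\Setup(1^\secp)$, $x\gets\bit^n$, and $s_i^0,s_i^1\gets\bit^m$ for every $i$. It sets $r_i\coloneqq\PRG(s_i^0)\oplus\PRG(s_i^1)$ and runs $\qA$ on $(\ek_\PKE,\pp,r_1,\dots,r_n,\qdk_\PKE)$ to obtain $\cert$.
\item It samples $j\gets[n]$ and submits $\cert$ together with the message pair $(m_0,m_1)\coloneqq(0^m,s_j^{x_j\oplus1})$.
\item It receives $(\ct^*,\vk_\PKE)$. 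It sets $\ct_i\gets\PKE.\Enc(\ek_\PKE,s_i^{x_i\oplus1})$ for $i<j$, $\ct_j\coloneqq\ct^*$, and $\ct_i\gets\PKE.\Enc(\ek_\PKE,0^m)$ for $i>j$.
\item It forms $(y_i,z_i)_{i\in[n]}$ exactly as in $\mathsf{Hyb}_2$ and $\mathsf{Hyb}_3$, with $z_i=\PRG(s_i^0)$, and forwards $(y_1,\dots,y_n,z_1,\dots,z_n,\vk_\PKE)$ to $\qA$.
\item It checks whether $\qA$'s output equals $(x,s_1^{x_1},\dots,s_n^{x_n},\ct_1,\dots,\ct_n)$ and outputs $1$ iff it does.
\end{enumerate}

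This simulation is valid because all values other than $\ct_j$ can be computed from $\ek_\PKE$ and $\qC$'s own randomness. In particular, the encryptions under $\HPRG.\Eval(\pp,x,i)$ need only the public key.

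The key observation is that $\vk_\PKE$ is delivered to $\qA$ only after $\cert$ is verified. This matches the IND-VRA game, where $\vk$ is revealed together with the challenge ciphertext. It is why IND-VRA security, rather than plain IND-CPA, is needed here. If $\Vrfy$ rejects, both games output $0$, consistent with the hybrid. Conditioned on $j$, the challenge bit $b=0$ yields $H_{j-1}$ and $b=1$ yields $H_j$. Telescoping over the uniform $j$ then gives advantage $\frac1n|\Pr[1\gets H_0]-\Pr[1\gets H_n]|\ge \frac{1}{np(\secp)}$.

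The only delicate point I expect is that $\qC$ must hold the plaintext $s_j^{x_j\oplus1}$ when choosing its messages. It does, since all $s_i^b$ and $x$ are sampled before $\cert$ is produced (this was the purpose of $\mathsf{Hyb}_1$). It also needs the winning check to use only values $\qC$ knows, including $\ct^*$ itself. Everything else is routine bookkeeping.
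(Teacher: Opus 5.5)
Your proposal is correct and follows the paper's proof essentially step for step. It uses the same coordinate-wise hybrids $H_i$, the same IND-VRA reduction that samples a random index $j$ and submits the message pair $(0^m, s_j^{x_j\oplus 1})$ after running the adversary to obtain $\cert$, and the same telescoping bound of $\frac{1}{np(\secp)}$.
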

\begin{proof}
The claim follows from the IND-VRA security of $\mathsf{PKESKL}$.
For each $i\in\{0,...,n\}$, consider the experiment $H_i$ that is identical to $\mathsf{Hyb}_2$ except that for all $j\le i$, $\ct_j$ is replaced with $\PKE.\Enc(\ek_\PKE,s_j^{x_j\oplus 1})$.
Then, $H_0$ is identical to $\mathsf{Hyb}_2$ and $H_n$ is identical to $\mathsf{Hyb}_3$.
For the sake of contradiction, we assume that there exists a QPT adversary $\qA$ and a polynomial $p$ such that
\begin{align}
    |\Pr[1\gets\mathsf{Hyb}_2] - \Pr[1\gets\mathsf{Hyb}_3]| \ge \frac{1}{p(\secp)}
\end{align}
holds for infinitely many $\secp$.
By using $\qA$, we construct a QPT adversary $\qC$ that breaks the IND-VRA security of $\mathsf{PKESKL}$. 
$\qC$ behaves during the IND-VRA security game $\mathsf{Exp}^{\mathsf{IND-VRA}}_{\mathsf{PKESKL},\qC}(\secp,b)$ as follows:
\begin{enumerate}
    \item The challenger generates $(\ek_\PKE,\qdk_\PKE,\vk_\PKE)\gets\PKE.\qKG(1^\secp)$ and sends $(\ek_\PKE,\qdk_\PKE)$ to $\qC$.
    \item $\qC$ generates $\pp\gets\HPRG.\Setup(1^\secp)$, $x\gets\bit^n$, $s_i^0\gets\bit^{m}$, $s_i^1\gets\bit^{m}$, and $r_i=\PRG(s_i^0)\oplus\PRG(s_i^1)$. 
    $\qC$ runs $\cert\gets\qA(\ek_\PKE,\pp,r_1,...,r_n,\qdk_\PKE)$. 
    $\qC$ samples $j\gets[n]$ and sets $m_0\coloneqq 0^m$ and $m_1\coloneqq s_{j}^{x_j\oplus 1}$.
    $\qC$ sends $\cert$ and $(m_0,m_1)$ to the challenger.
    \item If $\PKE.\Vrfy(\vk_\PKE,\cert)=\bot$, then the output of the experiment is 0.
    Otherwise, the challenger generates $\ct^*_b\gets\PKE.\Enc(\ek_\PKE,m_b)$ and sends $(\ct_b^*,\vk_\PKE)$ to $\qC$.
    \item $\qC$ simulates $\mathsf{Hyb}_2$, where $\qC$ generates $(\ct_1,...,\ct_n)$ as follows:
    \begin{align}
        \ct_i \coloneqq
        \begin{cases}
            \PKE.\Enc(\ek_\PKE,s_i^{x_i\oplus 1}) & \text{if } i<j \\ 
            \ct_b^* & \text{if } i=j \\
            \PKE.\Enc(\ek_\PKE,0^m) & \text{if } i>j.
        \end{cases}
    \end{align}
\end{enumerate}
Then, 
\begin{align}
    |\Pr [1\gets \mathsf{Exp}^{\mathsf{IND-VRA}}_{\mathsf{PKESKL},\qC}(\secp,0)] - \Pr[1\gets \mathsf{Exp}^{\mathsf{IND-VRA}}_{\mathsf{PKESKL},\qC}(\secp,1)] | 
    &= \frac{1}{n} \left|\sum_{j\in[n]}(\Pr[1\gets H_{j-1}] - \Pr[1\gets H_{j}]) \right| \\
    &= \frac{1}{n} \left|\Pr[1\gets H_{0}] - \Pr[1\gets H_{n}] \right| \\
    &= \frac{1}{n} \left|\Pr[1\gets \mathsf{Hyb}_2] - \Pr[1\gets \mathsf{Hyb}_3] \right| \\
    &\ge \frac{1}{np(\secp)}
\end{align}
for infinitely many $\secp$ and $\qC$ breaks the IND-VRA security of $\mathsf{PKESKL}$.
\end{proof}

\noindent
$\mathsf{Hyb}_4$: This is identical to $\mathsf{Hyb}_3$ except that for each $i\in[n]$, $\HPRG.\Eval(\pp,x,i)$ is replaced with a random string. The experiment works as follows:
\begin{enumerate}
    \item The challenger generates $(\ek_\PKE,\qdk_\PKE,\vk_\PKE)\gets\PKE.\qKG(1^\secp)$, $\pp\gets\HPRG.\Setup(1^\secp)$ and $x\gets\bit^n$.
    For $i\in[n]$, the challenger generates $s_i^0\gets\bit^{m}$, $s_i^1\gets\bit^{m}$, and $r_i\coloneqq\PRG(s_i^0)\oplus\PRG(s_i^1)$.
    The challenger sends $(\ek_\PKE,\pp,r_1,...,r_n,\qdk_\PKE)$ to $\qA$.
    \item $\qA$ outputs $\cert$.
    \item If $\PKE.\Vrfy(\vk_\PKE,\cert)=\bot$, the output of the experiment is 0.
    Otherwise, for $i\in[n]$, the challenger generates $v_i\gets\bit^{\ell}$.
    For each $i\in[n]$,
    \begin{itemize}
        \item if $x_i=0$, the challenger generates $\ct_i\gets\PKE.\Enc(\ek_\PKE,s_i^1)$ and lets
        \begin{align}
            y_i\coloneqq
            \begin{pmatrix}
                \PKE.\Enc(\ek_\PKE,s_i^0;v_i) \\ 
                \ct_i
            \end{pmatrix}, 
            \quad z_i\coloneqq\PRG(s_i^0).
        \end{align}
        \item if $x_i=1$, the challenger generates $\ct_i\gets\PKE.\Enc(\ek_\PKE,s_i^0)$ and lets
        \begin{align}
            y_i \coloneqq 
            \begin{pmatrix}
                \ct_i \\
                \PKE.\Enc(\ek_\PKE,s_i^1;v_i) 
            \end{pmatrix}, 
            \quad z_i \coloneqq \PRG(s_i^0).
        \end{align}
    \end{itemize}
    The challenger sends $(y_1,...,y_n,z_1,...,z_n,\vk_\PKE)$ to $\qA$.
    \item $\qA$ outputs $X'$.
    If $X'=(x,s^{x_1}_1,...,s^{x_n}_n,\ct_1,...,\ct_n)$, the output of the experiment is 1.
    Otherwise, the output of the experiment is 0.
\end{enumerate}
\begin{claim}
    For any QPT adversary $\qA$,
    \begin{align}
    \label{eq:hyb3_vra}
        |\Pr[1\gets\mathsf{Hyb}_3] - \Pr[1\gets\mathsf{Hyb}_4]|\le\negl(\secp).
    \end{align}
\end{claim}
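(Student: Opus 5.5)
The plan is to reuse the argument of \cref{clm:hyb3} from the proof of \cref{lem:TDF-SKL_security} essentially verbatim. The only change is that the simulated view must also contain $\vk_\PKE$. This causes no difficulty, because the reduction generates $(\ek_\PKE,\qdk_\PKE,\vk_\PKE)\gets\PKE.\qKG(1^\secp)$ itself.

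First I note that $\Pr[1\gets\mathsf{Hyb}_4]\le 2^{-n}$. The view of $\qA$ in $\mathsf{Hyb}_4$ consists of $(\ek_\PKE,\pp,\{r_i\},\qdk_\PKE)$ and $(\{y_i\},\{z_i\},\vk_\PKE)$. The first part does not depend on $x$. In the second part, the pair $y_i$ always consists of $\PKE.\Enc(\ek_\PKE,s_i^0;\cdot)$ on top and $\PKE.\Enc(\ek_\PKE,s_i^1;\cdot)$ on the bottom, both with fresh randomness, and $z_i=\PRG(s_i^0)$. So the view is independent of $x$, and guessing $x$ succeeds with probability at most $2^{-n}$.

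Next I assume for contradiction that $|\Pr[1\gets\mathsf{Hyb}_3]-\Pr[1\gets\mathsf{Hyb}_4]|\ge 1/p(\secp)$ infinitely often, and build a hinting-PRG adversary $\qD$ as follows.
\begin{enumerate}
    \item $\qD$ receives $(\pp,a_0^\beta,\{a_{i,b}^\beta\})$.
    \item It samples $(\ek_\PKE,\qdk_\PKE,\vk_\PKE)$ and $s_i^0,s_i^1$, and sets $r_i\coloneqq\PRG(s_i^0)\oplus\PRG(s_i^1)$.
    \item For each $i$ and $b\in\bit$, it sets $\ct_i^b\coloneqq\PKE.\Enc(\ek_\PKE,s_i^b;a_{i,b}^\beta)$, $y_i\coloneqq(\ct_i^0,\ct_i^1)$, and $z_i\coloneqq\PRG(s_i^0)$.
    \item It runs $\qA$ on the first-phase input and gets $\cert$. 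If verification fails, it outputs $0$. Otherwise it sends $(\{y_i\},\{z_i\},\vk_\PKE)$ and receives $X'=(\widetilde x,\widetilde s_i,\widetilde\ct_i)$.
    \item It outputs $1$ iff $a_0^\beta=\HPRG.\Eval(\pp,\widetilde x,0)$ and, for all $i$, $a_{i,\widetilde x_i}^\beta=\HPRG.\Eval(\pp,\widetilde x,i)$, $\widetilde s_i=s_i^{\widetilde x_i}$, and $\widetilde\ct_i=\ct_i^{\widetilde x_i\oplus1}$.
\end{enumerate}

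For $\beta=1$, the view of $\qA$ is distributed exactly as in $\mathsf{Hyb}_3$. The ciphertext $\ct_i^{x_i\oplus1}$ uses fresh uniform randomness, which matches $\ct_i\gets\PKE.\Enc(\ek_\PKE,s_i^{x_i\oplus1})$. Every winning output of $\qA$ passes all of $\qD$'s checks, so $\Pr[\qD=1\mid\beta=1]\ge\Pr[1\gets\mathsf{Hyb}_3]$.

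For $\beta=0$, all the $a$'s are uniform and independent of $\pp$. Fix any $\widetilde x$; the $n+1$ HPRG equalities for it then hold with probability $2^{-(n+1)\ell}$. A union bound over all $2^n$ choices of $\widetilde x$ gives $\Pr[\qD=1\mid\beta=0]\le 2^{n-(n+1)\ell}=\negl(\secp)$, taking $\ell=\omega(\log\secp)$ by padding the encryption randomness.

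Combining these bounds, $\qD$ has advantage at least $\Pr[1\gets\mathsf{Hyb}_3]-\negl\ge 1/p-2^{-n}-\negl$, which contradicts hinting PRG security. The main obstacle is the $\beta=0$ bound. Because $\qD$ can verify $\qA$'s answer only through the HPRG equalities, the argument has to rely on the union bound over $\widetilde x$, not on any claim that $\qA$'s view hides $x$. This is exactly the step that the choice $\ell=\omega(\log\secp)$ is there to make work.
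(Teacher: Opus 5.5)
Your proposal is correct and is essentially the paper's proof. The paper builds the same hinting-PRG adversary $\qD$, which samples $(\ek_\PKE,\qdk_\PKE,\vk_\PKE)$ itself and forwards $\vk_\PKE$, and accepts only if the HPRG equalities, $\widetilde{s}_i=s_i^{\widetilde{x}_i}$ and $\widetilde{\ct}_i=\ct_i^{\widetilde{x}_i\oplus 1}$ all hold; it bounds the $\beta=0$ case by the same union bound $2^{n}\cdot 2^{-(n+1)\ell}$ (with $\ell=\omega(\log\secp)$) and combines this with $\Pr[1\gets\mathsf{Hyb}_4]\le 2^{-n}$.
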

\begin{proof}
For the sake of contradiction, we assume that there exists a QPT adversary $\qA$ and a polynomial $p$ such that
\begin{align}
    |\Pr[1\gets\mathsf{Hyb}_3] - \Pr[1\gets\mathsf{Hyb}_4]|\ge\frac{1}{p(\secp)}
\end{align}
for infinitely many $\secp$.
Since the view of $\qA$ in $\mathsf{Hyb}_4$ is independent of $x$, we have $\Pr[1\gets\mathsf{Hyb}_4]\le 2^{-n}$.

We construct a QPT adversary $\qD$ against the security of the hinting PRG.
\begin{enumerate}
    \item $\qD$ takes $(\pp, a_0^\beta, \{a_{i,b}^\beta\}_{i\in[n],b\in\bit})$ as input, where $\pp\gets\HPRG.\Setup(1^\secp)$, $x\gets\bit^{n}$, $\beta\gets\bit$, $a_0^0\gets\bit^{\ell}$, $a_0^1\coloneqq\HPRG.\Eval(\pp,x,0)$, $a_{i,0}^0\gets\bit^{\ell}$, $a_{i,1}^0\gets\bit^{\ell}$, $a_{i,x_i}^1\coloneqq\HPRG.\Eval(\pp,x,i)$, and $a_{i,x_i\oplus 1}^1\gets\bit^{\ell}$ for each $i\in[n]$.
    \item $\qD$ generates $(\ek_\PKE,\qdk_\PKE,\vk_\PKE)\gets\PKE.\qKG(1^\secp)$ and, for every $i\in[n]$, samples $s_i^0,s_i^1\gets\bit^m$ and sets $r_i\coloneqq\PRG(s_i^0)\oplus\PRG(s_i^1)$, $\ct_i^b\coloneqq\PKE.\Enc(\ek_\PKE,s_i^b;a_{i,b}^\beta)$ for $b\in\bit$, $y_i\coloneqq\begin{pmatrix}\ct_i^0\\ \ct_i^1\end{pmatrix}$, $z_i\coloneqq\PRG(s_i^0)$.
    It sends $(\ek_\PKE,\pp,r_1,\ldots,r_n,\qdk_\PKE)$ to $\qA$ and obtains $\cert$.
    If $\PKE.\Vrfy(\vk_\PKE,\cert)=\bot$, it outputs $0$.
    Otherwise, it sends $(y_1,\ldots,y_n,z_1,\ldots,z_n,\vk_\PKE)$ to $\qA$ and obtains
    $X'=(\widetilde{x},\widetilde{s}_1,\ldots,\widetilde{s}_n,\widetilde{\ct}_1,\ldots,\widetilde{\ct}_n)$.
    If $X'$ cannot be parsed in this form, it outputs $0$.
    \item $\qD$ outputs $1$ if $a_0^\beta=\HPRG.\Eval(\pp,\widetilde{x},0)$
    and, for every $i\in[n]$, $a_{i,\widetilde{x}_i}^\beta=\HPRG.\Eval(\pp,\widetilde{x},i)$, $\widetilde{s}_i=s_i^{\widetilde{x}_i}$, $\widetilde{\ct}_i=\ct_i^{\widetilde{x}_i\oplus 1}$.
    Otherwise, it outputs $0$.
\end{enumerate}

When $\beta=1$, the view given to $\qA$ is distributed exactly as in $\mathsf{Hyb}_3$.
Moreover, whenever $\qA$ wins $\mathsf{Hyb}_3$, its output satisfies all the checks above.
Therefore,
\begin{align}
    \Pr[1\gets\mathsf{Hyb}_3]
    \le \Pr[1\gets\qD(\pp,a_0^1,\{a_{i,b}^1\}_{i\in[n],b\in\bit})].
\end{align}
When $\beta=0$, for every fixed $\widetilde{x}\in\bit^n$, the values $a_0^0$ and $\{a_{i,\widetilde{x}_i}^0\}_{i\in[n]}$ are sampled uniformly at random.
Thus, we obtain
\begin{align}
    \Pr[1\gets\qD(\pp,a_0^0,\{a_{i,b}^0\}_{i\in[n],b\in\bit})]
    \le 2^n\cdot 2^{-(n+1)\ell}=2^{n-(n+1)\ell}=\negl(\secp),
\end{align}
where the last equality uses $\ell=\omega(\log\secp)$, which we can assume without loss of generality by padding the encryption randomness.
Therefore, for infinitely many $\secp$,
\begin{align}
    &\Pr[1\gets\qD(\pp,a_0^1,\{a_{i,b}^1\}_{i\in[n],b\in\bit})]
    -\Pr[1\gets\qD(\pp,a_0^0,\{a_{i,b}^0\}_{i\in[n],b\in\bit})]\\
    &\ge \Pr[1\gets\mathsf{Hyb}_3]-\negl(\secp)\\
    &\ge |\Pr[1\gets\mathsf{Hyb}_3]-\Pr[1\gets\mathsf{Hyb}_4]|
    -\Pr[1\gets\mathsf{Hyb}_4]-\negl(\secp)\\
    &\ge \frac{1}{p(\secp)}-2^{-n}-\negl(\secp),
\end{align}
which is non-negligible and contradicts the security of the hinting PRG.
\end{proof}

In $\mathsf{Hyb}_4$, the distribution of $(y_1,...,y_n,z_1,...,z_n)$ is independent of $x$.
Thus, for any QPT adversary $\qA$, we have
\begin{align}
    \Pr[1\gets\mathsf{Hyb}_4] \le \negl(\secp).
\end{align}
By combining this inequality with \cref{eq:hyb0_vra,eq:hyb1_vra,eq:hyb2_vra,eq:hyb3_vra}, we have
\begin{align}
    \Pr[1\gets\mathsf{Hyb}_0] \le \negl(\secp)
\end{align}
and complete the proof.

\end{proof}
\else
We give the proof in \cref{sec:Proof_TDFSKL_DS}.
\fi

\fi

\ifnum\submission=0
\subsection{Construction of IND-VRA Secure Robust PKE-SKL}
\label{sec:robustPKESKL}
As an application of TDF-SKL, we construct an IND-VRA secure robust PKE-SKL scheme from TDF-SKL with domain sampler via the quantum Goldreich-Levin lemma with quantum auxiliary input (\cref{lem:quantGL}).

We define robust PKE-SKL schemes as follows:
\begin{definition}[Robust PKE-SKL]
We say that a PKE-SKL scheme $(\mathpzc{KG},\Enc,\mathpzc{Dec},\mathpzc{Del},\Vrfy)$ is robust if 
\begin{align}
    \Pr_{(\ek,\qdk,\vk)\gets\qKG(1^\secp)} \left[ \forall \ct,~ \TD(\qdk,\qdk')\le\negl(\secp) \right] \ge 1-\negl(\secp),
\end{align}
where $(m',\qdk')\gets\qDec(\qdk,\ct)$.
\end{definition}

Our construction of IND-VRA secure robust PKE-SKL schemes is the following:
\paragraph{Construction.}
Let $\mathsf{TDFSKL}=(\mathsf{TDF.}\mathpzc{KG},\mathsf{TDF.}\Samp,\mathsf{TDF.}\Eval,\mathsf{TDF.}\mathpzc{Inv},\mathsf{TDF.}\mathpzc{Del},\mathsf{TDF.}\Vrfy)$ be a TDF-SKL with domain sampler for domain $\bit^{n(\secp)}$, where $n$ is some polynomial.
We construct an IND-VRA secure robust PKE-SKL scheme $\mathsf{PKESKL}=(\mathpzc{KG},\Enc,\mathpzc{Dec},\mathpzc{Del},\Vrfy)$ with message space $\cM=\bit$ as follows:
\begin{itemize}
    \item $\mathpzc{KG}(1^\secp)\to(\ek,\mathpzc{dk},\vk)$: 
    Run $(\ek_\TDF,\qtd_\TDF,\vk_\TDF)\gets\mathsf{TDF.}\mathpzc{KG}(1^\secp)$. 
    Return $\ek\coloneqq\ek_\TDF$, $\qdk\coloneqq(\ek_\TDF,\qtd_\TDF)$, and $\vk\coloneqq\vk_\TDF$.
    \item $\Enc(\ek,m)\to\ct$:
    On input an encryption key $\ek=\ek_\TDF$ and a message $m\in\bit$, generate $r\gets\bit^{n(\secp)}$, $x\gets\mathsf{TDF.}\Samp(\ek_\TDF)$, $y\coloneqq\mathsf{TDF.}\Eval(\ek_\TDF,x)$, and $b\coloneqq(x\cdot r)\oplus m$.
    Return $\ct\coloneqq(y,r,b)$.
    \item $\mathpzc{Dec}(\mathpzc{dk},\ct)\to(m',\mathpzc{dk}')$:
    On input a decryption key $\mathpzc{dk}=(\ek_\TDF,\qtd_\TDF)$ and a ciphertext $\ct$, proceed as follows:
    \begin{enumerate}
        \item If $\ct$ cannot be parsed as $(y,r,b)$ with $r\in\bit^{n(\secp)}$ and $b\in\bit$, output $(\bot,\mathpzc{dk})$.
        \item Run the following coherently: Apply the unitary implementation of $\mathsf{TDF.}\mathpzc{Inv}(\qtd_\TDF,y)$ to obtain a preimage register $x'$.
        Compute a result register $(x'\cdot r)\oplus b$ if $\mathsf{TDF.}\Eval(\ek_\TDF,x')=y$, and $\bot$ otherwise.
        \item Copy the result register into an ancilla register and uncompute the previous step.
        \item Measure the ancilla register in the computational basis to obtain $m'$.
        Let $\qtd_\TDF'$ be the resulting state of the original trapdoor register.
        Output $(m',\mathpzc{dk}')$, where $\mathpzc{dk}'\coloneqq(\ek_\TDF,\qtd_\TDF')$.
    \end{enumerate}
    \item $\mathpzc{Del}(\mathpzc{dk})\to\cert$: On input $\mathpzc{dk}=(\ek_\TDF,\qtd_\TDF)$, run $\cert\gets\mathsf{TDF.}\mathpzc{Del}(\qtd_\TDF)$.
    \item $\Vrfy(\vk,\cert)\to\top/\bot$: On input $\vk=\vk_\TDF$ and $\cert$, run $\top/\bot\gets\TDF.\Vrfy(\vk_\TDF,\cert)$.
\end{itemize}

The decryption correctness and the deletion verification correctness of $\mathsf{PKESKL}$ immediately follow from the almost-all-keys inversion correctness and the deletion verification correctness of $\mathsf{TDFSKL}$.
Thus, it suffices to show the IND-VRA security and the robustness of $\mathsf{PKESKL}$.

\begin{lemma}
\label{lem:robustPKESKL_security}
    $\mathsf{PKESKL}$ is IND-VRA secure.
\end{lemma}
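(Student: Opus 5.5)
We reduce the IND-VRA security of $\mathsf{PKESKL}$ to the OW-VRA security of $\mathsf{TDFSKL}$ (\cref{def:TDF-SKL_DS}) via the quantum Goldreich--Levin lemma with quantum auxiliary input (\cref{lem:quantGL}), following the OW-to-IND template of \cite{EC:AKNYY23}. Since $\cM=\bit$, the two messages may be assumed to be $m_0=0$ and $m_1=1$; if $m_0=m_1$ the two experiments are identical. Suppose a QPT $\qA$ has advantage $\epsilon(\secp)\ge 1/p(\secp)$ for infinitely many $\secp$. A standard averaging step converts this into a predictor. In a single experiment we sample $m\gets\bit$, give $\qA$ the ciphertext $(y,r,(x\cdot r)\oplus m)$ together with $\vk$, and have $\qA$ guess $m$. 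This predicts $m$ with probability $\frac12+\frac{\epsilon}{2}$, counted over runs in which $\Vrfy$ accepts and with failed verification treated as output $0$. Equivalently, it guesses $x\cdot r$ by XORing $\qA$'s guess with $b$. (The rigorous form conditions on acceptance; I would mirror the bookkeeping of \cite{EC:AKNYY23}.)

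The reduction $\qB$ against OW-VRA security works as follows. It receives $(\ek_\TDF,\qtd_\TDF)$, forwards $\ek=\ek_\TDF$ and $\qdk=(\ek_\TDF,\qtd_\TDF)$ to $\qA$, and relays $\qA$'s certificate. It then receives $(y,\vk)$ with $y=\mathsf{TDF}.\Eval(\ek_\TDF,x)$, where $x\gets\mathsf{TDF}.\Samp(\ek_\TDF)$. At this point $\qB$ holds $\qA$'s residual quantum state $\mathpzc{aux}$, which also incorporates $(y,\vk)$. It defines the algorithm $\qA'(\mathpzc{aux},r)$: sample $b\gets\bit$, run $\qA$ on $(\ct=(y,r,b),\vk)$ to get a guess $m'$, and output $m'\oplus b$. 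Since $b$ is uniform, $(y,r,b)$ is distributed exactly as an encryption of a uniformly random bit $m=b\oplus(x\cdot r)$, and $\qA'$ outputs $x\cdot r$ exactly when $\qA$ guesses $m$ correctly. Then $\qB$ runs $\mathpzc{Ext}([\qA'],\mathpzc{aux})$ and outputs the result.

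For the analysis, define for each fixed post-certificate state, including $(\ek,\vk,x,y)$, the conditional advantage $\epsilon_{\mathrm{st}}$ of $\qA'$ over $r$. By Markov/averaging, with probability at least $\epsilon/4$ over the first phase, verification accepts and $\epsilon_{\mathrm{st}}\ge\epsilon/4$. On those runs \cref{lem:quantGL} gives extraction probability at least $4(\epsilon/4)^2$. Hence $\qB$ wins with probability $\Omega(\epsilon^3)$, which is non-negligible and contradicts OW-VRA security.

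The main obstacle is the averaging step. The definition of \cref{lem:quantGL} requires a fixed $x$ and a fixed auxiliary state, while in the game $x$, $y$, the keys, and the adversary's state are jointly random. In particular, the IND advantage is a difference of two probabilities rather than a single success probability. I would handle this by first rewriting the IND-VRA advantage as $2\Pr[\text{guess } m]-1$ in the random-bit version of the game, with the verification event handled by outputting a random bit on rejection. I would then apply the Markov argument over the classical randomness $(\ek,\vk,x)$ and the verification outcome, treating the residual quantum state as $\mathpzc{aux}$. One more point to check is that $\qA$ is given $\vk$ only in the challenge phase. This matches OW-VRA, where $\vk$ arrives together with $y$, so the simulation is exact.
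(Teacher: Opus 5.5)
Your proposal is correct and follows the paper's proof essentially step for step. The paper likewise splits the adversary into a certificate phase and a guessing phase, and uses the same predictor: sample $b\gets\bit$, run the guessing phase on $(y,r,b)$ and $\vk$, and output $b\oplus\coin'$. It then conditions on $\Vrfy$ accepting and uses an averaging argument to find a $1/(2p)$-fraction of post-certificate states on which this predictor has advantage $1/(2p)$, invokes \cref{lem:quantGL}, and obtains the same cubic loss you derive.

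Two points you make explicit are left implicit in the paper: normalizing the challenge messages to $(m_0,m_1)=(0,1)$, and averaging only over the classical randomness while treating the residual quantum state as $\mathpzc{aux}$. Both are handled correctly in your write-up.
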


The proof of this lemma is essentially the same as that of Lemma 3.12 in \cite{EC:AKNYY23}.
For completeness, we provide the proof in \cref{sec:robustPKESKL_security}.

\begin{lemma}
\label{lem:robustPKESKL_robust}
     $\mathsf{PKESKL}$ is robust.
\end{lemma}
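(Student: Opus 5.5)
We fix a good key and prove robustness through the gentle measurement lemma (\cref{lem:gentle}). Concretely, we restrict to keys $(\ek_\TDF,\qtd_\TDF,\vk_\TDF)$ in the good set $G$ guaranteed by the almost-all-keys inversion correctness of $\mathsf{TDFSKL}$. There, for every $x\in\bit^{n}$, running $\mathsf{TDF.}\qInv(\qtd_\TDF,\mathsf{TDF.}\Eval(\ek_\TDF,x))$ outputs $x$ with probability at least $1-\negl(\secp)$. This set has probability $1-\negl(\secp)$ over $\mathsf{TDF.}\qKG$. Fix such a key and an arbitrary $\ct$. If $\ct$ does not parse, then $\qdk'=\qdk$ and there is nothing to show. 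Otherwise write $\ct=(y,r,b)$. Let $U$ be the unitary that coherently runs $\mathsf{TDF.}\qInv$ and computes the result register. Decryption applies $U$, copies the result into an ancilla, applies $U^\dagger$, and measures the ancilla. This is exactly a projective measurement $\{\Pi_v\}_v$ on the trapdoor register (together with fresh ancillas initialized to $\ket{0}$), with $\Pi_v=U^\dagger(\ket{v}\bra{v}\otimes I)U$. It therefore suffices to exhibit one outcome $v^*$, depending only on $(\ek_\TDF,y,r,b)$, with $\Tr[\Pi_{v^*}\rho]\ge 1-\negl(\secp)$, where $\rho=\qtd_\TDF\otimes\ket{0}\bra{0}$.

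The second step is a case split on $y$. In the first case, $y=\mathsf{TDF.}\Eval(\ek_\TDF,x)$ for some $x$. Since $\Eval$ is deterministic and the key is good, the inversion register holds $x$ except with probability $\negl(\secp)$. If some other $x''$ also maps to $y$, correctness applied to $x''$ would force the inversion to output $x''$ with overwhelming probability as well, which is a contradiction. So the preimage is unique for good keys, and we do not need a separate injectivity assumption. On this branch the check $\mathsf{TDF.}\Eval(\ek_\TDF,x')=y$ passes, and the result register equals $v^*=(x\cdot r)\oplus b$ with probability at least $1-\negl(\secp)$. In the second case, $y$ is outside the image. Then every $x'$ fails the check, so the result register equals $\bot$ with probability $1$, and we take $v^*=\bot$.

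The third step applies \cref{lem:gentle} with $X=\Pi_{v^*}$ and $\epsilon=\negl(\secp)$. This gives $\|\rho-\Pi_{v^*}\rho\Pi_{v^*}\|_1\le\sqrt{8\epsilon}$. The actual post-measurement state is the mixture $\sum_v\Pi_v\rho\Pi_v$. The terms with $v\neq v^*$ have total weight at most $\epsilon$, so this mixture is within $O(\sqrt{\epsilon})$ of $\rho$ in trace distance. Tracing out the ancillas, which is contractive, gives $\TD(\qdk,\qdk')\le\negl(\secp)$.

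The main obstacle is that the robustness definition requires the bound to hold \emph{for all} $\ct$ simultaneously on a single good key. The set $G$ must therefore be chosen independently of $\ct$, and the negligible function must be uniform over all $x$. This is exactly what the ``$\forall x$'' inside almost-all-keys correctness provides. A related subtlety is that the error probability of $\qInv$ must be measured on the actual trapdoor state $\qtd_\TDF$, not on an averaged one. The definition quantifies over the fixed sampled $\qtd_\TDF$, so this is satisfied.
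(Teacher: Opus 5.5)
Your proposal is correct and follows essentially the same route as the paper's proof: you fix a key in the good set given by almost-all-keys inversion correctness, split on whether $y$ lies in the image of $\mathsf{TDF.}\Eval(\ek_\TDF,\cdot)$, apply the gentle measurement lemma to the measurement induced by $U$, and bound the dephased mixture $\sum_v \Pi_v\rho\Pi_v$ (the paper instead works with $\tau=U\rho U^\dagger$ and projectors on the result register, which is equivalent by unitary invariance). Your explicit handling of unparseable ciphertexts is a harmless addition, and your preimage-uniqueness remark is not needed, since correctness for any single preimage $x$ of $y$ already forces the outcome to be $(x\cdot r)\oplus b$ with overwhelming probability.
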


\begin{proof}[Proof of \cref{lem:robustPKESKL_robust}]
By almost-all-keys inversion correctness, with overwhelming probability over key generation, the inversion succeeds with probability at least $1-\negl(\secp)$ for every input $x$.
Fix such a key tuple $(\ek_\TDF,\qtd_\TDF,\vk_\TDF)$ and any ciphertext $\ct=(y,r,b)$.

\begin{itemize}
    \item[(A)] There exists $x\in\bit^{n(\secp)}$ such that $y=\TDF.\Eval(\ek_\TDF,x)$:
By almost-all-keys inversion correctness of TDF-SKL,
\begin{align}
\label{eq:good_y}
    \Pr_{(x',\qtd'_\TDF)\gets\TDF.\qInv(\qtd_\TDF,y)}[x'=x]\ge 1-\negl(\secp).
\end{align}
Let $U$ be the coherent computation in Step~2 of $\qDec$.
Write $\tau=U(\qtd_\TDF\otimes|0\rangle\langle0|)U^\dagger$, where $|0\rangle$ denotes the initialized ancilla registers.
For $a\in\{0,1,\bot\}$, let $\Pi_a$ be the projector to the measurement result $a$.
By \cref{eq:good_y}, the result $a_*=(x\cdot r)\oplus b$ has probability at least $1-\negl(\secp)$.
Thus, by the gentle measurement lemma (\cref{lem:gentle}),
\begin{align}
    \left\|\tau-\sum_a\Pi_a\tau\Pi_a\right\|_1
    &\le\left\|\tau-\Pi_{a_*}\tau\Pi_{a_*}\right\|_1
        +\sum_{a\ne a_*}\operatorname{Tr}(\Pi_a\tau) \\
    &\le \negl(\secp).
\end{align}
Then,
\begin{align}
    \TD(\qdk,\qdk')
    =\TD(\qtd_\TDF,\qtd_\TDF') \le \frac{1}{2} \left\|\tau-\sum_a\Pi_a\tau\Pi_a\right\|_1 
    \le \negl(\secp).
\end{align}
\item[(B)] For any $x\in\bit^{n(\secp)}$, $y\neq\TDF.\Eval(\ek_\TDF,x)$:
Every preimage fails the verification in $\qDec$, and hence the measurement result is $\bot$ with probability 1.
Hence,
\begin{align}
    \TD(\qdk,\qdk')=\TD(\qtd_\TDF,\qtd'_\TDF)=0.
\end{align}
\end{itemize}
In both cases, we have $\TD(\qdk,\qdk')\le\negl(\secp)$ and we complete the proof.
\end{proof}
\fi

\section{Trapdoor Functions with Copy-Protection}
In this section, we introduce the definition of trapdoor functions with copy protection (TDF-CP) and give the construction of TDF-CP assuming the existence of iO and LWE.
Our construction follows the modular approach proposed by Ananth and Behera \cite{C:AnaBeh24}.
We first show that iO and OWFs imply puncturable TDFs (\cref{sec:puncTDF}).
Next, by combining puncturable TDFs with UPO, we construct TDF-CP in \cref{sec:TDF_copy_protection}.
As an application of TDF-CP, we construct a CPA$^+$ anti-piracy secure \textit{robust} SDE in \cref{sec:robustSDE}.
Robustness of SDE requires that the quantum decryption key remains reusable after decrypting any ciphertext.

\subsection{Construction of Puncturable TDFs}
\label{sec:puncTDF}
We begin by reviewing the definition of puncturable TDFs introduced by \cite{AC:CheWanZho18}. 
While \cite{AC:CheWanZho18} required one-wayness at a single punctured point, in this work, we generalize their definition to require one-wayness at two punctured points.
\begin{definition}[Puncturable TDFs \cite{AC:CheWanZho18}]
    Let $n,m$ be polynomials.
    A puncturable TDF mapping $\bit^{n(\secp)}$ to $\bit^{m(\secp)}$ is a tuple of algorithms $(\KG,\Eval,\Inv,\Puncture,\PuncInv)$ with the following syntax:
    \begin{itemize}
        \item $\KG(1^\secp)\to(\ek,\td)$: The key generation algorithm takes the security parameter $1^\secp$ as input and outputs an evaluation key $\ek$ and a trapdoor $\td$.
        \item $\Eval(\ek,x)\to y$: The evaluation algorithm takes $\ek$ and $x\in\bit^{n(\secp)}$ as input and outputs a string $y\in\bit^{m(\secp)}$. This algorithm is deterministic.
        \item $\Inv(\td,y)\to x'$: The inversion algorithm takes $\td$ and $y\in\bit^{m(\secp)}$ as input and outputs a string $x'$. This algorithm is deterministic.
        \item $\Puncture(\td,(y_1,y_2))\to \td_{y_1,y_2}$: The puncturing algorithm takes $\td$ and a pair of $m(\secp)$-bit strings $(y_1,y_2)$ as input and outputs a punctured trapdoor $\td_{y_1,y_2}$.
        \item $\PuncInv(\td_{y_1,y_2},y)\to x'$: The punctured inversion algorithm takes a punctured trapdoor $\td_{y_1,y_2}$ and a string $y\in\bit^{m(\secp)}$ as input and outputs $x'$. This algorithm is deterministic.
    \end{itemize}
    We require the following properties:
    \begin{itemize}
        \item \textbf{Almost-all-keys inversion correctness:}
        \begin{align}
            \Pr_{(\ek,\td)\gets\KG(1^\secp)} [\forall x\in\bit^{n(\secp)},~ \Inv(\td,\Eval(\ek,x))=x] \ge 1-\negl(\secp).
        \end{align}
        \item \textbf{Puncturing correctness:} For any pair of $m(\secp)$-bit strings $(y_1,y_2)$,
        \begin{align}
            \Pr_{\substack{(\ek,\td)\gets\KG(1^\secp) \\ \td_{y_1,y_2}\gets\Puncture(\td,(y_1,y_2))}} \left[
            \begin{lgathered}
                \forall y\in\bit^{m(\secp)},\\
                \PuncInv(\td_{y_1,y_2},y)=
                \begin{cases}
                \Inv(\td,y) & \text{if } y\notin \{y_1,y_2\}, \\
                \bot & \text{if } y\in \{y_1,y_2\}
                \end{cases}
            \end{lgathered} \right] \ge 1-\negl(\secp).
        \end{align}
        \item \textbf{One-wayness at punctured points:} For any QPT adversary $\qA$,
        \begin{align}
            \Pr\left[x'_1=x^*_1 \lor x'_2=x^*_2:
            \begin{gathered}
                (\ek,\td)\gets\KG(1^\secp) \\
                x^*_1\gets\bit^{n(\secp)}, y^*_1\coloneqq\Eval(\ek,x^*_1) \\
                x^*_2\gets\bit^{n(\secp)}, y^*_2\coloneqq\Eval(\ek,x^*_2) \\ 
                \td_{y^*_1,y^*_2}\gets\Puncture(\td,(y^*_1,y^*_2)) \\
                (x'_1,x'_2)\gets\qA(\ek,\td_{y^*_1,y^*_2},y^*_1,y^*_2)
            \end{gathered}\right] \le \negl(\secp).
        \end{align}
    \end{itemize} 
\end{definition}

\begin{remark}
\cite{AC:CheWanZho18} constructed puncturable TDFs (with one-wayness at a single punctured point) based on correlated-product TDFs \cite{TCC:RosSeg09}.
Moreover, \cite{TCC:RosSeg09} showed that correlated-product TDFs are implied by lossy TDFs \cite{STOC:PeiWat08}, which exist under the LWE assumption.
We believe that it is possible to extend their construction to satisfy one-wayness at two punctured points.
However, in this work, we adopt a more direct approach.
In other words, we construct puncturable TDFs with one-wayness at two punctured points assuming iO and OWFs.
\end{remark}

We construct a puncturable TDF from iO, puncturable PRFs and injective OWFs, following the construction of TDFs by Sahai and Waters \cite{STOC:SahWat14}.
Puncturable PRFs are implied by OWFs \cite{AC:BonWat13,CCS:KPTZ13,PKC:BoyGolIva14}, and injective OWFs can be constructed from iO and OWFs \cite{TCC:BitPanWic16}, and hence our construction of puncturable TDFs is based on iO and OWFs.
\paragraph{Construction.}
Let $f$ be an injective OWF mapping $\secp$ bits to $\ell(\secp)$ bits, equipped with a generation algorithm $\OWF.\Gen$.
Let $(\PPRF.\Gen,\PPRF.\Eval,\PPRF.\Puncture)$ be a puncturable PRF mapping $\ell(\secp)$ bits to $\secp$ bits.
Let $iO$ be an iO.
We construct a puncturable TDF $\mathsf{PTDF}=(\KG,\Eval,\Inv,\Puncture,\PuncInv)$ as follows:
\begin{itemize}
    \item $\KG(1^\secp)\to(\ek,\td)$: Generate $f\gets\OWF.\Gen(1^\secp)$ and $k\gets\PPRF.\Gen(1^\secp)$.
    Let $C$ be the circuit of \cref{fig:circuit_TDF}.
    Run $\hat{C}\gets iO(1^\secp,C)$.
    Return $\ek\coloneqq \hat{C}$, $\td\coloneqq (f,k)$.
    \item $\Eval(\ek,x)\to y$: Parse $\ek=\hat{C}$. Return $y\coloneqq \hat{C}(x)$.
    \item $\Inv(\td,y)\to x'$: Parse $\td=(f,k)$ and $y=(t,Y)$, where $t\in\bit^{\ell(\secp)}$ and $Y\in\bit^\secp$.
    Compute $x'\coloneqq Y\oplus \PPRF.\Eval(k,t)$. If $f(x')=t$, return $x'$. Otherwise, return $\bot$.
    \item $\Puncture(\td,(y_1,y_2))\to \td_{y_1,y_2}$: Parse $\td=(f,k)$, $y_1=(t_1,Y_1)$, and $y_2=(t_2,Y_2)$, where $t_1,t_2\in\bit^{\ell(\secp)}$ and $Y_1,Y_2\in\bit^\secp$.
    For each $j\in\{1,2\}$, compute $x_j\coloneqq Y_j\oplus\PPRF.\Eval(k,t_j)$, and let
    \begin{align}
        S\coloneqq\{t_j:j\in\{1,2\}\text{ and }f(x_j)=t_j\}.
    \end{align}
    Run $k_S\gets\PPRF.\Puncture(k,S)$ and return $\td_{y_1,y_2}\coloneqq(f,k_S,S,y_1,y_2)$.
    \item $\PuncInv(\td_{y_1,y_2},y)\to x'$: Parse $\td_{y_1,y_2}=(f,k_S,S,y_1,y_2)$ and $y=(t,Y)$, where $t\in\bit^{\ell(\secp)}$ and $Y\in\bit^\secp$.
    If $y\in\{y_1,y_2\}$ or $t\in S$, return $\bot$.
    Otherwise, compute $x'\coloneqq Y\oplus \PPRF.\Eval(k_S,t)$. If $f(x')=t$, return $x'$. Otherwise, return $\bot$.
\end{itemize}

\begin{figure}[h]
\centering
\fbox{
    \begin{minipage}{0.6\columnwidth}
        \begin{center} \textbf{Circuit $C$} \end{center}
        \textbf{Constant:} $f$, $k$. \\
        \textbf{Input:} $x\in\bit^\secp$. 
        \begin{enumerate}
            \item Compute $t=f(x)$.
            \item Output $(t,\PPRF.\Eval(k,t)\oplus x)$.
        \end{enumerate}
    \end{minipage}
}
\caption{The description of the circuit $C$}
\label{fig:circuit_TDF}
\end{figure}

\begin{figure}[H]
\centering
\fbox{
        \begin{minipage}{0.6\columnwidth}
            \begin{center} \textbf{Circuit $C^*$} \end{center}
            \textbf{Constant:} $f$, $k_{t^*_1,t^*_2}$, $t^*_1$, $t^*_2$, $y^*_1$, and $y^*_2$. \\
            \textbf{Input:} $x\in\bit^\secp$. 
            \begin{enumerate}
                \item Compute $t=f(x)$.
                \item If $t=t^*_1$, output $y^*_1$.
                \item Else if $t=t^*_2$, output $y^*_2$.
                \item Else, output $(t,\PPRF.\Eval(k_{t^*_1,t^*_2},t)\oplus x)$.
            \end{enumerate}
        \end{minipage}
}
\caption{The description of the circuit $C^*$}
\label{fig:circuit_TDF_punc}
\end{figure}

Almost-all-keys inversion correctness follows from the correctness of $iO$.
The puncturing correctness follows from the injectivity of $f$ and the puncturing correctness of $\mathsf{PPRF}$.
We show the one-wayness at punctured points.
\begin{lemma}
\label{lem:PTDF_OW}
    $\mathsf{PTDF}$ satisfies one-wayness at punctured points.
\end{lemma}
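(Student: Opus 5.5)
We follow the Sahai--Waters punctured-programming template, using a short hybrid sequence. In $\mathsf{Hyb}_0$, the adversary receives $(\ek,\td_{y^*_1,y^*_2},y^*_1,y^*_2)$ with $\ek=iO(C)$. By injectivity of $f$, we have $t^*_1\ne t^*_2$ except with probability $2^{-\secp}$, since $x^*_1\ne x^*_2$ with that probability. By correctness of the puncturing, $S=\{t^*_1,t^*_2\}$ and $\td_{y^*_1,y^*_2}=(f,k_S,S,y^*_1,y^*_2)$, which is computable from $k_S$ and the challenges alone. This is the key observation: apart from $\ek$, the adversary's view depends on $k$ only through $k_S$.

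In $\mathsf{Hyb}_1$ we replace $\ek$ by $iO(C^*)$, where $C^*$ is the circuit of \cref{fig:circuit_TDF_punc} with constants $f$, $k_{t^*_1,t^*_2}=k_S$, $t^*_1,t^*_2,y^*_1,y^*_2$. We pad $C$ and $C^*$ to the same size. On inputs with $f(x)\notin S$, punctured correctness of the PPRF gives the same output. On $f(x)=t^*_j$, injectivity forces $x=x^*_j$, and $C(x^*_j)=y^*_j$ by definition. Hence the circuits are functionally equivalent whenever $f$ is injective, which holds with overwhelming probability. The sampler producing $(C,C^*,\qaux)$, with $\qaux$ the rest of the view including the challenge points, then satisfies the iO precondition, so the two hybrids are indistinguishable. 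The winning event is not efficiently checkable by a distinguisher who lacks $x^*_j$, so we let the sampler put $x^*_1,x^*_2$ into $\qaux$. The distinguisher then checks $x'_1=x^*_1\lor x'_2=x^*_2$.

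In $\mathsf{Hyb}_2$ we replace $\PPRF.\Eval(k,t^*_j)$ in $y^*_j=(t^*_j,\PPRF.\Eval(k,t^*_j)\oplus x^*_j)$ by uniform strings $u_j$, for $j\in\{1,2\}$. This step uses pseudorandomness at the punctured set $S$. The reduction receives $k_S$ together with either the real values or random ones. The set $S$ must be fixed before the key is sampled, which is fine because $f$ and $x^*_j$ can be sampled first, and the definition quantifies over every polynomial-size set. The reduction builds $C^*$ and the punctured trapdoor from $k_S$. This changes the success probability only negligibly.

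In $\mathsf{Hyb}_2$ the second components $u_j\oplus x^*_j$ are uniform and independent of $x^*_j$. The view therefore depends on $x^*_j$ only through $t^*_j=f(x^*_j)$. An adversary outputting $x'_1=x^*_1$ or $x'_2=x^*_2$ thus yields an inverter for the injective OWF. That inverter embeds the challenge $f(x^*)$ as $t^*_1$ (or $t^*_2$, chosen at random) and simulates everything else with a self-chosen PPRF key. The union over $j$ costs a factor of 2.

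The main obstacle is the $\mathsf{Hyb}_0\to\mathsf{Hyb}_1$ step. Functional equivalence must hold even though the punctured trapdoor is also given. The iO sampler must output the auxiliary information consistently and handle the negligible non-injectivity event and the event $t^*_1=t^*_2$. We handle both by conditioning, and the degenerate case $t^*_1=t^*_2$ is absorbed by its negligible probability.
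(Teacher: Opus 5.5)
Your proposal is correct and follows the paper's proof essentially step by step. You use iO to switch to the punctured circuit $C^*$, then pseudorandomness of the puncturable PRF at $\{t^*_1,t^*_2\}$, then a reduction to the injective OWF with a factor-$2$ union bound. The only differences are cosmetic: you fold the paper's identical-distribution hybrid ($\mathsf{Hyb}_3$, where $r_j\oplus x^*_j$ is replaced by $r_j$) into your final one-time-pad observation, and you make explicit the circuit padding and the degenerate case $t^*_1=t^*_2$ that the paper leaves implicit.
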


\ifnum\submission=0
\begin{proof}
We consider the following sequence of hybrids:

\noindent
$\mathsf{Hyb}_0$: This is identical to the original security game, except that the challenge images $y^*_1$ and $y^*_2$ are defined as the outputs of the original circuit $C$, rather than those of the obfuscated circuit $\hat{C}$. 
\begin{enumerate}
    \item The challenger generates $f\gets\OWF.\Gen(1^\secp)$, $k\gets\PPRF.\Gen(1^\secp)$ and $\hat{C}\gets iO(1^\secp,C)$.
    \item The challenger generates $x^*_1\gets\bit^\secp$, $t^*_1\coloneqq f(x^*_1)$, $y^*_1\coloneqq C(x^*_1)$, $x^*_2\gets\bit^\secp$, $t^*_2\coloneqq f(x^*_2)$, $y^*_2\coloneqq C(x^*_2)$, and $k_{t^*_1,t^*_2}\gets\PPRF.\Puncture(k,(t^*_1,t^*_2))$.
    Let $\td^*\coloneqq(f,k_{t^*_1,t^*_2},\{t^*_1,t^*_2\},y^*_1,y^*_2)$.
    \item $(x'_1,x'_2)\gets\qA(\hat{C},\td^*,y^*_1,y^*_2)$.
    \item Output 1 if $x'_1=x^*_1$ or $x'_2=x^*_2$.
\end{enumerate}
By the perfect correctness of $iO$, the output distributions in the original security game and $\mathsf{Hyb}_0$ are identical.
Hence, our goal is to show that for any QPT adversary $\qA$,
\begin{align}
    \Pr[1\gets\mathsf{Hyb}_0] \le \negl(\secp).
\end{align}

\noindent
$\mathsf{Hyb}_1$: This is identical to $\mathsf{Hyb}_0$ except that $\qA$ obtains the obfuscation of the circuit $C^*$ (\cref{fig:circuit_TDF_punc}) instead of that of $C$.
\begin{enumerate}
    \item The challenger generates $f\gets\OWF.\Gen(1^\secp)$, $k\gets\PPRF.\Gen(1^\secp)$ and $\hat{C}\gets iO(1^\secp,C)$.  
    \item The challenger generates $x^*_1\gets\bit^\secp$, $t^*_1\coloneqq f(x^*_1)$, $y^*_1\coloneqq C(x^*_1)$, $x^*_2\gets\bit^\secp$, $t^*_2\coloneqq f(x^*_2)$, $y^*_2\coloneqq C(x^*_2)$, $k_{t^*_1,t^*_2}\gets\PPRF.\Puncture(k,(t^*_1,t^*_2))$, and $\hat{C^*}\gets iO(1^\secp,C^*)$.
    Let $\td^*\coloneqq(f,k_{t^*_1,t^*_2},\{t^*_1,t^*_2\},y^*_1,y^*_2)$.
    \item $(x'_1,x'_2)\gets\qA(\hat{C^*},\td^*,y^*_1,y^*_2)$.
    \item Output 1 if $x'_1=x^*_1$ or $x'_2=x^*_2$.
\end{enumerate}

\begin{claim}
For any QPT adversary $\qA$, 
    \begin{align}
        |\Pr[1\gets\mathsf{Hyb}_0]-\Pr[1\gets\mathsf{Hyb}_1]|\le\negl(\secp).
    \end{align}
\end{claim}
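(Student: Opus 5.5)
The claim follows from iO security once we check that $C$ and $C^*$ compute the same function, except with negligible probability over the challenger's choices. We therefore build a sampler $\qSamp$ that runs steps 1 and 2 of the hybrid. It outputs the pair $(C_0,C_1)=(C,C^*)$, padded to equal size, together with the auxiliary information $\qaux=(\td^*,y^*_1,y^*_2)$. The distinguisher $\qA$ is then run on $\iO(C_b)$ and $\qaux$. An advantage for $\qA$ in telling $\mathsf{Hyb}_0$ from $\mathsf{Hyb}_1$ would then translate directly into breaking iO security. It remains to verify the functional-equivalence precondition of the iO definition with probability $1-\negl(\secp)$.

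To check equivalence, fix any input $x\in\bit^\secp$ and let $t=f(x)$.

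\begin{itemize}
\item \emph{Case $t\notin\{t^*_1,t^*_2\}$.} Punctured correctness of $\PPRF$ (with $S=\{t^*_1,t^*_2\}$) gives $\PPRF.\Eval(k_{t^*_1,t^*_2},t)=\PPRF.\Eval(k,t)$. Hence $C^*(x)=(t,\PPRF.\Eval(k,t)\oplus x)=C(x)$.
\item \emph{Case $t=t^*_j$.} Here $C^*$ outputs $y^*_j=C(x^*_j)$. We need this to equal $C(x)$, and it does whenever $f$ is injective: then $f(x)=f(x^*_j)$ forces $x=x^*_j$.
\item \emph{Case $t^*_1=t^*_2$.} Branch 2 of $C^*$ fires first, but since $y^*_1=y^*_2$ when $x^*_1=x^*_2$ under injectivity, no inconsistency arises. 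Alternatively, this event has probability $2^{-\secp}$ and can simply be absorbed into the error.
\end{itemize}

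By the injectivity property of $\OWF.\Gen$, $f$ is injective with overwhelming probability. On that event $C$ and $C^*$ agree on every input, which is exactly the precondition required. The only technical subtlety is the case split above, and in particular the fact that the injectivity of $f$ is what ties a hit on the branch $t=t^*_j$ back to the unique input $x^*_j$.
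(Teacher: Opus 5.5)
Your proof is correct and is essentially the paper's argument: an iO reduction in which the sampler draws $f,k,x^*_1,x^*_2$ itself and outputs $(C,C^*)$. Functional equivalence comes from injectivity of $f$, together with punctured correctness of $\PPRF$, which the paper leaves implicit and you make explicit. One small fix: the distinguisher must be able to evaluate the winning condition, so either include $x^*_1,x^*_2$ in $\qaux$ or have it test $f(x'_j)=t^*_j$, which is equivalent when $f$ is injective.
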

\begin{proof}
We show this claim by using the security of $iO$.
Let us consider the QPT adversary $\qB$ against iO that behaves as follows:
\begin{enumerate}
    \item $\qB$ generates $f\gets\OWF.\Gen(1^\secp)$, $k\gets\PPRF.\Gen(1^\secp)$, $x^*_1\gets\bit^\secp$, $t^*_1\coloneqq f(x^*_1)$, $y^*_1\coloneqq C(x^*_1)$, $x^*_2\gets\bit^\secp$, $t^*_2\coloneqq f(x^*_2)$, $y^*_2\coloneqq C(x^*_2)$, and $k_{t^*_1,t^*_2}\gets\PPRF.\Puncture(k,(t^*_1,t^*_2))$.
    It sets $\td^*\coloneqq(f,k_{t^*_1,t^*_2},\{t^*_1,t^*_2\},y^*_1,y^*_2)$.
    $\qB$ sends the circuits $(C,C^*)$ to the challenger. 
    \item The challenger samples $b\gets\bit$ and runs $\hat{C}_b\gets iO(1^\secp,C_b)$, where $C_0\coloneqq C$ and $C_1\coloneqq C^*$.
    The challenger sends $\hat{C}_b$ to $\qB$.
    \item $\qB$ runs $(x'_1,x'_2)\gets\qA(\hat{C}_b,\td^*,y^*_1,y^*_2)$.
    $\qB$ outputs 1 if $x'_1=x^*_1$ or $x'_2=x^*_2$.
\end{enumerate}
Then, 
\begin{align}
    &\Pr[1\gets\mathsf{Hyb}_0] = \Pr[1\gets\qB\mid b=0] \\
    &\Pr[1\gets\mathsf{Hyb}_1] = \Pr[1\gets\qB\mid b=1].
\end{align}
Note that circuits $C$ and $C^*$ have the same functionality with overwhelming probability over the choice of $f\gets\OWF.\Gen(1^\secp)$ because of the injectivity of $f$.
Thus, by the security of iO, we have
\begin{align}
    |\Pr[1\gets\mathsf{Hyb}_0] - \Pr[1\gets\mathsf{Hyb}_1]| \le \negl(\secp).
\end{align}
\end{proof}

\noindent
$\mathsf{Hyb}_2$: This is identical to $\mathsf{Hyb}_1$ except that $y^*_1$ and $y^*_2$ are replaced with $(t^*_1,r_1\oplus x^*_1)$ and $(t^*_2,r_2\oplus x^*_2)$ respectively, where $r_1$ and $r_2$ are random strings.
\begin{enumerate}
    \item The challenger generates $f\gets\OWF.\Gen(1^\secp)$, $k\gets\PPRF.\Gen(1^\secp)$.
    \item The challenger generates $r_1\gets\bit^\secp$, $r_2\gets\bit^\secp$, $x^*_1\gets\bit^\secp$, $t^*_1\coloneqq f(x^*_1)$, $y^*_1\coloneqq (t^*_1,r_1\oplus x^*_1)$, $x^*_2\gets\bit^\secp$, $t^*_2\coloneqq f(x^*_2)$, $y^*_2\coloneqq (t^*_2,r_2\oplus x^*_2)$, $k_{t^*_1,t^*_2}\gets\PPRF.\Puncture(k,(t^*_1,t^*_2))$ and $\hat{C^*}\gets iO(1^\secp,C^*)$.
    Let $\td^*\coloneqq(f,k_{t^*_1,t^*_2},\{t^*_1,t^*_2\},y^*_1,y^*_2)$.
    \item $(x'_1,x'_2)\gets\qA(\hat{C^*},\td^*,y^*_1,y^*_2)$.
    \item Output 1 if $x'_1=x^*_1$ or $x'_2=x^*_2$.
\end{enumerate}
\begin{claim}
For any QPT adversary $\qA$,
    \begin{align}
        |\Pr[1\gets\mathsf{Hyb}_1]-\Pr[1\gets\mathsf{Hyb}_2]|\le\negl(\secp).
    \end{align}
\end{claim}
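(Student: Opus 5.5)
The claim follows from the pseudorandomness of $\mathsf{PPRF}$ at punctured points, applied to the two-element set $S=\{t^*_1,t^*_2\}$. The key observation is that in $\mathsf{Hyb}_1$, the real key $k$ enters the adversary's view only through $y^*_1$ and $y^*_2$, namely through the values $\PPRF.\Eval(k,t^*_1)$ and $\PPRF.\Eval(k,t^*_2)$. Everything else depends only on the punctured key: the obfuscated circuit $\hat{C^*}$ is built from $k_{t^*_1,t^*_2}$, $t^*_1,t^*_2$ and $y^*_1,y^*_2$, and $\td^*$ contains only $k_{t^*_1,t^*_2}$. (The unused obfuscation $\hat{C}$ of $C$ can be dropped.) Replacing the two PRF outputs by uniform strings $r_1,r_2$ therefore turns $\mathsf{Hyb}_1$ into $\mathsf{Hyb}_2$ exactly.

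The reduction $\qB$ is as follows. It samples $f\gets\OWF.\Gen(1^\secp)$ and $x^*_1,x^*_2\gets\bit^\secp$, and sets $t^*_j\coloneqq f(x^*_j)$. It then fixes $S\coloneqq\{t^*_1,t^*_2\}$ and hands $S$ to the puncturable-PRF challenger, receiving $k_S$ together with values $(w_1,w_2)$. These are either $(\PPRF.\Eval(k,t^*_1),\PPRF.\Eval(k,t^*_2))$ or uniform. Next, $\qB$ sets $y^*_j\coloneqq(t^*_j,w_j\oplus x^*_j)$ and forms $C^*$ with constants $f,k_S,t^*_1,t^*_2,y^*_1,y^*_2$. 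It computes $\hat{C^*}\gets iO(1^\secp,C^*)$, sets $\td^*\coloneqq(f,k_S,S,y^*_1,y^*_2)$, and runs $\qA$. Finally it outputs 1 iff $x'_1=x^*_1$ or $x'_2=x^*_2$, which it can check because it knows $x^*_1,x^*_2$. In the real case its simulation is distributed exactly as $\mathsf{Hyb}_1$, and in the random case exactly as $\mathsf{Hyb}_2$. So any non-negligible gap between the two hybrids breaks punctured pseudorandomness.

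The main subtlety is that the punctured set $S$ is chosen by the reduction before it receives the challenge, and it depends on $f$ and $x^*_j$, which $\qB$ samples itself. The definition quantifies over every fixed polynomial-size set $S$, so I would apply it conditioned on $(f,x^*_1,x^*_2)$, or equivalently treat $\qB$ as non-uniform with these values hardwired, and then average. A second detail is the event $t^*_1=t^*_2$. This occurs only when $x^*_1=x^*_2$ (assuming $f$ is injective), which has probability $2^{-\secp}$, plus the negligible probability that $f$ is not injective. I would simply absorb this event into the negligible error, or note that the definition handles a one-element $S$ equally well.
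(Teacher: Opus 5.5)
Your proposal is correct and follows the paper's proof essentially verbatim: it uses the same reduction to pseudorandomness at the punctured set $\{t^*_1,t^*_2\}$, setting $y^*_j=(t^*_j,a_j\oplus x^*_j)$ so that the real and random cases reproduce $\mathsf{Hyb}_1$ and $\mathsf{Hyb}_2$ exactly. Your extra remarks on fixing $(f,x^*_1,x^*_2)$ non-uniformly (the definition is stated for a fixed $S$) and on the negligible collision event $t^*_1=t^*_2$ cover points the paper passes over; of your two options for the collision, absorbing it into the error is the right one, since a one-element $S$ gives a single challenge value and so cannot reproduce the independent masks $r_1,r_2$ of $\mathsf{Hyb}_2$.
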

\begin{proof}
We show this claim by using the security of the puncturable PRF $\PPRF$.
Let us consider the QPT adversary $\qD$ against $\PPRF$ that behaves as follows:
\begin{enumerate}
    \item $\qD$ generates $f\gets\OWF.\Gen(1^\secp)$, $x^*_1\gets\bit^\secp$, $t^*_1\coloneqq f(x^*_1)$, $x^*_2\gets\bit^\secp$, and $t^*_2\coloneqq f(x^*_2)$.
    \item The challenger generates $k\gets\PPRF.\Gen(1^\secp)$ and $k_{t^*_1,t^*_2}\gets\PPRF.\Puncture(k,(t^*_1,t^*_2))$.
    The challenger samples $b\gets\bit$. 
    If $b=0$, the challenger sets $a_1\coloneqq \PPRF.\Eval(k,t^*_1)$ and $a_2\coloneqq \PPRF.\Eval(k,t^*_2)$.
    If $b=1$, the challenger samples $a_1\gets\bit^\secp$ and $a_2\gets\bit^\secp$.
    The challenger sends $(k_{t^*_1,t^*_2},a_1,a_2)$ to $\qD$.
    \item $\qD$ generates $y^*_1\coloneqq (t^*_1,a_1\oplus x^*_1)$, $y^*_2\coloneqq (t^*_2,a_2\oplus x^*_2)$, and $\hat{C^*}\gets iO (1^\secp,C^*)$.
    It sets $\td^*\coloneqq(f,k_{t^*_1,t^*_2},\{t^*_1,t^*_2\},y^*_1,y^*_2)$.
    \item $\qD$ runs $(x'_1,x'_2)\gets\qA(\hat{C^*},\td^*,y^*_1,y^*_2)$ and outputs 1 if $x'_1=x^*_1$ or $x'_2=x^*_2$.
\end{enumerate}
Then
\begin{align}
    &\Pr[1\gets\mathsf{Hyb}_1] = \Pr[1\gets\qD \mid b=0] \\
    &\Pr[1\gets\mathsf{Hyb}_2] = \Pr[1\gets\qD \mid b=1].
\end{align}
By the pseudorandomness at punctured points of puncturable PRFs, we have
\begin{align}
    |\Pr[1\gets\mathsf{Hyb}_1] - \Pr[1\gets\mathsf{Hyb}_2] |\le\negl(\secp).
\end{align}
\end{proof}

\noindent
$\mathsf{Hyb}_3$: This is identical to $\mathsf{Hyb}_2$ except that $y^*_1$ and $y^*_2$ are replaced with $(t^*_1,r_1)$ and $(t^*_2,r_2)$ respectively, where $r_1$ and $r_2$ are sampled uniformly at random.
\begin{enumerate}
    \item The challenger generates $f\gets\OWF.\Gen(1^\secp)$, $k\gets\PPRF.\Gen(1^\secp)$.
    \item The challenger generates $r_1\gets\bit^\secp$, $r_2\gets\bit^\secp$, $x^*_1\gets\bit^\secp$, $t^*_1\coloneqq f(x^*_1)$, $y^*_1\coloneqq (t^*_1,r_1)$, $x^*_2\gets\bit^\secp$, $t^*_2\coloneqq f(x^*_2)$, $y^*_2\coloneqq (t^*_2,r_2)$, $k_{t^*_1,t^*_2}\gets\PPRF.\Puncture(k,(t^*_1,t^*_2))$ and $\hat{C^*}\gets iO(1^\secp,C^*)$.
    Let $\td^*\coloneqq(f,k_{t^*_1,t^*_2},\{t^*_1,t^*_2\},y^*_1,y^*_2)$.
    \item $(x'_1,x'_2)\gets\qA(\hat{C^*},\td^*,y^*_1,y^*_2)$.
    \item Output 1 if $x'_1=x^*_1$ or $x'_2=x^*_2$.
\end{enumerate}
\begin{claim}
    \begin{align}
        \Pr[1\gets\mathsf{Hyb}_2] = \Pr[1\gets\mathsf{Hyb}_3]
    \end{align}
\end{claim}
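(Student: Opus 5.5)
The claim is that $\Pr[1\gets\mathsf{Hyb}_2]=\Pr[1\gets\mathsf{Hyb}_3]$. I will show that the joint distribution of everything $\qA$ sees, together with the winning predicate, is identical in the two hybrids. The argument is a one-time-pad change of variables on $r_1,r_2$.

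First, in both hybrids the values $f$, $k$, $x^*_1$, $x^*_2$, $t^*_1$, $t^*_2$ and $k_{t^*_1,t^*_2}$ are sampled in exactly the same way. Fix all of them. In $\mathsf{Hyb}_2$ the second components of the challenge images are $r_j\oplus x^*_j$ for $j\in\{1,2\}$, with $r_1,r_2\gets\bit^\secp$ uniform and independent of everything else. The map $r_j\mapsto r_j\oplus x^*_j$ is a bijection on $\bit^\secp$, so conditioned on the fixed values, $(r_1\oplus x^*_1,r_2\oplus x^*_2)$ is uniform on $(\bit^\secp)^2$. That is exactly the distribution of $(r_1,r_2)$ in $\mathsf{Hyb}_3$.

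Next, I check that every other part of $\qA$'s input is a deterministic function, plus independent coins, of the fixed values and $(y^*_1,y^*_2)$. The circuit $C^*$ has constants $f$, $k_{t^*_1,t^*_2}$, $t^*_1$, $t^*_2$, $y^*_1$, $y^*_2$, so $\hat{C^*}$ is $iO$ applied to such a function with fresh coins. Likewise $\td^*=(f,k_{t^*_1,t^*_2},\{t^*_1,t^*_2\},y^*_1,y^*_2)$. Therefore the joint distribution of $(x^*_1,x^*_2,\hat{C^*},\td^*,y^*_1,y^*_2)$ is the same in both hybrids.

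The winning event, $x'_1=x^*_1$ or $x'_2=x^*_2$, depends only on $\qA$'s output and on $(x^*_1,x^*_2)$. Hence the two success probabilities coincide.

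The only point needing care is that $r_j$ must appear nowhere except masked inside $y^*_j$. In particular, $r_j$ must not enter $C^*$ except through $y^*_j$. This holds by inspection of $C^*$ and $\td^*$. No computational assumption is needed.
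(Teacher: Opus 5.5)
Your proof is correct and uses the same reasoning as the paper, which simply notes that $y^*_1,y^*_2$ are identically distributed in $\mathsf{Hyb}_2$ and $\mathsf{Hyb}_3$. Your version is more careful in two respects the paper leaves implicit: you state the one-time-pad argument for the joint distribution with $(x^*_1,x^*_2)$, and you check that $r_j$ enters $\hat{C^*}$ and $\td^*$ only through $y^*_j$.
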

\begin{proof}
    This claim holds because the distributions of $y^*_1$ and $y^*_2$ are the same in $\mathsf{Hyb}_2$ and $\mathsf{Hyb}_3$. 
\end{proof}

\begin{claim}
For any QPT adversary $\qA$,
    \begin{align}
        \Pr[1\gets\mathsf{Hyb}_3] \le \negl(\secp).
    \end{align}
\end{claim}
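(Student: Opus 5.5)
Observe that in $\mathsf{Hyb}_3$ the challenge images $y^*_1=(t^*_1,r_1)$ and $y^*_2=(t^*_2,r_2)$ carry no information about $x^*_1,x^*_2$ beyond $t^*_1=f(x^*_1)$ and $t^*_2=f(x^*_2)$, because $r_1,r_2$ are fresh uniform strings. The remaining parts of $\qA$'s view are $\hat{C^*}$ and $\td^*=(f,k_{t^*_1,t^*_2},\{t^*_1,t^*_2\},y^*_1,y^*_2)$. Both can be generated from $f$, $t^*_1$, $t^*_2$, and independently sampled $k,r_1,r_2$ alone. Indeed, the circuit $C^*$ only hardwires $f$, $k_{t^*_1,t^*_2}$, $t^*_1$, $t^*_2$, $y^*_1$, $y^*_2$, and never uses $x^*_1,x^*_2$. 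I will therefore reduce to the one-wayness of the injective OWF $f$.

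Concretely, I bound $\Pr[x'_1=x^*_1]$ and $\Pr[x'_2=x^*_2]$ separately and apply a union bound; by symmetry it suffices to treat the first. I construct $\qB$ against one-wayness of $f$ as follows. $\qB$ receives $(1^\secp,f,t)$ with $t=f(x)$ for $x\gets\bit^\secp$ and sets $t^*_1\coloneqq t$. It then samples $x^*_2\gets\bit^\secp$ and sets $t^*_2\coloneqq f(x^*_2)$. It samples $k\gets\PPRF.\Gen(1^\secp)$ and $r_1,r_2\gets\bit^\secp$, and computes $y^*_1\coloneqq(t^*_1,r_1)$, $y^*_2\coloneqq(t^*_2,r_2)$, $k_{t^*_1,t^*_2}\gets\PPRF.\Puncture(k,(t^*_1,t^*_2))$, $\hat{C^*}\gets iO(1^\secp,C^*)$, and $\td^*$. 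Finally it runs $\qA(\hat{C^*},\td^*,y^*_1,y^*_2)$ and outputs $x'_1$. The simulated view is distributed exactly as in $\mathsf{Hyb}_3$, so $\Pr[x'_1=x^*_1 \text{ in } \mathsf{Hyb}_3]\le\Pr[f(x'_1)=t]\le\negl(\secp)$. For the second coordinate I use the analogous reduction, embedding the challenge as $t^*_2$ instead.

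There is little real obstacle. The one point to check is that every object in $\qA$'s view in $\mathsf{Hyb}_3$ is efficiently computable without the preimage being embedded. This holds because, after $\mathsf{Hyb}_3$, the only dependence on $x^*_j$ is through $t^*_j$. In particular the puncturing is done on the $t$-values, which is why the earlier hybrids switched the PRF outputs at $t^*_1,t^*_2$ to random. Note also that the reduction succeeds whenever $\qA$ outputs the exact preimage, so injectivity is not even needed here; checking $f(x'_1)=t$ is weaker than requiring $x'_1=x^*_1$. Combining the two bounds gives $\Pr[1\gets\mathsf{Hyb}_3]\le\negl(\secp)$.
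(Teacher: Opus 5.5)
Your proof is correct and matches the paper's argument. The paper likewise reduces to one-wayness of $f$ via a union bound over the two events, using the fact that the view in $\mathsf{Hyb}_3$ depends on $x^*_1,x^*_2$ only through $t^*_1,t^*_2$; the only cosmetic difference is that it uses a single reduction embedding the challenge $t$ at a uniformly random position $b$, rather than your two separate reductions.
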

\begin{proof}
We show this claim by using the security of the injective OWF $f$.
Let us consider the QPT adversary $\qE$ against $f$ that behaves as follows:
\begin{enumerate}
    \item $\qE$ takes $(f,t)$ as input, where $f\gets\OWF.\Gen(1^\secp)$, $x\gets\bit^\secp$ and $t\coloneqq f(x)$.
    \item $\qE$ samples $b\gets\bit$ and sets $t^*_b\coloneqq t$
    \item $\qE$ generates $k\gets\PPRF.\Gen(1^\secp)$, $x^*_{1-b}\gets\bit^\secp$, $t^*_{1-b}\coloneqq f(x^*_{1-b})$, $k_{t^*_0,t^*_1}\gets\PPRF.\Puncture(k,(t^*_0,t^*_1))$, $r_0\gets\bit^\secp$, $r_1\gets\bit^\secp$, $y^*_0\coloneqq (t^*_0,r_0)$, $y^*_1\coloneqq (t^*_1,r_1)$, and $\hat{C^*}\gets iO(1^\secp,C^*)$.
    It sets $\td^*\coloneqq(f,k_{t^*_0,t^*_1},\{t^*_0,t^*_1\},y^*_0,y^*_1)$.
    \item $\qE$ runs $(x'_0,x'_1)\gets\qA(\hat{C^*},\td^*,y^*_0,y^*_1)$ and outputs $x'_b$.
\end{enumerate}
Then
\begin{align}
    \Pr[1\gets\mathsf{Hyb}_3] 
    &= \Pr[x'_0=x^*_0 \lor x'_1=x^*_1 : (x'_0,x'_1)\gets\qA(\hat{C^*},\td^*,y^*_0,y^*_1)] \\
    &\le \sum_{b\in\bit} \Pr[x'_b=x^*_b : (x'_0,x'_1)\gets\qA(\hat{C^*},\td^*,y^*_0,y^*_1)] \\
    &= \Pr[x\gets\qE(f,t) \mid b=0] + \Pr[x\gets\qE(f,t) \mid b=1] \\ 
    &\le \negl(\secp).
\end{align}
In the last inequality, we used the one-wayness of the injective OWF $f$. 
\end{proof}

Therefore, we finally obtain $\Pr[1\gets\mathsf{Hyb}_0]\le\negl(\secp)$ for any QPT adversary $\qA$ and complete the proof.
\end{proof}
\else
We give the proof in \cref{sec:Proof_PTDF}.
\fi

\subsection{Construction of TDFs with Copy Protection}
\label{sec:TDF_copy_protection}
We define TDFs with copy protection as follows.
\begin{definition}[TDFs with Copy Protection]
    Let $n,m$ be polynomials.
    A TDF with copy protection (TDF-CP) mapping $n(\secp)$ bits to $m(\secp)$ bits is a tuple of algorithms $(\Setup,\qKG,\Eval,\qInv)$ with the following syntax:
    \begin{itemize}
        \item $\Setup(1^\secp)\to(\ek,\td)$: The setup algorithm takes a security parameter $1^\secp$ as input and outputs a classical evaluation key $\ek$ and a classical trapdoor $\td$.
        \item $\qKG(\td)\to\qtd$: The quantum trapdoor generation algorithm takes $\td$ as input and outputs a quantum trapdoor $\qtd$.
        \item $\Eval(\ek,x)\to y$: The evaluation algorithm takes $\ek$ and $x\in\bit^{n(\secp)}$ as input and outputs $y\in\bit^{m(\secp)}$. This algorithm is deterministic.
        \item $\qInv(\qtd,y)\to (x',\qtd')$: The inversion algorithm takes $\qtd$ and $y$ as input and outputs $x'$ and a resulting trapdoor $\qtd'$.
    \end{itemize}
    We require the following properties:
    \begin{itemize}
        \item \textbf{Almost-all-keys correctness:}
        \begin{align}
            \Pr_{\substack{(\ek,\td)\gets\Setup(1^\secp) \\ \qtd\gets\qKG(\td)}} [\forall x\in\bit^{n(\secp)},~ \Pr[x\gets\qInv(\qtd,\Eval(\ek,x))]\ge 1-\negl(\secp)] \ge 1-\negl(\secp).
        \end{align}
        \item \textbf{Search anti-piracy:} Consider the experiment $\mathsf{Exp}_{\qA_{\TDF}}^{\mathsf{SearchAntiPiracy}}(\secp)$ between the challenger and an adversary $\qA_{\TDF}=(\qA,\qB,\qC)$ that works as follows:
        \begin{enumerate}
            \item The challenger runs $(\ek,\td)\gets\Setup(1^\secp)$, $\qtd\gets\qKG(\td)$, $x_\qB\gets\bit^{n(\secp)}$, $x_\qC\gets\bit^{n(\secp)}$ and sends $(\ek,\qtd)$ to $\qA$.
            \item $\qA$ generates a bipartite state $\mathpzc{q}$ over registers $\regR_\qB$ and $\regR_\qC$. $\qA$ sends $\regR_\qB$ to $\qB$ and $\regR_\qC$ to $\qC$.
            \item The challenger generates $y_\qB\coloneqq\Eval(\ek,x_\qB)$ and $y_\qC\coloneqq\Eval(\ek,x_\qC)$ and sends $y_\qB$ and $y_\qC$ to $\qB$ and $\qC$, respectively.
            \item $\qB$ and $\qC$ output $x_\qB'$ and $x_\qC'$, respectively. The challenger outputs 1 if $x_\qB'=x_\qB$ and $x_\qC'=x_\qC$.
        \end{enumerate}
        Then, for any QPT adversary $\qA_{\TDF}=(\qA,\qB,\qC)$,
        \begin{align}
            \Pr[1\gets\mathsf{Exp}_{\qA_{\TDF}}^{\mathsf{SearchAntiPiracy}}(\secp)] \le\negl(\secp).
        \end{align}
    \end{itemize}
\end{definition}

We construct TDF-CP from puncturable TDFs and UPO, following the modular approach of \cite{C:AnaBeh24}.

\paragraph{Construction.}
Let $(\PTDF.\KG,\PTDF.\Eval,\PTDF.\Inv,\PTDF.\Puncture,\PTDF.\PuncInv)$ be a puncturable TDF and let $(\UPO.\qObf,\UPO.\qEval)$ be a UPO.
We construct a TDF-CP $\mathsf{TDFCP}=(\Setup,\qKG,\Eval,\qInv)$ as follows:
\begin{itemize}
    \item $\Setup(1^\secp)\to(\ek,\td)$: Run $(\ek,\td)\gets\PTDF.\KG(1^\secp)$.
    \item $\qKG(\td)\to\qtd$: Run $\qtd\gets\UPO.\qObf(1^\secp,\PTDF.\Inv(\td,\cdot))$.
    \item $\Eval(\ek,x)\to y$: Run $y\gets\PTDF.\Eval(\ek,x)$.
    \item $\qInv(\qtd,y)\to (x,\qtd')$: Run $(x,\qtd')\gets\UPO.\qEval(\qtd,y)$. 
\end{itemize}

\begin{lemma}
\label{lem:TDFCP_correct}
    $\mathsf{TDFCP}$ satisfies almost-all-keys correctness.
\end{lemma}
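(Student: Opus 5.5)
The plan is to combine two good events with a union bound: the almost-all-keys inversion correctness of $\PTDF$, and the almost-all-keys correctness of the UPO from \cref{def:UPO}. Concretely, define the set of good classical keys
\begin{align}
G_\secp \coloneqq \left\{(\ek,\td) : \forall x\in\bit^{n(\secp)},~ \PTDF.\Inv(\td,\PTDF.\Eval(\ek,x))=x\right\}.
\end{align}
By almost-all-keys inversion correctness of $\PTDF$, $\Pr[(\ek,\td)\in G_\secp]\ge 1-\negl(\secp)$ when $(\ek,\td)\gets\PTDF.\KG(1^\secp)$.

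Next, fix any $(\ek,\td)$ (good or not) and let $C_\td\coloneqq\PTDF.\Inv(\td,\cdot)$, which is a deterministic circuit in the class obfuscated by the UPO. The UPO's almost-all-keys correctness is stated for every fixed $C\in\cC_\secp$, so it applies to $C_\td$. It gives
\begin{align}
\Pr_{\qtd\gets\UPO.\qObf(1^\secp,C_\td)}\left[\forall y,~\Pr_{(x',\qtd')\gets\UPO.\qEval(\qtd,y)}[x'=C_\td(y)]\ge 1-\negl(\secp)\right]\ge 1-\negl(\secp).
\end{align}
Call a $\qtd$ satisfying the inner condition good for $\td$. Averaging over $(\ek,\td)$ preserves this bound, since it holds pointwise.

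To conclude, suppose $(\ek,\td)\in G_\secp$ and $\qtd$ is good for $\td$. Then for every $x\in\bit^{n(\secp)}$, set $y=\Eval(\ek,x)=\PTDF.\Eval(\ek,x)$. Running $\qInv(\qtd,y)=\UPO.\qEval(\qtd,y)$ outputs $C_\td(y)=\PTDF.\Inv(\td,y)=x$ with probability at least $1-\negl(\secp)$. A union bound over the two bad events shows that the outer probability in the definition of almost-all-keys correctness is at least $1-\negl(\secp)$.

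The only point needing care, which I expect to be the main (mild) obstacle, is that the UPO's negligible function is uniform over all circuits $C$, including those depending on $\td$. The definition quantifies "for any $C\in\cC_\secp$", so I will interpret the negligible functions as independent of $C$ within the class, as is standard. The inner quantifier over all $y$ in the UPO statement then directly covers all images $y=\Eval(\ek,x)$ simultaneously, with no union bound over $x$ needed.
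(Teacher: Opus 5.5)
Your proposal is correct and follows essentially the same route as the paper: the paper also defines the set of good $\PTDF$ keys, defines for each fixed $\td$ the set of good UPO states using UPO correctness for the circuit $\PTDF.\Inv(\td,\cdot)$, combines the two with a union bound, and concludes by fixing a good triple. Your remark that the UPO's negligible function must be uniform over the circuit class is an assumption the paper's proof also makes implicitly, without stating it.
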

\begin{proof}[Proof of \cref{lem:TDFCP_correct}]
Define the set of good PTDF keys by
\begin{align}
    G_{\PTDF}
    \coloneqq
    \left\{
        (\ek,\td):
        \forall x\in\bit^{n(\secp)},~
        \PTDF.\Inv(\td,\PTDF.\Eval(\ek,x))=x
    \right\}.
\end{align}
By the almost-all-keys correctness of $\PTDF$,
\begin{align}
    \Pr_{(\ek,\td)\gets\PTDF.\KG(1^\secp)}
    [(\ek,\td)\in G_{\PTDF}]
    \ge 1-\negl(\secp).
\end{align}
For every fixed $(\ek,\td)$, let
\begin{align}
    G_{\UPO}(\td)
    \coloneqq
    \left\{
        \qtd:
        \forall y,~
        \Pr_{\substack{
            (x',\qtd')\gets
            \UPO.\qEval(\qtd,y)
        }}
        [x'=\PTDF.\Inv(\td,y)]
        \ge 1-\negl(\secp)
    \right\}.
\end{align}
By the correctness of $\UPO$, for every fixed $\td$,
\begin{align}
    \Pr_{\qtd\gets
        \UPO.\qObf(1^\secp,\PTDF.\Inv(\td,\cdot))}
    [\qtd\in G_{\UPO}(\td)]
    \ge 1-\negl(\secp).
\end{align}
Therefore, by a union bound,
\begin{align}
    \Pr_{\substack{
        (\ek,\td)\gets\PTDF.\KG(1^\secp)\\
        \qtd\gets
        \UPO.\qObf(1^\secp,\PTDF.\Inv(\td,\cdot))
    }}
    \left[
        (\ek,\td)\in G_{\PTDF}
        \land
        \qtd\in G_{\UPO}(\td)
    \right]
    \ge
    1-\negl(\secp).
\end{align}
Fix $(\ek,\td,\qtd)$ that satisfy $(\ek,\td)\in G_\PTDF$ and $\qtd\in G_\UPO(\td)$.
Then, for every $x\in\bit^{n(\secp)}$,
\begin{align}
    \Pr_{(x',\qtd')\gets\qInv(\qtd,\Eval(\ek,x))}[x'=x]
    &= \Pr_{(x',\qtd')\gets\UPO.\qEval(\qtd,\PTDF.\Eval(\ek,x))}
    [x'=x] \\
    &\ge 1-\negl(\secp),
\end{align}
where the inequality follows from $\PTDF.\Inv(\td,\PTDF.\Eval(\ek,x))=x$.
We finally obtain
\begin{align}
    \Pr_{\substack{
        (\ek,\td)\gets\Setup(1^\secp)\\
        \qtd\gets\qKG(\td)
    }}
    \left[
        \forall x\in\bit^{n(\secp)},~
        \Pr_{\substack{
            (x',\qtd')\gets
            \qInv(
                \qtd,\Eval(\ek,x))
        }}
        [x'=x]
        \ge 1-\negl(\secp)
    \right]
    \ge 1-\negl(\secp).
\end{align}
\end{proof}

\begin{lemma}
\label{lem:TDFCP_security}
    $\mathsf{TDFCP}$ is search anti-piracy secure.
\end{lemma}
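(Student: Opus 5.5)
We derive search anti-piracy of $\mathsf{TDFCP}$ directly from \cref{lem:modular}. The plan is to instantiate the circuit class $\mathsf{Circ}$, the generator $\Gen_{\mathsf{Circ}}$ and the distributions $\{\cD_z\}_z$ so that the experiment $\mathsf{Exp}_{\qA'}$ of \cref{lem:modular} coincides with $\mathsf{Exp}^{\mathsf{SearchAntiPiracy}}_{\qA_\TDF}$.

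\textbf{Instantiation.} Let $\Gen_{\mathsf{Circ}}(1^\secp)$ run $(\ek,\td)\gets\PTDF.\KG(1^\secp)$ and output $z\coloneqq\ek$ and $C\coloneqq\PTDF.\Inv(\td,\cdot)$. Let $\cD_z$ be the distribution of $\PTDF.\Eval(\ek,x)$ for $x\gets\bit^{n(\secp)}$. Define puncturing of $C$ at $(y_1,y_2)$ as the circuit $\PTDF.\PuncInv(\PTDF.\Puncture(\td,(y_1,y_2)),\cdot)$.

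Two technicalities arise here:
\begin{itemize}
\item $\Puncture$ must be deterministic and exactly correct. We can fix the puncturing randomness, or absorb the negligible failure of puncturing correctness into the final bound.
\item The output of $C$ has length $n(\secp)=\secp=\omega(\log\secp)$, as \cref{lem:modular} requires.
\end{itemize}
With these choices, the UPO experiment hands $(\ek,\UPO.\qObf(C))=(\ek,\qtd)$ to $\qA$, then hands $y_\qB,y_\qC\gets\cD_z$ to $\qB$ and $\qC$, and declares a win if $y_\qB'=C(y_\qB)$ and $y_\qC'=C(y_\qC)$. By almost-all-keys correctness of $\PTDF$, $C(\Eval(\ek,x))=x$ holds for all $x$ except with negligible probability over keys. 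Hence this winning condition equals the search anti-piracy one up to $\negl(\secp)$.

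\textbf{UPO hypotheses.} \cref{lem:modular} needs a UPO that is $(\cD_z\times\cD_z)$-secure for every $z$. We invoke the lemma of \cite{cryptoeprint:2025/1880}, which requires $\cD_z$ to be QPT-samplable and to have min-entropy at least $\secp^c$. Samplability is clear. For min-entropy, injectivity of $\Eval(\ek,\cdot)$ on good keys gives min-entropy $n(\secp)=\secp$. A subtlety is that this lemma needs min-entropy for \emph{every} $z$, whereas injectivity holds only for almost all keys. I would handle this by modifying $\Gen_{\mathsf{Circ}}$ so that bad keys are replaced by a fixed good one; the change costs only negligibly, since bad keys occur with negligible probability. iO-security of the UPO is available without loss of generality.

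\textbf{Puncturing security (main obstacle).} The step I expect to be hardest is showing 2-point $n$-bit unpredictability-style puncturing security, i.e.\ bounding the adversary's success by $1-(1-2^{-n})^{\ell'}+\negl(\secp)\approx\negl(\secp)$ for $\ell'\le 2$.

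For $\ell'=2$, the adversary receives $z=\ek$, the punctured circuit (equivalently $\td_{y_1,y_2}$) and the points $y_1,y_2$, where $y_j=\Eval(\ek,x_j)$ for uniform $x_j$. It succeeds if it outputs $C(y_j)=x_j$ for some $j$. This is exactly one-wayness at punctured points (\cref{lem:PTDF_OW}), up to the negligible event where inversion correctness fails.

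For $\ell'=1$, we puncture at a single point $y_1$. I would reduce to the two-point case by having the reduction sample $x_2$ itself. The obstacle is that the reduction then lacks the trapdoor punctured only at $y_1$. A cleaner route is to note that the proof of \cref{lem:PTDF_OW} works verbatim for a single punctured point. Alternatively, one can define puncturing with a repeated point, $(y_1,y_1)$, so that the one-point case is literally a special case. Once both cases are handled, \cref{lem:modular} yields negligible winning probability, which proves \cref{lem:TDFCP_security}.
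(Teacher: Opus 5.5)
Your proposal is correct and follows the paper's proof. It uses the same instantiation of $\Gen_{\mathsf{Circ}}$ and $\cD_{\ek}$, the same puncturing via $\PTDF.\Puncture$ and $\PTDF.\PuncInv$, derives puncturing security from one-wayness at punctured points, and gets min-entropy from injectivity so that the UPO of \cite{cryptoeprint:2025/1880} plugs into \cref{lem:modular}. You are more careful than the paper about exact puncturing and about the case $\ell'=1$. There, rerunning the proof of \cref{lem:PTDF_OW} with one point is the option that works; puncturing at $(y_1,y_1)$ is not covered by a definition that samples two independent points.

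One simplification: $\Eval(\ek,x)=(f(x),\PPRF.\Eval(k,f(x))\oplus x)$ is injective for \emph{every} key, even if $f$ is not. Equal outputs force equal $t$, hence equal masks, hence equal $x$. So your bad-key modification of $\Gen_{\mathsf{Circ}}$ is unnecessary, and it would also have needed an efficient test for bad keys.
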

\begin{proof}[Proof of \cref{lem:TDFCP_security}]
We show search anti-piracy security by using \cref{lem:modular}.
To this end, we first observe that we obtain a puncturable secure circuit $(\mathsf{Circ},\Puncture)$ from puncturable TDF $\PTDF$ as follows:
\begin{itemize}
    \item $\mathsf{Circ}\coloneqq\{ \PTDF.\Inv(\td,\cdot) \}_{\td}$.
    \item $\Puncture(\PTDF.\Inv(\td,\cdot),(y_1,y_2)):$ Run $\td_{y_1,y_2}\gets\PTDF.\Puncture(\td,(y_1,y_2))$.
    Return $\hat{C}\coloneqq\PTDF.\PuncInv(\td_{y_1,y_2},\cdot)$.
\end{itemize}
Define $\Gen_{\mathsf{Circ}}$ and the family $\{\cD_{\ek}\}_{\ek}$ as follows:
\begin{itemize}
    \item $\Gen_{\mathsf{Circ}}(1^\secp)$ samples $(\ek,\td)\gets\PTDF.\KG(1^\secp)$ and outputs $(\ek,\PTDF.\Inv(\td,\cdot))$.
    \item For every $\ek$ in the support of the first output of $\PTDF.\KG$, the distribution $\cD_{\ek}$ samples $x\gets\bit^{n(\secp)}$ and outputs $y\coloneqq\PTDF.\Eval(\ek,x)$.
\end{itemize}
The one-wayness at punctured points of $\PTDF$ implies that $(\mathsf{Circ},\Puncture)$ satisfies 2-point $n(\secp)$-bit $(\Gen_{\mathsf{Circ}},\{\cD_{\ek}\}_{\ek})$-unpredictability-style puncturing security.
Moreover, for the puncturable TDF constructed above and every fixed $\ek$, the function $\PTDF.\Eval(\ek,\cdot)$ is injective, and hence $\cD_{\ek}$ has min-entropy $n(\secp)=\secp$.
Therefore, the existence of UPO for high-min-entropy product distributions gives $(\cD_{\ek}\times\cD_{\ek})$-UPO security for every $\ek$.
The experiment in \cref{lem:modular} is now identical to the search anti-piracy experiment of $\mathsf{TDFCP}$.
By \cref{lem:modular}, we obtain search anti-piracy security of $\mathsf{TDFCP}$.

\end{proof}

\subsection{Construction of Robust SDE Schemes}
\label{sec:robustSDE}

We define robustness of SDE schemes as follows:
\begin{definition}[Robust SDE Schemes]
    We say that an SDE scheme $\mathsf{SDE}=(\Setup,\qKG,\Enc,\qDec)$ is robust if 
    \begin{align}
        \Pr_{(\ek,\dk)\gets\Setup(1^\secp), \qdk\gets\qKG(\dk)} \left[\forall\ct,~ \TD(\qdk,\qdk')\le\negl(\secp) \right] \ge 1-\negl(\secp),
    \end{align}
    where $(m',\qdk')\gets\qDec(\qdk,\ct)$.
\end{definition}

We construct a CPA$^+$ secure robust SDE scheme from TDF-CP.

\paragraph{Construction.}
Let $(\TDF.\Setup,\TDF.\qKG,\TDF.\Eval,\TDF.\qInv)$ be a TDF-CP.
We construct a CPA$^+$ secure robust SDE scheme $\mathsf{SDE}=(\Setup,\qKG,\Enc,\qDec)$ with the message space $\bit$ as follows:
\begin{itemize}
    \item $\Setup(1^\secp)\to(\pk,\sk)$: Run $(\ek,\td)\gets\TDF.\Setup(1^\secp)$. Output $\pk\coloneqq \ek$ and $\sk\coloneqq (\ek,\td)$.
    \item $\qKG(\sk)\to\qsk$: On input $\sk=(\ek,\td)$, run $\qtd\gets\TDF.\qKG(\td)$. Output $\qsk\coloneqq (\ek,\qtd)$.
    \item $\Enc(\pk,m)\to\ct$: Sample $x\gets\bit^{n(\secp)}$ and $r\gets\bit^{n(\secp)}$. Let $y\coloneqq\TDF.\Eval(\pk,x)$ and $b\coloneqq (x\cdot r)\oplus m$. Output $\ct\coloneqq (y,r,b)$.
    \item $\qDec(\qsk,\ct)\to (m',\qsk')$: On input a decryption key $\qsk=(\ek,\qtd)$ and a ciphertext $\ct$, proceed as follows:
    \begin{enumerate}
        \item If $\ct$ cannot be parsed as $(y,r,b)$ with $r\in\bit^{n(\secp)}$ and $b\in\bit$, output $(\bot,\qsk)$.
        \item Run the following coherently: Run $\TDF.\qInv(\qtd,y)$ to obtain a preimage register containing $x'$.
        Compute a result register containing $(x'\cdot r)\oplus b$ if $\TDF.\Eval(\ek,x')=y$, and $\bot$ otherwise.
        \item Copy the result register in the computational basis into an ancilla register and uncompute the previous step.
        \item Measure the ancilla register in the computational basis to obtain $m'$.
        Let $\qtd'$ be the resulting state of the original trapdoor register.
        Output $(m',\qsk')$, where $\qsk'\coloneqq(\ek,\qtd')$.
    \end{enumerate}
\end{itemize}
The correctness of $\mathsf{SDE}$ follows from the almost-all-keys correctness of TDF-CP.
We can show the robustness of $\mathsf{SDE}$ in essentially the same way as in \cref{lem:robustPKESKL_robust}.
Moreover, the CPA$^+$ security follows as a corollary of Theorem 10.1 of \cite{TCC:KitYam25}, which showed that search anti-piracy secure SDE imply CPA$^+$ anti-piracy secure SDE. Note that TDF-CP are the special cases of search anti-piracy secure SDE.

\ifnum\anonymous=1
\else
{\bf Acknowledgements.}
YS is supported by JST SPRING, Grant Number JPMJSP2110.
\fi

\ifnum\submission=0
\bibliographystyle{alpha} 
\else
\bibliographystyle{splncs04}
\fi
\bibliography{abbrev3,crypto,reference,text}

\appendix
\section{On TDF with Domain Sampler}\label{sec:TDF_sampler}
In \cite{C:HohKopWat20}, the authors make an observation (attributed to Omkant Pandey) that a certain formulation of TDF with domain sampler is equivalent to PKE.

Specifically, they argue as follows. Let $\PKE.(\KG,\Enc,\Dec)$ be a OW-CPA secure PKE scheme with message space $\mathcal{M}$. Consider the following TDF with domain sampler:

\begin{description}
\item[$\TDF.\KG(1^\secp)$:] This is identical to $\PKE.\KG$, where the encryption key is treated as an evaluation key $\ek$ and the decryption key is treated as a trapdoor $\td$.
\item[$\TDF.\Samp(\ek)$:] Choose $m \gets \mathcal{M}$, compute $\ct \gets \PKE.\Enc(\ek,m)$, and output $(\ct,m)$.
\item[$\TDF.\Eval(\ek,(\ct,m))$:] Output $\ct$.
\item[$\TDF.\Inv(\td,\ct)$:] Compute $m \gets \PKE.\Dec(\td,\ct)$ and output $(\ct,m)$.
\end{description}

One-wayness of the above TDF with respect to the sampler $\TDF.\Samp$ clearly follows from the OW-CPA security of $\PKE$. Hence, they argue that a TDF with domain sampler is simply PKE in disguise.

However, we observe that inversion correctness of the above TDF is guaranteed only for $(\ct,m)$ in the support of the domain sampler. For example, let $\ct_m$ be an encryption of $m$, let $m' \ne m$, and suppose we evaluate on $(\ct_m,m')$ and then invert. The result is $(\ct_m,m)$, which is not equal to the original input $(\ct_m,m')$.

This observation clarifies that the weakness of the above construction is not merely that one-wayness is defined only with respect to the domain sampler, but also that correctness is guaranteed only for inputs in the support of the domain sampler. 
In particular, if we regard the input space as $\{0,1\}^n$, where $n$ denotes the bit-length of $(\ct,m)$ sampled by the sampler, then correctness is ensured only on a structured subset of this full input space.

Therefore, even if one relaxes one-wayness so that it is required to hold only with respect to the domain sampler, the above trivial construction from PKE still fails to satisfy  correctness on all inputs in the full domain $\{0,1\}^n$. In this sense, the notion of TDF with domain sampler remains non-trivial.

Our definition of TDF-SKL with domain sampler (\Cref{def:TDF-SKL_DS}) requires correctness on all inputs in $\{0,1\}^n$.  Thus, there appears to be no trivial construction from PKE-SKL.
In fact, correctness on all inputs in $\{0,1\}^n$ is essential for our application to robust PKE-SKL. If correctness were guaranteed only for inputs within the support of the domain sampler, then the resulting construction would fail to achieve robustness.

\ifnum\submission=0

\section{Proof of \cref{lem:robustPKESKL_security}}
\label{sec:robustPKESKL_security}
For the sake of contradiction, assume that $\mathsf{PKESKL}$ is not IND-VRA secure.
Then there exist a QPT adversary $\qA$ and a polynomial $p$ such that
\begin{align}
    \Pr_{b\gets\bit} [b\gets\mathsf{Exp}^{\mathsf{IND-VRA}}_{\mathsf{PKESKL},\qA}(\secp,b)] \ge \frac{1}{2} + \frac{1}{p(\secp)}
\end{align}
for infinitely many $\secp\in\N$.
Let $\Lambda$ be the set of such $\secp$.
We divide $\qA$ into the following two algorithms $\qA_0$ and $\qA_1$:
\begin{itemize}
    \item $\qA_0$: It takes $(\ek,\qdk)$ as input and outputs a certificate $\cert$ and quantum auxiliary information $\mathpzc{aux}$.
    \item $\qA_1$: It takes $(y,r,b)$, $\vk$, and $\mathpzc{aux}$ as input and outputs $b'$. 
\end{itemize}
Then, for all $\secp\in\Lambda$,
\begin{align}
    &\frac{1}{2}+\frac{1}{p(\secp)} \\
    &\le \Pr_{b\gets\bit} [b\gets\mathsf{Exp}^{\mathsf{IND-VRA}}_{\mathsf{PKESKL},\qA}(\secp,b)] \\ 
    &= \Pr_{b\gets\bit} [b\gets\mathsf{Exp}^{\mathsf{IND-VRA}}_{\mathsf{PKESKL},\qA}(\secp,b) \mid \Vrfy(\vk,\cert)=\top] \Pr[\Vrfy(\vk,\cert)=\top] \\
    &\quad + \Pr_{b\gets\bit} [b\gets\mathsf{Exp}^{\mathsf{IND-VRA}}_{\mathsf{PKESKL},\qA}(\secp,b) \mid \Vrfy(\vk,\cert)=\bot] (1-\Pr[\Vrfy(\vk,\cert)=\top]) \\
    &= \Pr_{b\gets\bit} [b\gets\mathsf{Exp}^{\mathsf{IND-VRA}}_{\mathsf{PKESKL},\qA}(\secp,b) \mid \Vrfy(\vk,\cert)=\top] \Pr[\Vrfy(\vk,\cert)=\top] \\
    &\quad + \frac{1}{2}(1-\Pr[\Vrfy(\vk,\cert)=\top]). 
\end{align}
Thus, for all $\secp\in\Lambda$,
\begin{align}
\label{eq:condition}
    \Pr_{b\gets\bit} [b\gets\mathsf{Exp}^{\mathsf{IND-VRA}}_{\mathsf{PKESKL},\qA}(\secp,b) \mid \Vrfy(\vk,\cert)=\top] \ge \frac{1}{2} + \frac{1}{p(\secp)\Pr[\Vrfy(\vk,\cert)=\top]}.
\end{align}
We define the following set $G$:
\begin{align}
    G\coloneqq \left\{ (\mathpzc{aux},\vk,x,y) : 
    \Pr_{\substack{r\gets\bit^{n(\secp)} \\ b\gets\bit \\ b'\coloneqq (x\cdot r)\oplus b}}[b\gets\qA_1(\mathpzc{aux},\vk,y,r,b')] \ge \frac{1}{2} + \frac{1}{2p(\secp)}
    \right\}.
\end{align}
By \cref{eq:condition} and an averaging argument,
\begin{align}
    \Pr [(\mathpzc{aux},\vk,x,y)\in G \mid \Vrfy(\vk,\cert)=\top] \ge \frac{1}{2p(\secp)\Pr[\Vrfy(\vk,\cert)=\top]}
\end{align}
holds for all $\secp\in\Lambda$, where $(\ek,\qdk,\vk)\gets\qKG(1^\secp)$, $(\cert,\mathpzc{aux})\gets\qA_0(\ek,\qdk)$, $x\gets\TDF.\Samp(\ek)$, and $y\coloneqq\TDF.\Eval(\ek,x)$.
Thus,
\begin{align}
\label{eq:vrfy_and_succ}
    \Pr \left[ \Vrfy(\vk,\cert)=\top \land \Pr_{\substack{r\gets\bit^{n(\secp)} \\ b\gets\bit \\ b'\coloneqq (x\cdot r)\oplus b}}[b\gets\qA_1(\mathpzc{aux},\vk,y,r,b')] \ge \frac{1}{2} + \frac{1}{2p(\secp)} \right] \ge \frac{1}{2p(\secp)}
\end{align}
for all $\secp\in\Lambda$, where the outer probability is taken over $(\ek,\qdk,\vk)\gets\qKG(1^\secp)$, $(\cert,\mathpzc{aux})\gets\qA_0(\ek,\qdk)$, $x\gets\TDF.\Samp(\ek)$, and $y\coloneqq\TDF.\Eval(\ek,x)$.


By using $\qA$, we construct a QPT adversary $\qB$ that breaks the OW-VRA security of $\mathsf{TDFSKL}$.
Let $\mathsf{Exp}^{\mathsf{OW-VRA}}_{\mathsf{TDFSKL},\qB}(\secp)$ be the OW-VRA security game between the challenger and the adversary $\qB$ that works as follows:
\begin{enumerate}
    \item The challenger runs $(\ek_\TDF,\qtd_\TDF,\vk_\TDF)\gets\TDF.\qKG(1^\secp)$ and sends $(\ek_\TDF,\qtd_\TDF)$ to $\qB$.
    \item $\qB$ runs $(\cert,\mathpzc{aux})\gets\qA_0(\ek_\TDF,\qtd_\TDF)$ and sends $\cert$ to the challenger.
    \item If $\TDF.\Vrfy(\vk_\TDF,\cert)=\bot$, then the output of the experiment is 0.
    Otherwise, the challenger generates $x\gets\TDF.\Samp(\ek_\TDF)$, $y\coloneqq \TDF.\Eval(\ek_\TDF,x)$ and sends $(y,\vk_\TDF)$ to $\qB$.
    \item $\qB$ does the following:
    \begin{enumerate}
        \item Generate $r\gets\bit^{n(\secp)}$.
        \item Let $\qB^*$ be the algorithm in \cref{fig:algo_B*}.
        \item Run $x'\gets\mathpzc{Ext}([\qB^*],\mathpzc{aux},\vk_\TDF,y)$.
    \end{enumerate}
    \item The challenger outputs 1 if $x'=x$.
\end{enumerate}

\begin{figure}[H]
\centering
\fbox{
        \begin{minipage}{0.6\columnwidth}
            \begin{center} \textbf{Algorithm $\qB^*$} \end{center}
            \textbf{Input:} $\mathpzc{aux}$, $\vk_\TDF$, $y$, $r$.
            \begin{enumerate}
                \item Sample $b\gets\bit$.
                \item Run $\coin'\gets\qA_1(\mathpzc{aux},\vk_\TDF,y,r,b)$.
                \item Output $b\oplus \coin'$.
            \end{enumerate}
        \end{minipage}
}
\caption{The description of the algorithm $\qB^*$}
\label{fig:algo_B*}
\end{figure}

To show that $\qB$ outputs $x$ with high probability, we first observe that $\qB^*$ outputs $x\cdot r$ with high probability.
In fact, by \cref{eq:vrfy_and_succ},
\begin{align}
    & \Pr \left[ \TDF.\Vrfy(\vk_\TDF,\cert)=\top \land \Pr_{r\gets\bit^{n(\secp)}}\left[x\cdot r\gets\qB^*(\mathpzc{aux},\vk_\TDF,y,r) \right] \ge \frac{1}{2}+\frac{1}{2p(\secp)} \right] \\
    &= \Pr \left[ \TDF.\Vrfy(\vk_\TDF,\cert)=\top \land \Pr_{\substack{r\gets\bit^{n(\secp)} \\ b\gets\bit \\ b'\coloneqq (x\cdot r)\oplus b}}[ b\gets\qA_1(\mathpzc{aux},\vk_\TDF,y,r,b')] \ge \frac{1}{2} + \frac{1}{2p(\secp)} \right] \\
    &\ge \frac{1}{2p(\secp)} 
\end{align}
holds for all $\secp\in\Lambda$, where the outer probability is taken over $(\ek_\TDF,\qtd_\TDF,\vk_\TDF)\gets\TDF.\qKG(1^\secp)$, $(\cert,\mathpzc{aux})\gets\qA_0(\ek_\TDF,\qtd_\TDF)$, $x\gets\TDF.\Samp(\ek_\TDF)$, and $y\coloneqq\TDF.\Eval(\ek_\TDF,x)$.
By \cref{lem:quantGL},
\begin{align}
    \Pr \left[ \TDF.\Vrfy(\vk_\TDF,\cert)=\top \land \Pr[ x\gets\mathpzc{Ext}([\qB^*],\mathpzc{aux},\vk_\TDF,y) ] \ge \frac{1}{p(\secp)^2} \right] \ge \frac{1}{2p(\secp)}
\end{align}
for all $\secp\in\Lambda$, where $(\ek_\TDF,\qtd_\TDF,\vk_\TDF)\gets\TDF.\qKG(1^\secp)$, $(\cert,\mathpzc{aux})\gets\qA_0(\ek_\TDF,\qtd_\TDF)$, $x\gets\TDF.\Samp(\ek_\TDF)$, and $y\coloneqq\TDF.\Eval(\ek_\TDF,x)$.
Then,
\begin{align}
    \Pr \left[ \TDF.\Vrfy(\vk_\TDF,\cert)=\top \land x\gets\mathpzc{Ext}([\qB^*],\mathpzc{aux},\vk_\TDF,y) \right] \ge \frac{1}{2p(\secp)^3}
\end{align}
for all $\secp\in\Lambda$, where $(\ek_\TDF,\qtd_\TDF,\vk_\TDF)\gets\TDF.\qKG(1^\secp)$, $(\cert,\mathpzc{aux})\gets\qA_0(\ek_\TDF,\qtd_\TDF)$, $x\gets\TDF.\Samp(\ek_\TDF)$, and $y\coloneqq\TDF.\Eval(\ek_\TDF,x)$.
Therefore $\qB$ breaks the OW-VRA security and we complete the proof. 

\else

\section{Proof of \cref{lem:TDF-SKL_security}}
\label{sec:Proof_TDFSKL}
We consider the following sequence of hybrids.

\noindent
$\mathsf{Hyb}_0$: This is the original security experiment between the challenger and an adversary $\qA$ that works as follows:
\begin{enumerate}
    \item The challenger generates $(\ek_\PKE,\qdk_\PKE,\vk_\PKE)\gets\PKE.\qKG(1^\secp)$, $\pp\gets\HPRG.\Setup(1^\secp)$ and $r_i\gets\bit^{3m}$ for each $i\in[n]$.
    The challenger sends $(\ek_\PKE,\pp,r_1,...,r_n,\qdk_\PKE)$ to $\qA$.
    \item $\qA$ outputs $\cert$.
    \item If $\PKE.\Vrfy(\vk_\PKE,\cert)=\bot$, the output of the experiment is 0.
    Otherwise, the challenger generates $x\gets\bit^n$, $s_i\gets\bit^{m}$, and $\ct_i\gets\bit^c$ for each $i\in[n]$.
    For each $i\in[n]$,
    \begin{itemize}
        \item if $x_i=0$, let
        \begin{align}
            y_i \coloneqq 
            \begin{pmatrix}
                \PKE.\Enc(\ek_\PKE,s_i;\HPRG.\Eval(\pp,x,i)) \\ 
                \ct_i
            \end{pmatrix}, 
            \quad z_i\coloneqq\PRG(s_i).
        \end{align}
        \item if $x_i=1$, let
        \begin{align}
            y_i \coloneqq
            \begin{pmatrix}
                \ct_i \\
                \PKE.\Enc(\ek_\PKE,s_i;\HPRG.\Eval(\pp,x,i)) 
            \end{pmatrix}, 
            \quad z_i\coloneqq\PRG(s_i) \oplus r_i.
        \end{align}
    \end{itemize}
    The challenger sends $(y_1,...,y_n,z_1,...,z_n)$ to $\qA$.
    \item $\qA$ outputs $X'$.
    If $X'=(x,s_1,...,s_n,\ct_1,...,\ct_n)$, the output of the experiment is 1.
    Otherwise, the output of the experiment is 0.
\end{enumerate}
Our goal is to show that for any QPT adversary $\qA$,
\begin{align}
    \Pr[1\gets\mathsf{Hyb}_0] \le \negl(\secp).
\end{align}

\noindent
$\mathsf{Hyb}_1$: This is identical to $\mathsf{Hyb}_0$ except that $(x,s_1,...,s_n)$ is sampled before $\cA$ outputs $\cert$ and for each $i\in[n]$, $r_i$ is replaced with the XOR of a uniformly random string and the output of $\PRG$.
The experiment works as follows:
\begin{enumerate}
    \item The challenger generates $(\ek_\PKE,\qdk_\PKE,\vk_\PKE)\gets\PKE.\qKG(1^\secp)$, $\pp\gets\HPRG.\Setup(1^\secp)$, and $x\gets\bit^n$.
    For each $i\in[n]$, the challenger generates $s_i^0\gets\bit^{m}$, $s_i^1\gets\bit^{m}$, $u_i\gets\bit^{3m}$, and $r_i\coloneqq\PRG(s_i^{x_i})\oplus u_i$.
    The challenger sends $(\ek_\PKE,\pp,r_1,...,r_n,\qdk_\PKE)$ to $\qA$.
    \item $\qA$ outputs $\cert$.
    \item If $\PKE.\Vrfy(\vk_\PKE,\cert)=\bot$, the output of the experiment is 0.
    Otherwise, the challenger generates $\ct_i\gets\bit^c$ for each $i\in[n]$.
    For each $i\in[n]$,
    \begin{itemize}
        \item if $x_i=0$, let
        \begin{align}
            y_i \coloneqq
            \begin{pmatrix}
                \PKE.\Enc(\ek_\PKE,s^{0}_i;\HPRG.\Eval(\pp,x,i)) \\ 
                \ct_i
            \end{pmatrix}, 
            \quad z_i\coloneqq\PRG(s^0_i).
        \end{align}
        \item if $x_i=1$, let
        \begin{align}
            y_i \coloneqq 
            \begin{pmatrix}
                \ct_i \\
                \PKE.\Enc(\ek_\PKE,s^1_i;\HPRG.\Eval(\pp,x,i)) 
            \end{pmatrix}, 
            \quad z_i\coloneqq\PRG(s^1_i) \oplus r_i.
        \end{align}
    \end{itemize}
    The challenger sends $(y_1,...,y_n,z_1,...,z_n)$ to $\qA$.
    \item $\qA$ outputs $X'$.
    If $X'=(x,s^{x_1}_1,...,s^{x_n}_n,\ct_1,...,\ct_n)$, the output of the experiment is 1.
    Otherwise, the output of the experiment is 0.
\end{enumerate}

\begin{claim}
For any adversary $\qA$,
\begin{align}
\label{eq:hyb0}
    \Pr[1\gets\mathsf{Hyb}_0] = \Pr[1\gets\mathsf{Hyb}_1]
\end{align}
\end{claim}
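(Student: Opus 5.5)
The claim asserts that $\Pr[1\gets\mathsf{Hyb}_0]=\Pr[1\gets\mathsf{Hyb}_1]$. I will prove it by a purely information-theoretic argument: the joint distribution of every value the challenger uses is the same in both experiments, up to renaming. No computational assumption is needed.

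The first step is to note that in $\mathsf{Hyb}_0$ the values $x\gets\bit^n$ and $s_i\gets\bit^m$ are sampled independently of everything the adversary sees before it outputs $\cert$. They may therefore be sampled at the very start of the experiment, together with $(\ek_\PKE,\qdk_\PKE,\vk_\PKE)$ and $\pp$, without changing the output distribution. In $\mathsf{Hyb}_1$, set $s_i\coloneqq s_i^{x_i}$. Since $x$ is uniform and $s_i^0,s_i^1$ are independent uniform strings, $(x,s_1^{x_1},\dots,s_n^{x_n})$ is uniform over $\bit^n\times(\bit^m)^n$, exactly as in $\mathsf{Hyb}_0$. The unused strings $s_i^{1-x_i}$ are independent dummies and are never used later in $\mathsf{Hyb}_1$.

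The second step concerns the $r_i$. In $\mathsf{Hyb}_1$ we have $r_i=\PRG(s_i^{x_i})\oplus u_i$ with $u_i\gets\bit^{3m}$ uniform and independent of everything else. Conditioned on any fixed values of $(x,s_1^{x_1},\dots,s_n^{x_n})$, the one-time-pad argument shows that each $r_i$ is uniform over $\bit^{3m}$ and that the $r_i$ are mutually independent and independent of the rest. Hence the joint distribution of
\[(\ek_\PKE,\qdk_\PKE,\vk_\PKE,\pp,x,s_1^{x_1},\dots,s_n^{x_n},r_1,\dots,r_n)\]
in $\mathsf{Hyb}_1$ equals that of $(\ek_\PKE,\qdk_\PKE,\vk_\PKE,\pp,x,s_1,\dots,s_n,r_1,\dots,r_n)$ in $\mathsf{Hyb}_0$. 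The $\ct_i\gets\bit^c$ are sampled identically in both experiments.

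The final step is to check that everything else is the same deterministic, or identically randomized, function of these tuples. This covers the first message to $\qA$, the verification check, the challenge $(y_i,z_i)$, and the winning condition $X'=(x,s_1^{x_1},\dots,s_n^{x_n},\ct_1,\dots,\ct_n)$. Since $\qA$ then acts on identically distributed inputs, the experiment outputs are identically distributed and the two probabilities are equal. The only subtle point is the reordering of sampling, which is justified because these values are not used before the point where they are sampled in $\mathsf{Hyb}_0$.
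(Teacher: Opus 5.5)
Your proof is correct and follows essentially the same route as the paper. The paper's one-line justification is that $(x,s_1^{x_1},\dots,s_n^{x_n},r_1,\dots,r_n)$ in $\mathsf{Hyb}_1$ is distributed identically to $(x,s_1,\dots,s_n,r_1,\dots,r_n)$ in $\mathsf{Hyb}_0$ and that reordering the sampling does not affect the outcome; you spell out the one-time-pad and independence details that the paper leaves implicit.
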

\begin{proof}
    The claim follows because the distribution of $(x,s^{x_1}_1,...,s^{x_n}_n,r_1,...,r_n)$ in $\mathsf{Hyb}_1$ is identical to the distribution of $(x,s_1,...,s_n,r_1,...,r_n)$ in $\mathsf{Hyb}_0$, and reordering the sampling procedure as in $\mathsf{Hyb}_1$ does not affect the output of the experiment.
\end{proof}

\noindent
$\mathsf{Hyb}_2$: This is identical to $\mathsf{Hyb}_1$ except that $r_i$ is replaced with $\PRG(s_i^0)\oplus\PRG(s_i^1)$ for each $i\in[n]$. 
The experiment works as follows:
\begin{enumerate}
    \item The challenger generates $(\ek_\PKE,\qdk_\PKE,\vk_\PKE)\gets\PKE.\qKG(1^\secp)$, $\pp\gets\HPRG.\Setup(1^\secp)$ and $x\gets\bit^n$.
    For each $i\in[n]$, the challenger generates $s_i^0\gets\bit^{m}$ and $s_i^1\gets\bit^{m}$, and $r_i\coloneqq\PRG(s_i^0)\oplus\PRG(s_i^1)$.
    The challenger sends $(\ek_\PKE,\pp,r_1,...,r_n\qdk_\PKE)$ to $\qA$.
    \item $\qA$ outputs $\cert$.
    \item If $\PKE.\Vrfy(\vk_\PKE,\cert)=\bot$, the output of the experiment is 0.
    Otherwise, the challenger generates $\ct_i\gets\bit^c$ for each $i\in[n]$.
    For each $i\in[n]$,
    \begin{itemize}
        \item if $x_i=0$, let
        \begin{align}
            y_i \coloneqq
            \begin{pmatrix}
                \PKE.\Enc(\ek_\PKE,s^0_i;\HPRG.\Eval(\pp,x,i)) \\ 
                \ct_i
            \end{pmatrix}, 
            \quad z_i\coloneqq\PRG(s^0_i).
        \end{align}
        \item if $x_i=1$, let
        \begin{align}
            y_i \coloneqq 
            \begin{pmatrix}
                \ct_i \\
                \PKE.\Enc(\ek_\PKE,s^1_i;\HPRG.\Eval(\pp,x,i)) 
            \end{pmatrix}, 
            \quad z_i\coloneqq\PRG(s^1_i) \oplus r_i=\PRG(s_i^0).
        \end{align}
    \end{itemize}
    The challenger sends $(y_1,...,y_n,z_1,...,z_n)$ to $\qA$.
    \item $\qA$ outputs $X'$.
    If $X'=(x,s^{x_i}_1,...,s^{x_i}_n,\ct_1,...,\ct_n)$, the output of the experiment is 1.
    Otherwise, the output of the experiment is 0.
\end{enumerate}
\begin{claim}
    For any QPT adversary $\qA$, 
    \begin{align}
    \label{eq:hyb1}
        |\Pr[1\gets\mathsf{Hyb}_1]-\Pr[1\gets\mathsf{Hyb}_2]|\le\negl(\secp).
    \end{align}
\end{claim}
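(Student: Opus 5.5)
We reduce to the pseudorandomness of $\PRG$ with a hybrid argument over the $n$ coordinates, in the same way as the corresponding claim in the main body. For each $i\in\{0,\dots,n\}$, let $G_i$ be identical to $\mathsf{Hyb}_1$ except that $r_j\coloneqq\PRG(s_j^0)\oplus\PRG(s_j^1)$ for every $j\le i$. Coordinates $j>i$ keep $r_j\coloneqq\PRG(s_j^{x_j})\oplus u_j$ with fresh uniform $u_j\gets\bit^{3m}$. Then $G_0=\mathsf{Hyb}_1$ and $G_n=\mathsf{Hyb}_2$.

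The point of the step is that in $G_{j-1}$ versus $G_j$ only $r_j$ changes. It is either $\PRG(s_j^{x_j})\oplus u_j$ with $u_j$ uniform, or $\PRG(s_j^{x_j})\oplus\PRG(s_j^{x_j\oplus1})$. The seed $s_j^{x_j\oplus 1}$ is used nowhere else in the experiment: it is not encrypted, it does not enter $z_j$, and it is not checked in the winning condition, which only involves $s_j^{x_j}$. Hence $\PRG(s_j^{x_j\oplus1})$ can be replaced by a challenge string $a$.

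Concretely, the reduction $\qB$ receives $a$, which is either $\PRG(s)$ or uniform in $\bit^{3m}$, and samples $j\gets[n]$. It samples everything else itself: PKE-SKL keys, $\pp$, $x$, all seeds, and $u_i$. It sets $r_i$ according to $G_{j-1}$ for $i<j$ and $i>j$, and sets $r_j\coloneqq a\oplus\PRG(s_j^{x_j})$. It runs $\qA$, checks the certificate itself using $\vk_\PKE$ (which it knows), computes the challenge image, and outputs $1$ iff $\qA$ wins.

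If $a=\PRG(s)$, the view is exactly $G_j$, with $s$ playing the role of $s_j^{x_j\oplus1}$. If $a$ is uniform, the view is exactly $G_{j-1}$, since $a\oplus\PRG(s_j^{x_j})$ is then distributed like $u_j\oplus\PRG(s_j^{x_j})$. Telescoping gives a PRG advantage of $\frac1n|\Pr[1\gets\mathsf{Hyb}_1]-\Pr[1\gets\mathsf{Hyb}_2]|$, which contradicts PRG security if the gap is non-negligible.

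The only real obstacle is confirming that the simulation is perfect, in particular that $z_j$ is consistent with $a$. When $x_j=0$, we have $z_j=\PRG(s_j^0)$, which $\qB$ knows. When $x_j=1$, we have $z_j=\PRG(s_j^1)\oplus r_j=a$ exactly as in both hybrids, using only the published $r_j$. Also, $\qB$ never needs $s_j^{x_j\oplus1}$. The quantum key $\qdk_\PKE$ is generated by $\qB$ itself, so no quantum subtlety arises.
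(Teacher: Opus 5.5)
Your proposal is correct and is essentially the paper's proof. It uses the same coordinate-wise hybrids $G_0,\dots,G_n$ and the same reduction that plants the PRG challenge as $r_j\coloneqq a\oplus\PRG(s_j^{x_j})$ at a uniformly random $j$; your explicit check that $z_j=\PRG(s_j^1)\oplus r_j=a$ when $x_j=1$ fills in a detail the paper leaves implicit, and your telescoping bound of $\frac{1}{n}|\Pr[1\gets\mathsf{Hyb}_1]-\Pr[1\gets\mathsf{Hyb}_2]|$ is a cleaner, fully rigorous form of the paper's accounting via a fixed bad index $i^*$.
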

\begin{proof}
The claim follows from the security of $\PRG$.
For each $i\in\{0,...,n\}$, consider the experiment $G_i$, that is identical to $\mathsf{Hyb}_1$ except that for all $j\le i$, $r_j$ is replaced with $\PRG(s_j^0)\oplus\PRG(s_j^1)$.
Then, $G_0$ is identical to $\mathsf{Hyb}_1$ and $G_n$ is identical to $\mathsf{Hyb}_2$.
For the sake of contradiction , we assume that there exists a QPT adversary $\qA$ and a polynomial $p$ such that
\begin{align}
    |\Pr[1\gets\mathsf{Hyb}_1] - \Pr[1\gets\mathsf{Hyb}_2]| \ge \frac{1}{p(\secp)}
\end{align}
holds for infinitely many $\secp$.
Then, there exists $i^*\in[n]$ such that
\begin{align}
    |\Pr[1\gets G_{i^*-1}] - \Pr[1\gets G_{i^*}]| \ge \frac{1}{np(\secp)}
\end{align}
for infinitely many $\secp$.
By using $\qA$, we construct a QPT adversary $\qB$ that breaks the security of $\PRG$ as follows:
\begin{enumerate}
    \item $\qB$ takes $a$ as input, where $a=\PRG(s)$ for $s\gets\bit^{m}$ or $a\gets\bit^{3m}$.
    \item $\qB$ sample $j\gets[n]$.
    $\qB$ simulates $\mathsf{Hyb}_1$, where $\qB$ generates $(r_1,...,r_n)$ as follows:
    For each $i\in[n]$, $\qB$ generates $s_i^0\gets\bit^{m}$, $s_i^1\gets\bit^{m}$, and $u_i\gets\bit^{3m}$ and lets
    \begin{align}
        r_i \coloneqq
        \begin{cases}
            \PRG(s_i^0)\oplus \PRG(s_i^1) & \text{if } i<j \\ 
            a \oplus \PRG(s_i^{x_i}) & \text{if } i=j \\
            u_i\oplus \PRG(s_i^{x_i}) & \text{if } i>j.
        \end{cases}
    \end{align}
\end{enumerate}
Consider the case in which $j=i^*$, that occurs with probability $1/n$.
If $a=\PRG(s)$ for $s\gets\bit^{m}$, then $\qB$ simulates $G_{i^*}$.
If $a\gets\bit^{3m}$, then $\qB$ simulates $G_{i^*-1}$. 
Thus, 
\begin{align}
    &\left|\Pr_{s\gets\bit^{m}}[1\gets\qB(\PRG(s))] - \Pr_{a\gets\bit^{3m}}[1\gets\qB(a)] \right| \\
    &= \frac{1}{n} |\Pr[1\gets G_{i^*-1}] - \Pr[1\gets G_{i^*}] | \\
    &\ge \frac{1}{n^2p(\secp)}
\end{align}
for infinitely many $\secp$ and $\qB$ breaks the security of $\PRG$.
\end{proof}

\noindent
$\mathsf{Hyb}_3$: This is identical to $\mathsf{Hyb}_2$ except that $\ct_i$ is replaced with $\PKE.\Enc(\ek_\PKE,s_i^{x_i\oplus 1})$ for each $i\in[n]$. The experiment works as follows:
\begin{enumerate}
    \item The challenger generates $(\ek_\PKE,\qdk_\PKE,\vk_\PKE)\gets\PKE.\qKG(1^\secp)$, $\pp\gets\HPRG.\Setup(1^\secp)$ and $x\gets\bit^n$.
    For each $i\in[n]$, the challenger generates $s_i^0\gets\bit^{m}$, $s_i^1\gets\bit^{m}$, and $r_i\coloneqq\PRG(s_i^0)\oplus\PRG(s_i^1)$.
    The challenger sends $(\ek_\PKE,\pp,r_1,...,r_n,\qdk_\PKE)$ to $\qA$.
    \item $\qA$ outputs $\cert$.
    \item If $\PKE.\Vrfy(\vk_\PKE,\cert)=\bot$, the output of the experiment is 0.
    Otherwise, for each $i\in[n]$,
    \begin{itemize}
        \item if $x_i=0$, the challenger generates $\ct_i\gets\PKE.\Enc(\ek_\PKE,s_i^1)$ and lets
        \begin{align}
            y_i \coloneqq
            \begin{pmatrix}
                \PKE.\Enc(\ek_\PKE,s_i^0;\HPRG.\Eval(\pp,x,i)) \\ 
                \ct_i
            \end{pmatrix}, 
            \quad z_i\coloneqq\PRG(s_i^0).
        \end{align}
        \item if $x_i=1$, the challenger generates $\ct_i\gets\PKE.\Enc(\ek_\PKE,s_i^0)$ and lets
        \begin{align}
            y_i \coloneqq
            \begin{pmatrix}
                \ct_i \\
                \PKE.\Enc(\ek_\PKE,s_i^1;\HPRG.\Eval(\pp,x,i)) 
            \end{pmatrix}, 
            \quad z_i\coloneqq\PRG(s_i^0).
        \end{align}
    \end{itemize}
    The challenger sends $(y_1,...,y_n,z_1,...,z_n)$ to $\qA$.
    \item $\qA$ outputs $X'$.
    If $X'=(x,s^{x_1}_1,...,s^{x_n}_n,\ct_1,...,\ct_n)$, the output of the experiment is 1.
    Otherwise, the output of the experiment is 0.
\end{enumerate}
\begin{claim}
    For any QPT adversary $\qA$,
    \begin{align}
    \label{eq:hyb2}
        |\Pr[1\gets\mathsf{Hyb}_2] - \Pr[1\gets\mathsf{Hyb}_3]| \le \negl(\secp).
    \end{align}
\end{claim}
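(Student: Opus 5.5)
The plan is to reduce to the PRCT security of $\mathsf{PKESKL}$ through a hybrid argument over the $n$ coordinates. For each $i\in\{0,\dots,n\}$, let $H_i$ be the experiment that is identical to $\mathsf{Hyb}_2$, except that for every $j\le i$ the string $\ct_j$ is sampled as $\PKE.\Enc(\ek_\PKE,s_j^{x_j\oplus 1})$ instead of uniformly from $\bit^c$. Then $H_0=\mathsf{Hyb}_2$ and $H_n=\mathsf{Hyb}_3$. It therefore suffices to bound each gap $|\Pr[1\gets H_{i-1}]-\Pr[1\gets H_i]|$ by a negligible function and sum the $n$ gaps; since $n$ is polynomial, the total is negligible.

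To bound a single gap, assume toward contradiction that some $\qA$ distinguishes $\mathsf{Hyb}_2$ from $\mathsf{Hyb}_3$ with advantage $1/p(\secp)$ for infinitely many $\secp$. Build a reduction $\qC$ against PRCT security as follows.

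1. $\qC$ receives $(\ek_\PKE,\qdk_\PKE)$ from the PRCT challenger.
2. On its own, $\qC$ samples $\pp\gets\HPRG.\Setup(1^\secp)$, $x\gets\bit^n$, and $s_i^0,s_i^1\gets\bit^m$ for each $i$. It sets $r_i\coloneqq\PRG(s_i^0)\oplus\PRG(s_i^1)$.
3. $\qC$ forwards $(\ek_\PKE,\pp,r_1,\dots,r_n,\qdk_\PKE)$ to $\qA$ and obtains $\cert$.
4. $\qC$ samples $j\gets[n]$ and submits $\cert$ together with the message $s_j^{x_j\oplus 1}$ to the PRCT challenger.
5. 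If verification passes, $\qC$ receives $\ct^*$, which is either a fresh encryption of $s_j^{x_j\oplus1}$ or a uniform string in $\bit^c$.
6. $\qC$ sets $\ct_i$ to be an honest encryption of $s_i^{x_i\oplus1}$ for $i<j$, sets $\ct_j\coloneqq\ct^*$, and samples $\ct_i\gets\bit^c$ for $i>j$.
7. $\qC$ computes the $y_i,z_i$ exactly as in $\mathsf{Hyb}_2$, sends them to $\qA$, and outputs $1$ iff $\qA$'s output equals $(x,s_1^{x_1},\dots,s_n^{x_n},\ct_1,\dots,\ct_n)$.

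This simulation works for the following reasons. The messages $s_j^{x_j\oplus 1}$ are chosen independently by $\qC$. The hinting PRG and $\PRG$ values are all computable by $\qC$. The only secret the reduction lacks is $\vk_\PKE$, and verification of $\cert$ is done by the PRCT challenger itself. Moreover, when verification fails, both the PRCT experiment and the hybrids output $0$, so this case contributes nothing to the gap. Consequently, conditioned on $j$, the PRCT experiment with a real ciphertext is exactly $H_j$, and with a random ciphertext it is exactly $H_{j-1}$. A standard telescoping average over $j$ then gives PRCT advantage $\frac{1}{n}|\Pr[1\gets\mathsf{Hyb}_2]-\Pr[1\gets\mathsf{Hyb}_3]|\ge 1/(np(\secp))$, which is a contradiction.

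The one step that needs care is the winning condition. The winning check refers to the $\ct_i$ themselves, including $\ct_j=\ct^*$. The reduction knows $\ct^*$, so it can evaluate this check, and it outputs the result as its guess. A second point to confirm is that the randomness used by $\PKE.\Enc$ for $\ct_j$ in $H_j$ is fresh and independent of $\HPRG.\Eval(\pp,x,j)$. This holds by definition, so the PRCT challenger's fresh encryption matches the hybrid distribution exactly. No other obstacle arises, because the hinting PRG outputs are still used honestly at this stage and are only replaced in the subsequent hybrid.
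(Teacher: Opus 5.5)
Your proposal is correct and follows essentially the same route as the paper. You use the same coordinate-wise hybrids $H_i$, the same PRCT reduction that picks $j\gets[n]$ and submits $s_j^{x_j\oplus 1}$ with $\cert$, and the same telescoping average giving advantage $\frac{1}{n}|\Pr[1\gets\mathsf{Hyb}_2]-\Pr[1\gets\mathsf{Hyb}_3]|$.
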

\begin{proof}
The claim follows from the PRCT security of $\mathsf{PKESKL}$.
For each $i\in\{0,...,n\}$, consider the experiment $H_i$, that is identical to $\mathsf{Hyb}_2$ except that for all $j\le i$, $\ct_j$ is replaced with $\PKE.\Enc(\ek_\PKE,s_i^{x_i\oplus 1})$.
Then, $H_0$ is identical to $\mathsf{Hyb}_2$ and $H_n$ is identical to $\mathsf{Hyb}_3$.
For the sake of contradiction , we assume that there exists a QPT adversary $\qA$ and a polynomial $p$ such that
\begin{align}
    |\Pr[1\gets\mathsf{Hyb}_2] - \Pr[1\gets\mathsf{Hyb}_3]| \ge \frac{1}{p(\secp)}
\end{align}
holds for infinitely many $\secp$.
Then, there exists $i^*\in[n]$ such that
\begin{align}
    |\Pr[1\gets H_{i^*-1}] - \Pr[1\gets H_{i^*}]| \ge \frac{1}{np(\secp)}
\end{align}
for infinitely many $\secp$.
By using $\qA$, we construct a QPT adversary $\qC$ that breaks the PRCT security of $\mathsf{PKESKL}$. 
$\qC$ behaves during the PRCT security game $\mathsf{Exp}^{\mathsf{PRCT}}_{\mathsf{PKESKL},\qC}(\secp,b)$ as follows:
\begin{enumerate}
    \item The challenger generates $(\ek_\PKE,\qdk_\PKE)\gets\PKE.\qKG(1^\secp)$ and sends $(\ek_\PKE,\qdk_\PKE)$ to $\qC$.
    \item $\qC$ generates $\pp\gets\HPRG.\Setup(1^\secp)$, $x\gets\bit^n$, $s_i^0\gets\bit^{m}$, $s_i^1\gets\bit^{m}$, and $r_i=\PRG(s_i^0)\oplus\PRG(s_i^1)$. 
    $\qC$ runs $\cert\gets\qA(\ek_\PKE,\pp,r_1,...,r_n,\qdk_\PKE)$. 
    $\qC$ samples $j\gets[n]$ and sets $m\coloneqq s_{j}^{x_j\oplus 1}$.
    $\qC$ sends $\cert$ and $m$ to the challenger.
    \item If $\PKE.\Vrfy(\vk_\PKE,\cert)=\bot$, then the output of the experiment is 0.
    Otherwise, the challenger generates $\ct_b^*$, where $\ct_0\gets\bit^c$ and $\ct^*_1\gets\PKE.\Enc(\ek_\PKE,m)$ and sends $\ct_b^*$ to $\qC$.
    \item $\qC$ simulates $\mathsf{Hyb}_2$, where $\qC$ generates $(\ct_1,...,\ct_n)$ as follows:
    \begin{align}
        \ct_i \coloneqq 
        \begin{cases}
            \PKE.\Enc(\ek_\PKE,s_i^{x_i\oplus 1}) & \text{if } i<j \\ 
            \ct_b^* & \text{if } i=j \\
            \ct_i\gets\bit^c & \text{if } i>j.
        \end{cases}
    \end{align}
\end{enumerate}
Consider the case in which $j=i^*$, that occurs with probability $1/n$.
Then 
\begin{align}
    \Pr[1\gets H_{i^*-1}] &= \Pr[1\gets \mathsf{Exp}^{\mathsf{PRCT}}_{\mathsf{PKESKL},\qC}(\secp,0)] \\
    \Pr[1\gets H_{i^*}] &= \Pr[1\gets \mathsf{Exp}^{\mathsf{PRCT}}_{\mathsf{PKESKL},\qC}(\secp,1)].
\end{align}
Thus, 
\begin{align}
    &|\Pr [1\gets \mathsf{Exp}^{\mathsf{PRCT}}_{\mathsf{PKESKL},\qC}(\secp,0)] - \Pr[1\gets \mathsf{Exp}^{\mathsf{PRCT}}_{\mathsf{PKESKL},\qC}(\secp,1)] | \\
    &= \frac{1}{n} |\Pr[1\gets H_{i^*-1}] - \Pr[1\gets H_{i^*}] | \\
    &\ge \frac{1}{n^2p(\secp)}
\end{align}
for infinitely many $\secp$ and $\qC$ breaks the PRCT security of $\mathsf{PKESKL}$.
\end{proof}

\noindent
$\mathsf{Hyb}_4$: This is identical to $\mathsf{Hyb}_3$ except that for each $i\in[n]$, $\HPRG.\Eval(\pp,x,i)$ is replaced with a random string. The experiment works as follows:
\begin{enumerate}
    \item The challenger generates $(\ek_\PKE,\qdk_\PKE,\vk_\PKE)\gets\PKE.\qKG(1^\secp)$, $\pp\gets\HPRG.\Setup(1^\secp)$ and $x\gets\bit^n$.
    For $i\in[n]$, the challenger generates $s_i^0\gets\bit^{m}$, $s_i^1\gets\bit^{m}$, and $r_i\coloneqq\PRG(s_i^0)\oplus\PRG(s_i^1)$.
    The challenger sends $(\ek_\PKE,\pp,r_1,...,r_n,\qdk_\PKE)$ to $\qA$.
    \item $\qA$ outputs $\cert$.
    \item If $\PKE.\Vrfy(\vk_\PKE,\cert)=\bot$, the output of the experiment is 0.
    Otherwise, For $i\in[n]$, the challenger generates $v_i\gets\bit^{\ell}$.
    For each $i\in[n]$,
    \begin{itemize}
        \item if $x_i=0$, the challenger generates $\ct_i\gets\PKE.\Enc(\ek_\PKE,s_i^1)$ and lets
        \begin{align}
            y_i \coloneqq
            \begin{pmatrix}
                \PKE.\Enc(\ek_\PKE,s_i^0;v_i) \\ 
                \ct_i
            \end{pmatrix}, 
            \quad z_i\coloneqq\PRG(s_i^0).
        \end{align}
        \item if $x_i=1$, the challenger generates $\ct_i\gets\PKE.\Enc(\ek_\PKE,s_i^0)$ and lets
        \begin{align}
            y_i \coloneqq
            \begin{pmatrix}
                \ct_i \\
                \PKE.\Enc(\ek_\PKE,s_i^1;v_i) 
            \end{pmatrix}, 
            \quad z_i\coloneqq\PRG(s_i^0).
        \end{align}
    \end{itemize}
    The challenger sends $(y_1,...,y_n,z_1,...,z_n)$ to $\qA$.
    \item $\qA$ outputs $X'$.
    If $X'=(x,s^{x_1}_1,...,s^{x_n}_n,\ct_1,...,\ct_n)$, the output of the experiment is 1.
    Otherwise, the output of the experiment is 0.
\end{enumerate}
\begin{claim}
    For any QPT adversary $\qA$,
    \begin{align}
    \label{eq:hyb3}
        |\Pr[1\gets\mathsf{Hyb}_3] - \Pr[1\gets\mathsf{Hyb}_4]|\le\negl(\secp).
    \end{align}
\end{claim}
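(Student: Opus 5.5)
We prove the claim by a reduction to the security of the hinting PRG. The key is not to argue that $\mathsf{Hyb}_3$ and $\mathsf{Hyb}_4$ are directly indistinguishable to $\qA$. Instead, we build a hinting-PRG distinguisher $\qD$ whose acceptance test checks whether $\qA$ has produced a full, correct preimage.

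Suppose, toward contradiction, that $|\Pr[1\gets\mathsf{Hyb}_3]-\Pr[1\gets\mathsf{Hyb}_4]|\ge 1/p(\secp)$ for infinitely many $\secp$. First observe that in $\mathsf{Hyb}_4$ the view of $\qA$ is independent of $x$. Indeed, for every $i$ the pair $y_i$ consists of a fresh-randomness encryption of $s_i^0$ in the top slot and of $s_i^1$ in the bottom slot, with the same joint distribution whatever $x_i$ is, and $z_i=\PRG(s_i^0)$. Hence $\Pr[1\gets\mathsf{Hyb}_4]\le 2^{-n}$, so $\Pr[1\gets\mathsf{Hyb}_3]$ is non-negligible. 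It therefore suffices to show that $\Pr[1\gets\mathsf{Hyb}_3]$ itself is bounded by a hinting-PRG advantage plus a negligible term.

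The distributions of $\mathsf{Hyb}_3$ and $\mathsf{Hyb}_4$ differ only in the encryption randomness. $\qD$ takes as input $(\pp,a_0^\beta,\{a_{i,b}^\beta\})$ and proceeds as follows.
\begin{itemize}
\item It samples the PKE-SKL keys itself.
\item For each $i$ and each $b\in\bit$ it samples $s_i^0,s_i^1$ and sets $\ct_i^b\coloneqq\PKE.\Enc(\ek_\PKE,s_i^b;a_{i,b}^\beta)$, $y_i\coloneqq(\ct_i^0,\ct_i^1)^\top$, $z_i\coloneqq\PRG(s_i^0)$ and $r_i\coloneqq\PRG(s_i^0)\oplus\PRG(s_i^1)$.
\item It runs $\qA$, checks the certificate using its own $\vk_\PKE$, and receives $X'=(\widetilde x,\widetilde s_i,\widetilde\ct_i)$.
\end{itemize}
$\qD$ does not know $x$, so it cannot simply test $\widetilde x=x$. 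Instead it accepts iff all of the following hold: $a_0^\beta=\HPRG.\Eval(\pp,\widetilde x,0)$; $a_{i,\widetilde x_i}^\beta=\HPRG.\Eval(\pp,\widetilde x,i)$ for all $i$; $\widetilde s_i=s_i^{\widetilde x_i}$; and $\widetilde\ct_i=\ct_i^{\widetilde x_i\oplus1}$. When $\beta=1$ the simulation is exactly $\mathsf{Hyb}_3$: the slot $\widetilde x_i=x_i$ uses HPRG randomness, and the other slot is an honest encryption with fresh randomness. Every winning output passes all the checks, so $\Pr[\qD=1\mid\beta=1]\ge\Pr[1\gets\mathsf{Hyb}_3]$.

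The main step is bounding $\Pr[\qD=1\mid\beta=0]$. Here every $a$-value is uniform, yet $\qA$ may choose $\widetilde x$ adaptively after seeing ciphertexts that depend on these values. I would handle this with a union bound over the $2^n$ candidates $\widetilde x$. For a fixed $\widetilde x$, the $n+1$ strings $a_0^0,a_{i,\widetilde x_i}^0$ are uniform and independent of $\pp$. The event that all of them match the deterministic values $\HPRG.\Eval(\pp,\widetilde x,\cdot)$ depends only on $\pp$ and the $a$'s, not on $\qA$, so it has probability $2^{-(n+1)\ell}$. Since acceptance implies this event for the $\widetilde x$ that $\qA$ outputs, we get $\Pr[\qD=1\mid\beta=0]\le 2^{n-(n+1)\ell}$. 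This is negligible once we assume, by padding encryption randomness, that $\ell=\omega(\log\secp)$.

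The advantage of $\qD$ is therefore at least $1/p(\secp)-2^{-n}-\negl(\secp)$, which contradicts the security of the hinting PRG.
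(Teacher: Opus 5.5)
Your proposal is correct and essentially reproduces the paper's proof. It uses the same hinting-PRG distinguisher $\qD$ with the four acceptance checks on $(\widetilde{x},\widetilde{s}_i,\widetilde{\ct}_i)$, notes that $\beta=1$ reproduces $\mathsf{Hyb}_3$ exactly, and applies the same union bound over $\widetilde{x}$ to get $2^{n-(n+1)\ell}$ when $\beta=0$, together with $\Pr[1\gets\mathsf{Hyb}_4]\le 2^{-n}$ (and, like the paper, implicitly needs $n,\ell=\omega(\log\secp)$ so these terms are negligible).
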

\begin{proof}
For the sake of contradiction, we assume that there exists a QPT adversary $\qA$ and a polynomial $p$ such that
\begin{align}
    |\Pr[1\gets\mathsf{Hyb}_3] - \Pr[1\gets\mathsf{Hyb}_4]|\ge\frac{1}{p(\secp)}
\end{align}
for infinitely many $\secp$.
Since the view of $\qA$ in $\mathsf{Hyb}_4$ is independent of $x$, we have $\Pr[1\gets\mathsf{Hyb}_4]\le 2^{-n}$.

We construct a QPT adversary $\qD$ against the security of the hinting PRG.
\begin{enumerate}
    \item $\qD$ takes $(\pp, a_0^\beta, \{a_{i,b}^\beta\}_{i\in[n],b\in\bit})$ as input, where $\pp\gets\HPRG.\Setup(1^\secp)$, $x\gets\bit^{n}$, $\beta\gets\bit$, $a_0^0\gets\bit^{\ell}$, $a_0^1\coloneqq\HPRG.\Eval(\pp,x,0)$, $a_{i,0}^0\gets\bit^{\ell}$, $a_{i,1}^0\gets\bit^{\ell}$, $a_{i,x_i}^1\coloneqq\HPRG.\Eval(\pp,x,i)$, and $a_{i,x_i\oplus 1}^1\gets\bit^{\ell}$ for each $i\in[n]$.
    \item $\qD$ generates $(\ek_\PKE,\qdk_\PKE,\vk_\PKE)\gets\PKE.\qKG(1^\secp)$ and, for every $i\in[n]$, samples $s_i^0,s_i^1\gets\bit^m$ and sets
    \begin{align}
        r_i&\coloneqq\PRG(s_i^0)\oplus\PRG(s_i^1),\\
        \ct_i^b&\coloneqq\PKE.\Enc(\ek_\PKE,s_i^b;a_{i,b}^\beta)\quad\text{for }b\in\bit,\\
        y_i&\coloneqq
        \begin{pmatrix}
            \ct_i^0\\
            \ct_i^1
        \end{pmatrix},
        \qquad z_i\coloneqq\PRG(s_i^0).
    \end{align}
    It sends $(\ek_\PKE,\pp,r_1,\ldots,r_n,\qdk_\PKE)$ to $\qA$ and obtains $\cert$.
    If $\PKE.\Vrfy(\vk_\PKE,\cert)=\bot$, it outputs $0$.
    Otherwise, it sends $(y_1,\ldots,y_n,z_1,\ldots,z_n)$ to $\qA$ and obtains
    $X'=(\widetilde{x},\widetilde{s}_1,\ldots,\widetilde{s}_n,\widetilde{\ct}_1,\ldots,\widetilde{\ct}_n)$.
    If $X'$ cannot be parsed in this form, it outputs $0$.
    \item $\qD$ outputs $1$ if
    \begin{align}
        a_0^\beta=\HPRG.\Eval(\pp,\widetilde{x},0)
    \end{align}
    and, for every $i\in[n]$,
    \begin{align}
        a_{i,\widetilde{x}_i}^\beta
        &=\HPRG.\Eval(\pp,\widetilde{x},i), &
        \widetilde{s}_i&=s_i^{\widetilde{x}_i}, &
        \widetilde{\ct}_i&=\ct_i^{\widetilde{x}_i\oplus 1}.
    \end{align}
    Otherwise, it outputs $0$.
\end{enumerate}

When $\beta=1$, the view given to $\qA$ is distributed exactly as in $\mathsf{Hyb}_3$.
Moreover, whenever $\qA$ wins $\mathsf{Hyb}_3$, its output satisfies all the checks above.
Consequently,
\begin{align}
    \Pr[1\gets\mathsf{Hyb}_3]
    \le \Pr[1\gets\qD(\pp,a_0^1,\{a_{i,b}^1\}_{i\in[n],b\in\bit})].
\end{align}
When $\beta=0$, for every fixed $\widetilde{x}\in\bit^n$, the values $a_0^0$ and $\{a_{i,\widetilde{x}_i}^0\}_{i\in[n]}$ are independent and uniform.
Acceptance implies that there exists some $\widetilde{x}\in\bit^n$ satisfying all the HPRG equalities in the check, regardless of how $\qA$ chooses its output.
Taking a union bound over all such $\widetilde{x}$, we obtain
\begin{align}
    \Pr[1\gets\qD(\pp,a_0^0,\{a_{i,b}^0\}_{i\in[n],b\in\bit})]
    \le 2^n\cdot 2^{-(n+1)\ell}=2^{n-(n+1)\ell}=\negl(\secp),
\end{align}
where the last equality uses $\ell=\omega(\log\secp)$, which we may assume by padding the encryption randomness.
Therefore, for infinitely many $\secp$,
\begin{align}
    &\Pr[1\gets\qD(\pp,a_0^1,\{a_{i,b}^1\}_{i\in[n],b\in\bit})]
    -\Pr[1\gets\qD(\pp,a_0^0,\{a_{i,b}^0\}_{i\in[n],b\in\bit})]\\
    &\ge \Pr[1\gets\mathsf{Hyb}_3]-\negl(\secp)\\
    &\ge |\Pr[1\gets\mathsf{Hyb}_3]-\Pr[1\gets\mathsf{Hyb}_4]|
    -\Pr[1\gets\mathsf{Hyb}_4]-\negl(\secp)\\
    &\ge \frac{1}{p(\secp)}-2^{-n}-\negl(\secp),
\end{align}
which is non-negligible and contradicts the security of the hinting PRG.
\end{proof}

In $\mathsf{Hyb}_4$, the distribution of $(y_1,...,y_n,z_1,...,z_n)$ is independent of $x$.
Thus, for any QPT adversary $\qA$, we have
\begin{align}
    \Pr[1\gets\mathsf{Hyb}_4] \le \negl(\secp).
\end{align}
By combining this inequality with \cref{eq:hyb0,eq:hyb1,eq:hyb2,eq:hyb3}, we have
\begin{align}
    \Pr[1\gets\mathsf{Hyb}_0] \le \negl(\secp)
\end{align}
and complete the proof.

\section{Trapdoor Functions with Secure Key Leasing with Domain Sampler}
\label{sec:TDFSKL_DS}

In \cref{sec:TDFSKL}, we have constructed a TDF-SKL from PRCT secure PKE-SKL (and hinting PRGs).
However, via a slightly different approach, we can construct a TDF-SKL with \textit{domain sampler} from the weaker assumptions, namely IND-CPA secure PKE (and hinting PRGs).
TDF-SKL with domain sampler are defined similarly to TDF-SKL, except that the syntax additionally includes a domain-sampling algorithm that samples a bit string from the domain of the trapdoor function.
As the security for TDF-SKL with domain sampler, we consider \textit{OW-VRA security}, where the adversary obtains the verification key after submitting the deletion certificate and one-wayness is defined with respect to the distribution induced by the domain-sampling algorithm.

The definition of TDF-SKL with domain sampler is as follows:
\begin{definition}[TDF-SKL with domain sampler]\label{def:TDF-SKL_DS}
Let $n$ be a polynomial.
A TDF-SKL with domain sampler is a tuple of algorithms $\mathsf{TDFSKL}=(\mathpzc{KG},\Samp,\Eval,\mathpzc{Inv},\mathpzc{Del},\Vrfy)$.
\begin{itemize}
    \item $\mathpzc{KG}(1^\secp)\to(\ek,\mathpzc{td},\vk)$: The key generation algorithm takes as input a security parameter $1^\secp$ and outputs a classical evaluation key $\ek$, a quantum trapdoor $\mathpzc{td}$, and a classical deletion verification key $\vk$.
    \item $\Samp(\ek)\to x$: The domain sampling algorithm takes as input an evaluation key $\ek$ and outputs $x\in\bit^{n(\secp)}$. 
    \item $\Eval(\ek,x)\to y$: The evaluation algorithm takes an evaluation key $\ek$ and a string $x\in\bit^{n(\secp)}$ as input and outputs $y$. This algorithm is deterministic.
    \item $\mathpzc{Inv}(\mathpzc{td},y)\to(x',\mathpzc{td}')$: The inversion algorithm takes a quantum trapdoor $\mathpzc{td}$ and a string $y$ as input and outputs $x'$ and a resulting trapdoor $\mathpzc{td}'$.
    \item $\mathpzc{Del}(\mathpzc{td})\to\cert$: The trapdoor deletion algorithm takes a quantum trapdoor $\mathpzc{td}$ as input and outputs a classical certificate $\cert$.
    \item $\Vrfy(\vk,\cert)\to\top/\bot$: The deletion verification algorithm takes a deletion verification key $\vk$ and a certificate $\cert$ as input and outputs $\top/\bot$. This algorithm is deterministic.
\end{itemize}
$\mathsf{TDFSKL}$ is required to satisfy the following conditions:
\begin{itemize}
    \item \textbf{Almost-all-keys inversion correctness:} 
    \begin{align}
        \Pr_{(\ek,\mathpzc{td},\vk)\gets\mathpzc{KG}(1^\secp)} \left[ \forall x\in\bit^{n(\secp)}, 
        \Pr_{(x',\mathpzc{td}')\gets\mathpzc{Inv}(\mathpzc{td},\Eval(\ek,x))}[x'=x]\ge 1-\negl(\secp) \right] \ge 1-\negl(\secp).
    \end{align}
    \item \textbf{Deletion verification correctness:} 
    \begin{align}
        \Pr \left[ \Vrfy(\vk,\cert)=\top : \begin{gathered} (\ek,\mathpzc{td},\vk)\gets\mathpzc{KG}(1^\secp) \\ \cert\gets\mathpzc{Del}(\mathpzc{td})
        \end{gathered} \right] \ge 1-\negl(\secp).
    \end{align}
    \item \textbf{OW-VRA security:}
    Consider the experiment $\mathsf{Exp}^{\mathsf{OW-VRA}}_{\mathsf{TDFSKL},\qA}(\secp)$ between the challenger and an adversary $\qA$:
    \begin{enumerate}
        \item The challenger generates $(\ek,\mathpzc{td},\vk)\gets\mathpzc{KG}(1^\secp)$ and sends $(\ek,\mathpzc{td})$ to $\qA$.
        \item $\qA$ sends $\cert$ to the challenger.
        \item If $\Vrfy(\vk,\cert)=\bot$, then the output of the experiment is 0. 
        Otherwise, the challenger generates $x\gets\Samp(\ek), y\coloneqq\Eval(\ek,x)$ and sends $(y,\vk)$ to $\qA$.
        \item $\cA$ sends $x'$ to the challenger.
        \item If $x'=x$, the output of the experiment is 1.
        Otherwise, the output of the experiment is 0.
    \end{enumerate}
    Then, for any QPT adversary $\qA$, 
    \begin{align}
        \Pr[1\gets\mathsf{Exp}^{\mathsf{OW-VRA}}_{\mathsf{TDFSKL},\qA}(\secp)] \le\negl(\secp).
    \end{align}
\end{itemize}
\end{definition}

We show the following theorem.
\begin{theorem}
    If IND-VRA secure PKE-SKL and hinting PRGs exist, then TDF-SKL with domain sampler exist.
\end{theorem}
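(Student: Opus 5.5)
The plan is to reuse the TDF-SKL construction from \cref{sec:TDFSKL} almost verbatim, with one change. Since IND-VRA secure PKE-SKL need not have pseudorandom ciphertexts, the ``dummy'' component $\ct_i$ of the input can no longer be a uniformly random string. I therefore add a domain sampler $\Samp(\ek)$. It samples $x\gets\bit^n$ and $s_i\gets\bit^m$, and sets $\ct_i\gets\PKE.\Enc(\ek_\PKE,0^m)$ for each $i$. The algorithms $\qKG$, $\Eval$, $\qInv$, $\qDel$, $\Vrfy$ stay exactly as before, with $\PRG$ stretching $m$ bits to $3m$ bits and $\ek=(\ek_\PKE,\pp,r_1,\dots,r_n)$.

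Deletion verification correctness is inherited from $\mathsf{PKESKL}$. Almost-all-keys inversion correctness is proved over the full domain $\bit^{n(1+m+c)}$ exactly as in \cref{lem:TDFSKL_correct}. That argument never used the distribution of $\ct_i$, only two facts: the good-key set $G_{1,\secp}$ given by almost-all-keys decryption correctness, and the set $G_{2,\secp}$ on which $r_i\neq\PRG(s)\oplus\PRG(s')$ for all $s,s'$. The sequential reuse of the key is handled by the gentle measurement lemma together with a hybrid/triangle-inequality bound. Note that the PKE-SKL we start from must satisfy almost-all-keys correctness, which the derived lemma from \cite{EC:KitMorYam25} provides.

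For OW-VRA security I follow the same hybrid sequence as in \cref{lem:TDF-SKL_security}, now with $\vk_\PKE$ handed to the adversary together with the challenge image.
\begin{itemize}
\item $\mathsf{Hyb}_1$: sample $x$, $s_i^0$, $s_i^1$ up front and set $r_i=\PRG(s_i^{x_i})\oplus u_i$. The distribution is identical to the original game.
\item $\mathsf{Hyb}_2$: set $r_i=\PRG(s_i^0)\oplus\PRG(s_i^1)$. This is indistinguishable by PRG security, via a hybrid over $i$.
\item $\mathsf{Hyb}_3$: replace $\ct_i=\PKE.\Enc(\ek_\PKE,0^m)$ by $\PKE.\Enc(\ek_\PKE,s_i^{x_i\oplus1})$, one index at a time. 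The reduction forwards $(\ek_\PKE,\qdk_\PKE)$ to the adversary, submits its certificate together with $(0^m,s_j^{x_j\oplus1})$, and embeds the challenge ciphertext. The reduction receives $\vk_\PKE$ only after the certificate, which is exactly what IND-VRA gives it, so it can forward $\vk_\PKE$ as required by the OW-VRA game. The key point is that everything the reduction needs before the certificate, namely $\pp$, $x$, $s_i^b$ and $r_i$, is sampled by the reduction itself.
\item $\mathsf{Hyb}_4$: replace $\HPRG.\Eval(\pp,x,i)$ by uniform randomness, using the hinting-PRG reduction $\qD$ from the earlier proof. That reduction simulates the game with its own PKE-SKL keys, so it also knows $\vk_\PKE$. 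The $\beta=0$ branch is again bounded by $2^{n-(n+1)\ell}$ via a union bound over candidate $\widetilde{x}$.
\end{itemize}
In $\mathsf{Hyb}_4$ each $y_i$ is a pair of encryptions of $s_i^0$ and $s_i^1$ and $z_i=\PRG(s_i^0)$, so the adversary's view is independent of $x$ and it wins with probability at most $2^{-n}$.

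The main obstacle is the $\mathsf{Hyb}_2\to\mathsf{Hyb}_3$ step. I have to ensure that the hinting-PRG randomness and the encryptions in the ``real'' slot do not leak anything that would prevent the IND-VRA reduction from generating them; they don't, since they use only $\ek_\PKE$. I also have to check that the winning condition, which requires outputting the exact $\ct_i$, remains efficiently checkable after switching. It does, because the reduction knows all the $\ct_i$ other than the embedded one, and the embedded one it receives in the clear.
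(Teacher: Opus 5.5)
Your proposal is correct and matches the paper's proof essentially step for step. It uses the same sampler with $\ct_i\gets\PKE.\Enc(\ek_\PKE,0^m)$, correctness inherited from \cref{lem:TDFSKL_correct}, and the same four hybrids: the reordering, the PRG switch of $r_i$, an IND-VRA reduction with challenge messages $(0^m,s_j^{x_j\oplus 1})$ that forwards $\vk_\PKE$, and the hinting-PRG reduction $\qD$ that checks a winning output via $\HPRG.\Eval$-consistency together with the $2^{n-(n+1)\ell}$ union bound. The only quibble is emphasis: the genuinely delicate step is $\mathsf{Hyb}_3\to\mathsf{Hyb}_4$ rather than $\mathsf{Hyb}_2\to\mathsf{Hyb}_3$, because there the reduction does not know $x$ and cannot check the winning condition directly; since you reuse exactly that consistency-checking $\qD$, nothing is missing.
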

Our construction is as follows:
\paragraph{Construction.}
Let $\mathsf{PKESKL}=(\PKE.\qKG,\PKE.\Enc,\PKE.\qDec,\PKE.\qDel,\PKE.\Vrfy)$ be an IND-VRA secure PKE-SKL scheme with message length $m=m(\secp)$, randomness length $\ell=\ell(\secp)$, and ciphertext length $c=c(\secp)$.
Let $(\HPRG.\Setup,\HPRG.\Eval)$ be a hinting PRG with input length $n=n(\secp)$ and output length $\ell$.
Let $\PRG$ be a PRG mapping $m$-bit strings to $3m$-bit strings.
We construct a TDF-SKL with domain sampler $\mathsf{TDFSKL}=(\qKG,\Samp,\Eval,\qInv,\qDel,\Vrfy)$ as follows:
\begin{itemize}
    \item $\qKG(1^\secp)\to(\ek,\qtd,\vk)$: 
    \begin{enumerate}
        \item Run $(\ek_\PKE,\qdk_\PKE,\vk_\PKE)\gets\PKE.\qKG(1^\secp)$, $\pp\gets\HPRG.\Setup(1^\secp)$.
        For $i\in[n]$, sample $r_i\gets\bit^{3m}$.
        \item Return $\ek\coloneqq(\ek_\PKE,\pp,r_1,...,r_n)$, $\qtd\coloneqq\qdk_\PKE$, and $\vk\coloneqq\vk_\PKE$.
    \end{enumerate}
    \item $\Samp(\ek)\to X$:
    \begin{enumerate}
        \item Parse $\ek=(\ek_\PKE,\pp,r_1,...,r_n)$.
        \item Sample $x\gets\bit^{n}$. For $i\in[n]$, sample $s_i\gets\bit^{m}$ and $\ct_i\gets\PKE.\Enc(\ek_\PKE,0^{m})$.
        \item Return $X\coloneqq(x,s_1,...,s_n,\ct_1,...,\ct_n)$.
    \end{enumerate}
    \item $\Eval(\ek,X)\to Y$:
    \begin{enumerate}
        \item Parse $\ek=(\ek_\PKE,\pp,r_1,...,r_n)$ and $X=(x,s_1,...,s_n,\ct_1,...,\ct_n)$.
        \item For $i\in[n]$,
        \begin{itemize}
            \item if $x_i=0$, let 
            \begin{align}
                y_i\coloneqq 
                \begin{pmatrix}
                    \PKE.\Enc(\ek_\PKE,s_i;\HPRG.\Eval(\pp,x,i)) \\ 
                    \ct_i
                \end{pmatrix}, 
                \quad z_i\coloneqq\PRG(s_i).
            \end{align}
            \item if $x_i=1$, let
            \begin{align}
                y_i\coloneqq
                \begin{pmatrix}
                    \ct_i \\
                    \PKE.\Enc(\ek_\PKE,s_i;\HPRG.\Eval(\pp,x,i)) 
                \end{pmatrix}, 
                \quad z_i\coloneqq\PRG(s_i) \oplus r_i.
            \end{align}
        \end{itemize}
        \item Return $Y\coloneqq(y_1,...,y_n,z_1,...,z_n)$.
    \end{enumerate}
    \item $\qInv(\qtd,Y)\to (X',\qtd')$:
    \begin{enumerate}
        \item Parse $\qtd=\qdk_\PKE$ and $Y=(y_1,...,y_n,z_1,...,z_n)$, where for each $i\in[n]$, $y_i=\begin{pmatrix} a_i \\ b_i \end{pmatrix}$.
        \item For $i\in[n]$, run $(s_i',\qdk_\PKE')\gets\PKE.\qDec(\qdk_\PKE,a_i)$ coherently.
        \begin{itemize}
            \item If $z_i=\PRG(s_i')$, let $x_i\coloneqq 0$, $s_i\coloneqq s_i'$, $\ct_i\coloneqq b_i$, and $\qtd'=\qdk_\PKE'$.
            \item If $z_i\neq\PRG(s_i')$, uncompute $\PKE.\qDec$, and run $(s_i'',\qdk_\PKE'')\gets\PKE.\Dec(\qdk_\PKE,b_i)$. Let $x_i\coloneqq 1$, $s_i\coloneqq s_i''$, $\ct_i\coloneqq a_i$, and $\qtd'=\qdk_\PKE''$.
        \end{itemize}
        \item Return $X'\coloneqq (x,s_1,...,s_n,\ct_1,...\ct_n)$ and $\qtd'$.
    \end{enumerate}
    \item $\qDel(\qtd)\to\cert$: Run $\cert\gets\PKE.\qDel(\qtd)$.
    \item $\Vrfy(\vk,\cert)\to\top/\bot$: Run $\top/\bot\gets\PKE.\Vrfy(\vk,\cert)$.
\end{itemize}

Then, the deletion verification correctness of $\mathsf{TDFSKL}$ immediately follows from that of $\mathsf{PKESKL}$.
Moreover, almost-all-keys inversion correctness can be shown via essentially the same proof as in \cref{lem:TDFSKL_correct}.
Thus, it suffices to show OW-VRA security.

\begin{lemma}\label{lem:TDF-SKL_DS_security}
    $\mathsf{TDFSKL}$ satisfies OW-VRA security.
\end{lemma}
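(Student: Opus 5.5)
We prove OW-VRA security by a hybrid argument with the same skeleton as the proof of \cref{lem:TDF-SKL_security}. Two things change. The challenge now carries $\vk_\PKE$ alongside the image. The ``junk'' ciphertexts $\ct_i$ are now produced by $\Samp$ as encryptions of $0^m$ rather than as uniform strings. Accordingly, we replace every appeal to PRCT security by an appeal to IND-VRA security of $\mathsf{PKESKL}$.

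\begin{itemize}
\item $\mathsf{Hyb}_0$ is the OW-VRA game: after a valid certificate, $\qA$ receives $(y_1,\dots,y_n,z_1,\dots,z_n,\vk_\PKE)$, where $\ct_i\gets\PKE.\Enc(\ek_\PKE,0^m)$.

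\item $\mathsf{Hyb}_1$ samples $x$ and $s_i^0,s_i^1$ before sending the key, and sets $r_i\coloneqq\PRG(s_i^{x_i})\oplus u_i$ with $u_i$ uniform. The distribution is identical to $\mathsf{Hyb}_0$, since $r_i$ is still uniform and independent.

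\item $\mathsf{Hyb}_2$ sets $r_i\coloneqq\PRG(s_i^0)\oplus\PRG(s_i^1)$. This is indistinguishable from $\mathsf{Hyb}_1$ by PRG security, via an $n$-step hybrid in which the reduction embeds the PRG challenge in one coordinate and simulates everything else, including $\qdk_\PKE$ and $\vk_\PKE$, itself. Note that now $z_i=\PRG(s_i^0)$ in both branches.

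\item $\mathsf{Hyb}_3$ replaces each $\ct_i$ by $\PKE.\Enc(\ek_\PKE,s_i^{x_i\oplus1})$. Consider a coordinate-wise hybrid $H_0,\dots,H_n$. The reduction $\qC$ receives $(\ek_\PKE,\qdk_\PKE)$, samples $\pp,x,s_i^b$ itself, and runs $\qA$ to obtain $\cert$. It then submits $\cert$ with $(m_0,m_1)=(0^m,s_j^{x_j\oplus1})$ for a random $j$. It receives $(\ct^*,\vk_\PKE)$, plants $\ct^*$ in coordinate $j$, and forwards $\vk_\PKE$ to $\qA$. This is exactly where VRA security, rather than plain IND-CPA security, is needed: $\qA$ expects $\vk_\PKE$ in its second phase. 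The verification check is performed by the IND-VRA challenger and matches the hybrid's check.

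\item $\mathsf{Hyb}_4$ replaces $\HPRG.\Eval(\pp,x,i)$ by fresh randomness $v_i$. After this change, $y_i=(\PKE.\Enc(\ek_\PKE,s_i^0),\PKE.\Enc(\ek_\PKE,s_i^1))$ and $z_i=\PRG(s_i^0)$, so $\qA$'s view is independent of $x$ and its success probability is at most $2^{-n}$.
\end{itemize}

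The main obstacle is the step from $\mathsf{Hyb}_3$ to $\mathsf{Hyb}_4$. The hinting-PRG reduction cannot check $\qA$'s success directly, because it does not know $x$ when $\beta=1$. We handle this as in \cref{lem:TDF-SKL_security}. The distinguisher $\qD$ builds both ciphertexts in coordinate $i$ as $\ct_i^b\coloneqq\PKE.\Enc(\ek_\PKE,s_i^b;a^\beta_{i,b})$, so that for $\beta=1$ the view is exactly $\mathsf{Hyb}_3$, where the hinted slot is $\HPRG$ output and the other slot is an honest encryption. $\qD$ accepts iff $\qA$'s output $\widetilde{x}$ is consistent: $a^\beta_0$ and the $a^\beta_{i,\widetilde{x}_i}$ equal the corresponding $\HPRG.\Eval(\pp,\widetilde{x},\cdot)$ values, and the $\widetilde{s}_i$ and $\widetilde{\ct}_i$ match.

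Every win in $\mathsf{Hyb}_3$ passes this test. For $\beta=0$, a union bound over $\widetilde{x}$ bounds acceptance by $2^n2^{-(n+1)\ell}$, which is negligible once $\ell=\omega(\log\secp)$ by padding the randomness. Here $\qD$ must itself generate $\vk_\PKE$, which it can, since it runs $\PKE.\qKG$. Combining the four hybrid steps gives $\Pr[1\gets\mathsf{Hyb}_0]\le\negl(\secp)$.
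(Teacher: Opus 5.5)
Your proposal is correct and follows the paper's proof essentially step for step. It uses the same hybrids: resampling $r_i$ as $\PRG(s_i^{x_i})\oplus u_i$, the PRG switch to $\PRG(s_i^0)\oplus\PRG(s_i^1)$, the IND-VRA switch of the junk ciphertexts with $(m_0,m_1)=(0^m,s_j^{x_j\oplus 1})$ and $\vk_\PKE$ forwarded to $\qA$, and the hinting-PRG step. That last step, from $\mathsf{Hyb}_3$ to $\mathsf{Hyb}_4$, is handled exactly as in the paper, via the consistency-checking distinguisher $\qD$ together with the $2^{n}\cdot 2^{-(n+1)\ell}$ union bound for $\beta=0$.
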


\begin{proof}[Proof of \cref{lem:TDF-SKL_DS_security}]
We consider the following sequence of hybrids.

\noindent
$\mathsf{Hyb}_0$: This is the original security experiment between the challenger and an adversary $\qA$ that works as follows:
\begin{enumerate}
    \item The challenger generates $(\ek_\PKE,\qdk_\PKE,\vk_\PKE)\gets\PKE.\qKG(1^\secp)$, $\pp\gets\HPRG.\Setup(1^\secp)$ and $r_i\gets\bit^{3m}$ for each $i\in[n]$.
    The challenger sends $(\ek_\PKE,\pp,r_1,...,r_n,\qdk_\PKE)$ to $\qA$.
    \item $\qA$ outputs $\cert$.
    \item If $\PKE.\Vrfy(\vk_\PKE,\cert)=\bot$, the output of the experiment is 0.
    Otherwise, the challenger generates $x\gets\bit^n$, $s_i\gets\bit^{m}$, and $\ct_i\gets\PKE.\Enc(\ek_\PKE,0)$ for each $i\in[n]$.
    For each $i\in[n]$,
    \begin{itemize}
        \item if $x_i=0$, let
        \begin{align}
            y_i\coloneqq 
            \begin{pmatrix}
                \PKE.\Enc(\ek_\PKE,s_i;\HPRG.\Eval(\pp,x,i)) \\ 
                \ct_i
            \end{pmatrix}, 
            \quad z_i\coloneqq\PRG(s_i).
        \end{align}
        \item if $x_i=1$, let
        \begin{align}
            y_i\coloneqq 
            \begin{pmatrix}
                \ct_i \\
                \PKE.\Enc(\ek_\PKE,s_i;\HPRG.\Eval(\pp,x,i)) 
            \end{pmatrix}, 
            \quad z_i\coloneqq\PRG(s_i) \oplus r_i.
        \end{align}
    \end{itemize}
    The challenger sends $(y_1,...,y_n,z_1,...,z_n,\vk_\PKE)$ to $\qA$.
    \item $\qA$ outputs $X'$.
    If $X'=(x,s_1,...,s_n,\ct_1,...,\ct_n)$, the output of the experiment is 1.
    Otherwise, the output of the experiment is 0.
\end{enumerate}
Our goal is to show that for any QPT adversary $\qA$,
\begin{align}
    \Pr[1\gets\mathsf{Hyb}_0] \le \negl(\secp).
\end{align}

\noindent
$\mathsf{Hyb}_1$: This is identical to $\mathsf{Hyb}_0$ except that $(x,s_1,...,s_n)$ is sampled before $\cA$ outputs $\cert$ and for each $i\in[n]$, $r_i$ is replaced with the XOR of a uniformly random string and the output of $\PRG$.
The experiment works as follows:
\begin{enumerate}
    \item The challenger generates $(\ek_\PKE,\qdk_\PKE,\vk_\PKE)\gets\PKE.\qKG(1^\secp)$, $\pp\gets\HPRG.\Setup(1^\secp)$, and $x\gets\bit^n$.
    For each $i\in[n]$, the challenger generates $s_i^0\gets\bit^{m}$, $s_i^1\gets\bit^{m}$, $u_i\gets\bit^{3m}$, and $r_i\coloneqq\PRG(s_i^{x_i})\oplus u_i$.
    The challenger sends $(\ek_\PKE,\pp,r_1,...,r_n,\qdk_\PKE)$ to $\qA$.
    \item $\qA$ outputs $\cert$.
    \item If $\PKE.\Vrfy(\vk_\PKE,\cert)=\bot$, the output of the experiment is 0.
    Otherwise, the challenger generates $\ct_i\gets\PKE.\Enc(\ek_\PKE,0)$ for each $i\in[n]$.
    For each $i\in[n]$,
    \begin{itemize}
        \item if $x_i=0$, let
        \begin{align}
            y_i\coloneqq
            \begin{pmatrix}
                \PKE.\Enc(\ek_\PKE,s^{0}_i;\HPRG.\Eval(\pp,x,i)) \\ 
                \ct_i
            \end{pmatrix}, 
            \quad z_i\coloneqq\PRG(s^0_i).
        \end{align}
        \item if $x_i=1$, let
        \begin{align}
            y_i\coloneqq
            \begin{pmatrix}
                \ct_i \\
                \PKE.\Enc(\ek_\PKE,s^1_i;\HPRG.\Eval(\pp,x,i)) 
            \end{pmatrix}, 
            \quad z_i\coloneqq\PRG(s^1_i) \oplus r_i.
        \end{align}
    \end{itemize}
    The challenger sends $(y_1,...,y_n,z_1,...,z_n,\vk_\PKE)$ to $\qA$.
    \item $\qA$ outputs $X'$.
    If $X'=(x,s^{x_1}_1,...,s^{x_n}_n,\ct_1,...,\ct_n)$, the output of the experiment is 1.
    Otherwise, the output of the experiment is 0.
\end{enumerate}

\begin{claim}
For any adversary $\qA$,
\begin{align}
\label{eq:hyb0_vra}
    \Pr[1\gets\mathsf{Hyb}_0] = \Pr[1\gets\mathsf{Hyb}_1]
\end{align}
\end{claim}
\begin{proof}
    The claim follows because the distribution of $(x,s^{x_1}_1,...,s^{x_n}_n,r_1,...,r_n)$ in $\mathsf{Hyb}_1$ is identical to the distribution of $(x,s_1,...,s_n,r_1,...,r_n)$ in $\mathsf{Hyb}_0$, and reordering the sampling procedure as in $\mathsf{Hyb}_1$ does not affect the output of the experiment.
\end{proof}

\noindent
$\mathsf{Hyb}_2$: This is identical to $\mathsf{Hyb}_1$ except that $r_i$ is replaced with $\PRG(s_i^0)\oplus\PRG(s_i^1)$ for each $i\in[n]$. 
The experiment works as follows:
\begin{enumerate}
    \item The challenger generates $(\ek_\PKE,\qdk_\PKE,\vk_\PKE)\gets\PKE.\qKG(1^\secp)$, $\pp\gets\HPRG.\Setup(1^\secp)$ and $x\gets\bit^n$.
    For each $i\in[n]$, the challenger generates $s_i^0\gets\bit^{m}$ and $s_i^1\gets\bit^{m}$, and $r_i\coloneqq\PRG(s_i^0)\oplus\PRG(s_i^1)$.
    The challenger sends $(\ek_\PKE,\pp,r_1,...,r_n,\qdk_\PKE)$ to $\qA$.
    \item $\qA$ outputs $\cert$.
    \item If $\PKE.\Vrfy(\vk_\PKE,\cert)=\bot$, the output of the experiment is 0.
    Otherwise, the challenger generates $\ct_i\gets\PKE.\Enc(\ek_\PKE,0)$ for each $i\in[n]$.
    For each $i\in[n]$,
    \begin{itemize}
        \item if $x_i=0$, let
        \begin{align}
            y_i\coloneqq
            \begin{pmatrix}
                \PKE.\Enc(\ek_\PKE,s^0_i;\HPRG.\Eval(\pp,x,i)) \\ 
                \ct_i
            \end{pmatrix}, 
            \quad z_i\coloneqq\PRG(s^0_i).
        \end{align}
        \item if $x_i=1$, let
        \begin{align}
            y_i\coloneqq
            \begin{pmatrix}
                \ct_i \\
                \PKE.\Enc(\ek_\PKE,s^1_i;\HPRG.\Eval(\pp,x,i)) 
            \end{pmatrix}, 
            \quad z_i\coloneqq\PRG(s^1_i) \oplus r_i=\PRG(s_i^0).
        \end{align}
    \end{itemize}
    The challenger sends $(y_1,...,y_n,z_1,...,z_n,\vk_\PKE)$ to $\qA$.
    \item $\qA$ outputs $X'$.
    If $X'=(x,s^{x_i}_1,...,s^{x_i}_n,\ct_1,...,\ct_n)$, the output of the experiment is 1.
    Otherwise, the output of the experiment is 0.
\end{enumerate}
\begin{claim}
    For any QPT adversary $\qA$, 
    \begin{align}
    \label{eq:hyb1_vra}
        |\Pr[1\gets\mathsf{Hyb}_1]-\Pr[1\gets\mathsf{Hyb}_2]|\le\negl(\secp).
    \end{align}
\end{claim}
\begin{proof}
The claim follows from the security of $\PRG$.
For each $i\in\{0,...,n\}$, consider the experiment $G_i$, that is identical to $\mathsf{Hyb}_1$ except that for all $j\le i$, $r_j$ is replaced with $\PRG(s_j^0)\oplus\PRG(s_j^1)$.
Then, $G_0$ is identical to $\mathsf{Hyb}_1$ and $G_n$ is identical to $\mathsf{Hyb}_2$.
For the sake of contradiction , we assume that there exists a QPT adversary $\qA$ and a polynomial $p$ such that
\begin{align}
    |\Pr[1\gets\mathsf{Hyb}_1] - \Pr[1\gets\mathsf{Hyb}_2]| \ge \frac{1}{p(\secp)}
\end{align}
holds for infinitely many $\secp$.
Then, there exists $i^*\in[n]$ such that
\begin{align}
    |\Pr[1\gets G_{i^*-1}] - \Pr[1\gets G_{i^*}]| \ge \frac{1}{np(\secp)}
\end{align}
for infinitely many $\secp$.
By using $\qA$, we construct a QPT adversary $\qB$ that breaks the security of $\PRG$ as follows:
\begin{enumerate}
    \item $\qB$ takes $a$ as input, where $a=\PRG(s)$ for $s\gets\bit^{m}$ or $a\gets\bit^{3m}$.
    \item $\qB$ sample $j\gets[n]$.
    $\qB$ simulates $\mathsf{Hyb}_1$, where $\qB$ generates $(r_1,...,r_n)$ as follows:
    For each $i\in[n]$, $\qB$ generates $s_i^0\gets\bit^{m}$, $s_i^1\gets\bit^{m}$, and $u_i\gets\bit^{3m}$ and lets
    \begin{align}
        r_i \coloneqq
        \begin{cases}
            \PRG(s_i^0)\oplus \PRG(s_i^1) & \text{if } i<j \\ 
            a \oplus \PRG(s_i^{x_i}) & \text{if } i=j \\
            u_i\oplus \PRG(s_i^{x_i}) & \text{if } i>j.
        \end{cases}
    \end{align}
\end{enumerate}
Consider the case in which $j=i^*$, that occurs with probability $1/n$.
If $a=\PRG(s)$ for $s\gets\bit^{m}$, then $\qB$ simulates $G_{i^*}$.
If $a\gets\bit^{3m}$, then $\qB$ simulates $G_{i^*-1}$. 
Thus, 
\begin{align}
    &\left|\Pr_{s\gets\bit^{m}}[1\gets\qB(\PRG(s))] - \Pr_{a\gets\bit^{3m}}[1\gets\qB(a)] \right| \\
    &= \frac{1}{n} |\Pr[1\gets G_{i^*-1}] - \Pr[1\gets G_{i^*}] | \\
    &\ge \frac{1}{n^2p(\secp)}
\end{align}
for infinitely many $\secp$ and $\qB$ breaks the security of $\PRG$.
\end{proof}

\noindent
$\mathsf{Hyb}_3$: This is identical to $\mathsf{Hyb}_2$ except that $\ct_i$ is replaced with $\PKE.\Enc(\ek_\PKE,s_i^{x_i\oplus 1})$ for each $i\in[n]$. The experiment works as follows:
\begin{enumerate}
    \item The challenger generates $(\ek_\PKE,\qdk_\PKE,\vk_\PKE)\gets\PKE.\qKG(1^\secp)$, $\pp\gets\HPRG.\Setup(1^\secp)$ and $x\gets\bit^n$.
    For each $i\in[n]$, the challenger generates $s_i^0\gets\bit^{m}$, $s_i^1\gets\bit^{m}$, and $r_i\coloneqq\PRG(s_i^0)\oplus\PRG(s_i^1)$.
    The challenger sends $(\ek_\PKE,\pp,r_1,...,r_n,\qdk_\PKE)$ to $\qA$.
    \item $\qA$ outputs $\cert$.
    \item If $\PKE.\Vrfy(\vk_\PKE,\cert)=\bot$, the output of the experiment is 0.
    Otherwise, for each $i\in[n]$,
    \begin{itemize}
        \item if $x_i=0$, the challenger generates $\ct_i\gets\PKE.\Enc(\ek_\PKE,s_i^1)$ and lets
        \begin{align}
            y_i\coloneqq 
            \begin{pmatrix}
                \PKE.\Enc(\ek_\PKE,s_i^0;\HPRG.\Eval(\pp,x,i)) \\ 
                \ct_i
            \end{pmatrix}, 
            \quad z_i\coloneqq\PRG(s_i^0).
        \end{align}
        \item if $x_i=1$, the challenger generates $\ct_i\gets\PKE.\Enc(\ek_\PKE,s_i^0)$ and lets
        \begin{align}
            y_i\coloneqq 
            \begin{pmatrix}
                \ct_i \\
                \PKE.\Enc(\ek_\PKE,s_i^1;\HPRG.\Eval(\pp,x,i)) 
            \end{pmatrix}, 
            \quad z_i\coloneqq\PRG(s_i^0).
        \end{align}
    \end{itemize}
    The challenger sends $(y_1,...,y_n,z_1,...,z_n,\vk_\PKE)$ to $\qA$.
    \item $\qA$ outputs $X'$.
    If $X'=(x,s^{x_1}_1,...,s^{x_n}_n,\ct_1,...,\ct_n)$, the output of the experiment is 1.
    Otherwise, the output of the experiment is 0.
\end{enumerate}
\begin{claim}
    For any QPT adversary $\qA$,
    \begin{align}
    \label{eq:hyb2_vra}
        |\Pr[1\gets\mathsf{Hyb}_2] - \Pr[1\gets\mathsf{Hyb}_3]| \le \negl(\secp).
    \end{align}
\end{claim}
\begin{proof}
The claim follows from the IND-VRA security of $\mathsf{PKESKL}$.
For each $i\in\{0,...,n\}$, consider the experiment $H_i$, that is identical to $\mathsf{Hyb}_2$ except that for all $j\le i$, $\ct_j$ is replaced with $\PKE.\Enc(\ek_\PKE,s_i^{x_i\oplus 1})$.
Then, $H_0$ is identical to $\mathsf{Hyb}_2$ and $H_n$ is identical to $\mathsf{Hyb}_3$.
For the sake of contradiction , we assume that there exists a QPT adversary $\qA$ and a polynomial $p$ such that
\begin{align}
    |\Pr[1\gets\mathsf{Hyb}_2] - \Pr[1\gets\mathsf{Hyb}_3]| \ge \frac{1}{p(\secp)}
\end{align}
holds for infinitely many $\secp$.
Then, there exists $i^*\in[n]$ such that
\begin{align}
    |\Pr[1\gets H_{i^*-1}] - \Pr[1\gets H_{i^*}]| \ge \frac{1}{np(\secp)}
\end{align}
for infinitely many $\secp$.
By using $\qA$, we construct a QPT adversary $\qC$ that breaks the IND-VRA security of $\mathsf{PKESKL}$. 
$\qC$ behaves during the IND-VRA security game $\mathsf{Exp}^{\mathsf{IND-VRA}}_{\mathsf{PKESKL},\qC}(\secp,b)$ as follows:
\begin{enumerate}
    \item The challenger generates $(\ek_\PKE,\qdk_\PKE,\vk_\PKE)\gets\PKE.\qKG(1^\secp)$ and sends $(\ek_\PKE,\qdk_\PKE)$ to $\qC$.
    \item $\qC$ generates $\pp\gets\HPRG.\Setup(1^\secp)$, $x\gets\bit^n$, $s_i^0\gets\bit^{m}$, $s_i^1\gets\bit^{m}$, and $r_i=\PRG(s_i^0)\oplus\PRG(s_i^1)$. 
    $\qC$ runs $\cert\gets\qA(\ek_\PKE,\pp,r_1,...,r_n,\qdk_\PKE)$. 
    $\qC$ samples $j\gets[n]$ and sets $m_0\coloneqq 0$ and $m_1\coloneqq s_{j}^{x_j\oplus 1}$.
    $\qC$ sends $\cert$ and $(m_0,m_1)$ to the challenger.
    \item If $\PKE.\Vrfy(\vk_\PKE,\cert)=\bot$, then the output of the experiment is 0.
    Otherwise, the challenger generates $\ct^*_b\gets\PKE.\Enc(\ek_\PKE,m_b)$ and sends $\ct_b^*$ to $\qC$.
    \item $\qC$ simulates $\mathsf{Hyb}_2$, where $\qC$ generates $(\ct_1,...,\ct_n)$ as follows:
    \begin{align}
        \ct_i \coloneqq
        \begin{cases}
            \PKE.\Enc(\ek_\PKE,s_i^{x_i\oplus 1}) & \text{if } i<j \\ 
            \ct_b^* & \text{if } i=j \\
            \PKE.\Enc(\ek_\PKE,0) & \text{if } i>j.
        \end{cases}
    \end{align}
\end{enumerate}
Consider the case in which $j=i^*$, that occurs with probability $1/n$.
Then 
\begin{align}
    \Pr[1\gets H_{i^*-1}] &= \Pr[1\gets \mathsf{Exp}^{\mathsf{IND-VRA}}_{\mathsf{PKESKL},\qC}(\secp,0)] \\
    \Pr[1\gets H_{i^*}] &= \Pr[1\gets \mathsf{Exp}^{\mathsf{IND-VRA}}_{\mathsf{PKESKL},\qC}(\secp,1)].
\end{align}
Thus, 
\begin{align}
    &|\Pr [1\gets \mathsf{Exp}^{\mathsf{IND-VRA}}_{\mathsf{PKESKL},\qC}(\secp,0)] - \Pr[1\gets \mathsf{Exp}^{\mathsf{IND-VRA}}_{\mathsf{PKESKL},\qC}(\secp,1)] | \\
    &= \frac{1}{n} |\Pr[1\gets H_{i^*-1}] - \Pr[1\gets H_{i^*}] | \\
    &\ge \frac{1}{n^2p(\secp)}
\end{align}
for infinitely many $\secp$ and $\qC$ breaks the IND-VRA security of $\mathsf{PKESKL}$.
\end{proof}

\noindent
$\mathsf{Hyb}_4$: This is identical to $\mathsf{Hyb}_3$ except that for each $i\in[n]$, $\HPRG.\Eval(\pp,x,i)$ is replaced with a random string. The experiment works as follows:
\begin{enumerate}
    \item The challenger generates $(\ek_\PKE,\qdk_\PKE,\vk_\PKE)\gets\PKE.\qKG(1^\secp)$, $\pp\gets\HPRG.\Setup(1^\secp)$ and $x\gets\bit^n$.
    For $i\in[n]$, the challenger generates $s_i^0\gets\bit^{m}$, $s_i^1\gets\bit^{m}$, and $r_i\coloneqq\PRG(s_i^0)\oplus\PRG(s_i^1)$.
    The challenger sends $(\ek_\PKE,\pp,r_1,...,r_n,\qdk_\PKE)$ to $\qA$.
    \item $\qA$ outputs $\cert$.
    \item If $\PKE.\Vrfy(\vk_\PKE,\cert)=\bot$, the output of the experiment is 0.
    Otherwise, For $i\in[n]$, the challenger generates $v_i\gets\bit^{\ell}$.
    For each $i\in[n]$,
    \begin{itemize}
        \item if $x_i=0$, the challenger generates $\ct_i\gets\PKE.\Enc(\ek_\PKE,s_i^1)$ and lets
        \begin{align}
            y_i\coloneqq
            \begin{pmatrix}
                \PKE.\Enc(\ek_\PKE,s_i^0;v_i) \\ 
                \ct_i
            \end{pmatrix}, 
            \quad z_i\coloneqq\PRG(s_i^0).
        \end{align}
        \item if $x_i=1$, the challenger generates $\ct_i\gets\PKE.\Enc(\ek_\PKE,s_i^0)$ and lets
        \begin{align}
            y_i \coloneqq 
            \begin{pmatrix}
                \ct_i \\
                \PKE.\Enc(\ek_\PKE,s_i^1;v_i) 
            \end{pmatrix}, 
            \quad z_i \coloneqq \PRG(s_i^0).
        \end{align}
    \end{itemize}
    The challenger sends $(y_1,...,y_n,z_1,...,z_n,\vk_\PKE)$ to $\qA$.
    \item $\qA$ outputs $X'$.
    If $X'=(x,s^{x_1}_1,...,s^{x_n}_n,\ct_1,...,\ct_n)$, the output of the experiment is 1.
    Otherwise, the output of the experiment is 0.
\end{enumerate}
\begin{claim}
    For any QPT adversary $\qA$,
    \begin{align}
    \label{eq:hyb3_vra}
        |\Pr[1\gets\mathsf{Hyb}_3] - \Pr[1\gets\mathsf{Hyb}_4]|\le\negl(\secp).
    \end{align}
\end{claim}
\begin{proof}
The security follows from the security of hinting PRGs.
For the sake of contradiction, we assume that there exists a QPT adversary $\qA$ and a polynomial $p$ such that
\begin{align}
    |\Pr[1\gets\mathsf{Hyb}_3] - \Pr[1\gets\mathsf{Hyb}_4]|\ge\frac{1}{p(\secp)}
\end{align}
for infinitely many $\secp$.
By using $\qA$, we construct a QPT adversary $\qD$ that breaks the security of hinting PRGs as follows:
\begin{enumerate}
    \item $\qD$ takes $(\pp, a_0^\beta, \{a_{i,b}^\beta\}_{i\in[n],b\in\bit})$ as input, where $\pp\gets\HPRG.\Setup(1^\secp)$, $x\gets\bit^{n}$, $\beta\gets\bit$, $a_0^0\gets\bit^{\ell}$, $a_0^1\coloneqq\HPRG.\Eval(\pp,x,0)$, $a_{i,0}^0\gets\bit^{\ell}$, $a_{i,1}^0\gets\bit^{\ell}$, $a_{i,x_i}^1\coloneqq\HPRG.\Eval(\pp,x,i)$, and $a_{i,x_i\oplus 1}^1\gets\bit^{\ell}$ for each $i\in[n]$.
    \item $\qD$ simulates $\mathsf{Hyb}_3$, where $\qD$ generates $(y_1,...,y_n)$ as follows:
    For each $i\in[n]$, $\qD$ generates $s_i^0\gets\bit^{m}$, $s_i^1\gets\bit^{m}$, and
    \begin{align}
        y_i\coloneqq
        \begin{pmatrix}
            \PKE.\Enc(\ek_\PKE,s_i^0;a_{i,0}^\beta) \\
            \PKE.\Enc(\ek_\PKE,s_i^1;a_{i,1}^\beta)
        \end{pmatrix}.
    \end{align}
\end{enumerate}
Then, 
\begin{align}
    &\Pr[1\gets\mathsf{Hyb}_3] = \Pr[1\gets\cD(\pp, a_0^0, \{a_{i,b}^0\}_{i\in[n],b\in\bit})] \\
    &\Pr[1\gets\mathsf{Hyb}_4] = \Pr[1\gets\cD(\pp, a_0^1, \{a_{i,b}^1\}_{i\in[n],b\in\bit})] 
\end{align}
Therefore we have,
\begin{align}
    |\Pr[1\gets\cD(\pp, a_0^0, \{a_{i,b}^0\}_{i\in[n],b\in\bit})] - \Pr[1\gets\cD(\pp, a_0^1, \{a_{i,b}^1\}_{i\in[n],b\in\bit})] | \ge \frac{1}{\poly(\secp)}
\end{align}
for infinitely many $\secp$.
\end{proof}

In $\mathsf{Hyb}_4$, the distribution of $(y_1,...,y_n,z_1,...,z_n)$ is independent of $x$.
Thus, for any QPT adversary $\qA$, we have
\begin{align}
    \Pr[1\gets\mathsf{Hyb}_4] \le \negl(\secp).
\end{align}
By combining this inequality with \cref{eq:hyb0_vra,eq:hyb1_vra,eq:hyb2_vra,eq:hyb3_vra}, we have
\begin{align}
    \Pr[1\gets\mathsf{Hyb}_0] \le \negl(\secp)
\end{align}
and complete the proof.

\end{proof}

\section{Construction of IND-VRA Secure Robust PKE-SKL}
\label{sec:robustPKESKL}
As an application of TDF-SKL, we construct an IND-VRA secure robust PKE-SKL scheme from TDF-SKL with domain sampler via quantum Goldreich-Levin with quantum auxiliary input (\cref{lem:quantGL}).

We define robust PKE-SKL schemes as follows:
\begin{definition}[Robust PKE-SKL]
We say that a PKE-SKL scheme $(\mathpzc{KG},\Enc,\mathpzc{Dec},\mathpzc{Del},\Vrfy)$ is robust if for any $\ct$, $\TD(\qdk,\qdk')\le\negl(\secp)$, where $(\ek,\qdk,\vk)\gets\qKG(1^\secp)$, $(m',\qdk')\gets\qDec(\qdk,\ct)$.
\end{definition}

Our construction of IND-VRA secure robust PKE-SKL schemes is the following:
\paragraph{Construction.}
Let $\mathsf{TDFSKL}=(\mathsf{TDF.}\mathpzc{KG},\mathsf{TDF.}\Samp,\mathsf{TDF.}\Eval,\mathsf{TDF.}\mathpzc{Inv},\mathsf{TDF.}\mathpzc{Del},\mathsf{TDF.}\Vrfy)$ be a TDF-SKL with domain sampler for domain $\bit^{n(\secp)}$, where $n$ is some polynomial.
We construct an IND-VRA secure robust PKE-SKL scheme $\mathsf{PKESKL}=(\mathpzc{KG},\Enc,\mathpzc{Dec},\mathpzc{Del},\Vrfy)$ with message space $\cM=\bit$ as follows:
\begin{itemize}
    \item $\mathpzc{KG}(1^\secp)\to(\ek,\mathpzc{dk},\vk)$: 
    Run $(\ek_\TDF,\qtd_\TDF,\vk_\TDF)\gets\mathsf{TDF.}\mathpzc{KG}(1^\secp)$. 
    Return $\ek\coloneqq\ek_\TDF$, $\qdk\coloneqq(\ek_\TDF,\qtd_\TDF)$, and $\vk\coloneqq\vk_\TDF$.
    \item $\Enc(\ek,m)\to\ct$:
    On input an encryption key $\ek=\ek_\TDF$ and a message $m\in\bit$, generate $r\gets\bit^{n(\secp)}$, $x\gets\mathsf{TDF.}\Samp(\ek_\TDF)$, $y\coloneqq\mathsf{TDF.}\Eval(\ek_\TDF,x)$, and $b\coloneqq(x\cdot r)\oplus m$.
    Return $\ct\coloneqq(y,r,b)$.
    \item $\mathpzc{Dec}(\mathpzc{dk},\ct)\to(m',\mathpzc{dk}')$:
    On input a decryption key $\mathpzc{dk}=(\ek_\TDF,\qtd_\TDF)$ and a ciphertext $\ct=(y,r,b)$, run $(x',\qtd_\TDF')\gets\mathsf{TDF.}\mathpzc{Inv}(\qtd_\TDF,y)$.
    If $\mathsf{TDF.}\Eval(\ek_\TDF,x')=y$, then let $m'\coloneqq(x'\cdot r)\oplus b$, $\mathpzc{dk}'\coloneqq(\ek_\TDF,\qtd_\TDF')$.
    Otherwise, let  $m'\coloneqq\bot$, $\mathpzc{dk}'\coloneqq(\ek_\TDF,\qtd_\TDF')$.
    Output $(m',\mathpzc{dk}')$.
    \item $\mathpzc{Del}(\mathpzc{dk})\to\cert$: On input $\mathpzc{dk}=(\ek_\TDF,\qtd_\TDF)$, run $\cert\gets\mathsf{TDF.}\mathpzc{Del}(\qtd_\TDF)$.
    \item $\Vrfy(\vk,\cert)\to\top/\bot$: On input $\vk=\vk_\TDF$ and $\cert$, run $\top/\bot\gets\TDF.\Vrfy(\vk_\TDF,\cert)$.
\end{itemize}

The decryption correctness and the deletion verification correctness of $\mathsf{PKESKL}$ immediately follows from the almost-all-keys inversion correctness and the deletion verification correctness of $\mathsf{TDFSKL}$.
Thus, it suffices to show the IND-VRA security and the robustness of $\mathsf{PKESKL}$.

\begin{lemma}
\label{lem:robustPKESKL_security}
    $\mathsf{PKESKL}$ is IND-VRA secure.
\end{lemma}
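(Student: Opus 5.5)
We reduce to the OW-VRA security of $\mathsf{TDFSKL}$, in the style of the AKNYY argument that one-wayness under key leasing implies indistinguishability. We combine this with the quantum Goldreich--Levin lemma with quantum auxiliary input (\cref{lem:quantGL}).

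Suppose a QPT $\qA$ wins the IND-VRA game with probability at least $\frac12+\frac{1}{p(\secp)}$ for infinitely many $\secp$, with $b$ chosen at random. Split $\qA$ into two parts. The first part, $\qA_0$, takes $(\ek,\qdk)$ and outputs a certificate $\cert$ and a quantum state $\mathpzc{aux}$. The second part, $\qA_1$, takes $(\mathpzc{aux},\vk,y,r,b')$ and outputs a guess.

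If $\Vrfy$ rejects, the experiment outputs $0$. So on that event a random-guess baseline gives exactly $\frac12$. Conditioning on $\Vrfy(\vk,\cert)=\top$ therefore shows that the accepted branch carries the entire $1/p$ advantage. Concretely,
\begin{align}
\Pr[\Vrfy=\top \land \text{guess correct}] \ge \tfrac12\Pr[\Vrfy=\top]+\tfrac1p .
\end{align}

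Next we apply an averaging argument over the tuple $(\ek,\qdk,\vk,\cert,\mathpzc{aux},x,y)$. Call a tuple good when verification accepts and, over $r$ and $m$, the adversary $\qA_1$ predicts $m$ from $(y,r,(x\cdot r)\oplus m)$ with advantage at least $\frac{1}{2p}$. The averaging argument shows that good tuples occur with probability at least $\frac{1}{2p}$.

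For a good tuple, define a predictor $\qB^*(\mathpzc{aux},\vk,y,r)$. It samples a bit $b$, runs $\qA_1$ on $(\mathpzc{aux},\vk,y,r,b)$ to get $\coin'$, and outputs $b\oplus\coin'$. Since $b$ is uniform, feeding $b$ plays the role of $(x\cdot r)\oplus m$ for a uniform $m$. Hence $\qB^*$ outputs $x\cdot r$ with probability at least $\frac12+\frac{1}{2p}$.

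The OW-VRA adversary $\qB$ then works as follows. It runs $\qA_0$ on the TDF keys and forwards $\cert$. It receives $(y,\vk)$, with $x\gets\TDF.\Samp$ exactly matching the encryption distribution. It then runs $\mathpzc{Ext}([\qB^*],(\mathpzc{aux},\vk,y))$. By \cref{lem:quantGL}, for good tuples it recovers $x$ with probability at least $4\cdot(1/(2p))^2=1/p^2$. Overall $\qB$ succeeds with probability at least $1/(2p^3)$, which contradicts OW-VRA.

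The step needing the most care is the averaging and conditioning on $\Vrfy=\top$. The bound must be stated jointly with the acceptance event, because OW-VRA only counts wins after acceptance. We also have to check that $\mathpzc{aux}$, $\vk$ and $y$ are fixed before $r$ is sampled, so that Goldreich--Levin applies with quantum auxiliary input. This ordering holds because $r$ is chosen by $\qB^*$ itself.
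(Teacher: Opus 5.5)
Your proposal is correct and is essentially the paper's proof. Both split $\qA$ into $(\qA_0,\qA_1)$, condition on $\Vrfy=\top$, average to get a good set of mass at least $1/(2p)$, turn $\qA_1$ into the predictor $\qB^*$ by feeding it a uniform bit, and apply \cref{lem:quantGL} to recover $x$ with probability $1/p^2$, giving a OW-VRA adversary with success at least $1/(2p^3)$. One cosmetic remark: $r$ is supplied to $\qB^*$ by $\mathpzc{Ext}$ rather than chosen by $\qB^*$ itself, but the property you actually need does hold, namely that $(\mathpzc{aux},\vk,y)$ is fixed independently of the uniform $r$.
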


The proof of this lemma is essentially the same as that of Lemma 3.12 in \cite{EC:AKNYY23}.
\begin{proof}[Proof of \cref{lem:robustPKESKL_security}]
\label{sec:robustPKESKL_security}
For the sake of contradiction, assume that $\mathsf{PKESKL}$ is not IND-VRA secure.
Then there exist a QPT adversary $\qA$ and a polynomial $p$ such that
\begin{align}
    \Pr_{b\gets\bit} [b\gets\mathsf{Exp}^{\mathsf{IND-VRA}}_{\mathsf{PKESKL},\qA}(\secp,b)] \ge \frac{1}{2} + \frac{1}{p(\secp)}
\end{align}
for infinitely many $\secp\in\N$.
Let $\Lambda$ be the set of such $\secp$.
We divide $\qA$ into the following two algorithms $\qA_0$ and $\qA_1$:
\begin{itemize}
    \item $\qA_0$: It takes $(\ek,\qdk)$ as input and outputs a certificate $\cert$ and quantum auxiliary information $\mathpzc{aux}$.
    \item $\qA_1$: It takes $(y,r,b)$, $\vk$, and $\mathpzc{aux}$ as input and outputs $b'$. 
\end{itemize}
Then, for all $\secp\in\Lambda$,
\begin{align}
    &\frac{1}{2}+\frac{1}{p(\secp)} \\
    &\le \Pr_{b\gets\bit} [b\gets\mathsf{Exp}^{\mathsf{IND-VRA}}_{\mathsf{PKESKL},\qA}(\secp,b)] \\ 
    &= \Pr_{b\gets\bit} [b\gets\mathsf{Exp}^{\mathsf{IND-VRA}}_{\mathsf{PKESKL},\qA}(\secp,b) \mid \Vrfy(\vk,\cert)=\top] \Pr[\Vrfy(\vk,\cert)=\top] \\
    &\quad + \Pr_{b\gets\bit} [b\gets\mathsf{Exp}^{\mathsf{IND-VRA}}_{\mathsf{PKESKL},\qA}(\secp,b) \mid \Vrfy(\vk,\cert)=\bot] (1-\Pr[\Vrfy(\vk,\cert)=\top]) \\
    &= \Pr_{b\gets\bit} [b\gets\mathsf{Exp}^{\mathsf{IND-VRA}}_{\mathsf{PKESKL},\qA}(\secp,b) \mid \Vrfy(\vk,\cert)=\top] \Pr[\Vrfy(\vk,\cert)=\top] \\
    &\quad + \frac{1}{2}(1-\Pr[\Vrfy(\vk,\cert)=\top]). \\
\end{align}
Thus, for all $\secp\in\Lambda$,
\begin{align}
\label{eq:condition}
    \Pr_{b\gets\bit} [b\gets\mathsf{Exp}^{\mathsf{IND-VRA}}_{\mathsf{PKESKL},\qA}(\secp,b) \mid \Vrfy(\vk,\cert)=\top] \ge \frac{1}{2} + \frac{1}{p(\secp)\Pr[\Vrfy(\vk,\cert)=\top]}.
\end{align}
We define the following set $G$:
\begin{align}
    G\coloneqq \left\{ (\mathpzc{aux},\vk,x,y) : 
    \Pr_{\substack{r\gets\bit^{n(\secp)} \\ b\gets\bit \\ b'\coloneqq (x\cdot r)\oplus b}}[b\gets\qA_1(\mathpzc{aux},\vk,y,r,b')] \ge \frac{1}{2} + \frac{1}{2p(\secp)}
    \right\}.
\end{align}
By \cref{eq:condition} and an averaging argument,
\begin{align}
    \Pr [(\mathpzc{aux},\vk,x,y)\in G \mid \Vrfy(\vk,\cert)=\top] \ge \frac{1}{2p(\secp)\Pr[\Vrfy(\vk,\cert)=\top]}
\end{align}
holds for all $\secp\in\Lambda$, where $(\ek,\qdk,\vk)\gets\qKG(1^\secp)$, $(\cert,\mathpzc{aux})\gets\qA_0(\ek,\qdk)$, $x\gets\TDF.\Samp(\ek)$, and $y\coloneqq\TDF.\Eval(\ek,x)$.
Thus,
\begin{align}
\label{eq:vrfy_and_succ}
    \Pr \left[ \Vrfy(\vk,\cert)=\top \land \Pr_{\substack{r\gets\bit^{n(\secp)} \\ b\gets\bit \\ b'\coloneqq (x\cdot r)\oplus b}}[b\gets\qA_1(\mathpzc{aux},\vk,y,r,b')] \ge \frac{1}{2} + \frac{1}{2p(\secp)} \right] \ge \frac{1}{2p(\secp)}
\end{align}
for all $\secp\in\Lambda$, where the outer probability is taken over $(\ek,\qdk,\vk)\gets\qKG(1^\secp)$, $(\cert,\mathpzc{aux})\gets\qA_0(\ek,\qdk)$, $x\gets\TDF.\Samp(\ek)$, and $y\coloneqq\TDF.\Eval(\ek,x)$.


By using $\qA$, we construct a QPT adversary $\qB$ that breaks the OW-VRA security of $\mathsf{TDFSKL}$.
Let $\mathsf{Exp}^{\mathsf{OW-VRA}}_{\mathsf{TDFSKL},\qB}(\secp)$ be the OW-VRA security game between the challenger and the adversary $\qB$ that works as follows:
\begin{enumerate}
    \item The challenger runs $(\ek_\TDF,\qtd_\TDF,\vk_\TDF)\gets\TDF.\qKG(1^\secp)$ and sends $(\ek_\TDF,\qtd_\TDF)$ to $\qB$.
    \item $\qB$ runs $(\cert,\mathpzc{aux})\gets\qA_0(\ek_\TDF,\qtd_\TDF)$ and sends $\cert$ to the challenger.
    \item If $\TDF.\Vrfy(\vk_\TDF,\cert)=\bot$, then the output of the experiment is 0.
    Otherwise, the challenger generates $x\gets\TDF.\Samp(\ek_\TDF)$, $y\coloneqq \TDF.\Eval(\ek_\TDF,x)$ and sends $(y,\vk_\TDF)$ to $\qB$.
    \item $\qB$ does the following:
    \begin{enumerate}
        \item Generate $r\gets\bit^{n(\secp)}$.
        \item Let $\qB^*$ be the algorithm in \cref{fig:algo_B*}.
        \item Run $x'\gets\mathpzc{Ext}([\qB^*],\mathpzc{aux},\vk_\TDF,y)$.
    \end{enumerate}
    \item The challenger outputs 1 if $x'=x$.
\end{enumerate}

\begin{figure}[H]
\centering
\fbox{
        \begin{minipage}{0.6\columnwidth}
            \begin{center} \textbf{Algorithm $\qB^*$} \end{center}
            \textbf{Input:} $\mathpzc{aux}$, $\vk_\TDF$, $y$, $r$.
            \begin{enumerate}
                \item Sample $b\gets\bit$.
                \item Run $\coin'\gets\qA_1(\mathpzc{aux},\vk_\TDF,y,r,b)$.
                \item Output $b\oplus \coin'$.
            \end{enumerate}
        \end{minipage}
}
\caption{The description of the algorithm $\qB^*$}
\label{fig:algo_B*}
\end{figure}

To show that $\qB$ outputs $x$ with high probability, we first observe that $\qB^*$ outputs $x\cdot r$ with high probability.
In fact, by \cref{eq:vrfy_and_succ},
\begin{align}
    & \Pr \left[ \TDF.\Vrfy(\vk_\TDF,\cert)=\top \land \Pr_{r\gets\bit^{n(\secp)}}\left[x\cdot r\gets\qB^*(\mathpzc{aux},\vk_\TDF,y,r) \right] \ge \frac{1}{2}+\frac{1}{2p(\secp)} \right] \\
    &= \Pr \left[ \TDF.\Vrfy(\vk_\TDF,\cert)=\top \land \Pr_{\substack{r\gets\bit^{n(\secp)} \\ b\gets\bit \\ b'\coloneqq (x\cdot r)\oplus b}}[ b\gets\qA_1(\mathpzc{aux},\vk_\TDF,y,r,b')] \ge \frac{1}{2} + \frac{1}{2p(\secp)} \right] \\
    &\ge \frac{1}{2p(\secp)} 
\end{align}
holds for all $\secp\in\Lambda$, where the outer probability is taken over $(\ek_\TDF,\qtd_\TDF,\vk_\TDF)\gets\TDF.\qKG(1^\secp)$, $(\cert,\mathpzc{aux})\gets\qA_0(\ek_\TDF,\qtd_\TDF)$, $x\gets\TDF.\Samp(\ek_\TDF)$, and $y\coloneqq\TDF.\Eval(\ek_\TDF,x)$.
By \cref{lem:quantGL},
\begin{align}
    \Pr \left[ \TDF.\Vrfy(\vk_\TDF,\cert)=\top \land \Pr[ x\gets\mathpzc{Ext}([\qB^*],\mathpzc{aux},\vk_\TDF,y) ] \ge \frac{1}{p(\secp)^2} \right] \ge \frac{1}{2p(\secp)}
\end{align}
for all $\secp\in\Lambda$, where $(\ek_\TDF,\qtd_\TDF,\vk_\TDF)\gets\TDF.\qKG(1^\secp)$, $(\cert,\mathpzc{aux})\gets\qA_0(\ek_\TDF,\qtd_\TDF)$, $x\gets\TDF.\Samp(\ek_\TDF)$, and $y\coloneqq\TDF.\Eval(\ek_\TDF,x)$.
Then,
\begin{align}
    \Pr \left[ \TDF.\Vrfy(\vk_\TDF,\cert)=\top \land x\gets\mathpzc{Ext}([\qB^*],\mathpzc{aux},\vk_\TDF,y) \right] \ge \frac{1}{2p(\secp)^3}
\end{align}
for all $\secp\in\Lambda$, where $(\ek_\TDF,\qtd_\TDF,\vk_\TDF)\gets\TDF.\qKG(1^\secp)$, $(\cert,\mathpzc{aux})\gets\qA_0(\ek_\TDF,\qtd_\TDF)$, $x\gets\TDF.\Samp(\ek_\TDF)$, and $y\coloneqq\TDF.\Eval(\ek_\TDF,x)$.
Therefore $\qB$ breaks the OW-VRA security and we complete the proof. 
\end{proof}

\begin{lemma}
\label{lem:robustPKESKL_robust}
     $\mathsf{PKESKL}$ is robust.
\end{lemma}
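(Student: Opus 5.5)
We will follow the earlier proof of \cref{lem:robustPKESKL_robust}: the decryption measurement is almost deterministic on every ciphertext, so the gentle measurement lemma (\cref{lem:gentle}) gives robustness. We work in the coherent implementation of $\qDec$: run $\mathsf{TDF.}\qInv$ as a unitary $U$, compute the result register ($(x'\cdot r)\oplus b$ if $\mathsf{TDF.}\Eval(\ek_\TDF,x')=y$, and $\bot$ otherwise), copy it to an ancilla, uncompute $U$, and measure the ancilla. Unparsable ciphertexts return $\qdk$ unchanged, so they are trivial.

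First, by the almost-all-keys inversion correctness of $\mathsf{TDFSKL}$, we fix a good key tuple $(\ek_\TDF,\qtd_\TDF,\vk_\TDF)$, which occurs with probability $1-\negl(\secp)$. For such a key, every $x\in\bit^{n(\secp)}$ satisfies that $\mathsf{TDF.}\qInv(\qtd_\TDF,\mathsf{TDF.}\Eval(\ek_\TDF,x))$ outputs $x$ with probability $1-\negl(\secp)$. We then fix an arbitrary parsable ciphertext $\ct=(y,r,b)$ and split into two cases.

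\emph{Case (A): $y$ has a preimage $x$.} Here $\mathsf{TDF.}\Eval(\ek_\TDF,\cdot)$ is deterministic, and every $x'$ passing the check satisfies $\mathsf{TDF.}\Eval(\ek_\TDF,x')=y$. For the construction in \cref{sec:TDFSKL_DS}, the evaluation is injective on good keys: the components $x_i$, $s_i$, $\ct_i$ are all recovered from $y$ when $(r_1,\dots,r_n)\in G_{2,\secp}$ and decryption is correct. So $x$ is the unique passing preimage. The result register therefore equals $a_*:=(x\cdot r)\oplus b$ with probability $1-\negl(\secp)$. Applying \cref{lem:gentle} to the projector onto $a_*$ in $\tau=U(\qtd_\TDF\otimes|0\rangle\langle0|)U^\dagger$ gives $\|\tau-\sum_a\Pi_a\tau\Pi_a\|_1\le\negl(\secp)$. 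Since the final uncomputation $U^\dagger$ is unitary and trace distance is contractive, $\TD(\qtd_\TDF,\qtd'_\TDF)\le\negl(\secp)$.

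\emph{Case (B): $y$ has no preimage.} Every branch of the superposition fails the check, so the result register is $\bot$ with certainty. The copy then acts as the identity on the ancilla, and uncomputation restores $\qtd_\TDF$ exactly, giving trace distance $0$.

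The main obstacle is the uniqueness claim in case (A) when $\mathsf{TDF.}\Eval$ is not injective. If two preimages $x\neq\hat x$ mapped to the same $y$, inversion correctness would be contradictory for them anyway. So we will argue that for good keys, inversion correctness for all $x$ forces injectivity, since one output distribution cannot be concentrated on two different values. This makes the result deterministic and removes the issue; everything else is a direct application of \cref{lem:gentle}.
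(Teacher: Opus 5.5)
Your proposal is correct and matches the paper's proof: you fix a key that is good for almost-all-keys inversion correctness, apply the gentle measurement lemma to the projector $\Pi_{a_*}$ in the coherent computation when $y$ has a preimage, and note that the outcome is $\bot$ with certainty when it has none. The one difference is your injectivity detour in case (A), which is valid but unnecessary. The paper only observes that the result register equals $a_*=(x\cdot r)\oplus b$ whenever the inverter outputs $x$, which happens with probability $1-\negl(\secp)$, so any other passing preimage could carry at most negligible weight.
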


\begin{proof}[Proof of \cref{lem:robustPKESKL_robust}]
We consider the following two cases.

\begin{itemize}
    \item[(A)] There exists $x\in\bit^{n(\secp)}$ such that $y=\TDF.\Eval(\ek_\TDF,x)$:
By almost-all-keys inversion correctness of TDF-SKL, 
\begin{align}
\label{eq:good_y}
    \Pr \left[x'=x :
    \begin{gathered}
        (\ek_\TDF,\qtd_\TDF,\vk_\TDF)\gets\TDF.\qKG(1^\secp) \\ 
        (x',\qtd'_\TDF)\gets\TDF.\qInv(\qtd_\TDF,y)
    \end{gathered}\right] \ge 1-\negl(\secp).
\end{align}

Let $X$ be a POVM element that acts on $\qtd_\TDF$ and corresponds to the following measurement result:
\begin{enumerate}
    \item Run $(x',\qtd'_\TDF)\gets\TDF.\qInv(\qtd_\TDF,y)$.
    \item $x'=x$.
\end{enumerate}
By the gentle measurement lemma (\cref{lem:gentle}) and \cref{eq:good_y}, we have 
\begin{align}
    \TD(\qdk,\qdk')=\TD(\qtd_\TDF,\qtd_\TDF')\le\negl(\secp).
\end{align}
\item[(B)] For any $x\in\bit^{n(\secp)}$, $y\neq\TDF.\Eval(\ek_\TDF,x)$:
By the definition of $\qDec$, we have
\begin{align}
\label{eq:bad_y}
    \Pr \left[x'=\bot :
    \begin{gathered}
        (\ek_\TDF,\qtd_\TDF,\vk_\TDF)\gets\TDF.\qKG(1^\secp) \\ 
        (x',\qtd'_\TDF)\gets\TDF.\qInv(\qtd_\TDF,y)
    \end{gathered}\right] =1.
\end{align}
Then the gentle measurement lemma (\cref{lem:gentle}) and \cref{eq:bad_y} imply 
\begin{align}
    \TD(\qdk,\qdk')=\TD(\qtd_\TDF,\qtd'_\TDF)=0.
\end{align}
\end{itemize}
Therefore, in both cases, we have $\TD(\qdk,\qdk')\le\negl(\secp)$ and complete the proof.
\end{proof}

\section{Proof of \cref{lem:PTDF_OW}}
\label{sec:Proof_PTDF}
We consider the following sequence of hybrids:

\noindent
$\mathsf{Hyb}_0$: This is identical to the original security game, except that the challenge images $y^*_1$ and $y^*_2$ are defined as the outputs of the original circuit $C$, rather than those of the obfuscated circuit $\hat{C}$. 
\begin{enumerate}
    \item The challenger generates $f\gets\OWF.\Gen(1^\secp)$, $k\gets\PPRF.\Gen(1^\secp)$ and $\hat{C}\gets iO(1^\secp,C)$.
    \item The challenger generates $x^*_1\gets\bit^\secp$, $t^*_1\coloneqq f(x^*_1)$, $y^*_1\coloneqq C(x^*_1)$, $x^*_2\gets\bit^\secp$, $t^*_2\coloneqq f(x^*_2)$, $y^*_2\coloneqq C(x^*_2)$, and $k_{t^*_1,t^*_2}\gets\PPRF.\Puncture(k,(t^*_1,t^*_2))$.
    Let $\td^*\coloneqq(f,k_{t^*_1,t^*_2},\{t^*_1,t^*_2\},y^*_1,y^*_2)$.
    \item $(x'_1,x'_2)\gets\qA(\hat{C},\td^*,y^*_1,y^*_2)$.
    \item Output 1 if $x'_1=x^*_1$ or $x'_2=x^*_2$.
\end{enumerate}
By the perfect correctness of $iO$, the output distributions in the original security game and $\mathsf{Hyb}_0$ are identical.
Hence, our goal is to show that for any QPT adversary $\qA$,
\begin{align}
    \Pr[1\gets\mathsf{Hyb}_0] \le \negl(\secp).
\end{align}

\noindent
$\mathsf{Hyb}_1$: This is identical to $\mathsf{Hyb}_0$ except that $\qA$ obtains the obfuscation of the circuit $C^*$ (\cref{fig:circuit_TDF_punc}) instead of that of $C$.
\begin{enumerate}
    \item The challenger generates $f\gets\OWF.\Gen(1^\secp)$, $k\gets\PPRF.\Gen(1^\secp)$ and $\hat{C}\gets iO(1^\secp,C)$.  
    \item The challenger generates $x^*_1\gets\bit^\secp$, $t^*_1\coloneqq f(x^*_1)$, $y^*_1\coloneqq C(x^*_1)$, $x^*_2\gets\bit^\secp$, $t^*_2\coloneqq f(x^*_2)$, $y^*_2\coloneqq C(x^*_2)$, $k_{t^*_1,t^*_2}\gets\PPRF.\Puncture(k,(t^*_1,t^*_2))$, and $\hat{C^*}\gets iO(1^\secp,C^*)$.
    Let $\td^*\coloneqq(f,k_{t^*_1,t^*_2},\{t^*_1,t^*_2\},y^*_1,y^*_2)$.
    \item $(x'_1,x'_2)\gets\qA(\hat{C^*},\td^*,y^*_1,y^*_2)$.
    \item Output 1 if $x'_1=x^*_1$ or $x'_2=x^*_2$.
\end{enumerate}

\begin{claim}
For any QPT adversary $\qA$, 
    \begin{align}
        |\Pr[1\gets\mathsf{Hyb}_0]-\Pr[1\gets\mathsf{Hyb}_1]|\le\negl(\secp).
    \end{align}
\end{claim}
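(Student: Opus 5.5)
We reduce the claim to the security of $iO$ by building a sampler and distinguisher pair that outputs the circuits $(C,C^*)$ together with the auxiliary information the adversary needs. The sampler $\qSamp$ generates $f\gets\OWF.\Gen(1^\secp)$ and $k\gets\PPRF.\Gen(1^\secp)$. It then samples $x^*_1,x^*_2\gets\bit^\secp$ and sets $t^*_j\coloneqq f(x^*_j)$ and $y^*_j\coloneqq C(x^*_j)$ for $j\in\{1,2\}$. It computes $k_{t^*_1,t^*_2}\gets\PPRF.\Puncture(k,(t^*_1,t^*_2))$ and forms $\td^*\coloneqq(f,k_{t^*_1,t^*_2},\{t^*_1,t^*_2\},y^*_1,y^*_2)$. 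Finally it outputs $C_0\coloneqq C$, $C_1\coloneqq C^*$ (padded to equal size), and $\qaux\coloneqq(\td^*,x^*_1,x^*_2)$. The distinguisher receives $\iO(C_b)$ and $\qaux$. It runs $\qA(\iO(C_b),\td^*,y^*_1,y^*_2)$ and outputs $1$ iff $x'_1=x^*_1$ or $x'_2=x^*_2$. When $b=0$ the distinguisher's output is distributed exactly as $\mathsf{Hyb}_0$, and when $b=1$ exactly as $\mathsf{Hyb}_1$. The view matches because in both hybrids $y^*_j$ is defined via $C$ itself.

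The key step is verifying the functional-equivalence precondition of $iO$ security, namely that $C$ and $C^*$ agree on every input with overwhelming probability over $\qSamp$. Condition on $f$ being injective, which holds with overwhelming probability by the injectivity property of the injective OWF.

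First take any $x$ with $f(x)=t^*_1$. Injectivity forces $x=x^*_1$, so $C(x)=C(x^*_1)=y^*_1$, which is exactly what $C^*$ outputs. The same argument handles $t=t^*_2$. This also covers the degenerate case $t^*_1=t^*_2$, where $x^*_1=x^*_2$ and $y^*_1=y^*_2$.

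Now take any $x$ with $f(x)\notin\{t^*_1,t^*_2\}$. Punctured correctness of $\PPRF$ gives $\PPRF.\Eval(k_{t^*_1,t^*_2},t)=\PPRF.\Eval(k,t)$, so the two circuits output the same pair $(t,\PPRF.\Eval(k,t)\oplus x)$. Hence $\Pr[\forall x,\,C(x)=C^*(x)]\ge 1-\negl(\secp)$.

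With this precondition in place, $iO$ security gives $|\Pr[1\gets\mathsf{Hyb}_0]-\Pr[1\gets\mathsf{Hyb}_1]|\le\negl(\secp)$. Two side checks complete the argument. The size constraint $|C_0|=|C_1|$ is met by padding $C$ to the size of $C^*$, which is possible because both sizes are fixed polynomials. The sampler is QPT and the auxiliary information is classical, so the $iO$ definition in the preliminaries applies directly. The only delicate point is the injectivity-based agreement on the preimages of $t^*_j$; everything else is routine.
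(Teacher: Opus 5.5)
Your proposal is correct and follows essentially the same approach as the paper: a reduction to $iO$ security with challenge pair $(C,C^*)$, simulating $\td^*$ and $y^*_j=C(x^*_j)$ and checking whether $x'_1=x^*_1$ or $x'_2=x^*_2$. Your functional-equivalence check is more complete than the paper's, which cites only the injectivity of $f$. You also invoke punctured correctness of $\PPRF$ for $t\notin\{t^*_1,t^*_2\}$ and note the padding needed for $|C_0|=|C_1|$.
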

\begin{proof}
We show this claim by using the security of $iO$.
Let us consider the QPT adversary $\qB$ against iO that behaves as follows:
\begin{enumerate}
    \item $\qB$ generates $f\gets\OWF.\Gen(1^\secp)$, $k\gets\PPRF.\Gen(1^\secp)$, $x^*_1\gets\bit^\secp$, $t^*_1\coloneqq f(x^*_1)$, $y^*_1\coloneqq C(x^*_1)$, $x^*_2\gets\bit^\secp$, $t^*_2\coloneqq f(x^*_2)$, $y^*_2\coloneqq C(x^*_2)$, and $k_{t^*_1,t^*_2}\gets\PPRF.\Puncture(k,(t^*_1,t^*_2))$.
    It sets $\td^*\coloneqq(f,k_{t^*_1,t^*_2},\{t^*_1,t^*_2\},y^*_1,y^*_2)$.
    $\qB$ sends the circuits $(C,C^*)$ to the challenger. 
    \item The challenger samples $b\gets\bit$ and runs $\hat{C}_b\gets iO(1^\secp,C_b)$, where $C_0\coloneqq C$ and $C_1\coloneqq C^*$.
    The challenger sends $\hat{C}_b$ to $\qB$.
    \item $\qB$ runs $(x'_1,x'_2)\gets\qA(\hat{C}_b,\td^*,y^*_1,y^*_2)$.
    $\qB$ outputs 1 if $x'_1=x^*_1$ or $x'_2=x^*_2$.
\end{enumerate}
Then, 
\begin{align}
    &\Pr[1\gets\mathsf{Hyb}_0] = \Pr[1\gets\qB\mid b=0] \\
    &\Pr[1\gets\mathsf{Hyb}_1] = \Pr[1\gets\qB\mid b=1].
\end{align}
Note that circuits $C$ and $C^*$ have the same functionality with overwhelming probability over the choice of $f\gets\OWF.\Gen(1^\secp)$ because of the injectivity of $f$.
Thus, by the security of iO, we have
\begin{align}
    |\Pr[1\gets\mathsf{Hyb}_0] - \Pr[1\gets\mathsf{Hyb}_1]| \le \negl(\secp).
\end{align}
\end{proof}

\noindent
$\mathsf{Hyb}_2$: This is identical to $\mathsf{Hyb}_1$ except that $y^*_1$ and $y^*_2$ are replaced with $(t^*_1,r_1\oplus x^*_1)$ and $(t^*_2,r_2\oplus x^*_2)$ respectively, where $r_1$ and $r_2$ are random strings.
\begin{enumerate}
    \item The challenger generates $f\gets\OWF.\Gen(1^\secp)$, $k\gets\PPRF.\Gen(1^\secp)$.
    \item The challenger generates $r_1\gets\bit^\secp$, $r_2\gets\bit^\secp$, $x^*_1\gets\bit^\secp$, $t^*_1\coloneqq f(x^*_1)$, $y^*_1\coloneqq (t^*_1,r_1\oplus x^*_1)$, $x^*_2\gets\bit^\secp$, $t^*_2\coloneqq f(x^*_2)$, $y^*_2\coloneqq (t^*_2,r_2\oplus x^*_2)$, $k_{t^*_1,t^*_2}\gets\PPRF.\Puncture(k,(t^*_1,t^*_2))$ and $\hat{C^*}\gets iO(1^\secp,C^*)$.
    Let $\td^*\coloneqq(f,k_{t^*_1,t^*_2},\{t^*_1,t^*_2\},y^*_1,y^*_2)$.
    \item $(x'_1,x'_2)\gets\qA(\hat{C^*},\td^*,y^*_1,y^*_2)$.
    \item Output 1 if $x'_1=x^*_1$ or $x'_2=x^*_2$.
\end{enumerate}
\begin{claim}
For any QPT adversary $\qA$,
    \begin{align}
        |\Pr[1\gets\mathsf{Hyb}_1]-\Pr[1\gets\mathsf{Hyb}_2]|\le\negl(\secp).
    \end{align}
\end{claim}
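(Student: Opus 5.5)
The plan is a direct reduction to the pseudorandomness at punctured points of $\PPRF$, with the punctured set $S=\{t^*_1,t^*_2\}$. The point is that in $\mathsf{Hyb}_1$ the real PRF key $k$ appears in only two places. It appears through the punctured key $k_{t^*_1,t^*_2}$, which is embedded in both $C^*$ and $\td^*$. It also appears through the two values $\PPRF.\Eval(k,t^*_1)$ and $\PPRF.\Eval(k,t^*_2)$, which enter the challenge images $y^*_j=(t^*_j,\PPRF.\Eval(k,t^*_j)\oplus x^*_j)$. The circuit $C^*$ uses only $k_{t^*_1,t^*_2}$ together with the hardwired $y^*_1,y^*_2$, so a reduction that receives $(k_S,a_1,a_2)$ can simulate the entire view.

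The reduction $\qD$ proceeds in four steps.
\begin{enumerate}
\item It samples $f\gets\OWF.\Gen(1^\secp)$ and $x^*_1,x^*_2\gets\bit^\secp$, and sets $t^*_j\coloneqq f(x^*_j)$.
\item It fixes $S=\{t^*_1,t^*_2\}$ and obtains $(k_S,a_1,a_2)$ from the PPRF challenger. Here $(a_1,a_2)$ are either the real values $\PPRF.\Eval(k,t^*_j)$ or uniform strings.
\item It sets $y^*_j\coloneqq(t^*_j,a_j\oplus x^*_j)$, builds $C^*$ with constants $(f,k_S,t^*_1,t^*_2,y^*_1,y^*_2)$, computes $\hat{C^*}\gets iO(1^\secp,C^*)$, and sets $\td^*\coloneqq(f,k_S,\{t^*_1,t^*_2\},y^*_1,y^*_2)$.
\item It runs $\qA(\hat{C^*},\td^*,y^*_1,y^*_2)$ and outputs $1$ iff $x'_1=x^*_1$ or $x'_2=x^*_2$.
\end{enumerate}
In the real case this view is exactly $\mathsf{Hyb}_1$, and in the random case it is exactly $\mathsf{Hyb}_2$. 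Therefore the difference between the two hybrids is bounded by the PPRF distinguishing advantage, which is negligible.

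The one subtle step is that the PPRF security definition in the preliminaries is stated for a fixed polynomial-size set $S$, while here $S$ is random and depends on $f$ and the $x^*_j$. I will handle this by averaging. The pair $(f,x^*_1,x^*_2)$ is sampled independently of $k$, so I fix the choice that maximizes $\qD$'s advantage and hardwire it as non-uniform advice; QPT adversaries are assumed non-uniform, so this is allowed. Alternatively, one can note that the standard GGM-based PPRF is secure for adaptively chosen sets fixed before $k$ is sampled, so security holds selectively for $S$ chosen by the adversary.

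A minor degenerate case is $t^*_1=t^*_2$. By injectivity this forces $x^*_1=x^*_2$, which happens with probability $2^{-\secp}$, so I would simply absorb it into the negligible term.
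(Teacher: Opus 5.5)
Your proposal is correct and is essentially the paper's proof: the same reduction $\qD$ to pseudorandomness at the punctured points $\{t^*_1,t^*_2\}$. It embeds $(k_{t^*_1,t^*_2},a_1,a_2)$ into $y^*_j=(t^*_j,a_j\oplus x^*_j)$, $C^*$ and $\td^*$. Your extra remarks on the randomly chosen punctured set (via non-uniform averaging or selective security) and on the negligible case $t^*_1=t^*_2$ handle technicalities the paper leaves implicit.
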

\begin{proof}
We show this claim by using the security of the puncturable PRF $\PPRF$.
Let us consider the QPT adversary $\qD$ against $\PPRF$ that behaves as follows:
\begin{enumerate}
    \item $\qD$ generates $f\gets\OWF.\Gen(1^\secp)$, $x^*_1\gets\bit^\secp$, $t^*_1\coloneqq f(x^*_1)$, $x^*_2\gets\bit^\secp$, and $t^*_2\coloneqq f(x^*_2)$.
    \item The challenger generates $k\gets\PPRF.\Gen(1^\secp)$ and $k_{t^*_1,t^*_2}\gets\PPRF.\Puncture(k,(t^*_1,t^*_2))$.
    The challenger samples $b\gets\bit$. 
    If $b=0$, the challenger sets $a_1\coloneqq \PPRF.\Eval(k,t^*_1)$ and $a_2\coloneqq \PPRF.\Eval(k,t^*_2)$.
    If $b=1$, the challenger samples $a_1\gets\bit^\secp$ and $a_2\gets\bit^\secp$.
    The challenger sends $(k_{t^*_1,t^*_2},a_1,a_2)$ to $\qD$.
    \item $\qD$ generates $y^*_1\coloneqq (t^*_1,a_1\oplus x^*_1)$, $y^*_2\coloneqq (t^*_2,a_2\oplus x^*_2)$, and $\hat{C^*}\gets iO (1^\secp,C^*)$.
    It sets $\td^*\coloneqq(f,k_{t^*_1,t^*_2},\{t^*_1,t^*_2\},y^*_1,y^*_2)$.
    \item $\qD$ runs $(x'_1,x'_2)\gets\qA(\hat{C^*},\td^*,y^*_1,y^*_2)$ and outputs 1 if $x'_1=x^*_1$ or $x'_2=x^*_2$.
\end{enumerate}
Then
\begin{align}
    &\Pr[1\gets\mathsf{Hyb}_1] = \Pr[1\gets\qD \mid b=0] \\
    &\Pr[1\gets\mathsf{Hyb}_2] = \Pr[1\gets\qD \mid b=1].
\end{align}
By the pseudorandomness at punctured points of puncturable PRFs, we have
\begin{align}
    |\Pr[1\gets\mathsf{Hyb}_1] - \Pr[1\gets\mathsf{Hyb}_2] |\le\negl(\secp).
\end{align}
\end{proof}

\noindent
$\mathsf{Hyb}_3$: This is identical to $\mathsf{Hyb}_2$ except that $y^*_1$ and $y^*_2$ are replaced with $(t^*_1,r_1)$ and $(t^*_2,r_2)$ respectively, where $r_1$ and $r_2$ are sampled uniformly at random.
\begin{enumerate}
    \item The challenger generates $f\gets\OWF.\Gen(1^\secp)$, $k\gets\PPRF.\Gen(1^\secp)$.
    \item The challenger generates $r_1\gets\bit^\secp$, $r_2\gets\bit^\secp$, $x^*_1\gets\bit^\secp$, $t^*_1\coloneqq f(x^*_1)$, $y^*_1\coloneqq (t^*_1,r_1)$, $x^*_2\gets\bit^\secp$, $t^*_2\coloneqq f(x^*_2)$, $y^*_2\coloneqq (t^*_2,r_2)$, $k_{t^*_1,t^*_2}\gets\PPRF.\Puncture(k,(t^*_1,t^*_2))$ and $\hat{C^*}\gets iO(1^\secp,C^*)$.
    Let $\td^*\coloneqq(f,k_{t^*_1,t^*_2},\{t^*_1,t^*_2\},y^*_1,y^*_2)$.
    \item $(x'_1,x'_2)\gets\qA(\hat{C^*},\td^*,y^*_1,y^*_2)$.
    \item Output 1 if $x'_1=x^*_1$ or $x'_2=x^*_2$.
\end{enumerate}
\begin{claim}
    \begin{align}
        \Pr[1\gets\mathsf{Hyb}_2] = \Pr[1\gets\mathsf{Hyb}_3]
    \end{align}
\end{claim}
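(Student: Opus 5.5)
The claim is that $\Pr[1\gets\mathsf{Hyb}_2]=\Pr[1\gets\mathsf{Hyb}_3]$. I will prove it by showing that the two experiments produce identically distributed joint outputs. The only difference between them is the second component of each challenge image: $r_j\oplus x^*_j$ in $\mathsf{Hyb}_2$ versus $r_j$ in $\mathsf{Hyb}_3$, for $j\in\{1,2\}$. In both hybrids $r_1,r_2\gets\bit^\secp$ are uniform, and they are sampled independently of $f$, $k$, $x^*_1$ and $x^*_2$.

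The first step is to condition on all other randomness, namely $f$, $k$, $x^*_1$, $x^*_2$ and the coins of $iO$. Then $t^*_j=f(x^*_j)$ and $k_{t^*_1,t^*_2}$ are fixed. For fixed $x^*_j$, the map $r_j\mapsto r_j\oplus x^*_j$ is a bijection on $\bit^\secp$. So $r_j\oplus x^*_j$ is uniform, and the pair $(r_1\oplus x^*_1,r_2\oplus x^*_2)$ has exactly the distribution of $(r_1,r_2)$. Consequently the joint distribution of $(y^*_1,y^*_2)$ together with the fixed quantities is the same in both hybrids.

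The second step is to check that nothing else in $\qA$'s view depends on $r_j$ other than through $y^*_j$. The circuit $C^*$ has $y^*_1$ and $y^*_2$ hard-wired, and $\td^*$ contains $y^*_1$ and $y^*_2$. Both are therefore deterministic functions, or $iO$-randomized functions with independent coins, of the same tuple $(f,k_{t^*_1,t^*_2},t^*_1,t^*_2,y^*_1,y^*_2)$. Hence the full input $(\hat{C^*},\td^*,y^*_1,y^*_2)$ to $\qA$, jointly with $(x^*_1,x^*_2)$, is identically distributed in the two hybrids. This joint statement is what matters, because the success event $x'_1=x^*_1\lor x'_2=x^*_2$ refers to $x^*_j$.

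Averaging over the conditioned randomness gives equal success probabilities. There is no real obstacle here. The only point needing care is that the equality must hold jointly with $x^*_j$, not merely marginally for the view. This is exactly what the conditioning argument delivers: $r_j$ is a fresh one-time pad independent of everything else.
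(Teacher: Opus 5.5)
Your proposal is correct and follows the paper's argument. The paper justifies the claim in one line, saying the distributions of $y^*_1,y^*_2$ coincide in the two hybrids. Your one-time-pad argument makes that precise: you condition on the other randomness and check that the full view of $\qA$, including $\hat{C^*}$ and $\td^*$, is identically distributed jointly with $(x^*_1,x^*_2)$.
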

\begin{proof}
    This claim holds because the distributions of $y^*_1$ and $y^*_2$ are the same in $\mathsf{Hyb}_2$ and $\mathsf{Hyb}_3$. 
\end{proof}

\begin{claim}
For any QPT adversary $\qA$,
    \begin{align}
        \Pr[1\gets\mathsf{Hyb}_3] \le \negl(\secp).
    \end{align}
\end{claim}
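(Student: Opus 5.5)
The plan is to reduce to the one-wayness of the injective OWF $f$. The key point is that in $\mathsf{Hyb}_3$ the adversary's view depends on $(x^*_1,x^*_2)$ only through $t^*_1=f(x^*_1)$ and $t^*_2=f(x^*_2)$. Indeed, $y^*_j=(t^*_j,r_j)$ with $r_j$ uniform and independent. The punctured key $k_{t^*_1,t^*_2}$ depends only on $k$ and $\{t^*_1,t^*_2\}$. The circuit $C^*$ has constants $f$, $k_{t^*_1,t^*_2}$, $t^*_1$, $t^*_2$, $y^*_1$, $y^*_2$, so its obfuscation is likewise computable from $(f,t^*_1,t^*_2)$ plus independently sampled values. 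Hence a reduction that receives a single OWF challenge $t=f(x)$ can embed it as one of the two $t^*_j$ and simulate $\mathsf{Hyb}_3$ perfectly, without ever knowing $x$.

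Concretely, I will construct a QPT adversary $\qE$ against $f$ as follows.
\begin{enumerate}
\item On input $(f,t)$, $\qE$ samples a uniformly random index $b\in\{1,2\}$ and sets $t^*_b\coloneqq t$.
\item It samples $x^*_{3-b}\gets\bit^\secp$ itself and sets $t^*_{3-b}\coloneqq f(x^*_{3-b})$.
\item It samples $k\gets\PPRF.\Gen(1^\secp)$ and $r_1,r_2\gets\bit^\secp$, and forms $y^*_j=(t^*_j,r_j)$ for $j\in\{1,2\}$.
\item It computes $k_{t^*_1,t^*_2}\gets\PPRF.\Puncture(k,(t^*_1,t^*_2))$, builds $C^*$ from these values, obfuscates it with $iO$, and sets $\td^*$ accordingly.
\item It runs $\qA$ on $(\hat{C^*},\td^*,y^*_1,y^*_2)$ and outputs the $b$-th component $x'_b$.
\end{enumerate}
Since $x$ is uniform, the simulated view is distributed exactly as in $\mathsf{Hyb}_3$ for either value of $b$.

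Then I bound the winning event with a union bound:
\[
\Pr[1\gets\mathsf{Hyb}_3]\le \Pr[x'_1=x^*_1]+\Pr[x'_2=x^*_2].
\]
Each summand equals the probability that $\qE$ inverts $t$, conditioned on the corresponding value of $b$. It follows that $\Pr[1\gets\mathsf{Hyb}_3]\le 2\Pr[\qE\text{ wins}]$. Note that $\qE$ outputs $x'_b$, and $f(x'_b)=t$ suffices for OWF success, so outputting a preimage is enough. One-wayness of $f$ then gives $\Pr[1\gets\mathsf{Hyb}_3]\le\negl(\secp)$.

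The only step needing care is the claim that the simulation is perfect. In particular, nothing in $\mathsf{Hyb}_3$, including $C^*$ and the punctured key, uses $x^*_1$ or $x^*_2$ beyond their images. Checking the definition of $C^*$ in \cref{fig:circuit_TDF_punc} and of $\td^*$ confirms that only $t^*_j$ and $y^*_j$ appear, so I expect no real obstacle here.
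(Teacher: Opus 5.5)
Your proposal is correct and follows essentially the same route as the paper's proof. The paper also builds $\qE$ to embed the OWF challenge $t$ as $t^*_b$ for a uniformly random index $b$, sample the other preimage itself, and simulate $\mathsf{Hyb}_3$ perfectly from the images alone (the punctured key, $y^*_j=(t^*_j,r_j)$, and the obfuscation of $C^*$), before outputting $x'_b$ and concluding with the same union bound over the two indices.
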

\begin{proof}
We show this claim by using the security of the injective OWF $f$.
Let us consider the QPT adversary $\qE$ against $f$ that behaves as follows:
\begin{enumerate}
    \item $\qE$ takes $(f,t)$ as input, where $f\gets\OWF.\Gen(1^\secp)$, $x\gets\bit^\secp$ and $t\coloneqq f(x)$.
    \item $\qE$ samples $b\gets\bit$ and sets $t^*_b\coloneqq t$
    \item $\qE$ generates $k\gets\PPRF.\Gen(1^\secp)$, $x^*_{1-b}\gets\bit^\secp$, $t^*_{1-b}\coloneqq f(x^*_{1-b})$, $k_{t^*_0,t^*_1}\gets\PPRF.\Puncture(k,(t^*_0,t^*_1))$, $r_0\gets\bit^\secp$, $r_1\gets\bit^\secp$, $y^*_0\coloneqq (t^*_0,r_0)$, $y^*_1\coloneqq (t^*_1,r_1)$, and $\hat{C^*}\gets iO(1^\secp,C^*)$.
    It sets $\td^*\coloneqq(f,k_{t^*_0,t^*_1},\{t^*_0,t^*_1\},y^*_0,y^*_1)$.
    \item $\qE$ runs $(x'_0,x'_1)\gets\qA(\hat{C^*},\td^*,y^*_0,y^*_1)$ and outputs $x'_b$.
\end{enumerate}
Then
\begin{align}
    \Pr[1\gets\mathsf{Hyb}_3] 
    &= \Pr[x'_0=x^*_0 \lor x'_1=x^*_1 : (x'_0,x'_1)\gets\qA(\hat{C^*},\td^*,y^*_0,y^*_1)] \\
    &\le \sum_{b\in\bit} \Pr[x'_b=x^*_b : (x'_0,x'_1)\gets\qA(\hat{C^*},\td^*,y^*_0,y^*_1)] \\
    &= \Pr[x\gets\qE(f,t) \mid b=0] + \Pr[x\gets\qE(f,t) \mid b=1] \\ 
    &\le \negl(\secp).
\end{align}
In the last inequality, we used the one-wayness of the injective OWF $f$. 
\end{proof}

Therefore, we finally obtain $\Pr[1\gets\mathsf{Hyb}_0]\le\negl(\secp)$ for any QPT adversary $\qA$ and complete the proof.

\fi

\end{document}